\DeclareFontFamily{OMX}{MnSymbolE}{}
\DeclareSymbolFont{MnLargeSymbols}{OMX}{MnSymbolE}{m}{n}
\DeclareFontShape{OMX}{MnSymbolE}{m}{n}{
    <-6>  MnSymbolE5
   <6-7>  MnSymbolE6
   <7-8>  MnSymbolE7
   <8-9>  MnSymbolE8
   <9-10> MnSymbolE9
  <10-12> MnSymbolE10
  <12->   MnSymbolE12
}{}
\DeclareFontShape{OMX}{MnSymbolE}{b}{n}{
    <-6>  MnSymbolE-Bold5
   <6-7>  MnSymbolE-Bold6
   <7-8>  MnSymbolE-Bold7
   <8-9>  MnSymbolE-Bold8
   <9-10> MnSymbolE-Bold9
  <10-12> MnSymbolE-Bold10
  <12->   MnSymbolE-Bold12
}{}
\let\llangle\@undefined
\let\rrangle\@undefined
\let\llbracket\@undefined
\let\rrbracket\@undefined
\DeclareMathDelimiter{\llangle}{\mathopen}{MnLargeSymbols}{'164}{MnLargeSymbols}{'164}
\DeclareMathDelimiter{\rrangle}{\mathclose}{MnLargeSymbols}{'171}{MnLargeSymbols}{'171}
\DeclareMathDelimiter{\llbracket}{\mathopen}{MnLargeSymbols}{'102}{MnLargeSymbols}{'102}
\DeclareMathDelimiter{\rrbracket}{\mathclose}{MnLargeSymbols}{'107}{MnLargeSymbols}{'107}
\newtcolorbox{myframe}[1][]{
  enhanced,
  arc=0pt,
  outer arc=0pt,
  colback=white,
  boxrule=0.5pt,
  boxsep=0mm,
  left=1mm,
  right=1mm,
  top=0.5mm,
  bottom=0.5mm,
  #1
}
\lstdefinelanguage{Scribble}{basicstyle=\footnotesize\ttfamily,
  stringstyle=\color{Blue},
  showstringspaces=false,
  keywords={nested,new,calls,and,as,at,by,catches,choice,continue,do,from,global,import,instantiates,interruptible,local,module,or,par,protocol,rec,role,sig,throws,to,type,with,int,aux,reliable,crash},
  morestring=[b]",
  morestring=[b]',
  morecomment=[l][\color{greencomments}]{//},
}
\lstdefinelanguage{nuScr}{basicstyle=\footnotesize\ttfamily,
  stringstyle=\color{Blue},
  showstringspaces=false,
  keywords={
    nested,new,calls,and,as,at,by,catches,choice,continue,do,from,global,import,instantiates,interruptible,local,module,or,par,protocol,rec,role,sig,throws,to,type,with,int,aux,
    safe
  },
  morestring=[b]",
  morestring=[b]',
  morecomment=[l][\color{greencomments}]{//},
  morecomment=[s][\color{magenta}]{(*}{*)},
}
\lstdefinelanguage{effpi}{
  keywords=[1]{
    case,class,sealed,abstract,object,extends,type,def,val,if,else,new,var,match
  },
  keywords=[2]{
    InChan,OutChan,RecVar,Rec,Out,In,InErr,Loop,
  },
  keywords=[3]{
    rec,send,receive,receiveErr,eval,par,Channel
  },
  keywordstyle=[1]{\color{blue}},
  keywordstyle=[2]{\color{ImperialIris}}, keywordstyle=[3]{\color{OliveGreen}},
  otherkeywords={=>,.type,<:,>>:},
morecomment=[l][\color{darkgray}]{//},
}
\definecolor{ImperialBlue}{HTML}{003E74}
\definecolor{ImperialDarkGreen}{HTML}{02893B}
\definecolor{ImperialTangerine}{HTML}{EC7300}
\definecolor{ImperialIris}{HTML}{751E66}
\definecolor{RYB1}{RGB}{141, 211, 199}
\definecolor{RYB2}{RGB}{255, 255, 179}
\definecolor{RYB3}{RGB}{190, 186, 218}
\definecolor{RYB4}{RGB}{251, 128, 114}
\definecolor{RYB5}{RGB}{128, 177, 211}
\definecolor{RYB6}{RGB}{253, 180, 98}
\definecolor{RYB7}{RGB}{179, 222, 105}
\tikzset{
  >=stealth,
  node distance=2cm,
  every state/.style={thick, fill=gray!10},
  initial text=$ $,
}
\pgfplotsset{
  compat=1.8,
  /pgfplots/bar cycle list/.style={/pgfplots/cycle list={{brown!60!black,fill=brown!30!white,mark=none},
    {red,fill=red!30!white,mark=none},
    {blue,fill=blue!30!white,mark=none},
    {black,fill=gray,mark=none},
    }
  },
}
\newcolumntype{L}{>{$}l<{$}}
\newcolumntype{C}{>{$}c<{$}}
\newcolumntype{P}[1]{>{\centering\arraybackslash$}p{#1}<{$}}
\newtheorem{theorem}{Theorem}
\newtheorem{lemma}[theorem]{Lemma}
\newdefinition{remark}{Remark}
\newdefinition{definition}{Definition}
\newdefinition{example}{Example}
\Crefname{section}{\S\!}{\S\!}\Crefname{subsection}{\S\!}{\S\!}\Crefname{subsubsection}{\S\!}{\S\!}\Crefname{appendix}{Appendix \S\!}{Appendix \S\!}
\Crefname{definition}{Def.\@}{Defs.\@}\Crefname{figure}{Fig.\@}{Figs.\@}\Crefname{example}{Ex.\@}{Exs.\@}\Crefname{corollary}{Cor.\@}{Cors.\@}\Crefname{theorem}{Thm.\@}{Thms.\@}\Crefname{proposition}{Prop.\@}{Props.\@}\Crefname{lemma}{Lem.\@}{Lems.\@}
\Crefname{equation}{Eq.\@}{Eqs.\@}
\crefname{section}{\S\!}{\S\!}\crefname{subsection}{\S\!}{\S\!}\crefname{subsubsection}{\S\!}{\S\!}\crefname{appendix}{Appendix \S\!}{Appendix \S\!}
\crefname{definition}{Def.\@}{Defs.\@}\crefname{figure}{Fig.\@}{Figs.\@}\crefname{example}{Ex.\@}{Exs.\@}\crefname{corollary}{Cor.\@}{Cors.\@}\crefname{theorem}{Thm.\@}{Thms.\@}\crefname{proposition}{Prop.\@}{Props.\@}\crefname{lemma}{Lem.\@}{Lems.\@}
\crefname{equation}{Eq.\@}{Eqs.\@}
\newif\ifdraft \draftfalse \drafttrue 
\newcommand{\ifempty}[3]{\ifthenelse{\isempty{#1}}{#2}{#3}}
\newcommand{\dom}[1]{{\color{black}\operatorname{dom}\!\left({#1}\right)}}\newcommand{\suchthat}{\colon}\newcommand{\fv}[1]{\operatorname{fv}\!\left({#1}\right)}\newcommand{\unfoldOne}[1]{{\color{black}\operatorname{unf}\!\left({#1}\right)}}\newcommand{\notImpliedBy}{\mathrel{{\kern 1em}{\not{\kern -1em}\impliedby}}}
\newcommand{\coloncolonequals}{\Coloneqq}\newcommand{\bnfdef}{\coloncolonequals}\newcommand{\bnfsep}{\mathbin{\;\big|\;}}
\def\etc{\emph{etc}\@\xspace}\def\eg{e.g.\@\xspace}\def\ie{i.e.\@\xspace}
\definecolor{ruleColor}{rgb}{0.1, 0.3, 0.1}\newcommand{\inferrule}[1]{{\color{ruleColor}\textsc{\scriptsize [#1]}}}\newcommand{\inference}[3][]{\infer[\ifempty{#1}{}{\inferrule{#1}}]{#3}{#2}}\newcommand{\cinference}[3][]{\infer=[\ifempty{#1}{}{\inferrule{#1}}]{#3}{#2}}
\newcommand{\RULE}[1]{{\color{ruleColor}\textsc{\scriptsize [#1]}}}
\newcommand{\bind}[2]{\nicefrac{#2}{#1}}\newcommand{\substenum}[1]{\mathord{\left\{{#1}\right\}}}\newcommand{\subst}[2]{\substenum{\bind{#1}{#2}}}
\definecolor{hlColor}{rgb}{0.65, 1.0, 0.65}
\newcommand{\lbbar}{\{\kern-0.2em|}
\newcommand{\rbbar}{|\kern-0.2em\}}
\newcommand{\hpFmt}[1]{{\color{black}{#1}}}
\newcommand{\hpMove}{\mathrel{\hpFmt{\to}}}\newcommand{\hpMoveStar}{\mathrel{\hpFmt{\hpMove{}^{\!\!\!*}}}}
\definecolor{tyColorCustom}{rgb}{0.0, 0.0, 0.85}\newcommand{\tyCol}[1]{{\color{tyColorCustom}{#1}}}\newcommand{\tyFont}[1]{{#1}}\newcommand{\tyFmt}[1]{\tyCol{\tyFont{#1}}}\newcommand{\tyFontC}[1]{\operatorname{#1}}\newcommand{\tyFmtC}[1]{\tyCol{\tyFontC{#1}}}
\newcommand{\tyGroundSet}{\stFmt{\mathcal{B}}}  
\newcommand{\tyGround}[1][]{\tyFmt{\ifempty{#1}{B}{B_{#1}}}}\newcommand{\tyGroundi}[1][]{\tyFmt{\ifempty{#1}{B'}{B'_{#1}}}}\newcommand{\tyGroundii}[1][]{\tyFmt{\ifempty{#1}{B''}{B''_{#1}}}}\newcommand{\tyBool}{\tyFmtC{bool}}\newcommand{\tyUnit}{\tyFmtC{unit}}\newcommand{\tyInt}{\tyFmtC{int}}\newcommand{\tyReal}{\tyFmtC{real}}
\newcommand{\tyTi}[1][]{\tyCol{\ifempty{#1}{\tyFont{T}'}{\tyFont{T}'_{#1}}}}
\newcommand{\quH}[1][]{\tyCol{\ifempty{#1}{\tyFont{\sigma}}{\tyFont{\sigma}_{#1}}}}\newcommand{\quHi}[1][]{\tyCol{\ifempty{#1}{\tyFont{\sigma'}}{\tyFont{\sigma'}_{#1}}}}\newcommand{\quHii}[1][]{\tyCol{\ifempty{#1}{\tyFont{\sigma''}}{\tyFont{\sigma''}_{#1}}}}\newcommand{\quHiii}[1][]{\tyCol{\ifempty{#1}{\tyFont{\sigma'''}}{\tyFont{\sigma''}_{#1}}}}\newcommand{\quEmpty}[1][]{\tyCol{\tyFont{\varnothing}}}\newcommand{\quMsg}[3]{\tyCol{\left({#1},\tyFont{{#2}\ifempty{#3}{}{({#3})}}\right)}}\newcommand{\quCons}[2]{\tyCol{\tyFont{{#1}\mathbin{\!\cdot\!}{#2}}}}
\newcommand{\tySub}{\mathrel{\tyCol{\leqslant}_{\text{a}}}}
\newcommand{\muCol}[1]{{\color{red}#1}}
\newcommand{\muWordEmpty}[1][]{\muCol{\epsilon}}
\newtcolorbox{cross}{blank,breakable,parbox=false,
  overlay={\draw[red,line width=5pt] (interior.south west)--(interior.north east);
    \draw[red,line width=5pt] (interior.north west)--(interior.south east);}}
\definecolor{roleColor}{rgb}{0.5, 0.0, 0.0}\newcommand{\roleCol}[1]{{\color{roleColor}#1}}\newcommand{\roleSet}{\roleCol{\mathcal{R}}}\newcommand{\roleFmt}[1]{\ensuremath{{\boldsymbol{\roleCol{\mathtt{#1}}}}}\xspace}
\newcommand{\roleP}[1][]{\ifempty{#1}{{\color{roleColor}\roleFmt{p}}}{{\color{roleColor}\roleFmt{p}_{#1}}}}\newcommand{\rolePi}[1][]{\ifempty{#1}{{\color{roleColor}\roleFmt{p}'}}{{\color{roleColor}\roleFmt{p}'_{#1}}}}\newcommand{\roleQ}[1][]{\ifempty{#1}{{\color{roleColor}\roleFmt{q}}}{{\color{roleColor}\roleFmt{q}_{#1}}}}\newcommand{\roleQi}[1][]{\ifempty{#1}{{\color{roleColor}\roleFmt{q}'}}{{\color{roleColor}\roleFmt{q}'_{#1}}}}\newcommand{\roleR}[1][]{\ifempty{#1}{{\color{roleColor}\roleFmt{r}}}{{\color{roleColor}\roleFmt{r}_{\!#1}}}}\newcommand{\roleS}[1][]{\ifempty{#1}{{\color{roleColor}\roleFmt{s}}}{{\color{roleColor}\roleFmt{s}_{\!#1}}}}\newcommand{\roleT}[1][]{\ifempty{#1}{{\color{roleColor}\roleFmt{t}}}{{\color{roleColor}\roleFmt{t}_{\!#1}}}}
\definecolor{gtColor}{rgb}{0.43, 0.21, 0.1}\newcommand{\gtFmt}[1]{\ensuremath{{\color{gtColor}#1}}\xspace}\newcommand{\gtMsgFmt}[1]{\gtFmt{\labFmt{#1}}}\newcommand{\gtLab}[1][]{\ifempty{#1}{\gtMsgFmt{m}}{{\color{gtColor}\gtMsgFmt{m}_{#1}}}}\newcommand{\gtLabi}[1][]{\ifempty{#1}{\gtMsgFmt{m}'}{{\color{gtColor}\gtMsgFmt{m}'_{#1}}}}\newcommand{\gtLabii}[1][]{\ifempty{#1}{\gtMsgFmt{m}''}{{\color{gtColor}\gtMsgFmt{m}''_{#1}}}}
\newcommand{\gtFv}[1]{\gtFmt{\fv{#1}}}
\newcommand{\gtG}[1][]{\gtFmt{\ifempty{#1}{G}{G_{#1}}}}\newcommand{\gtGi}[1][]{\gtFmt{\ifempty{#1}{G'}{G'_{#1}}}}\newcommand{\gtGii}[1][]{\gtFmt{\ifempty{#1}{G''}{G''_{#1}}}}\newcommand{\gtGiii}[1][]{\gtFmt{\ifempty{#1}{G'''}{G'''_{#1}}}}
\newcommand{\gtGone}[1][]{\gtFmt{\ifempty{#1}{G^{(1)}}{G^{(1)}_{#1}}}}\newcommand{\gtGtwo}[1][]{\gtFmt{\ifempty{#1}{G^{(2)}}{G^{(2)}_{#1}}}}\newcommand{\gtGthree}[1][]{\gtFmt{\ifempty{#1}{G^{(3)}}{G^{(3)}_{#1}}}}\newcommand{\gtGfour}[1][]{\gtFmt{\ifempty{#1}{G^{(4)}}{G^{(4)}_{#1}}}}\newcommand{\gtGfive}[1][]{\gtFmt{\ifempty{#1}{G^{(5)}}{G^{(5)}_{#1}}}}
\newcommand{\gtSeq}{\mathbin{\gtFmt{.}}}
\newcommand{\gtCommRaw}[3]{\gtFmt{{#1} {\to} {#2}{:}\left\{{#3}\right\}}}\newcommand{\gtComm}[6]{\gtFmt{\gtCommRaw{#1}{#2}{\gtCommChoice{#4}{#5}{#6}}_{#3}}}\newcommand{\gtCommSmall}[6]{\gtFmt{\gtCommRaw{#1}{#2}{\gtCommChoiceSmall{#4}{#5}{#6}}_{#3}}}\newcommand{\gtCommSingle}[5]{\gtFmt{{#1} {\to} {#2}{:}\gtCommChoice{#3}{#4}{#5}}}
\newcommand{\gtCommRawSquig}[4]{\gtFmt{{#1} {\overset{#3}{\rightsquigarrow}} {#2}{:}\left\{{#4}\right\}}}
\newcommand{\gtCommSquig}[7]{\gtFmt{\gtCommRawSquig{#1}{#2}{#3}{\gtCommChoice{#5}{#6}{#7}}_{#4}}}
\newcommand{\gtCommSquigSmall}[7]{\gtFmt{\gtCommRawSquig{#1}{#2}{#3}{\gtCommChoiceSmall{#5}{#6}{#7}}_{#4}}}
\newcommand{\gtCommSquigSingle}[6]{\gtFmt{{#1} {\overset{#3}{\rightsquigarrow}} {#2}{:}\gtCommChoice{#4}{#5}{#6}}}
\newcommand{\gtCommChoice}[3]{\gtFmt{\gtMsgFmt{#1}\ifempty{#2}{}{({#2})}\ifempty{#3}{}{\vphantom{x} \!\gtSeq\! {#3}}}}\newcommand{\gtCommChoiceSmall}[3]{\gtFmt{\gtMsgFmt{#1}\ifempty{#2}{}{({#2})}\ifempty{#3}{}{\vphantom{x} \!\gtSeq\! {#3}}}}
\newcommand{\gtEnd}{\gtFmt{\mathbf{end}}}
\newcommand{\gtRec}[2]{\gtFmt{\mu{#1}.{#2}}}\newcommand{\gtRecVarBase}{\gtFmt{\mathbf{t}}}\newcommand{\gtRecVar}[1][]{\gtFmt{\ifempty{#1}{\gtRecVarBase}{\gtRecVarBase_{#1}}}}
\newcommand{\gtRoles}[1]{{\color{roleColor} \operatorname{roles}(\gtFmt{#1})}}\newcommand{\gtARoles}[1]{{\color{roleColor} \operatorname{aRoles}(\gtFmt{#1})}}\newcommand{\gtSRoles}[1]{{\color{roleColor} \operatorname{sRoles}(\gtFmt{#1})}}\newcommand{\gtMRoles}[1]{{\color{roleColor} \operatorname{mRoles}(\gtFmt{#1})}}\newcommand{\gtDepth}[2]{{\color{roleColor} \operatorname{depth}_{#2}(\gtFmt{#1})}}\newcommand{\gtMDepth}[3]{
\ifempty{#1}{\color{roleColor} \operatorname{mDepth}}
{\color{roleColor} \operatorname{mDepth}_{#2,#3}(\gtFmt{#1})}
}\newcommand{\gtMCount}[3]{{\color{roleColor} \operatorname{count}_{#2,#3}(\gtFmt{#1})}}
\newcommand{\gtProj}[3][]{{\color{stColor}\gtFmt{#2}\!\ifempty{#1}{\upharpoonright}{\upharpoonright_{#1}}\!\roleFmt{#3}}}
\newcommand{\gtProjRel}[3]{{\color{stColor}\gtFmt{#1}{\upharpoonright_{\roleFmt{#2}}}\stFmt{#3}}}
\newcommand{\gtMove}[1][\phantom{\stEnvAnnotGenericSym}]{\gtFmt{\xrightarrow{#1}}} \newcommand{\gtMoveStar}[1][]{\ifempty{#1}{\gtMove[]{}^{\!\gtFmt{*}}}{\gtMove[]{#1}^{\!\gtFmt{*}}}} \newcommand{\gtNotMove}[1]{{#1}\!\not\!\!\!\gtMove{}}
\newcommand{\iruleGtMove}[1]{GR-{#1}}
\newcommand{\iruleGtMoveBra}[0]{\iruleGtMove{$\&$}}
\newcommand{\iruleGtMoveSel}[0]{\iruleGtMove{$\oplus$}}
\newcommand{\iruleGtMoveRec}[0]{\iruleGtMove{$\mu$}}
\newcommand{\iruleGtMoveCtx}[0]{\iruleGtMove{Ctx-I}}
\newcommand{\iruleGtMoveCtxII}[0]{\iruleGtMove{Ctx-II}}
\newcommand{\labFmt}[2][]{\ensuremath{\ifempty{#1}{\mathtt{#2}}{\mathtt{#2}\textsubscript{#1}}}\xspace}
\definecolor{stColor}{rgb}{0, 0, 0.9}\newcommand{\stFmt}[1]{\ensuremath{{\color{stColor}#1}}\xspace}
\newcommand{\stChoice}[2]{\stLabFmt{#1}\ifempty{#2}{}{\stFmt{({#2})}}}
\newcommand{\stSeq}{\mathbin{\!\stFmt{.}\!}}\newcommand{\stIntSum}[3]{\roleFmt{#1}\stFmt{\oplus\!\left\{#3\right\}_{#2}}}\newcommand{\stIntSumRaw}[2]{\stFmt{\roleFmt{#1} \oplus\!
    \left\{{#2}\right\}}}\newcommand{\stExtSum}[3]{\roleFmt{#1}\stFmt{\&\!\left\{#3\right\}_{#2}}}\newcommand{\stExtSumRaw}[2]{\stFmt{\roleFmt{#1} \&\!
    \left\{{#2}\right\}}}\newcommand{\stRec}[2]{\stFmt{\mu{#1}.{#2}}}\newcommand{\stEnd}{\stFmt{\mathbf{end}}}
\newcommand{\stLabFmt}[1]{\stFmt{\labFmt{#1}}}\newcommand{\stLab}[1][]{\ifempty{#1}{\stLabFmt{m}}{\stLabFmt{m}_{{\color{stColor}#1}}}
}\newcommand{\stLabi}[1][]{\ifempty{#1}{\stLabFmt{m}'}{\stLabFmt{m}'_{{\color{stColor}#1}}}
}\newcommand{\stLabii}[1][]{\ifempty{#1}{\stLabFmt{m}''}{\stLabFmt{m}''_{{\color{stColor}#1}}}
}
\newcommand{\stS}[1][]{\stFmt{\ifempty{#1}{S}{S_{#1}}}}
\newcommand{\stT}[1][]{\stFmt{\ifempty{#1}{T}{T_{#1}}}}\newcommand{\stTi}[1][]{\stFmt{\ifempty{#1}{T'}{T'_{#1}}}}\newcommand{\stTii}[1][]{\stFmt{\ifempty{#1}{T''}{T''_{#1}}}}
\newcommand{\stTopt}[1][]{\stFmt{\ifempty{#1}{T^{\text{\tiny opt}}}{T_{#1}^{\text{\tiny opt}}}}}
\newcommand{\stToptrole}[1]{\stFmt{T_{#1}^{\text{\tiny opt}}}}
\newcommand{\stW}[1][]{\stFmt{\ifempty{#1}{\mathbb{W}}{\mathbb{W}_{#1}}}}\newcommand{\stWi}[1][]{\stFmt{\ifempty{#1}{\mathbb{W}'}{\mathbb{W}'_{#1}}}}\newcommand{\stWii}[1][]{\stFmt{\ifempty{#1}{\mathbb{W}''}{\mathbb{W}''_{#1}}}}
\newcommand{\stU}[1][]{\stFmt{\ifempty{#1}{\mathbb{U}}{\mathbb{U}_{#1}}}}\newcommand{\stUi}[1][]{\stFmt{\ifempty{#1}{\mathbb{U}'}{\mathbb{U}'_{#1}}}}\newcommand{\stUii}[1][]{\stFmt{\ifempty{#1}{\mathbb{U}''}{\mathbb{U}''_{#1}}}}
\newcommand{\stV}[1][]{\stFmt{\ifempty{#1}{\mathbb{V}}{\mathbb{V}_{#1}}}}\newcommand{\stVi}[1][]{\stFmt{\ifempty{#1}{\mathbb{V}'}{\mathbb{V}'_{#1}}}}
\newcommand{\trT}[1][]{\stFmt{\ifempty{#1}{\mathbb{T}}{\mathbb{T}_{#1}}}}\newcommand{\trTi}[1][]{\stFmt{\ifempty{#1}{\mathbb{T}'}{\mathbb{T}'_{#1}}}}\newcommand{\trTii}[1][]{\stFmt{\ifempty{#1}{\mathbb{T}''}{\mathbb{T}''_{#1}}}}
\newcommand{\stRecv}[4]{\roleFmt{#1}\mathord{?}\labFmt{#2}\stFmt{({#3})}; {#4}}
\newcommand{\stRecvOne}[3]{\roleFmt{#1}\mathord{?}\labFmt{#2}\stFmt{({#3})}}
\newcommand{\stSend}[4]{\roleFmt{#1}\mathord{!}\labFmt{#2}\stFmt{({#3})}; {#4}}
\newcommand{\stSendOne}[3]{\roleFmt{#1}\mathord{!}\labFmt{#2}\stFmt{({#3})}}
\newcommand{\stRecVarBase}{\stFmt{\mathbf{t}}}\newcommand{\stRecVar}[1][]{\stFmt{\ifempty{#1}{\stRecVarBase}{\stRecVarBase_{#1}}}}
\newcommand{\stMerge}[2]{\stFmt{\bigsqcap_{#1}{#2}}}\newcommand{\stBinMerge}{\mathbin{\stFmt{\sqcap}}}\definecolor{mergeRelColor}{RGB}{60,90,140}\newcommand{\stMergeRel}[2]{
\stFmt{{\color{mergeRelColor}\operatorname{merge}}
\ifempty{#1}{}{\left\langle{#1},{#2}\right\rangle}
}
}
\newcommand{\stSub}{\mathrel{\stFmt{\leqslant}}}\newcommand{\tyGroundSub}{\mathrel{\stFmt{{<}{:}}}}
\newcommand{\ltsSel}[3]{{#1}\mathord{:}{#2}\mathord{\oplus}{#3}}
\newcommand{\ltsBra}[3]{{#1}\mathord{:}{#2}\&{#3}}
\newcommand{\ltsSubject}[1]{{\color{roleColor} \operatorname{subject}({#1})}}
\definecolor{mpColor}{rgb}{0, 0, 0}\newcommand{\mpFmt}[1]{{\color{mpColor}#1}}
\newcommand{\mpTrue}{\mpFmt{\text{\textit{\texttt{true}}}}}\newcommand{\mpFalse}{\mpFmt{\text{\textit{\texttt{false}}}}}
\newcommand{\mpNil}{\mpFmt{\mathbf{0}}}\newcommand{\mpSeq}{\mathbin{\mpFmt{\!.\!}}}\newcommand{\mpIf}[3]{\mpFmt{\mathbf{if}\,{#1}\,\mathbf{then}}\,{#2}\,\mathbf{else}\,{#3}}\newcommand{\mpPar}{\mathbin{\mpFmt{\mid}}}\newcommand{\mpBigPar}[2]{\mathbin{\mpFmt{\Pi_{#1}}{#2}}}\newcommand{\mpRec}[2]{\mpFmt{\mu{#1}.{#2}}}
\newcommand{\mpErr}{\mpFmt{\boldsymbol{\mathtt{err}}}}
\newcommand{\mpV}[1][]{\mpFmt{\ifempty{#1}{v}{v_{#1}}}}
\newcommand{\mpC}[1][]{\mpFmt{\ifempty{#1}{c}{c_{#1}}}}
\newcommand{\mpX}[1][]{\mpFmt{\ifempty{#1}{X}{X_{#1}}}}
\newcommand{\mpP}[1][]{\mpFmt{\ifempty{#1}{P}{P_{#1}}}}\newcommand{\mpQ}[1][]{\mpFmt{\ifempty{#1}{Q}{Q_{#1}}}}
\newcommand{\mpMove}{\to}
\newcommand{\iruleStSubEnd}{Sub-$\stEnd$}
\newcommand{\iruleStSubRecL}{Sub-$\stFmt{\mu}$L}
\newcommand{\iruleStSubRecR}{Sub-$\stFmt{\mu}$R}
\newcommand{\iruleStSubIn}{Sub-$\stFmt{\&}$}
\newcommand{\iruleStSubOut}{Sub-$\stFmt{\oplus}$}
\newcommand{\iruleTCtxOut}{$\stEnv$-$\stFmt{\oplus}$}\newcommand{\iruleTCtxIn}{$\stEnv$-$\stFmt{\&}$}\newcommand{\iruleTCtxRec}{$\stEnv$-$\mu$}\newcommand{\iruleTCtxCong}{$\stEnv$-$\stEnvComp$}
\newcommand{\stEnv}[1][]{\stFmt{\ifempty{#1}{\Delta}{\Delta_{#1}}}}\newcommand{\stEnvi}[1][]{\stFmt{\ifempty{#1}{\Delta'}{\Delta'_{#1}}}}\newcommand{\stEnvii}[1][]{\stFmt{\ifempty{#1}{\Delta''}{\Delta''_{#1}}}}\newcommand{\stEnviii}[1][]{\stFmt{\ifempty{#1}{\Delta'''}{\Delta'''_{#1}}}}\newcommand{\stEnvEmpty}{\stFmt{\emptyset}}\newcommand{\stEnvMap}[2]{\stFmt{\mpFmt{#1}\mathbin{\!:\!}{#2}}}\newcommand{\stEnvComp}{\mathpunct{\stFmt{,}}}\newcommand{\stEnvApp}[2]{\stFmt{#1\!\left(\mpFmt{#2}\right)}}
\newcommand{\stEnvAssoc}[3]{\stFmt{{#1} \mathrel{\stFmt{\sqsubseteq}_{#3}} {#2}}}
\newcommand{\vertAssoc}{\rotatebox[origin=c]{90}{\stEnvAssoc{}{}{a}}}
\newcommand{\stEnvMove}{\mathrel{\stFmt{\to}}}\newcommand{\stEnvMoveStar}[1][]{\ifempty{#1}{\stEnvMove{}^{\!\gtFmt{*}}}{\stEnvMove{#1}^{\!\gtFmt{*}}}}
\newcommand{\stEnvAnnotOutSym}{\stFmt{\oplus}}\newcommand{\stEnvAnnotInSym}{\stFmt{\&}}\newcommand{\stEnvAnnotGenericSym}[1][]{\stFmt{\ifempty{#1}{\alpha}{\alpha_{#1}}}}\newcommand{\stEnvAnnotGenericSymi}[1][]{\stFmt{\ifempty{#1}{\alpha'}{\alpha'_{#1}}}}\newcommand{\stEnvAnnotGenericSymii}[1][]{\stFmt{\ifempty{#1}{\alpha''}{\alpha''_{#1}}}}\newcommand{\stEnvMoveAnnot}[1]{\mathrel{\stFmt{\xrightarrow{#1}}}}
\newcommand{\stEnvMoveGenAnnot}{\stEnvMoveAnnot{\stEnvAnnotGenericSym}}
\newcommand{\stEnvNotMoveP}[1]{{#1}\!\!\not\stEnvMove}
\newcommand{\mpEnv}[1][]{\stFmt{\ifempty{#1}{\Theta}{\Theta_{#1}}}}
\newcommand{\stJudge}[3]{\stFmt{#1} \mathrel{\mpFmt{\vdash}} \mpFmt{#2} \triangleright {#3}}
\newcommand{\stJudgeTop}[3]{\stFmt{#1} \mathrel{\mpFmt{\vdash^\text{\tiny top}}} \mpFmt{#2} \triangleright {#3}}
\newcommand{\stJudgeBot}[3]{\stFmt{#1} \mathrel{\mpFmt{\vdash^\text{\tiny bot}}} \mpFmt{#2} \triangleright {#3}}
\newcommand{\subtt}{\leqslant}
\newcommand{\asubt}{\mathrel{\stFmt{\leqslant_{\text{a}}}}}
\newcommand{\set}[1]{\{#1\}}
\newcommand{\N}{M}
\newcommand{\subttt}{\mathrel{\stFmt{\lesssim}}}
\newcommand{\ttreeSym}{\mathfrak{T}}\newcommand{\ttree}[1]{\operatorname{\ttreeSym}\!\left({#1}\right)}
\newcommand{\distinctInp}[1]{\mathcal{A}^{(#1)}}
\newcommand{\inpDistinctOut}[1]{\mathcal{B}^{(#1)}}
\newcommand{\Ap}{\distinctInp{\roleP}}
\newcommand{\Bp}{\inpDistinctOut{\roleP}}
\newcommand{\act}[1]{\texttt{act}(#1)}
\newcommand{\singleIn}[1]{{\llbracket #1 \rrbracket}_{\text{SI}}}
\newcommand{\singleOut}[1]{{\llbracket #1 \rrbracket}_{\text{SO}}}
\newcommand{\roleset}[1]{\operatorname{role}\left({#1}\right)}
\newcommand{\trCtxSel}{\mathbb{L}_{\oplus}}
\newcommand{\trCtxSelp}{\mathbb{L}^{(\roleP)}_{\oplus}}
\newcommand{\trCtxBrap}{\mathbb{L}^{(\roleP)}_{\&}}
\newcommand{\trCtxSelq}{\mathbb{L}^{(\roleQ)}_{\oplus}}
\newcommand{\trCtxBraq}{\mathbb{L}^{(\roleQ)}_{\&}}
\newcommand{\trCtxGSelpq}{\mathbb{G}^{(\roleP, \roleQ)}_{\oplus}}
\newcommand{\trCtxGBraqp}{\mathbb{G}^{(\roleQ, \roleP)}_{\&}}
\newcommand{\mergeRuleEnd}{\stMerge{}{}\text{-}\stEnd}
\newcommand{\mergeRuleOut}{\stMerge{}{}\text{-}\oplus}
\newcommand{\mergeRuleIn}{\stMerge{}{}\text{-}\&}
\newcommand{\forgetHole}[2]{\langle#1\rangle_{#2}}
\newcommand{\myparagraph}[1]{\paragraph*{\textbf{#1}}}
\newcommand{\ndownarrow}{\mathrel{\centernot\downarrow}}
\newcommand{\iruleLtRec}{LR-$\mu$}\newcommand{\iruleLtOut}{LR-$\stFmt{\oplus}$}\newcommand{\iruleLtIn}{LR-$\stFmt{\&}$}
\newcommand{\prestruct}{\Rrightarrow}
\begin{document}
\title{Asynchronous Global Protocols, Precisely}
\tnotemark[1]

\tnotetext[1]{Work supported by: EPSRC EP/T006544/2, EP/T014709/2,
EP/Z533749/1, ARIA and Horizon EU TaRDIS 101093006 (UKRI number
10066667).} 

\author{Kai Pischke}[orcid=0000-0002-6435-9456]
\ead{kai.pischke@cs.ox.ac.uk}
\author{Jake Masters}[orcid=0000-0002-3783-6149]
\ead{jake.masters@cs.ox.ac.uk}
\author{Nobuko Yoshida}[orcid=0000-0002-3925-8557]
\ead{nobuko.yoshida@cs.ox.ac.uk}

\affiliation{organization={University of Oxford},
city={Oxford},
country={UK}}

\shortauthors{K. Pischke et al.}
\shorttitle{Asynchronous Global Protocols, Precisely}

\begin{abstract}
Asynchronous multiparty session types are a type-based framework 
which ensure the compatibility of components in a distributed system 
by checking compliance against a specified global protocol. 
We propose a \emph{top-down} approach, starting
with the global protocol which is then projected into a set of
local specifications. Next, we use  
an asynchronous refinement relation, \emph{precise asynchronous
multiparty subtyping}, to enable local
specifications to be optimised by permuting actions 
within individual asynchronous components.
This supports local reasoning, as each component can be 
independently developed and \emph{refined} in isolation,  
before being integrated into a larger system.  
We show that this methodology guarantees both
\emph{type soundness} and \emph{liveness} of
the collection of optimised components. 

In this article, we first propose new operational semantics of global protocols 
which capture sound optimisations in the context of asynchronous message-passing.
Next we define an \emph{asynchronous association} between global protocols
and a set of optimised local types.
Thirdly, we prove, for the first time, the correctness of the most expressive 
endpoint projection in the literature, \emph{coinductive full merging projection}.
We then show the main theorems of this article: \emph{soundness} and \emph{completeness} 
of the operational correspondence of the asynchronous association.  
As a consequence, the association acts as an \emph{invariant} 
that can be used to transfer key theorems from the bottom-up system
to the top-down system. In particular, we used this to prove type soundness, session-fidelity, 
deadlock-freedom and liveness of the collection of optimised endpoints.
\end{abstract}

\begin{keywords}
Multiparty session types\sep
Precise asynchronous multiparty session subtyping\sep
Type-safety\sep
Association\sep
Optimisation
\end{keywords}

\maketitle

\section{Introduction}
\label{sec:intro}
\subsection{Background}
\label{sec:backgrounds}
\myparagraph{Multiparty Session Types (MPST).}
\emph{Session types} \cite{Takeuchi1994,honda.vasconcelos.kubo:language-primitives} are a type discipline for codifying concurrent 
components. They give an abstract view of \emph{sessions}, which 
are structured communications between distributed peers. 
Session types were first proposed in the context of \emph{binary sessions}, 
such as client-server protocols or dyadic interactions between two
communicating agents. 
\emph{Multiparty session types} \cite{HYC08,HYC2016} (MPST)
extend from two-party to multiparty communications,
allowing the programmer to specify a \emph{global protocol} 
to coordinate communicating components.
Using MPST, we can ensure that typed components
interact without type errors or deadlocks, entirely \emph{by construction}.
The MPST framework is \emph{language-agnostic}, and has been
adapted into over 30 programming languages \cite{Yoshida24}.

In MPST, we first start from specifying a global protocol (called
a \emph{global type}) which we then project  
into a set of local protocols (local types). 
Each local type can then be \emph{refined} to another, compatible local type.
Once each endpoint program is type-checked against its refined local type,
the set of distributed programs will be guaranteed to be interacting correctly, 
without type errors or deadlocks. 

There are a number of methods to refine or optimise local protocols.
In this article, we focus on 
the \textbf{precise multiparty asynchronous subtyping relation} $\asubt$
\cite{GPPSY2023}, as a refinement relation.

\begin{figure}[]
\centering
{\footnotesize
  \begin{tikzpicture}
  \node (Gtext) {A Global Type $\gtG$};
  \node[below= 0mm of Gtext, xshift=-37mm, align=center] (proj) {
    \small {\bf{projection}}\,($\upharpoonright$)};
  \node[below=6mm of Gtext, xshift=-40mm] (LA) {\footnotesize Local Type for $\roleP$
  \small \boxed{\stT_{\roleP}}};
  \node[below=6mm of Gtext] (LB) {\footnotesize Local Type for $\roleQ$  \small \boxed{\stT_{\roleQ}}};
  \node[below=6mm of Gtext, xshift=40mm] (LC) {\footnotesize Local Type for $\roleR$
   \small  \boxed{\stT_{\roleR}}};
  \node[below= 1mm of LA, xshift=-10.4mm, align=center] (typ) {
    \small {\bf{subtype}}\,($\asubt$)\ };
  \node[below=6mm of LA] (TA) {\footnotesize\qquad Subtype for $\roleP$
  \small \boxed{\stToptrole{\roleP}}};
  \node[below=6mm of LB] (TB) {\footnotesize\qquad Subtype for $\roleQ$  \small \boxed{\stToptrole{\roleQ}}};
  \node[below=6mm of LC] (TC) {\footnotesize\qquad Subtype for $\roleR$
   \small  \boxed{\stToptrole{\roleR}}};
  \draw[->] (Gtext) -- (LA);
  \draw[->] (Gtext) -- (LB);
  \draw[->] (Gtext) -- (LC);
   \node[below= 1mm of TA, xshift=-8.4mm, align=center] (typ) {
    \small {\bf{typing}}\,($\vdash$)\ };
   \node[below=6mm of TA, xshift=0mm] (PA) {\, \, \, \footnotesize Program for $\roleP$ \small
    \boxed{P_{\roleP}}};
   \node[below=6mm of TB, xshift=0mm] (PB) {\, \, \, \footnotesize Program for $\roleQ$ \small
    \boxed{P_{\roleQ}}};
     \node[below=6mm of TC, xshift=0mm] (PC) {\, \, \, \footnotesize Program for $\roleR$ \small
    \boxed{P_{\roleR}}};
   \draw[->] (LA) -- (TA);
   \draw[->] (LB) -- (TB);
   \draw[->] (LC) -- (TC);
   \draw[->] (TA) -- (PA);
   \draw[->] (TB) -- (PB);
   \draw[->] (TC) -- (PC);
\end{tikzpicture}
}
\caption{Top-down methodology of multiparty session types. 
$\gtG$ denotes a global type, which is projected into the three
  participants, $\roleP$, $\roleQ$ and $\roleR$, generating local
  types $\stT_{\roleP}$, $\stT_{\roleQ}$ and $\stT_{\roleR}$
  for each participant. 
  Local types are then refined to 
$\stToptrole{\roleP}$, $\stToptrole{\roleQ}$ and $\stToptrole{\roleR}$.
  Three distributed programs $P_{\roleP}$, $P_{\roleQ}$ and $P_{\roleR}$ 
follow.}
\label{fig:overview-of-topdown}
\end{figure}

\myparagraph{Precise Multiparty Asynchronous Subtyping.} 
\emph{Subtyping} enhances the expressiveness of typed programs.
The original synchronous subtyping \cite{Gay2005,DemangeonH11}
has been extended to 
accommodate \emph{asynchronous communications} 
\cite{Chen2017},
allowing the \emph{anticipation} 
of \emph{output actions} for message-passing optimisations.  
We say that a subtyping is 
\emph{precise} if it is both sound and complete--a subtyping relation 
is \emph{sound} if no typable program has a type or communication error. 
It is \emph{complete} if there is no strictly 
larger sound subtyping relation. It is proven that 
the synchronous subtyping relation is precise in both 
binary \cite{Chen2017} and multiparty sessions \cite{Ghilezan2019}. 

In this article, we consider \emph{asynchronous communication} 
which is modelled using infinite FIFO queues: output
processes put messages in queues, while input processes read messages 
from these queues. Hence messages are ordered but non-blocking. 
In asynchronous FIFO semantics, the types should
ensure not only deadlock-freedom, i.e., that input processes will
always find messages, but also \emph{liveness}, i.e., that
all messages in queues will eventually be consumed. 

Chen et al. \cite{Chen2017} have proposed the \emph{asynchronous subtyping relation}, 
where soundness is formulated as ensuring asynchronous liveness. 
The most technically involved but practically useful
subtyping system in this line of research
is proposed by Ghilezan et al.~\cite{GPPSY2023}, which presents the first formalisation of the precise 
subtyping relation for asynchronous MPST. The relation 
is defined based on a \emph{session decomposition} technique, from
full session types (including internal/external choices) into \emph{single
input/output session trees} (without choices). This session 
decomposition expresses the subtyping relation as a composition 
of refinement relations between single input/output trees and provides 
a reasoning principle for optimising the order between
asynchronous messages. See \cite{CDY2024} for a recent survey on
precise subtyping. 

Figure~\ref{fig:overview-of-topdown} describes the MPST workflow.
We assume a set of participants (or roles) $\roleSet$ in the distributed
system.
We specify programs for each participant
$\set{P_{\roleP}}_{\roleP\in\roleSet}$
and a \emph{global protocol (type)} $\gtG$.
To type-check the session
$\mpBigPar{\roleP\in\roleSet}{P_{\roleP}}$ (which denotes
a parallel composition of $P_{\roleP}$) 
against $\gtG$,
we project $\gtG$ onto a set of \emph{local protocols (types)}
$\set{\stT_{\roleP}}_{\roleP\in \roleSet}$ from the viewpoint of
each participant $\roleP$,
and synthesise \emph{local protocols (types)}
$\set{\stToptrole{\roleP}}_{\roleP\in \roleSet}$ from each
program $P_{\roleP}$.
We say that $\gtG$ types $\mpBigPar{\roleP\in\roleSet}{P_{\roleP}}$
iff $\stToptrole{\roleP}$ is a subtype of $\stT_{\roleP}$
\ie $\stT_{\roleP}$ can be refined or optimised to $\stToptrole{\roleP}$.

Asynchronous multiparty subtyping $\asubt$ allows 
for ``safe permutations'' of actions, 
enabling us to type a more optimised program $P_{\roleP}$ 
which conforms to $\stTopt[\roleP]$.
Once each program is typed,
we can automatically guarantee that a collection of distributed programs 
 $\{P_{\roleP}\}_{\roleP\in \roleSet}$ satisfy safety,
deadlock-freedom and liveness.

This top-down workflow 
is implemented by the MPST toolchains, {\sc{Scribble}} 
\cite{YHNN2013} and {$\nu$\sc{Scr}} \cite{YZF2021},  which
check whether a given global protocol is well-formed, and
if so, generate a corresponding set of local types. 
Building on {\sc{Scribble}} and {$\nu$\sc{Scr}}, 
several toolkits have been implemented. 
For instance, the Rust toolchain {\sc{Rumpsteak}} \cite{CYV2022} 
uses {$\nu$\sc{Scr}} to generate state machines,
from which \emph{optimised} APIs are
generated using a sound approximation of $\asubt$.  
We shall discuss the detailed related work on 
theories and implementations of multiparty asynchronous subtyping 
relations in \S~\ref{sec:related}. 
In the next subsection, we 
explain how we use the asynchronous subtyping relation for communication 
optimisations, introducing a running example. 

\subsection{Ring-Choice Example}
\label{sec:ring}
We explain our workflow by introducing 
a running example which will be referenced throughout this article, 
the ring-choice protocol $\gtG_\text{\tiny ring}$ 
from \cite{CutnerYoshida21}:
\begin{equation}
\gtG_\text{\tiny ring} = \gtRec{\gtRecVar}{
  \gtCommSingle{\roleP}{\roleQ}{\mathsf{add}}{\tyInt}
    {
      \gtCommRaw{\roleQ}{\roleR}{
        \begin{array}{@{}l@{}}
        \mathsf{add}(\tyInt)\gtSeq
        \gtCommRaw{\roleR}{\roleP}{
          \mathsf{add}(\tyInt)\gtSeq \gtRecVar
        }
        \\
        \mathsf{sub}(\tyInt)\gtSeq
        \gtCommRaw{\roleR}{\roleP}{
          \mathsf{sub}(\tyInt)\gtSeq \gtRecVar
        }
        \end{array}
      }
  }
}
\end{equation}
\begin{figure}[]
  \centering
  \resizebox{0.5\textwidth}{!}{
  \begin{tikzpicture}[
    x=1cm, y=-1cm, >=Stealth,
    msg/.style={->, thick, -{Stealth[length=3mm,width=2.2mm]}},
    lifeline/.style={densely dashed},
    frame/.style={draw, rounded corners, thick}
  ]
  
\coordinate (p) at (0,0);
  \coordinate (q) at (4,0);
  \coordinate (r) at (8,0);
  
  \node[font=\bfseries] at (p) {$\roleP$};
  \node[font=\bfseries] at (q) {$\roleQ$};
  \node[font=\bfseries] at (r) {$\roleR$};
  
\draw[lifeline] (p) ++(0,0.3) -- ++(0,7.6);
  \draw[lifeline] (q) ++(0,0.3) -- ++(0,7.6);
  \draw[lifeline] (r) ++(0,0.3) -- ++(0,7.6);
  
\draw[frame] (-1.2,0.6) rectangle (9.2,7.4);
  \node[anchor=north west, font=\scriptsize\bfseries] at (-1.15,0.6) {loop $\gtRecVar$};
  
\draw[msg] (0,1.3) -- node[above] {$\gtMsgFmt{add(\tyInt)}$} (4,1.3);
  
\draw[frame] (-0.9,2.0) rectangle (8.9,6.4);
  \node[anchor=north west, font=\scriptsize\bfseries] at (-0.85,2.0) {alt};
  
\draw[thick] (-0.9,4.2) -- (8.9,4.2);
  
\node[anchor=north west, font=\scriptsize] at (-0.75,2.2) {${\color{gray}[{\gtMsgFmt{add(\tyInt)}}]}$};
  \node[anchor=north west, font=\scriptsize] at (-0.75,4.4) {${\color{gray}[{\gtMsgFmt{sub(\tyInt)}}]}$};
  
\draw[msg] (4,2.8) -- node[above] {$\gtMsgFmt{add(\tyInt)}$} (8,2.8);
  \draw[msg] (8,3.6) -- node[above] {$\gtMsgFmt{add(\tyInt)}$} (0,3.6);
  
\draw[msg] (4,5.0) -- node[above] {$\gtMsgFmt{sub(\tyInt)}$} (8,5.0);
  \draw[msg] (8,5.8) -- node[above] {$\gtMsgFmt{sub(\tyInt)}$} (0,5.8);
  
\node[font=\itshape] at (4,6.95) {continue as $\gtRecVar$};
  
  \end{tikzpicture}
  }
  \caption{Message sequence chart for $\gtG[\mathrm{ring}]$ (one unfolding of $\mu \gtRecVar$).}
  \label{fig:ring-choice-mschart}
\end{figure}

The global type $\gtG[\text{\tiny ring}]$ specifies that:
\begin{enumerate}
\item $\roleP$ sends an integer $n$ to $\roleQ$ labelled by $\mathsf{add}$; 

\item $\roleQ$ sends an integer $m$ to $\roleR$ labelled by 
$\mathsf{add}$ or $\mathsf{sub}$; 
\begin{enumerate}
\item if $\mathsf{add}$ is selected, it sends the integer $m+k$ labelled by 
$\mathsf{add}$ to $\roleP$, and the protocol restarts from Step 1; and    
\item if $\mathsf{sub}$ is selected, it sends the integer $m-k$ labelled by 
$\mathsf{sub}$ to $\roleP$, and the protocol restarts from Step 1.   
\end{enumerate}
\end{enumerate}
Global types are semantically similar to more traditional 
protocol descriptions such as \emph{message sequence charts} (MSCs)~\cite{GraubmannRudolphGrabowski1993MSCStandardization,ITU-T-Z120-1996}.
In Figure~\ref{fig:ring-choice-mschart}
we provide a MSC for the ring protocol,
corresponding to $\gtG[\text{\tiny ring}]$,
showing the protocol visually.

If we assume \emph{synchronous interactions} as illustrated in 
Figure~\ref{fig:optimisation}(a),  
no data flow would occur from $\roleQ$ to $\roleR$ 
and from $\roleR$  to $\roleP$ \emph{before} 
$\roleQ$ receives data from $\roleP$. This synchronisation is 
captured by the local types which are projected from $\gtG$: 
\begin{align}
  \stT[\roleP] &= \stRec{\stRecVar}{
    \stIntSum{\roleQ}{}{
      \stChoice{add}{\tyInt} \stSeq
      \stExtSum{\roleR}{}{
        \stChoice{add}{\tyInt} \stSeq \stRecVar,
        \stChoice{sub}{\tyInt} \stSeq \stRecVar
      }
    }
  } \\
  \stT[\roleQ] &= \stRec{\stRecVar}{
    \stExtSum{\roleP}{}{
      \stChoice{add}{\tyInt} \stSeq
      \stIntSum{\roleR}{}{
        \stChoice{add}{\tyInt} \stSeq \stRecVar,
        \stChoice{sub}{\tyInt} \stSeq \stRecVar
      }
    }
  } \\
  \stT[\roleR] &= \stRec{\stRecVar}{
    \stExtSum{\roleQ}{}{
      \stChoice{add}{\tyInt} \stSeq
        \stIntSum{\roleP}{}{
          \stChoice{add}{\tyInt} \stSeq \stRecVar
        },
      \stChoice{sub}{\tyInt} \stSeq
        \stIntSum{\roleP}{}{
          \stChoice{sub}{\tyInt} \stSeq \stRecVar
        }
    }
  }
\end{align}
where the notation $\oplus$ is a \emph{selection type} which denotes 
an internal choice (followed by label and payload), while 
$\&$ denotes a \emph{branching type}, representing an external choice.

\begin{figure}[]
    \centering
    \begin{tabular}{c c}
\begin{tikzpicture}[x=1.1cm,y=1.1cm,->,process/.style={draw,minimum size=11pt}]
            \node[draw=none] (c_1_1) at (0,0) {};
            \node[process] (a_1) at (0.5,0) {$\roleP$};
            \node[process] (b_1) at (1,0) {$\roleQ$};
            \node[process] (c_1) at (1.5,0) {$\roleR$};
            \node[draw=none] (c_2_1) at (0,-0.5) {};
            \node[process] (a_2) at (0.5,-0.5) {$\roleP$};
            \node[process] (b_2) at (1,-0.5) {$\roleQ$};
            \node[process] (c_2) at (1.5,-0.5) {$\roleR$};
            \node[draw=none] (a_2_1) at (2,-0.5) {};
            \node[draw=none] (c_3_1) at (0,-1) {};
            \node[process] (a_3) at (0.5,-1) {$\roleP$};
            \node[process] (b_3) at (1,-1) {$\roleQ$};
            \node[process] (c_3) at (1.5,-1) {$\roleR$};
            \node[draw=none] (a_3_1) at (2,-1) {};
            \node[process] (a_4) at (0.5,-1.5) {$\roleP$};
            \node[process] (b_4) at (1,-1.5) {$\roleQ$};
            \node[process] (c_4) at (1.5,-1.5) {$\roleR$};
            \node[draw=none] (a_4_1) at (2,-1.5) {};
            \node[draw=none] (c_4_1) at (0,-1.5) {};
            \path (a_1) edge node {} (b_2);
            \path (b_2) edge node {} (c_3);
            \path[dashed] (c_3_1) edge node {} (a_4);
        \end{tikzpicture}                &
        \begin{tikzpicture}[x=1.1cm,y=1.1cm,->,process/.style={draw,minimum size=11pt}]
            \node[draw=none] (c_1_1) at (0,0) {};
            \node[process] (a_1) at (0.5,0) {$\roleP$};
            \node[process] (b_1) at (1,0) {$\roleQ$};
            \node[process] (c_1) at (1.5,0) {$\roleR$};
            \node[draw=none] (c_2_1) at (0,-0.5) {};
            \node[process] (a_2) at (0.5,-0.5) {$\roleP$};
            \node[process] (b_2) at (1,-0.5) {$\roleQ$};
            \node[process] (c_2) at (1.5,-0.5) {$\roleR$};
            \node[draw=none] (a_2_1) at (2,-0.5) {};
            \node[draw=none] (c_3_1) at (0,-1) {};
            \node[process] (a_3) at (0.5,-1) {$\roleP$};
            \node[process] (b_3) at (1,-1) {$\roleQ$};
            \node[process] (c_3) at (1.5,-1) {$\roleR$};
            \node[draw=none] (a_3_1) at (2,-1) {};
            \node[process] (a_4) at (0.5,-1.5) {$\roleP$};
            \node[process] (b_4) at (1,-1.5) {$\roleQ$};
            \node[process] (c_4) at (1.5,-1.5) {$\roleR$};
            \node[draw=none] (a_4_1) at (2,-1.5) {};
            \node[draw=none] (c_4_1) at (0,-1.5) {};
            \path (a_1) edge node {} (b_2);
            \path (b_1) edge node {} (c_2);
            \path[dashed] (c_2_1) edge node {} (a_3);
            \path (b_2) edge node {} (c_3);
            \path[dashed] (c_2) edge node {} (a_3_1);
            \path[dashed] (c_3_1) edge node {} (a_4);
            \path (a_3) edge node {} (b_4);
            \path[dashed] (c_3) edge node {} (a_4_1);
        \end{tikzpicture}                  \\
        {\footnotesize\begin{tabular}{c}
            (a) projection of $\gtG_\text{\tiny ring}$
\end{tabular}} &
        {\footnotesize\begin{tabular}{c}
            (b) optimised projection of $\gtG_\text{\tiny ring}$
                \end{tabular}}
    \end{tabular}
\caption{Ring protocol: Projected and optimised interactions 
{\small{(from \cite{CutnerYoshida21})}}}
\label{fig:optimisation}
\end{figure}

Under \emph{asynchronous interactions} illustrated in 
Figure~\ref{fig:optimisation}(b),   
assuming that each participant begins with its own initial value, 
$\roleQ$ can concurrently choose 
one of two labels to send the data 
to $\roleR$  \emph{before} receiving data from $\roleP$, 
letting $\roleR$ and $\roleP$ start the next action. 
By applying asynchronous subtyping ($\asubt$), we can optimise 
$\stT[\roleQ]$ to the following 
$\stTopt[\roleQ]$, pushing 
the external choice behind the internal one:
\begin{align}
  \stTopt[\roleQ] &= \stRec{\stRecVar}{
    \stIntSum{\roleR}{}{
      \stChoice{add}{\tyInt} \stSeq \stExtSum{\roleP}{}{\stChoice{add}{\tyInt} \stSeq \stRecVar},
      \stChoice{sub}{\tyInt} \stSeq \stExtSum{\roleP}{}{\stChoice{add}{\tyInt} \stSeq \stRecVar}
    }
  }
\end{align}
With process $P_{\roleQ}$ typed by $\stTopt[\roleQ]$, we can run
the ring protocol more efficiently (see \cite{CYV2022});
we make this precise in \cref{sec:properties}.
An overview of the history of asynchronous subtyping is given
by Chen et al.~\cite{CDY2024}, encompassing the theory and applications
of the relation.

\paragraph{\bf Associations and Type Soundness}
We say a set of local types $\set{\stTopt[\roleP]}_{\roleP\in \mathcal{P}}$ is
\emph{associated} to a global type $\gtG$ iff
they are refined or optimised by $\asubt$ from $\gtG$'s projection 
$\set{\stT[\roleP]}_{\roleP\in \mathcal{P}}$.
This article proves the \emph{sound} and \emph{complete operational correspondence}
between behaviours of $\gtG$
and $\set{\stTopt[\roleP]}_{\roleP\in \mathcal{P}}$.
This is the link between optimised
programs and global types in
the top-down workflow,
combining projection and optimisation into one step.

More formally, given a typing context 
$\stEnv=
\set{\roleP:(\quH[\roleP],\ \stT_{\roleP}^{\text{\tiny opt}})}_{\roleP\in \gtRoles{\gtG}}$ where $\gtRoles{\gtG}$ is a set of roles in $\gtG$, 
$\quH[\roleP]$ is the type of the queue for participant $\roleP$,   
then the \emph{association} between $\stEnv$
and a global type $\gtG$ is defined as follows:  
\begin{equation}\label{eq:intro-assoc}
  \stEnvAssoc{\stEnv}{\gtG}{a}\text{ if } 
\gtG \upharpoonright_{\roleP} \ (\quHi[\roleP], \stT[\roleP])\text{ and } 
\stToptrole{\roleP} \asubt \stT_{\roleP} \quad \text{ and }
\quH[\roleP]\tySub\quHi[\roleP]
\text{ for all} \ \roleP\in 
\gtRoles{\gtG} 
\end{equation}
where $\quH[\roleP]\tySub\quHi[\roleP]$ extends the asynchronous subtyping to the elements of each queue.
Here, $\gtG \upharpoonright_{\roleP}\ (\quHi[\roleP], \stT[\roleP])$ is the \emph{projection relation}, relating the global type $\gtG$ at participant $\roleP$ to a queue/local-type pair $(\quHi[\roleP], \stT[\roleP])$ that describes $\roleP$'s local behaviour (formally defined later in Def.\ 4). 
Once we obtain the soundness and
completeness of the association, 
we can derive the \emph{subject reduction}  
theorem and \emph{session fidelity} of the top-down approach from 
the corresponding results of the bottom-up system \cite[Theorems~4.11 and 4.13]{GPPSY2023}. The bottom-up system 
does \emph{not} use global types and their projections, but requires an additional
check that the collection of local types (i.e., a typing context) satisfies 
a \emph{safety} property \cite{POPL19LessIsMore}.  

More specifically, we divide the steps to derive these results as follows:
\begin{description}
\item[Step 1] We define the operational semantics of $\gtG$ 
(denoted by $\gtG\,\gtMove\,\gtGi$, which states $\gtG$ moves to
  $\gtGi$ after one communication between two participants) and 
a typing context $\stEnv$
(denoted by $\stEnv \stEnvMove \stEnvi$, which states $\stEnv$ moves to
  $\stEnvi$ after one communication). 
\item[Step 2] We prove \textbf{soundness}: if
$\stEnvAssoc{\stEnv}{\gtG}{a}$ and
$\gtG$ can take a step,
then there exist $\gtGi$ and $\stEnvi$ such
that
$\gtG\,\gtMove\,\gtGi$,
$\stEnv \stEnvMove \stEnvi$ and
$\stEnvAssoc{\stEnvi}{\gtGi}{a}$.
Here, $\gtG\,\gtMove\,\gtGi$ (resp.\ $\stEnv\stEnvMove\stEnvi$) denotes a single communication step of the global type (resp.\ typing context), formally introduced in Def.\ 7. The actual result we prove (Theorem~\ref{thm:gtype:proj-sound}) is in fact slightly stronger, but this extra precision is not needed for the subject reduction and session fidelity results.
  
\item[Step 3] We prove \textbf{completeness}: 
if $\stEnvAssoc{\stEnv}{\gtG}{a}$ and 
$\stEnv \,\stEnvMove\, \stEnvi$, 
then there exists $\gtGi$ such 
  that $\gtG\,\gtMove\,\gtGi$ and
  $\stEnvAssoc{\stEnvi}{\gtGi}{a}$. 

\item[Step 4]
We define the typing rule for multiparty session processes
using the association:
\[
\inference[SessTop]{\forall \roleP\in \dom{\stEnv}\quad  
\stJudge{}{P_{\roleP}}{\stT[\roleP]} \quad 
\stJudge{}{h_{\roleP}}{\quH[\roleP]} \quad 
\stEnv(\roleP)=(\quH[\roleP],\stT[\roleP])\quad 
\stEnvAssoc{\stEnv}{\gtG}{a}}{\stJudgeTop{}{{\Pi_{\roleP\in \dom{\stEnv}}{\ (\roleP \triangleleft 
P_{\roleP} \ | \ \roleP \triangleleft h_{\roleP})\ }}}{\stEnv}}
\]
where $\stJudge{}{P}{\stT}$ is a typing judgement to assign 
type $\stT$ to process $P$ and 
$\stJudge{}{h}{\quH}$ assigns type $\quH$ to a FIFO queue 
$h$ (defined in \cite[Figure~5]{GPPSY2023}). 
$\roleP \triangleleft P_{\roleP}$ means 
process $P_{\roleP}$ is acting as 
participant $\roleP$, buffering sent messages in
its queue $\roleP \triangleleft h_{\roleP}$. 

\item[Step 5] 
We prove \textbf{the subject reduction theorem of the top-down system} using
the completeness of the association with 
the subject reduction theorem of the bottom-up system 
\cite[Theorem~4.11]{GPPSY2023}; and  
\textbf{the session fidelity theorem of the top-down system}
using the soundness and completeness of the association with 
the session fidelity theorem of the bottom-up system 
\cite[Theorem~4.13]{GPPSY2023}.
We additionally prove that all projected typing contexts
are \textbf{safe}, \textbf{deadlock-free} and \textbf{live},
hence, by \cite[Theorem~4.12]{GPPSY2023},
the same is true for all associated typing contexts
(\Cref{thm:assoc-live}).
We give detailed explanations and proofs in \S~\ref{sec:properties}.  
\end{description}

\subsection{Outline and Contributions}
\label{subsec:outline}

We provide the first sound top-down system
of the asynchronous multiparty session types framework, from
global types right through to projected local types, their
asynchronous subtypes, and all the way down to the processes they type, 
including the most flexible form of projection. Importantly, we specify the \emph{operational correspondence} between these levels, including
in the presence of the additional semantic flexibility allowed by asynchrony. 
This enables us to prove subject reduction and session fidelity theorems for the top-down system.

Previous works have studied incomplete parts of this system, which are
disconnected from each other: the asynchronous subtyping relation \cite{GPPSY2023},
extensions to global type syntax to allow for en-route messages \cite{BHYZ2025,DBLP:conf/pldi/Castro-Perez0GY21},
the coinductive full projection relation in the synchronous setting
\cite{thien-nobuko-popl-25}, and  
the liveness in the bottom-up system \cite{POPL19LessIsMore}.

For completing the top-down system with
$\asubt$,
many critical components have been missing, up until now,
including (1) the operational correspondence between global types and
associated typing contexts (\Cref{sec:comp_sound}),
(2) correctness of coinductive full projection
for asynchronous session types (\Cref{sec:projections}),
and (3)
liveness, deadlock-freedom and safety of local types projected from a
global protocol (\Cref{sec:live}).
We fill these gaps, and in doing so we introduce several new technical
contributions,
including a non-deterministic operational semantics for global types
(\Cref{sec:gtype:lts-gt})
that captures the additional behaviours permitted by asynchronous reorderings;
the $\text{balanced}^+$ well-formedness condition and associated
depth functions (\Cref{sec:bal});
forgetful contexts (\Cref{def:global-ctx,def:local-tree-ctx})
which handle the non-determinism arising from asynchronous subtyping
in the soundness proof;
and context projection (\Cref{def:context-proj})
which lifts the projection relation to contexts.

Our major contributions are the soundness and completeness theorems
(\Cref{thm:gtype:proj-sound,thm:gtype:proj-comp})
which establish that the semantics of global types
(\Cref{sec:gtype:lts-gt}) can faithfully mirror the semantics of
any of its associated
typing contexts,
which includes the projected typing contexts
obtained via the coinductive projection
(\Cref{def:global-proj}).
We also prove for the first time that typing contexts
projected from a global type under asynchronous subtyping
are live (and hence safe and deadlock-free)
through the liveness theorem
(\Cref{thm:assoc-live}).
No previous work has established liveness
for projections in the presence of asynchronous subtyping
and coinductive full merging.
We use this to derive the subject reduction and
session fidelity results in \Cref{sec:properties}.

We provide an extensive exploration of global and local types
in~\Cref{sec:gtype:syntax}, 
including syntax, projection, and subtyping. 
We define operational semantics for both global
types (\Cref{sec:gtype:lts-gt}) 
and typing contexts (\Cref{sec:gtype:lts-context}). 
We establish the sound and complete operational relationship 
between these two semantics in~\Cref{sec:gtype:relating}.

This paper provides a significantly
updated and corrected version of the proofs 
of the theorems
from the extended abstract in \cite{PY2025},
as well as new examples and detailed explanations.
Compared to \cite{PY2025},
we have: (1) added full and corrected proofs for all lemmas and
theorems;
(2) corrected issues with the semantics of global types in \cite{PY2025}; 
(3) added new technical devices and syntactic functions to simplify
the proofs; and (4) added new examples to illustrate our technical results.
\section{Multiparty Session Types}
\label{sec:sessiontypes}
This section introduces \emph{global} and \emph{local} types, together with \emph{queue} types. 
As in the work of Barwell et al.~\cite{BHYZ2025}, our formulation of global types includes special runtime-specific constructs to allow 
global types to represent en-route messages which have been sent but not yet received, 
and we give a novel projection relation (\Cref{def:global-proj}) which extends the standard coinductive projection \cite[Definition 3.6]{Ghilezan2019} to 
asynchronous semantics by simultaneously projecting onto both local and queue types. 
We follow this by introducing the \emph{precise asynchronous} subtyping,
which will allow us to preemptively send messages.

\subsection{Global and Local Types}
\label{sec:gtype:syntax}
Multiparty Session Type (MPST) theory uses \emph{global types} to
provide a comprehensive overview of communications between 
\emph{roles}, such as $\roleP, \roleQ, \roleS, \roleT, \ldots$, belonging to a set $\roleSet$.
It employs \emph{local types}, which are obtained via \emph{projection} from a global type,  
to describe how an \emph{individual} role communicates with other roles from a local viewpoint. 
The syntax of global and local types is presented below
where constructs are mostly standard~\cite{POPL19LessIsMore}. \\[2mm]

\centerline{\(
    \begin{array}{r@{\quad}c@{\quad}l@{\quad}l}
     \tyGround & \bnfdef & \tyInt \bnfsep \tyBool \bnfsep \tyReal \bnfsep \tyUnit \bnfsep \ldots
        & \text{\footnotesize Basic types} 
        \\[.5mm]
      \gtG & \bnfdef &
        \gtComm{\roleP}{\roleQ}{i \in I}{\gtLab[i]}{\tyGround[i]}{\gtG[i]}
        &
        {\footnotesize\text{Transmission}}
         \\[.5mm]
         & \bnfsep & \gtCommSquig{\roleP}{\roleQ}{\gtLab}{}{\gtLab}{\tyGround}{\gtG}
         & \text{\footnotesize Transmission en route}  \\[1ex]
        & \bnfsep & \gtRec{\gtRecVar}{\gtG} \quad \bnfsep \quad \gtRecVar 
        \quad \bnfsep \quad \gtEnd
        &
        \text{\footnotesize Recursion, Type variable, Termination}  \\[1ex]
      \stT
        & \bnfdef & \stExtSum{\roleP}{i \in
    I}{\stChoice{\stLab[i]}{\tyGround[i]} \stSeq \stT[i]}
        \quad \bnfsep \quad
        \stIntSum{\roleP}{i \in
    I}{\stChoice{\stLab[i]}{\tyGround[i]} \stSeq \stT[i]} 
          & \text{\footnotesize External and internal choices} 
          \\[.5mm]
& \bnfsep & \stRec{\stRecVar}{\stT} \quad \bnfsep \quad \stRecVar 
        \quad \bnfsep \quad \stEnd
        &
        \text{\footnotesize Recursion, Type variable, Termination} \\[1ex]
     \quH 
       & \bnfdef & \quEmpty \! \bnfsep \! \quMsg{\roleQ}{\stLab}{\tyGround} \! \bnfsep \! \quCons{\quH}{\quH}
       & \text{\footnotesize Empty Queue, Message, Concatenation}
       \\[.5mm]
    \end{array}
  \)}

\smallskip
\noindent
{\bf Basic types}  are taken from a set $\tyGroundSet$, and describe types of values, ranging over integers, booleans, real numbers, units, \etc. 

\smallskip
\noindent
{\bf Global types} range over $\gtG, \gtGi, \gtG[i], \ldots$, 
and describe the high-level behaviour for all roles. 
The set of participants (or roles)
in a global
type $\gtG$ is denoted by $\gtRoles{\gtG}$.  
We explain each syntactic construct of global types. 

\begin{itemize}[left=0pt, topsep=0pt]
\item  
$\gtComm{\roleP}{\roleQ}{i \in
I}{\gtLab[i]}{\tyGround[i]}{\gtG[i]}$: a \emph{transmission}, denoting a
message from role $\roleP$ to role $\roleQ$, 
with a label $\gtLab[i]$, a payload of type $\tyGround[i]$, 
and a continuation $\gtG[i]$, where $i$ is taken from an index
set $I$. We require that the index set be non-empty ($I \neq \emptyset$), labels
$\gtLab[i]$ be pair-wise distinct, and self receptions be excluded (\ie
$\roleP \neq \roleQ$). 
\item  
$\gtCommSquig{\roleP}{\roleQ}{\gtLab}{}{\gtLab}{\tyGround}{\gtG}$: a \emph{transmission en route}, representing a \emph{transmission} 
of the message $\gtLab$ which has already been sent by role $\roleP$ but has not been received by role $\roleQ$. This type 
is only meaningful at runtime. 
We require that self receptions be excluded (\ie
$\roleP \neq \roleQ$). 
\item $\gtRec{\gtRecVar}{G}$: a \emph{recursive} global type, where 
contractive requirements apply~\cite[\S 21.8]{PierceTAPL}, \ie 
each recursion variable $\gtRecVar$ is bound within a $\gtRec{\gtRecVar}{\ldots}$ 
and is guarded. 
\item $\gtEnd$: a \emph{terminated} global type (omitted where unambiguous).
\end{itemize}

\noindent
{\bf Local types} 
(or \emph{session types}) range over $\stT, \stTi, \stT[i], \ldots$, and describe
the behaviour of a single role. We elucidate each syntactic construct of local types.

An \emph{internal choice}~(\emph{selection}), $\stIntSum{\roleP}{i \in I}{\stChoice{\stLab[i]}{\tyGround[i]} \stSeq \stT[i]}$, 
indicating that the \emph{current} role is expected to \emph{send} to role $\roleP$;
an \emph{external choice} (\emph{branching}), $\stExtSum{\roleP}{i \in I}{\stChoice{\stLab[i]}{\tyGround[i]} \stSeq \stT[i]}$, 
indicating that the \emph{current} role is expected to \emph{receive} from role $\roleP$; 
a \emph{recursive} local type $\stRec{\stRecVar}{\stT}$, following a pattern analogous to $\gtRec{\gtRecVar}{G}$; 
a \emph{termination} $\stEnd$ (omitted where
unambiguous). Similar to global types, local types also need  pairwise distinct, non-empty labels.

\noindent
{\bf Queue types} range over $\quH, \quHi, \quH[i], \ldots$, and describe the type of queues storing buffered asynchronous messages:  
$\quEmpty[]$ is the empty queue;  
$\quMsg{\roleP}{\stLab}{\tyGround}$ is the type of a queued message being sent to participant $\roleP$ 
with a message label $\stLab$ and a payload of type $\tyGround$; and 
$\quCons{\quH}{\quHi}$ is the concatenation of two queues. 
We consider queue types up-to associativity and we allow queue elements with distinct destinations to commute~\cref{def:queue-equiv}, this simulates the existence of a queue for each potential recipient. We take the convention that
$\quCons{\quMsg{\roleP}{\stLab}{\tyGround}}{\quH}$ is a queue with
head $\quMsg{\roleP}{\stLab}{\tyGround}$,
which can be dequeued, and tail $\quH$,
and that enqueueing $\quMsg{\roleP}{\stLabi}{\tyGroundi}$ to $\quHi$
gives us $\quCons{\quHi}{\quMsg{\roleP}{\stLabi}{\tyGroundi}}$
\ie queues are read left-to-right.

\begin{definition}[Queue Equivalence~\cite{GPPSY2023}]
	\label{def:queue-equiv}
	We define the relation $\equiv$ inductively on syntactic queues by:
\[
\begin{array}{c}
\inference[]{}{\quH\equiv\quH}
\qquad
\inference[]{}{\quCons{\quEmpty}{\quH}\equiv\quH}
\qquad
\inference[]{}{\quCons{\quH}{\quEmpty}\equiv\quH}
\qquad
\inference[]
{}
{\quCons{(\quCons{\quH}{\quHi})}{\quHiii}
\equiv \quCons{\quH}{(\quCons{\quHi}{\quHiii})}}
\\[2.5ex]
\inference[]
{\quH[L]\equiv\quHi[L] \quad \quH[R]\equiv\quHi[R]}
{\quCons{\quH[L]}{\quH[R]}\equiv\quCons{\quHi[L]}{\quHi[R]}}
\qquad
\inference[]
{\roleP\ne\roleQ}
{\quCons{\quMsg{\roleP}{\stLab}{\tyGround}}{\quMsg{\roleQ}{\stLabi}{\tyGroundi}}
\equiv 
\quCons{\quMsg{\roleQ}{\stLabi}{\tyGroundi}}{\quMsg{\roleP}{\stLab}{\tyGround}}}
\end{array}
\]
\end{definition}
  
\begin{definition}[Unfolding]
	\label{def:unfolding}
	We define the unfolding operator on global types and local types recursively to unfold recursive types.
	We define $\unfoldOne{\gtRec{\gtRecVar}{\gtG}}=\unfoldOne{\gtG{}[\gtRec{\gtRecVar}{\gtG}/\gtRecVar]}$
	and $\unfoldOne{\gtG}=\gtG$ otherwise.
	We define $\unfoldOne{\stRec{\stRecVar}{\stT}}=\unfoldOne{\stT{}[\stRec{\stRecVar}{\stT}/\stRecVar]}$
	and $\unfoldOne{\stT}=\stT$ otherwise.
\end{definition}

Below we define sets of roles for a given global type. 

\begin{definition}[Role Functions]
\label{def:roles}\ \\
\begin{itemize}
\item
We define $\gtRoles{\gtG}$
to be \emph{the set of participants} in $\gtG$
by induction on the structure of $\gtG$: 
$\gtRoles{\gtEnd} = \emptyset$; 
$\gtRoles{\gtRecVar} = \emptyset$; 
$\gtRoles{\gtRec{\gtRecVar}{\gtG}} = \gtRoles{\gtG}$; 
$\gtRoles{\gtComm{\roleP}{\roleQ}{i \in
I}{\gtLab[i]}{\tyGround[i]}{\gtG[i]}}
= \{\roleP,\roleQ\}\cup\bigcup_{i\in I}\gtRoles{\gtG[i]}$; and 
$\gtRoles{\gtCommSquig{\roleP}{\roleQ}{\gtLab}{}{\gtLab}{\tyGround}{\gtG}}
= \{\roleP,\roleQ\}\cup\gtRoles{\gtG}$. 
\item
We define $\gtSRoles{\gtG}$
to be the set of participants in $\gtG$
that have sent an en-route transmission
by induction on the structure of $\gtG$:
$\gtSRoles{\gtEnd} = \emptyset$; 
$\gtSRoles{\gtRecVar} = \emptyset$; 
$\gtSRoles{\gtRec{\gtRecVar}{\gtG}} = \gtSRoles{\gtG}$; 
$\gtSRoles{\gtComm{\roleP}{\roleQ}{i \in
I}{\gtLab[i]}{\tyGround[i]}{\gtG[i]}} = \bigcup_{i\in
I}\gtSRoles{\gtG[i]}$; and 
$\gtSRoles{\gtCommSquig{\roleP}{\roleQ}{\gtLab}{}{\gtLab}{\tyGround}{\gtG}} = \{\roleP\}\cup\gtSRoles{\gtG}$
\item
We define $\gtARoles{\gtG}$
to be the set of participants in $\gtG$
that may perform actions,
\ie appears either as the receiver of a transmission
or as the sender of a non-en-route transmission,
by induction on the structure of $\gtG$:
$\gtARoles{\gtEnd} = \emptyset$; 
$\gtARoles{\gtRecVar} = \emptyset$; 
$\gtARoles{\gtRec{\gtRecVar}{\gtG}} = \gtARoles{\gtG}$; 
$\gtARoles{\gtComm{\roleP}{\roleQ}{i \in
I}{\gtLab[i]}{\tyGround[i]}{\gtG[i]}}
= \{\roleP,\roleQ\}\cup\bigcup_{i\in I}\gtARoles{\gtG[i]}$;
and $\gtARoles{\gtCommSquig{\roleP}{\roleQ}{\gtLab}{}{\gtLab}{\tyGround}{\gtG}} = \{\roleQ\}\cup\gtARoles{\gtG}$
\end{itemize}
\end{definition}

$\gtSRoles{\gtG}$ are the participants that have non-empty queues
once projected, and
$\gtARoles{\gtG}$ are the participants that have non-terminated local types
once projected.

We will now use the ring protocol global type,
and an intermediate global type,
to demonstrate unfolding
and the role functions.
\begin{example}[Ring Protocol]
Recall the ring protocol (\cref{sec:ring}):
\begin{equation}
\gtG_\text{\tiny ring} = \gtRec{\gtRecVar}{
  \gtCommSingle{\roleP}{\roleQ}{\mathsf{add}}{\tyInt}
    {
      \gtCommRaw{\roleQ}{\roleR}{
        \begin{array}{@{}l@{}}
        \mathsf{add}(\tyInt)\gtSeq
        \gtCommRaw{\roleR}{\roleP}{
          \mathsf{add}(\tyInt)\gtSeq \gtRecVar
        }
        \\
        \mathsf{sub}(\tyInt)\gtSeq
        \gtCommRaw{\roleR}{\roleP}{
          \mathsf{sub}(\tyInt)\gtSeq \gtRecVar
        }
        \end{array}
      }
  }
}
\end{equation}
We have that $\gtRoles{\gtG_{\text{\tiny ring}}}=
\gtARoles{\gtG_{\text{\tiny ring}}}=\{\roleP,\roleQ,\roleR\}$
and $\gtSRoles{\gtG_{\text{\tiny ring}}}=\emptyset$.
We also have that
\begin{equation}
\begin{adjustbox}{width=\columnwidth,center}\(
\unfoldOne{\gtG_\text{\tiny ring}} = 
{
  \gtCommSingle{\roleP}{\roleQ}{\mathsf{add}}{\tyInt}
    {
      \gtCommRaw{\roleQ}{\roleR}{
        \begin{array}{@{}l@{}}
        \mathsf{add}(\tyInt)\gtSeq
        \gtCommRaw{\roleR}{\roleP}{
          \mathsf{add}(\tyInt)\gtSeq \gtRec{\gtRecVar}{
  \gtCommSingle{\roleP}{\roleQ}{\mathsf{add}}{\tyInt}
    {
      \gtCommRaw{\roleQ}{\roleR}{
        \begin{array}{@{}l@{}}
        \mathsf{add}(\tyInt)\gtSeq
        \gtCommRaw{\roleR}{\roleP}{
          \mathsf{add}(\tyInt)\gtSeq \gtRecVar
        }
        \\
        \mathsf{sub}(\tyInt)\gtSeq
        \gtCommRaw{\roleR}{\roleP}{
          \mathsf{sub}(\tyInt)\gtSeq \gtRecVar
        }
        \end{array}
      }
  }
}
        }
        \\
        \mathsf{sub}(\tyInt)\gtSeq
        \gtCommRaw{\roleR}{\roleP}{
          \mathsf{sub}(\tyInt)\gtSeq \gtRec{\gtRecVar}{
  \gtCommSingle{\roleP}{\roleQ}{\mathsf{add}}{\tyInt}
    {
      \gtCommRaw{\roleQ}{\roleR}{
        \begin{array}{@{}l@{}}
        \mathsf{add}(\tyInt)\gtSeq
        \gtCommRaw{\roleR}{\roleP}{
          \mathsf{add}(\tyInt)\gtSeq \gtRecVar
        }
        \\
        \mathsf{sub}(\tyInt)\gtSeq
        \gtCommRaw{\roleR}{\roleP}{
          \mathsf{sub}(\tyInt)\gtSeq \gtRecVar
        }
        \end{array}
      }
  }
}
        }
        \end{array}
      }
  }
}\)
\end{adjustbox}
\end{equation}
\end{example}

\begin{example}[Ring Protocol II]
Consider the global type that occurs during the ring protocol:
\begin{equation}
\gtG=\gtCommSquigSingle{\roleP}{\roleQ}{\gtMsgFmt{add}}{\mathsf{add}}{\tyInt}
      {
        \gtCommRaw{\roleQ}{\roleR}{
          \begin{array}{@{}l@{}}
          \mathsf{add}(\tyInt)\gtSeq
          \gtCommRaw{\roleR}{\roleP}{
            \mathsf{add}(\tyInt)\gtSeq  \gtG_\text{ring}
          }
          \\
          \mathsf{sub}(\tyInt)\gtSeq
          \gtCommRaw{\roleR}{\roleP}{
            \mathsf{sub}(\tyInt)\gtSeq  \gtG_\text{ring}
          }
          \end{array}
        }
    }
\end{equation}
We have that $\gtRoles{\gtG}=\gtARoles{\gtG}=\{\roleP,\roleQ,\roleR\}$,
$\gtSRoles{\gtG}=\{\roleP\}$,
and $\unfoldOne{\gtG}=\gtG$.
\end{example}

\subsection{Projections}
\label{sec:projections}
\noindent In the top-down approach of MPST, local types are obtained
by projecting a global type onto roles. Whereas projection is traditionally presented as a \emph{partial function} from a global type and a participant to a local type~\cite{HYC2016}, our \Cref{def:global-proj} below instead defines it as a \emph{coinductive ternary relation} $\gtProjRel{\gtG}{\roleP}{(\quH, \stT)}$ between a global type $\gtG$, a participant $\roleP$, and a (queue type, local type) pair. It should be read as a \emph{predicate} which holds when $(\quH, \stT)$ is a valid projection of $\gtG$ at $\roleP$. The queues $\quH$ allow us to capture buffered messages from en-route transmissions at the local level.

The rules are organised by the relationship between the projected role $\roleP$ and the outermost interaction of $\gtG$. When $\roleP$ is the sender, its local type begins with an internal choice (\RULE{P-$\oplus$}); when $\roleP$ is the receiver, with an external choice (\RULE{P-$\&$}); when $\roleP$ is not involved, the continuations from each branch must be reconciled into a single local type using merge (\RULE{P-$\stBinMerge$}). Each of these three cases has a corresponding rule (suffixed~-II) that handles the additional presence of an en-route transmission in $\gtG$, extracting the relevant queue prefix before proceeding. Finally, \RULE{P-End} and \RULE{P-$\quEmpty$} cover termination.

\begin{definition}[Global Type Projection]\label{def:global-proj}\label{def:local-type-merge}The \emph{projection relation} $\gtProjRel{\gtG}{\roleP}{(\quH, \stT)}$ (read ``$(\quH, \stT)$ is a projection of the global type $\gtG$ onto the participant $\roleP$'') is the largest relation such that whenever $\gtProjRel{\gtG}{\roleP}{(\quH, \stT)}$ holds:
	\begin{itemize}[leftmargin=0.5in,labelindent=-\leftmargin]
		\item[\inferrule{P-End}] If $\roleP\notin\gtARoles{\gtG}$, then $\unfoldOne{\stT}=\stEnd$
		\item[\inferrule{P-$\quEmpty$}] If $\roleP\notin\gtSRoles{\gtG}$, then $\quH=\quEmpty$ \item[\inferrule{P-$\stBinMerge$}] If $\unfoldOne{\gtG}=\gtCommSmall{\roleQ}{\roleR}{i \in I}{\gtLab[i]}{\tyGround[i]}{\gtG[i]}$, then for all $i\in I$
		there exist $\stT[i]$ such that $\gtProjRel{\gtG[i]}{\roleP}{\left(\quH, \stT[i]\right)}$ and
		$\stMerge{}{\{\stT[i]\}_{i\in I}} \ni \stT$
		\item[\inferrule{P-$\stBinMerge$-II}] If $\unfoldOne{\gtG}=\gtCommSquigSmall{\roleQ}{\roleR}{\gtLab}{}{\gtLab}{\tyGround}{\gtGi}$, then
		$\gtProjRel{\gtGi}{\roleP}{\left(\quH, \stT\right)}$
		\item[\inferrule{P-$\oplus$}] If $\unfoldOne{\gtG}=\gtCommSmall{\roleP}{\roleQ}{i \in I}{\gtLab[i]}{\tyGround[i]}{\gtG[i]}$, then
		for all $i\in I$
		there exist $\stT[i]$ such that $\gtProjRel{\gtG[i]}{\roleP}{\left(\quH, \stT[i]\right)}$ and
		$\unfoldOne{\stT}=\stIntSum{\roleQ}{i \in I}{\stChoice{\stLab[i]}{\tyGround[i]} \stSeq \stT[i]}$
		\item[\inferrule{P-$\oplus$-II}] If $\unfoldOne{\gtG}=\gtCommSquigSmall{\roleP}{\roleQ}{\gtLab}{}{\gtLab}{\tyGround}{\gtGi}$, then
		there exists $\quHi$ such that $\gtProjRel{\gtGi}{\roleP}{\left(\quHi, \stT\right)}$, and
		$\quH=\quCons{\quMsg{\roleQ}{\gtLab[j]}{\tyGround[j]}}{\quHi}$
		\item[\inferrule{P-$\&$}] If $\unfoldOne{\gtG}=\gtCommSmall{\roleQ}{\roleP}{i \in I}{\gtLab[i]}{\tyGround[i]}{\gtG[i]}$, then
		for all $i\in I$
		there exist $\stT[i]$ such that $\gtProjRel{\gtG[i]}{\roleP}{\left(\quH, \stT[i]\right)}$ and
		$\unfoldOne{\stT}=\stExtSum{\roleQ}{i \in I}{\stChoice{\stLab[i]}{\tyGround[i]} \stSeq \stT[i]}$
		\item[\inferrule{P-$\&$-II}] If $\unfoldOne{\gtG}=\gtCommSquigSmall{\roleQ}{\roleP}{\gtLab}{}{\gtLab}{\tyGround}{\gtGi}$, then
		there exists $\stTi$ such that $\gtProjRel{\gtGi}{\roleP}{\left(\quH, \stTi\right)}$ and
		$\unfoldOne{\stT}=\stExtSum{\roleQ}{}{\stChoice{\stLab}{\tyGround} \stSeq \stTi}$
	\end{itemize}

\noindent
In rule \RULE{P-$\stBinMerge$}, a role not involved in a choice cannot observe which branch was taken. The projection must therefore reconcile the continuations from each branch into a single local type that is compatible with all of them. This is the role of $\stMerge{}{}$, the \emph{merge operation for session types} (\emph{full merging}).
$\stMerge{}{}$
is a partial function that takes
a finite and non-empty set of local types, $\mathcal{T}$,
and returns
a set of local types, $\stMerge{}{\mathcal{T}}$.
We say that $\stT$ is a merge of $\mathcal{T}$
(equivalently, $\stT$ is a merge of the types in $\mathcal{T}$)
to mean that $\stT\in\stMerge{}{\mathcal{T}}$.
We say that $\mathcal{T}$ is mergeable
(equivalently, the types in $\mathcal{T}$ are mergeable)
if $\stMerge{}{\mathcal{T}}\neq\emptyset$.
In the case of indexed sets,
we may write $\stMerge{i\in I}{\stT[i]}$
for $\stMerge{}{\{\stT[i]\suchthat i\in I\}}$.

\noindent
We define the merge operation using the coinductive relation $\stMergeRel{}{}$ below,
by defining
$\stMerge{}{\mathcal{T}}=\{\stT\suchthat\stMergeRel{\mathcal{T}}{\stT}\}$.
We define $\stMergeRel{}{}$ to be
the largest relation between non-empty finite sets of local types
and local types such that if $\stMergeRel{\{\stT[i]\suchthat i\in I\}}{\stT}$, then:
\begin{itemize}[leftmargin=0.5in,labelindent=-\leftmargin]
    \item[\inferrule{$\mergeRuleEnd$}]
        If $\unfoldOne{\stT}=\stEnd$, then
        $\unfoldOne{\stT[i]}=\stEnd$ for all $i\in I$
    \item[\inferrule{$\mergeRuleOut$}]
        If $\unfoldOne{\stT}=\stIntSum{\roleQ}{j \in J}{\stChoice{\stLab[j]}{\stS[j]} \stSeq \stTi[j]}$, then for all $i\in I$
        $\unfoldOne{\stT[i]}=\stIntSum{\roleQ}{j \in J}{\stChoice{\stLab[j]}{\stS[j]} \stSeq \stT[i,j]}$
        where for all $j\in J$ $\stMergeRel{\left\{\stT[i,j]\suchthat i\in I\right\}}{\stTi[j]}$
    \item[\inferrule{$\mergeRuleIn$}]
        If $\unfoldOne{\stT} = \stExtSum{\roleP}{j \in J}{\stChoice{\stLab[j]}{\stS[j]}\stSeq \stTi[j]}$, then for all $i\in I$
        $\unfoldOne{\stT[i]}=\stExtSum{\roleP}{j \in J_{i}}{\stChoice{\stLab[j]}{\stS[j]} \stSeq \stT[i,j]}$
        where $J=\bigcup_{i\in I}J_{i}$ and for all $j\in J$
        $\stMergeRel{\left\{\stT[i,j]\suchthat j\in J_{i} \right\}}{\stTi[j]}$
\end{itemize}
\noindent
 Here $\unfoldOne{\cdot}$ is the unfolding function of \cref{def:unfolding}.
\\[2ex]
When the queue is empty,
we may leave it implicit \ie
if $\gtProjRel{\gtG}{\roleP}{(\quEmpty,\stT)}$ then
we may instead write $\gtProjRel{\gtG}{\roleP}{\stT}$.
\end{definition}

The rule \RULE{P-End} says that if a global type doesn't specify behaviour for some role, then its projection has no behaviour.
The rule \RULE{P-$\quEmpty$} is similar, it says that if a role doesn't add anything to the queue,
then its queue must be empty.
The rule \RULE{P-$\stBinMerge$} says that if the head of the global type does not involve
a participant and
the projections of the continuations are compatible, then
the projection of the global type is their merge.
The rule \RULE{P-$\stBinMerge$-II} says that
if the head of a global type is an en-route transmission not involving $\roleP$, then
the projection onto $\roleP$ is the projection of the continuation.
The rule \RULE{P-$\oplus$} says that if the head of a global type is a communication with sender $\roleP$ and destination $\roleQ$,
then the head of the projection onto $\roleP$ is a send to $\roleQ$ with the same messages and payloads,
and the local continuations are the projections of the global continuations.
The rule \RULE{P-$\oplus$-II}  
allows an en-route message $\quMsg{\roleQ}{\stLab[j]}{\tyGround[j]}$ to be included in the projected queue of outgoing messages.  
If a global type $\gtG$ starts with a transmission from role $\roleP$ to role $\roleQ$, projecting it onto 
role $\roleP$~(resp. $\roleQ$) results in an internal (resp. external) choice, provided that the continuation 
$\gtG$ is also projectable.
The rules \RULE{P-\&} and \RULE{P-\&-II} say that if the head of a global type is a communication with sender $\roleQ$ and destination $\roleP$,
then the head of the projection onto $\roleP$ is a receive from $\roleQ$ with the same messages and payloads,
and the local continuations are the projections of the global continuations.
When projecting $\gtG$ onto other participants $\roleR$~($\roleR \neq \roleP$ and $\roleR \neq \roleQ$), a merge operator, as defined in~\Cref{def:global-proj},
is used to ensure that the projections of all continuations are ``compatible''. It is noteworthy that there are global types that cannot be projected 
onto all of their participants as shown by Udomsrirungruang and Yoshida~\cite[\S4.4]{thien-nobuko-popl-25}.
\\[2ex]
We now return to our running example
to demonstrate a
derivation of
an instance of a projection and a merge.
\begin{example}[Ring Protocol Projection]
\label{ex:ring-proj}
Recall $\stT[\roleP]$ from the ring-choice example~(\cref{sec:ring}):
\begin{equation}
\stT[\roleP]= \stRec{\stRecVar}{
    \stIntSum{\roleQ}{}{
      \stChoice{add}{\tyInt} \stSeq
      \stExtSum{\roleR}{}{
        \stChoice{add}{\tyInt} \stSeq \stRecVar,
        \stChoice{sub}{\tyInt} \stSeq \stRecVar
      }
    }
  }
\end{equation}
We can derive $\gtProjRel{\gtG_\text{ring}}{\roleP}{\stT[\roleP]}$
by applying
${\color{ruleColor}\textsc{\scriptsize [P-\ensuremath{\oplus}]}}$, 
${\color{ruleColor}\textsc{\scriptsize [P-\ensuremath{\stBinMerge}]}}$, 
and merging the results of applying ${\color{ruleColor}\textsc{\scriptsize [P-\ensuremath{\&}]}}$, to the coinductive hypothesis for each branch.
\begin{equation}
	\cinference[P-$\oplus$]
	{
		\cinference[P-$\stBinMerge$]
		{
			\cinference[P-$\&$]{\gtProjRel{\gtG_\text{ring}}{\roleP}{\stT[\roleP]}}
			{
				\gtProjRel
				{
					\gtCommRaw{\roleR}{\roleP}{
					\mathsf{add}(\tyInt)\gtSeq \gtG_\text{ring}
					}
				}{\roleP}
				{
					\stExtSum{\roleR}{}{\stChoice{add}{\tyInt} \stSeq \stT[\roleP]}
				}
			}
			\cinference[P-$\&$]{\gtProjRel{\gtG_\text{ring}}{\roleP}{\stT[\roleP]}}
			{
				\gtProjRel
				{
					\gtCommRaw{\roleR}{\roleP}{
					\mathsf{sub}(\tyInt)\gtSeq \gtG_\text{ring}
					}
				}{\roleP}
				{
					\stExtSum{\roleR}{}{\stChoice{sub}{\tyInt} \stSeq \stT[\roleP]}
				}
			}
		}
		{
		\gtProjRel{
			\gtCommRaw{\roleQ}{\roleR}{
				\begin{array}{@{}l@{}}
					\mathsf{add}(\tyInt)\gtSeq
					\gtCommRaw{\roleR}{\roleP}{
					\mathsf{add}(\tyInt)\gtSeq \gtG_\text{ring}
					}
					\\
					\mathsf{sub}(\tyInt)\gtSeq
					\gtCommRaw{\roleR}{\roleP}{
					\mathsf{sub}(\tyInt)\gtSeq \gtG_\text{ring}
					}
				\end{array}
			}
		}
		{\roleP}
		{
			\stExtSum{\roleR}{}{
				\stChoice{add}{\tyInt} \stSeq \stT[\roleP],
				\stChoice{sub}{\tyInt} \stSeq \stT[\roleP]
			}
		}
		}
	}
	{\gtProjRel{\gtG_\text{ring}}{\roleP}{\stT[\roleP]}}
\end{equation}
Noting that:
\begin{equation}
\stMerge{}{\{
\stExtSum{\roleR}{}{\stChoice{add}{\tyInt} \stSeq \stT[\roleP]},
\stExtSum{\roleR}{}{\stChoice{sub}{\tyInt} \stSeq \stT[\roleP]}\}}\ni
\stExtSum{\roleR}{}{
	\stChoice{add}{\tyInt} \stSeq \stT[\roleP],
	\stChoice{sub}{\tyInt} \stSeq \stT[\roleP]
}
\end{equation}
\begin{equation}
	\cinference[$\stBinMerge$-$\&$]
	{
		\stMergeRel{\{\stT[\roleP]\}}{\stT[\roleP]}
		&
		\stMergeRel{\{\stT[\roleP]\}}{\stT[\roleP]}
	}
	{
		\stMergeRel{\{
		\stExtSum{\roleR}{}{\stChoice{add}{\tyInt} \stSeq \stT[\roleP]},
		\stExtSum{\roleR}{}{\stChoice{sub}{\tyInt} \stSeq \stT[\roleP]}\}}{
		\stExtSum{\roleR}{}{
			\stChoice{add}{\tyInt} \stSeq \stT[\roleP],
			\stChoice{sub}{\tyInt} \stSeq \stT[\roleP]
		}}
	}
\end{equation}
\end{example}
Of course, this projection
is derivable by the full inductive projection,
which is a subrelation of the full coinductive projection;
this is not always the case.
We now provide a more involved
example,
which
makes use of en-route transitions
and cannot be projected inductively.
\begin{example}[Coinductive Projection]
\label{ex:coind-proj}
Consider the global type:
\begin{equation}
	\gtG =
	\gtCommRaw{\roleP}{\roleR}{
		\begin{array}{l}
			\gtLab[1]\gtSeq \gtCommSquigSingle{\roleQ}{\roleP}{\gtLab}{\gtLab}{}{
				\gtRec{\gtRecVar}{
					\gtCommSingle{\roleP}{\roleQ}{\gtLab[1]}{}{
						\gtRecVar
					}
				}
			}\\
			\gtLab[2]\gtSeq \gtCommSquigSingle{\roleQ}{\roleP}{\gtLab}{\gtLab}{}{
				\gtCommSingle{\roleP}{\roleQ}{\gtLab[2]}{}{
					\gtRec{\gtRecVar}{
						\gtCommSingle{\roleP}{\roleQ}{\gtLab[1]}{}{
							\gtRecVar
						}
					}
				}
			}
		\end{array}
	}
\end{equation}
$\gtG$ projects onto all three roles. The projections onto $\roleR$ and $\roleP$ are obtained directly from the coinductive rules without any merging.
\begin{equation}
	\cinference[P-$\&$]
	{
		\cinference[P-End]{}{
			\gtProjRel{
			\gtCommSquigSingle{\roleQ}{\roleP}{\gtLab}{\gtLab}{}{
				\gtRec{\gtRecVar}{
					\gtCommSingle{\roleP}{\roleQ}{\gtLab[1]}{}{
						\gtRecVar
					}
				}
			}
		}{\roleR}{\stEnd}
		} &
		\cinference[P-End]{}{
			\gtProjRel{
			\gtCommSquigSingle{\roleQ}{\roleP}{\gtLab}{\gtLab}{}{
				\gtCommSingle{\roleP}{\roleQ}{\gtLab[2]}{}{
					\gtRec{\gtRecVar}{
						\gtCommSingle{\roleP}{\roleQ}{\gtLab[1]}{}{
							\gtRecVar
						}
					}
				}
			}
		}{\roleR}{\stEnd}
		}
	}
	{\gtProjRel{\gtG}{\roleR}
	{
		\stExtSum{\roleP}{}{\stLab[1],\stLab[2]}
	}
	}
\end{equation}
\begin{equation}
\begin{adjustbox}{width=\columnwidth,center}\(
	\cinference[P-$\oplus$]
	{
		\cinference[P-$\&$-II]
		{
			\cinference[P-$\oplus$]
			{
				\gtProjRel
				{\gtRec{\gtRecVar}{
					\gtCommSingle{\roleP}{\roleQ}{\gtLab[1]}{}{
						\gtRecVar
					}
				}}
				{\roleP}
				{\stRec{\stRecVar}{\stIntSum{\roleQ}{}{\stLab[1]\stSeq\stRecVar}}}
			}
			{
				\gtProjRel
				{\gtRec{\gtRecVar}{
					\gtCommSingle{\roleP}{\roleQ}{\gtLab[1]}{}{
						\gtRecVar
					}
				}}
				{\roleP}
				{\stRec{\stRecVar}{\stIntSum{\roleQ}{}{\stLab[1]\stSeq\stRecVar}}}
			}
		}
		{
			\gtProjRel
			{
				\gtCommSquigSingle{\roleQ}{\roleP}{\gtLab}{\gtLab}{}{
				\gtRec{\gtRecVar}{
					\gtCommSingle{\roleP}{\roleQ}{\gtLab[1]}{}{
						\gtRecVar
					}
				}
			}
			}
			{\roleP}
			{
				\stExtSum{\roleQ}{}{
						\stLab\stSeq
						\stRec{\stRecVar}{\stIntSum{\roleQ}{}{\stLab[1]\stSeq\stRecVar}}
					}
			}
		}
		&
		\cinference[P-$\&$-II]
		{
			\cinference[P-$\oplus$]
			{
				\cinference[P-$\oplus$]
				{
					\gtProjRel{
					\gtRec{\gtRecVar}{
						\gtCommSingle{\roleP}{\roleQ}{\gtLab[1]}{}{
							\gtRecVar
						}
					}
				}
				{\roleP}
				{
					\stRec{\stRecVar}{\stIntSum{\roleQ}{}{\stLab[1]\stSeq\stRecVar}}
				}
				}
				{\gtProjRel{
					\gtRec{\gtRecVar}{
						\gtCommSingle{\roleP}{\roleQ}{\gtLab[1]}{}{
							\gtRecVar
						}
					}
				}
				{\roleP}
				{
					\stRec{\stRecVar}{\stIntSum{\roleQ}{}{\stLab[1]\stSeq\stRecVar}}
				}}
			}
			{
				\gtProjRel
				{
					\gtCommSingle{\roleP}{\roleQ}{\gtLab[2]}{}{
					\gtRec{\gtRecVar}{
						\gtCommSingle{\roleP}{\roleQ}{\gtLab[1]}{}{
							\gtRecVar
						}
					}
				}
				}
				{\roleP}
				{
					\stIntSum{\roleQ}{}{\stLab[2]\stSeq
						\stRec{\stRecVar}{\stIntSum{\roleQ}{}{\stLab[1]\stSeq\stRecVar}}}
				}
			}
		}
		{
			\gtProjRel{
			\gtCommSquigSingle{\roleQ}{\roleP}{\gtLab}{\gtLab}{}{
				\gtCommSingle{\roleP}{\roleQ}{\gtLab[2]}{}{
					\gtRec{\gtRecVar}{
						\gtCommSingle{\roleP}{\roleQ}{\gtLab[1]}{}{
							\gtRecVar
						}
					}
				}
			}
			}{\roleP}
			{
				\stExtSum{\roleQ}{}{
						\stLab\stSeq
						\stIntSum{\roleQ}{}{\stLab[2]\stSeq
						\stRec{\stRecVar}{\stIntSum{\roleQ}{}{\stLab[1]\stSeq\stRecVar}}}
					}
			}
		}
	}
	{
		\gtProjRel{\gtG}{\roleP}
		{
			\stIntSum{\roleR}{}{
				\begin{array}{l}
					\stLab[1]\stSeq
					\stExtSum{\roleQ}{}{
						\stLab\stSeq
						\stRec{\stRecVar}{\stIntSum{\roleQ}{}{\stLab[1]\stSeq\stRecVar}}
					}\\
					\stLab[2]\stSeq
					\stExtSum{\roleQ}{}{
						\stLab\stSeq
						\stIntSum{\roleQ}{}{\stLab[2]\stSeq
						\stRec{\stRecVar}{\stIntSum{\roleQ}{}{\stLab[1]\stSeq\stRecVar}}}
					}
				\end{array}
			}
		}
	}
\)\end{adjustbox}
\end{equation}
Projecting onto $\roleQ$ is more delicate, because $\roleQ$ is not involved in the initial $\roleP\to\roleR$ choice and its continuations across the two branches differ. We isolate this case in the following example, in order to explain the merge step that reconciles the two continuations.
\end{example}

\begin{example}[Merging in Projection]
\label{ex:merge-proj}
Continuing from Example~\ref{ex:coind-proj}, we project $\gtG$ onto $\roleQ$. Since $\roleQ$ is not a participant in the top-level $\roleP\to\roleR$ communication, the applicable rule is \RULE{P-$\stBinMerge$}: the projection is obtained by merging the projections of the two branches onto $\roleQ$. In each branch, $\roleQ$ is the sender of the en-route message from $\roleQ$ to $\roleP$ carrying label $\stLab$, so \RULE{P-$\oplus$-II} applies and yields a projection of the form $(\quMsg{\roleP}{\stLab}{},\, \stT[i])$ with common queue-entry prefix $\quMsg{\roleP}{\stLab}{}$ and residuals
\begin{equation*}
\stT[1] \,=\, \stRec{\stRecVar}{\stExtSum{\roleP}{}{\stLab[1]\stSeq\stRecVar}}
\qquad
\stT[2] \,=\, \stExtSum{\roleP}{}{\stLab[2]\stSeq \stRec{\stRecVar}{\stExtSum{\roleP}{}{\stLab[1]\stSeq\stRecVar}}}
\end{equation*}
from the $\stLab[1]$- and $\stLab[2]$-branches respectively.
The derivation of the projection onto $\roleQ$ is:
\begin{equation}
\begin{adjustbox}{width=\columnwidth,center}\(
	\cinference[P-$\stBinMerge$]
	{
		\cinference[P-$\oplus$-II]
		{
			\cinference[P-$\&$]
			{
				\gtProjRel{
					\gtRec{\gtRecVar}{
					\gtCommSingle{\roleP}{\roleQ}{\gtLab[1]}{}{
						\gtRecVar
					}
				}
				}{\roleQ}
				{\stRec{\stRecVar}{\stExtSum{\roleP}{}{\stLab[1]\stSeq\stRecVar}}}
			}
			{
				\gtProjRel{
					\gtRec{\gtRecVar}{
					\gtCommSingle{\roleP}{\roleQ}{\gtLab[1]}{}{
						\gtRecVar
					}
				}
				}{\roleQ}
				{\stRec{\stRecVar}{\stExtSum{\roleP}{}{\stLab[1]\stSeq\stRecVar}}}
			}
		}
		{
			\gtProjRel
			{
			\gtCommSquigSingle{\roleQ}{\roleP}{\gtLab}{\gtLab}{}{
				\gtRec{\gtRecVar}{
					\gtCommSingle{\roleP}{\roleQ}{\gtLab[1]}{}{
						\gtRecVar
					}
				}
			}
			}
			{\roleQ}
			{(\quMsg{\roleP}{\stLab}{},\stRec{\stRecVar}{\stExtSum{\roleP}{}{\stLab[1]\stSeq\stRecVar}})}
		}
		&
		\cinference[P-$\oplus$-II]
		{
			\cinference[P-$\&$]
			{
				\cinference[P-$\&$]
				{
					\gtProjRel{
					\gtRec{\gtRecVar}{
						\gtCommSingle{\roleP}{\roleQ}{\gtLab[1]}{}{
							\gtRecVar
						}
					}	
					}{\roleQ}
					{ \stRec{\stRecVar}{\stExtSum{\roleP}{}{\stLab[1]\stSeq\stRecVar}}
					}
				}
				{
					\gtProjRel{
					\gtRec{\gtRecVar}{
						\gtCommSingle{\roleP}{\roleQ}{\gtLab[1]}{}{
							\gtRecVar
						}
					}	
					}{\roleQ}
					{ \stRec{\stRecVar}{\stExtSum{\roleP}{}{\stLab[1]\stSeq\stRecVar}}
					}
				}
			}
			{
				\gtProjRel
			{
				\gtCommSingle{\roleP}{\roleQ}{\gtLab[2]}{}{
					\gtRec{\gtRecVar}{
						\gtCommSingle{\roleP}{\roleQ}{\gtLab[1]}{}{
							\gtRecVar
						}
					}
				}
			}
			{\roleQ}
			{
			\stExtSum{\roleP}{}{
				\stLab[2]\stSeq
				\stRec{\stRecVar}{\stExtSum{\roleP}{}{\stLab[1]\stSeq\stRecVar}}
			}}
			}
		}
		{
			\gtProjRel
			{
				\gtCommSquigSingle{\roleQ}{\roleP}{\gtLab}{\gtLab}{}{
				\gtCommSingle{\roleP}{\roleQ}{\gtLab[2]}{}{
					\gtRec{\gtRecVar}{
						\gtCommSingle{\roleP}{\roleQ}{\gtLab[1]}{}{
							\gtRecVar
						}
					}
				}
			}
			}
			{\roleQ}
			{(\quMsg{\roleP}{\stLab}{},
			\stExtSum{\roleP}{}{
				\stLab[2]\stSeq
				\stRec{\stRecVar}{\stExtSum{\roleP}{}{\stLab[1]\stSeq\stRecVar}}
			})}
		}
	}
	{
		\gtProjRel{\gtG}{\roleQ}{
			(\quMsg{\roleP}{\stLab}{},
			\stExtSum{\roleP}{}{
				\begin{array}{l}
					\stLab[1]\stSeq\stRec{\stRecVar}{\stExtSum{\roleP}{}{\stLab[1]\stSeq\stRecVar}}\\
					\stLab[2]\stSeq\stRec{\stRecVar}{\stExtSum{\roleP}{}{\stLab[1]\stSeq\stRecVar}}
				\end{array}
			})
		}
	}
\)\end{adjustbox}
\end{equation}
whose top \RULE{P-$\stBinMerge$} step relies on the merge
\begin{equation}
\stMerge{}{\{\stRec{\stRecVar}{\stExtSum{\roleP}{}{\stLab[1]\stSeq\stRecVar}},
\stExtSum{\roleP}{}{
				\stLab[2]\stSeq
				\stRec{\stRecVar}{\stExtSum{\roleP}{}{\stLab[1]\stSeq\stRecVar}}
			}\}}\ni\stExtSum{\roleP}{}{
				\begin{array}{l}
					\stLab[1]\stSeq\stRec{\stRecVar}{\stExtSum{\roleP}{}{\stLab[1]\stSeq\stRecVar}}\\
					\stLab[2]\stSeq\stRec{\stRecVar}{\stExtSum{\roleP}{}{\stLab[1]\stSeq\stRecVar}}
				\end{array}
			}
\end{equation}
which is derivable by a single application of \RULE{$\stBinMerge$-$\&$}, with one sub-obligation per label of the resulting external choice:
\begin{equation}
	\cinference[$\stBinMerge$-$\&$]{
		\stMergeRel{\{\stRec{\stRecVar}{\stExtSum{\roleP}{}{\stLab[1]\stSeq\stRecVar}}\}}{\stRec{\stRecVar}{\stExtSum{\roleP}{}{\stLab[1]\stSeq\stRecVar}}}
		&
		\stMergeRel{\{\stRec{\stRecVar}{\stExtSum{\roleP}{}{\stLab[1]\stSeq\stRecVar}}\}}{\stRec{\stRecVar}{\stExtSum{\roleP}{}{\stLab[1]\stSeq\stRecVar}}}
	}
	{
		\stMergeRel{\{\stRec{\stRecVar}{\stExtSum{\roleP}{}{\stLab[1]\stSeq\stRecVar}},
\stExtSum{\roleP}{}{
				\stLab[2]\stSeq
				\stRec{\stRecVar}{\stExtSum{\roleP}{}{\stLab[1]\stSeq\stRecVar}}
			}\}}{\stExtSum{\roleP}{}{
				\begin{array}{l}
					\stLab[1]\stSeq\stRec{\stRecVar}{\stExtSum{\roleP}{}{\stLab[1]\stSeq\stRecVar}}\\
					\stLab[2]\stSeq\stRec{\stRecVar}{\stExtSum{\roleP}{}{\stLab[1]\stSeq\stRecVar}}
				\end{array}
			}}
	}
\end{equation}
Intuitively, the merged type offers $\roleQ$ both labels $\stLab[1]$ and $\stLab[2]$ as an external choice from $\roleP$: under $\stLab[1]$ it continues as $\stT[1]$ does after unfolding, immediately re-entering the loop; under $\stLab[2]$ it continues as $\stT[2]$ does after its $\stLab[2]$-receive, which in this example also lands in the same loop $\stRec{\stRecVar}{\stExtSum{\roleP}{}{\stLab[1]\stSeq\stRecVar}}$. Whichever label $\roleP$ eventually sends to $\roleQ$, the merged local type at $\roleQ$ is ready for it, witnessing the compatibility of $\stT[1]$ and $\stT[2]$.
\end{example}

\subsection{Asynchronous Multiparty Subtyping}

Given a standard subtyping $\tyGroundSub$ for basic types
(\eg including $\tyInt\tyGroundSub\tyInt$ and $\tyInt \tyGroundSub \tyReal$),
we give a summary of \emph{asynchronous subtyping} $\asubt$ introduced in~\cite{GPPSY2023}. We first consider the tree representation 
of local type $\stT$ (denoted by $\ttree{\stT}$). 

We write $\trT$ for generic trees and additionally define three specific types of tree. 
\emph{Single-input} trees (denoted by $\stV$) are those which have only a singleton choice in all branchings, 
while \emph{single-output} trees (denoted by $\stU$) are those which have only a singleton choice in all selections. 
Trees which are both single-input and single-output are called \emph{single-input-single-output (SISO) trees} (denoted by $\stW$). 
These can all be defined coinductively by the following equations. 
\begin{align}
    \trT
    &=~
    \stExtSum{\roleP}{i \in I}{\stChoice{\stLab[i]}{\tyGround[i]} \stSeq \trT[i]} ~\mid~ \stIntSum{\roleP}{i \in I}{\stChoice{\stLab[i]}{\tyGround[i]} \stSeq \trT[i]}  ~\mid~ \stEnd
    \\[1mm]\stU
    &=~
    \stExtSum{\roleP}{i \in I}{\stChoice{\stLab[i]}{\tyGround[i]} \stSeq \stU[i]} ~\mid~ \stSend{\roleP}{\stLab}{\tyGround}{\stU}  ~\mid~ \stEnd
    \\[1mm]\stV
    &=~
    \stRecv{\roleP}{\stLab}{\tyGround}{\stV} ~\mid~ \stIntSum{\roleP}{i \in I}{\stChoice{\stLab[i]}{\tyGround[i]} \stSeq \stV[i]}  ~\mid~ \stEnd  
    \\[1mm]\stW
    &=~
    \stRecv{\roleP}{\stLab}{\tyGround}{\stW} ~\mid~ \stSend{\roleP}{\stLab}{\tyGround}{\stW}  ~\mid~ \stEnd
\end{align}
We will define reorderings of SISO trees, and to do so, we consider non-empty sequences $\Ap$ of receives not including $\roleP$ 
and $\Bp$ of sends not including $\roleP$ together with receives from any participant. 
These sequences are inductively defined (where $\roleP \ne \roleQ$) by:
\begin{equation}
  \begin{array}{ll}
    \Ap
    &=~
    \stRecvOne{\roleQ}{\stLab}{\tyGround} ~\mid~ \stRecv{\roleQ}{\stLab}{\tyGround}{\Ap}\quad 
\Bp
~=~
\stRecvOne{\roleR}{\stLab}{\tyGround} ~\mid~ \stSendOne{\roleQ}{\stLab}{\tyGround}  ~\mid~ \stRecv{\roleR}{\stLab}{\tyGround}{\Bp} ~\mid~ \stSend{\roleQ}{\stLab}{\tyGround}{\Bp}
  \end{array}
\end{equation}
\noindent We define the set $\act{\stW}$ of actions of a SISO tree: 
$\act{\stEnd} = \emptyset$; 
$\act{\stRecv{\roleP}{\stLab}{\tyGround}{\stW}} = \{\roleP?\} ~\cup~ \act{\stW}$; and 
$\act{\stSend{\roleP}{\stLab}{\tyGround}{\stW}} = \{\roleP!\} ~\cup~ \act{\stW}$. 
Using these definitions, we introduce a \emph{refinement relation} ($\subttt$) defined coinductively by the following rules:
\\[1mm]
\centerline{\(
\begin{array}{c}

\cinference[Ref-$\mathcal{A}$]{
  \tyGroundi \tyGroundSub \tyGround
  &
  \stW \subttt \Ap ; \stW'
  & 
  \act{\stW} = \act{\Ap ; \stW'}
}{
  \stRecv{\roleP}{\stLab}{\tyGround}{\stW} \subttt \Ap ; \stRecv{\roleP}{\stLab}{\tyGroundi}{\stWi}
}

\\[2ex]

\cinference[Ref-$\mathcal{B}$]{
  \tyGround \tyGroundSub \tyGroundi
  &
  \stW \subttt \Bp ; \stW'
  & 
  \act{\stW} = \act{\Bp ; \stW'}
}{
  \stSend{\roleP}{\stLab}{\tyGround}{\stW} \subttt \Bp ; \stSend{\roleP}{\stLab}{\tyGroundi}{\stWi}
}

\\[2ex]

\cinference[Ref-In]{
  \tyGroundi \tyGroundSub \tyGround  & \stW \subttt \stWi
}{
  \stRecv{\roleP}{\stLab}{\tyGround}{\stW} \subttt \stRecv{\roleP}{\stLab}{\tyGround'}{\stW'}
}

\qquad 

\cinference[Ref-Out]{
  \tyGround \tyGroundSub \tyGroundi  & \stW \subttt \stWi
}{
  \stSend{\roleP}{\stLab}{\tyGround}{\stW} \subttt \stSend{\roleP}{\stLab}{\tyGround'}{\stW'}
}

\qquad 

\cinference[Ref-End]{}{
  \stEnd \subttt \stEnd
}
\end{array}
\)}

\smallskip 

We can extract sets of single-input and single-output trees from a given tree using the functions $\singleIn{\cdot}$ and $\singleOut{\cdot}$.
\\[1mm]
\centerline{\(
\begin{array}{c}
\singleIn{\stExtSum{\roleP}{i \in I}{\stChoice{\stLab[i]}{\tyGround[i]} \stSeq \trT[i]}} = \bigcup_{i \in I}\{ {\stRecv{\roleP}{\stLab[i]}{\tyGround[i]}{\stV[i]}} \mid \stV[i] \in \singleIn{\trT[i]} \} 
\\[2ex]
\singleIn{\stIntSum{\roleP}{i \in I}{\stChoice{\stLab[i]}{\tyGround[i]} \stSeq \trT[i]}} = \{ {\stIntSum{\roleP}{i \in I}{\stChoice{\stLab[i]}{\tyGround[i]} \stSeq \stV[i]}} \mid \forall i \in I : \stV[i] \in \singleIn{\trT[i]} \} 
\quad
\singleIn{\stEnd} = \{ \stEnd \} 
\\[2ex]
\singleOut{\stIntSum{\roleP}{i \in I}{\stChoice{\stLab[i]}{\tyGround[i]} \stSeq \trT[i]}} = \bigcup_{i \in I}\{ {\stSend{\roleP}{\stLab[i]}{\tyGround[i]}{\stU[i]}} \mid \stU[i] \in \singleOut{\trT[i]} \} 
\\[2ex]
\singleOut{\stExtSum{\roleP}{i \in I}{\stChoice{\stLab[i]}{\tyGround[i]} \stSeq \trT[i]}} = \{ {\stExtSum{\roleP}{i \in I}{\stChoice{\stLab[i]}{\tyGround[i]} \stSeq \stU[i]}} \mid \forall i \in I : \stU[i] \in \singleOut{\trT[i]} \} 
\quad
\singleOut{\stEnd} = \{ \stEnd \} 
\\[2ex]
\end{array}
\)}
\begin{definition}[Subtyping]
\label{def:subtyping}
We consider trees that have only singleton choices in branchings (called \emph{single-input (SI) trees}), or in selections (\emph{single-output (SO) trees}), and we define the session subtyping $\asubt$ over all session types by considering their decomposition into SI, SO, and SISO trees.\\[1mm]
\begin{equation}
\inference[Sub]{
    \forall \stU \in \singleOut{\ttree{\stT}} & \forall \stVi \in \singleIn{\ttree{\stTi}} & \exists \stW \in \singleIn{\stU} & \exists \stWi \in \singleOut{\stVi} & \stW \subttt \stWi
}
{
    \stT \asubt \stTi
}
\end{equation}
The refinement $\subttt$ captures \emph{safe permutations} of input/output messages,
that never cause deadlocks or communication errors under asynchrony;
and the subtyping relation $\asubt$ focuses on
reconciling refinement $\subttt$ with the branching
structures in session types. 
\end{definition}

\begin{example}[Subtyping the Ring Protocol Projection~(\cref{sec:ring})]
Recall the local types:
\begin{equation}
\stT[\roleQ]=\stRec{\stRecVar}{
    \stExtSum{\roleP}{}{
      \stChoice{add}{\tyInt} \stSeq
      \stIntSum{\roleR}{}{
        \stChoice{add}{\tyInt} \stSeq \stRecVar,
        \stChoice{sub}{\tyInt} \stSeq \stRecVar
      }
    }
  }
\end{equation}
\begin{equation}
\stTopt[\roleQ] = \stRec{\stRecVar}{
    \stIntSum{\roleR}{}{
      \stChoice{add}{\tyInt} \stSeq \stExtSum{\roleP}{}{\stChoice{add}{\tyInt} \stSeq \stRecVar},
      \stChoice{sub}{\tyInt} \stSeq \stExtSum{\roleP}{}{\stChoice{add}{\tyInt} \stSeq \stRecVar}
    }
  }
\end{equation}
  To demonstrate that $\stTopt[\roleQ] \asubt \stT[\roleQ]$, we must show that for all $\stU \in \singleOut{\ttree{\stTopt[\roleQ]}}$ and $\stVi \in \singleIn{\ttree{\stT[\roleQ]}}$, there exist $\stW \in \singleIn{\stU}$ and $\stWi \in \singleOut{\stVi}$ such that $\stW \subttt \stWi$.
  Consider the following sets:
  \begin{align}
  \singleOut{\ttree{\stTopt[\roleQ]}} 
    &= 
    \left\{
      \stSend{\roleR}{\stFmt{\mathsf{add}}}{\tyInt}{
        \stRecv{\roleP}{\stFmt{\mathsf{add}}}{\tyInt}{\ldots}
      }, 
      \stSend{\roleR}{\stFmt{\mathsf{sub}}}{\tyInt}{
        \stRecv{\roleP}{\stFmt{\mathsf{add}}}{\tyInt}{\ldots}
      }, \ldots
    \right\} 
    \\[1ex]
  \singleIn{\ttree{\stT[\roleQ]}}  
    &= 
    \left\{
      \stRecv{\roleP}{\stFmt{\mathsf{add}}}{\tyInt}{
        \stIntSumRaw{\roleR}{\begin{array}{l}
          \stChoice{add}{\tyInt} \stSeq \ldots \\
          \stChoice{sub}{\tyInt} \stSeq \ldots
        \end{array}}
      }
    \right\}
  \end{align}
  Now, we must find for each $\stU$ in the first set and $\stVi$ in the second, a pair of SISO trees $(\stW, \stWi)$ such that $\stW \subttt \stWi$. 
For instance, if the second $\stU$ is chosen, we have
$\stW = \stSend{\roleR}{\stFmt{\mathsf{sub}}}{\tyInt}{
              \stRecv{\roleP}{\stFmt{\mathsf{add}}}{\tyInt}{\ldots}
}$ and we can pick
$\stWi = \stRecv{\roleP}{\stFmt{\mathsf{add}}}{\tyInt}{
               \stSend{\roleR}{\stFmt{\mathsf{sub}}}{\tyInt}{\ldots}
             }$

  \noindent
  Then we can apply rule \RULE{Ref-$\mathcal{B}$} to validate that it is safe to reorder the send ahead of the receive in the optimised type. 
We could form a similar argument in the other cases.
Thus  we conclude that:
  \(
    \stTopt[\roleQ] \asubt \stT[\roleQ]
  \).
  \end{example}

\myparagraph{Compactness of Subtypes.}
Beyond enabling message reordering optimisations, 
subtyping, in both synchronous and asynchronous settings, 
can also potentially yield \emph{arbitrarily more compact} type representations.
This compactness is particularly valuable in practical implementations,
where small types can also simplify readability and understanding.
\\
For example, consider a type that reads exactly $n$ messages labelled $\mathsf{a}$ followed by one $\mathsf{b}$:
\begin{equation}
\stT[n] = \underbrace{\stExtSum{\roleP}{}{\mathsf{a} \stSeq 
\stExtSum{\roleP}{}{\mathsf{a} \stSeq 
\cdots
\stExtSum{\roleP}{}{\mathsf{a} \stSeq 
\stExtSum{\roleP}{}{\mathsf{b} \stSeq \stEnd}
}}}}_{\text{$n$ nested external choices}}
\end{equation}
This type has size $O(n)$. However, a subtype that accepts \emph{any} number of $\mathsf{a}$'s before $\mathsf{b}$ can be written compactly as:
\begin{equation}
\stTi = \stRec{\stRecVar}{\stExtSum{\roleP}{}{\mathsf{a} \stSeq \stRecVar,\ \mathsf{b} \stSeq \stEnd}}
\end{equation}
We have $\stTi \asubt \stT[n]$ for all $n$, yet $\stTi$ has constant size $O(1)$.
This is in fact the case even for real projected local types, as coinductive projection 
can in general yield local types superpolynomial in the size of the global type~\cite{thien-nobuko-popl-25}.

\myparagraph{Properties of Subtyping.}
While asynchronous subtyping is very flexible,
allowing us to delay communications and 
simplify structure,
this can not be done indefinitely.
For example, asynchronous subtyping never
admits a non-trivial supertype of $\stEnd$.
  \begin{lemma}[End Subtyping]
    \label{lem:end-sub-end}
    If $\unfoldOne{\stTi} = \stEnd$ and $\stTi \asubt \stT$, then $\unfoldOne{\stT} = \stEnd$.
  \end{lemma}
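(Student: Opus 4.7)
The plan is to unpack \Cref{def:subtyping} at the only available witness on the left, and trace what asynchronous subtyping forces on the supertype. Since $\unfoldOne{\stTi} = \stEnd$, the tree representation satisfies $\ttree{\stTi} = \stEnd$, so the base clause of $\singleOut{-}$ yields $\singleOut{\ttree{\stTi}} = \{\stEnd\}$. Thus the premises of rule \RULE{Sub} can only be instantiated with $\stU = \stEnd$.

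Applying \RULE{Sub} with this unique $\stU$, for every $\stV \in \singleIn{\ttree{\stT}}$ I obtain witnesses $\stW \in \singleIn{\stEnd}$ and $\stWi \in \singleOut{\stV}$ such that $\stW \subttt \stWi$. Since $\singleIn{\stEnd} = \{\stEnd\}$, necessarily $\stW = \stEnd$, and inspecting the refinement rules \RULE{Ref-$\mathcal{A}$}, \RULE{Ref-$\mathcal{B}$}, \RULE{Ref-In}, \RULE{Ref-Out}, \RULE{Ref-End}, only \RULE{Ref-End} has $\stEnd$ on the left, so $\stWi = \stEnd$ as well. This yields the key intermediate fact: $\stEnd \in \singleOut{\stV}$ for every $\stV \in \singleIn{\ttree{\stT}}$.

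Next I would observe, by a one-step case analysis on $\stV$ using the defining clauses of $\singleOut{-}$, that $\stEnd$ appears in $\singleOut{\stV}$ only when $\stV = \stEnd$: the external-choice clause produces trees whose head is an external choice, while the internal-choice clause produces singleton sends, neither of which equals $\stEnd$. Hence every element of $\singleIn{\ttree{\stT}}$ is $\stEnd$. A symmetric case analysis on $\ttree{\stT}$ then closes the argument: were $\ttree{\stT}$ an external (resp.\ internal) choice at its head, the defining clauses of $\singleIn{-}$ would place in $\singleIn{\ttree{\stT}}$ an element whose head is a receive (resp.\ an internal choice), contradicting the previous sentence. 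Therefore $\ttree{\stT} = \stEnd$, i.e., $\unfoldOne{\stT} = \stEnd$.

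The only genuine subtlety is ensuring that $\singleIn{\ttree{\stT}}$ is non-empty, so that the universal quantification in \RULE{Sub} actually constrains $\stT$ rather than being vacuously discharged. This is the standard fact that $\singleIn{-}$ of a contractive local type tree is non-empty, obtained coinductively by picking one label at every external choice and a witness in every branch of every internal choice. Beyond this bookkeeping point, everything else is a direct reading of the definitions, so no substantive obstacle remains.
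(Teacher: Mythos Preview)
Your proof is correct and follows the same approach as the paper, which also reduces to the observation that $\stEnd$ is already SISO and that only \RULE{Ref-End} applies with $\stEnd$ on the left. Your version is in fact considerably more careful: the paper's proof compresses steps 2--6 of your argument into a single inversion, and it does not mention the non-emptiness of $\singleIn{\ttree{\stT}}$ that you correctly flag as needed for the universal quantifier in \RULE{Sub} to have force.
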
  
  \begin{proof}
    The type $\stEnd$ is conveniently already a SISO tree. If $\stTi \asubt \stT$ and $\unfoldOne{\stTi} = \stEnd$, by inversion we see that $\stEnd \subtt \stT$ can only be derived from \RULE{Ref-End}, 
    and so it must be that $\unfoldOne{\stT} = \stEnd$.
  \end{proof}
We also note that,
as one might expect from usual properties
of subtyping,
the precise asynchronous subtyping
is a preorder.
\begin{lemma}[Reflexivity and Transitivity of Subtyping, Lemma 3.8 in Ghilezan et al.~\cite{GPPSY2023}]
    For any closed, well-guarded local types $\stT$, $\stTi$ and $\stTii$: (1) $\stT \asubt \stT$ holds, and (2) if $\stT \asubt \stTi$ and $\stTi \asubt \stTii$ then it must be that $\stT \asubt \stTii$ holds.
\end{lemma}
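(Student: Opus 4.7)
The plan is to address the two claims separately, leveraging the decomposition of types into SISO trees given in \Cref{def:subtyping}. The common engine in both parts is a lemma about the refinement relation $\subttt$ on SISO trees: namely, that $\subttt$ is itself a preorder. Reflexivity of $\asubt$ then reduces to reflexivity of $\subttt$ applied to a cleverly chosen common SISO tree, while transitivity of $\asubt$ reduces to transitivity of $\subttt$ applied to a carefully constructed chain.

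For part (1), I would first show by coinduction that $\stW \subttt \stW$ for every SISO tree $\stW$, using only the rules \RULE{Ref-In}, \RULE{Ref-Out}, and \RULE{Ref-End} with empty permutation prefixes $\Ap$ and $\Bp$, relying on reflexivity of $\tyGroundSub$ on basic types. The second step is the key combinatorial observation: given any $\stU \in \singleOut{\ttree{\stT}}$ and $\stVi \in \singleIn{\ttree{\stT}}$, one can read off from the defining equations of $\singleIn{\cdot}$ and $\singleOut{\cdot}$ a unique SISO tree $\stW$ that lies simultaneously in $\singleIn{\stU}$ and in $\singleOut{\stVi}$: at each selection point use the branch fixed by $\stU$, and at each branching point use the branch fixed by $\stVi$. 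Coinductively, this $\stW$ witnesses $\stT \asubt \stT$ via $\stW \subttt \stW$.

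For part (2), I would first strengthen the toolkit with a transitivity lemma for $\subttt$: whenever $\stW[1] \subttt \stW[2]$ and $\stW[2] \subttt \stW[3]$, then $\stW[1] \subttt \stW[3]$. This is proved by coinduction on the combined derivation, with a case analysis on which of \RULE{Ref-In}, \RULE{Ref-Out}, \RULE{Ref-$\mathcal{A}$}, \RULE{Ref-$\mathcal{B}$}, \RULE{Ref-End} fire in each step; the permutation prefixes from the two derivations have to be composed, and the $\act{\cdot}$ side conditions have to be propagated to ensure that permuting twice still permutes a matching set of actions. Given this, transitivity of $\asubt$ proceeds as follows. Fix $\stU \in \singleOut{\ttree{\stT}}$ and $\stVii \in \singleIn{\ttree{\stTii}}$. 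Pick any $\stVi \in \singleIn{\ttree{\stTi}}$; from $\stT \asubt \stTi$ we obtain $\stW \in \singleIn{\stU}$ and $\stWi \in \singleOut{\stVi}$ with $\stW \subttt \stWi$. From $\stWi$ read off the corresponding SO-decomposition $\stUi \in \singleOut{\ttree{\stTi}}$ that is consistent with the selections recorded in $\stWi$; applying $\stTi \asubt \stTii$ to this $\stUi$ and to $\stVii$ yields $\stWi' \in \singleIn{\stUi}$ and $\stWii \in \singleOut{\stVii}$ with $\stWi' \subttt \stWii$. By construction $\stWi$ and $\stWi'$ represent the same underlying SISO path of $\stTi$, so they coincide (up to the payload subtyping permitted in the relation), which allows the chain $\stW \subttt \stWi = \stWi' \subttt \stWii$ to be combined via transitivity of $\subttt$ into $\stW \subttt \stWii$.

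The main obstacle is establishing transitivity of $\subttt$ in the presence of the asynchronous permutation rules \RULE{Ref-$\mathcal{A}$} and \RULE{Ref-$\mathcal{B}$}: when the two derivations combine, the intermediate SISO tree $\stW[2]$ can have a send or receive pulled ahead by $\Ap$ in the first derivation and then pulled ahead again by a different $\Bp$ in the second, requiring us to merge the two permutations into a single one. The secondary subtlety is the coherence condition in part (2): the $\stWi$ obtained from $\stT \asubt \stTi$ and the $\stWi'$ needed for $\stTi \asubt \stTii$ must be made to match; this is arranged by choosing $\stUi$ and $\stVi$ so that they expose exactly the branches picked by $\stW$ and $\stWii$, which is possible because SI- and SO-decompositions are pointwise independent.
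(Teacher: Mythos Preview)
The paper does not prove this lemma itself; it simply cites \cite{GPPSY2023}, so there is no in-paper argument to compare against.  Your plan for reflexivity is sound: coinductive reflexivity of $\subttt$ together with the observation that any pair $(\stU,\stVi)$ with $\stU\in\singleOut{\ttree{\stT}}$ and $\stVi\in\singleIn{\ttree{\stT}}$ determines a common SISO tree $\stW\in\singleIn{\stU}\cap\singleOut{\stVi}$ is exactly what is needed, and your description of how to build that $\stW$ is correct.

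For transitivity there is a genuine gap.  You obtain $\stWi\in\singleOut{\stVi}$ from the first hypothesis and $\stWi'\in\singleIn{\stUi}$ from the second, and then assert that ``by construction $\stWi$ and $\stWi'$ represent the same underlying SISO path of $\stTi$''.  This is not justified.  The tree $\stWi$ has its \emph{branching} choices fixed by $\stVi$, while its \emph{selection} choices are whatever the first subtyping derivation happens to produce; you then define $\stUi$ to record exactly those selections.  The second hypothesis, applied to $\stUi$ and $\stVii$, returns some $\stWi'\in\singleIn{\stUi}$ whose selections agree with $\stUi$ (hence with $\stWi$) but whose \emph{branching} choices are chosen by that second derivation and are under no obligation to agree with $\stVi$.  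So in general $\stWi\neq\stWi'$ and the chain $\stW\subttt\stWi=\stWi'\subttt\stWii$ does not form.  Your proposed repair---choosing $\stVi$ and $\stUi$ so that they ``expose exactly the branches picked by $\stW$ and $\stWii$''---is circular: $\stW$ is produced only after $\stVi$ is fixed, and $\stWii$ only after $\stUi$ is fixed.  What is really required is a fixed point of the induced map on SI-decompositions of $\stTi$ (start from $\stVi$, derive $\stWi$, read off $\stUi$, apply the second hypothesis to get $\stWi'$, read off a new $\stVi'$), and showing that such a fixed point exists is itself the substantive step.  The argument in \cite{GPPSY2023} avoids this by building the middle witness coinductively rather than by committing to $\stVi$ up front; you should consult that construction rather than assume the witnesses align.
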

However, unlike the synchronous subtyping~(\cref{def:subtyping}),
the precise asynchronous subtyping
does not give us a strong operational
correspondence between instances.
This is elaborated further in~\cref{ex:sub-not-corr}.
Regardless, we retain a weak operational correspondence,
witnessed by~\cref{lem:inv-subtyping}'s structural inversion of subtyping.

\subsection{Queue Subtyping}
As we have the ground subtyping $\tyGroundSub$
and ground types in our queues,
we can define a subtyping for queues.
This is important as
when we subtype local types we
may use $\tyGroundSub$
to alter the payload types;
we will need to subtype projected
queues to identify an
operational correspondence after
local actions.
The queue subtyping is defined to be covariant
with $\tyGround$ \ie
we define the queue subtyping
inductively by:\\[1mm]
\newline
\smallskip
\centerline{\(
	\inference[]{}{\quEmpty\asubt\quEmpty}
	\qquad
	\inference[]{\quH[L]\asubt\quHi[L]&\quH[R]\asubt\quHi[R]}
	{\quCons{\quH[L]}{\quH[R]}\asubt\quCons{\quHi[L]}{\quHi[R]}}
	\qquad
	\inference[]{\tyGround\tyGroundSub\tyGroundi}{\quMsg{\roleP}{\stLab}{\tyGround}\asubt\quMsg{\roleP}{\stLab}{\tyGroundi}}
\)}

The projection of a global type onto a participant produces a pairs containing
a queue and a local type.
Thus in order to use local type subtyping after projecting as in \cref{eq:intro-assoc} (\cref{sec:intro}),
we also require this notion of queue subtyping.
\section{Operational Semantics of Global and Local Types}
\label{sec:semantics}
In this section, we will introduce semantics for global types
and typing contexts.
While the typing context semantics are standard~\cite{POPL19LessIsMore},
our global type semantics~(\cref{def:gtype:lts-gt}) is new
and has been designed to reflect
the typing context semantics up-to
the precise asynchronous subtyping.

\subsection{Semantics of Global Types}
\label{sec:gtype:lts-gt}
We now present the Labelled Transition System (LTS) semantics for global types.
To begin, we introduce the transition labels in~\Cref{def:mpst-env-reduction-label},
which are also used in the LTS semantics of typing contexts~(discussed later in~\Cref{sec:gtype:lts-context}).

\begin{definition}[Transition Labels]
  \label{def:mpst-env-reduction-label}\label{def:mpst-label-subject}Let $\stEnvAnnotGenericSym$ be a transition label of the form:
\begin{equation}
  \stEnvAnnotGenericSym \ \bnfdef \
  \ltsBra{\roleP}{\roleQ}{\gtLab(\tyGround)} 
    \ \bnfsep \ \ltsSel{\roleP}{\roleQ}{\gtLab(\tyGround)} \quad
    \text{(receive or send a message)}
\end{equation}
\noindent
The subject of a transition label, written
\,$\ltsSubject{\stEnvAnnotGenericSym}$,\, is defined as follows:
\begin{equation}
    \ltsSubject{\ltsBra{\roleP}{\roleQ}{\gtLab(\tyGround)}} =
    \ltsSubject{\ltsSel{\roleP}{\roleQ}{\gtLab(\tyGround)}} =
    \roleP
\end{equation}
If $\tyGround\tyGroundSub\tyGroundi$ then we say that
$\ltsBra{\roleP}{\roleQ}{\gtLab(\tyGroundi)}\preccurlyeq\ltsBra{\roleP}{\roleQ}{\gtLab(\tyGround)}$
and
$\ltsSel{\roleP}{\roleQ}{\gtLab(\tyGround)}\preccurlyeq\ltsSel{\roleP}{\roleQ}{\gtLab(\tyGroundi)}$
\end{definition}
\noindent
The label 
$\ltsSel{\roleP}{\roleQ}{\gtLab(\tyGround)}$
denotes that $\roleP$ has enqueued a message
with label $\gtLab$ and payload of type $\tyGround$
onto its queue for $\roleQ$
and the label
$\ltsBra{\roleP}{\roleQ}{\gtLab(\tyGround)}$
denotes that $\roleP$ has dequeued/received a message
with label $\gtLab$ and payload of type $\tyGround$
from $\roleQ$'s queue.

\begin{definition}[Global Type Transitions]
  \label{def:gtype:lts-gt}
  The global type transition $\gtMove[\stEnvAnnotGenericSym]$ is inductively
  defined by the following rules. 

\centerline{\(
\begin{array}{c}
  \inference[\iruleGtMoveRec]{
     \gtG{}[\gtRec{\gtRecVar}{\gtG}/\gtRecVar] 
     \,\gtMove[\stEnvAnnotGenericSym]\, 
     \gtGi}
   {
     \gtRec{\gtRecVar}{\gtG} 
     \,\gtMove[\stEnvAnnotGenericSym]\, 
    \gtGi
  }
  \qquad 
   \inference[\iruleGtMoveBra]{}{
     \gtCommSquigSmall{\roleP}{\roleQ}{\gtLab}{}{\gtLab}{\tyGround}{\gtGi}
    \,\gtMove[\ltsBra{\roleQ}{\roleP}{\gtLab(\tyGround)}]\, 
    \gtGi
   }\\[2ex]
   \inference[\iruleGtMoveSel]{
     j \in I
   }{
     \gtCommSmall{\roleP}{\roleQ}{i \in I}{\gtLab[i]}{\tyGround[i]}{\gtGi[i]}
     \,\gtMove[\ltsSel{\roleP}{\roleQ}{\gtLab[j](\tyGround[j])}]\, 
     \gtCommSquigSmall{\roleP}{\roleQ}{\gtLab[j]}{}{\gtLab[j]}{\tyGround[j]}{\gtGi[j]}
   }
   \\[2ex]

   \inference[\iruleGtMoveCtx]{
     \forall i \in I: 
     \gtGi[i] 
     \,\gtMove[\stEnvAnnotGenericSym]\, 
     \gtGii[i] 
     &
 	\ltsSubject{\stEnvAnnotGenericSym}\ne \roleP
   }{
       \gtCommSmall{\roleP}{\roleQ}{i \in I}{\gtLab[i]}{\tyGround[i]}{\gtGi[i]}
    \,\gtMove[\stEnvAnnotGenericSym]\, 
       \gtCommSmall{\roleP}{\roleQ}{i \in I}{\gtLab[i]}{\tyGround[i]}{\gtGii[i]}   
   }
 \\[2ex]
   \inference[\iruleGtMoveCtx']{
     \forall i \in J\subseteq I: 
     \gtGi[i] 
     \,\gtMove[\stEnvAnnotGenericSym]\, 
     \gtGii[i] 
     &
     \stEnvAnnotGenericSym = \ltsSel{\roleP}{\roleR}{\gtLabi(\tyGroundi)}
     &
     \text{with}&\roleR\ne\roleQ
   }{
       \gtCommSmall{\roleP}{\roleQ}{i \in I}{\gtLab[i]}{\tyGround[i]}{\gtGi[i]}
    \,\gtMove[\stEnvAnnotGenericSym]\, 
       \gtCommSmall{\roleP}{\roleQ}{i \in J}{\gtLab[i]}{\tyGround[i]}{\gtGii[i]}   
   }
 \\[2ex]
   \inference[\iruleGtMoveCtxII]{
     \gtGi
     \,\gtMove[\stEnvAnnotGenericSym]\, 
     \gtGii
     &
     \stEnvAnnotGenericSym \ne \ltsBra{\roleQ}{\roleP}{\gtLabi(\tyGroundi)}
   }{
       \gtCommSquigSmall{\roleP}{\roleQ}{\gtLab}{}{\gtLab}{\tyGround}{\gtGi}
    \,\gtMove[\stEnvAnnotGenericSym]\, 
       \gtCommSquigSmall{\roleP}{\roleQ}{\gtLab}{}{\gtLab}{\tyGround}{\gtGii}   
   }
 \end{array}
 \)}
\smallskip 
 
  We use 
  $\gtG \,\gtMove\, 
  \gtGi$
  if there
  exists $\stEnvAnnotGenericSym$ such that
  $\gtG
  \,\gtMove[\stEnvAnnotGenericSym]\, 
  \gtGi$; 
  we write
  $\gtG \,\gtMove$
  if there
  exists $\gtGi$ such that
  $\gtG
  \,\gtMove\,  
  \gtGi$, 
  and $\gtNotMove{\gtG}$ for its negation (\ie there is no $\gtGi$ 
  such that  $\gtG
  \,\gtMove\,  
  \gtGi$). 
  Finally, 
 $\gtMoveStar$ denotes the transitive and reflexive closure of
  $\gtMove$.
\end{definition}
\noindent
The semantics of global types reflect the behaviours permitted by asynchronous subtyping by allowing specific behaviours to be executed with a type. 
\begin{itemize}
  \item \RULE{GR-$\mu$} permits a valid transition to take place under a recursion binder.
  \item \RULE{GR-$\&$} describes the receiving of asynchronous messages, allowing en-route message to be received.
  \item \RULE{GR-$\oplus$} describes the sending of asynchronous messages, resulting in a standard transmission becoming an en-route one.
  \item \RULE{GR-Ctx-I}
 allows a transition $\stEnvAnnotGenericSym$ to be anticipated inside the continuations of a communication $\roleP\to\roleQ$, provided the subject of $\stEnvAnnotGenericSym$ is not $\roleP$. That is, actions by participants other than the sender can proceed independently of $\roleP$'s pending communication.
  \item \RULE{GR-Ctx-I'}
  handles the case where $\roleP$ itself sends to a \emph{different} recipient $\roleR\ne\roleQ$ before completing its communication with $\roleQ$. This mirrors the $\Bp$ reordering in the refinement relation~(\cref{def:subtyping}), which permits sends to be moved ahead of sends to other participants. The indexing set may shrink from $I$ to $J\subseteq I$, reflecting that asynchronous subtyping allows some branches to be forgotten when a send is reordered.
  \item \RULE{GR-Ctx-II} applies when a message from $\roleP$ to $\roleQ$ is already en route. Here any transition is permitted except $\roleQ$ receiving a different message from $\roleP$, which would violate the ordering of $\roleP$'s messages to $\roleQ$.
\end{itemize}

Together, \RULE{GR-Ctx-I}, \RULE{GR-Ctx-I'} and \RULE{GR-Ctx-II} enable the global type semantics to capture the same safe reorderings that are present in the precise asynchronous subtyping relation.

\begin{example}[Non-Determinism in Global Semantics]
\label{ex:non-det}
	\RULE{GR-Ctx-I'} will result in non-determinism \ie
	there is $\gtG$, $\stEnvAnnotGenericSym$, and $\gtGi\ne\gtGii$ with
	$\gtG\,\gtMove[\stEnvAnnotGenericSym]\,\gtGi$ and
	$\gtG\,\gtMove[\stEnvAnnotGenericSym]\,\gtGii$.
	
	Consider the following global types:
\begin{equation}
		\gtG[\text{\tiny non-det}] = \gtRec{\gtRecVar}{
			\gtCommRaw{\roleP}{\roleQ}{
				\begin{array}{@{}l@{}}
					\gtLab[1]\gtSeq
					\gtCommSingle{\roleQ}{\roleR}{\gtLab[1]}{}{\gtRecVar}
					\\
					\gtLab[2]\gtSeq
					\gtCommSingle{\roleQ}{\roleR}{\gtLab[2]}{}{
						\gtCommSingle{\roleP}{\roleR}{\gtLab}{}{}
					}
				\end{array}
			}
		}
\end{equation}
\begin{equation}
		\gtGi =
			\gtCommSingle{\roleP}{\roleQ}{\gtLab[2]}{}{
					\gtCommSingle{\roleQ}{\roleR}{\gtLab[2]}{}{
						\gtCommSquigSingle{\roleP}{\roleR}{\gtLab}{\gtLab}{}{}
					}
			}
	\qquad
		\gtGii = \gtRec{\gtRecVar}{
			\gtCommRaw{\roleP}{\roleQ}{
				\begin{array}{@{}l@{}}
					\gtLab[1]\gtSeq
					\gtCommSingle{\roleQ}{\roleR}{\gtLab[1]}{}{\gtGi}
					\\
					\gtLab[2]\gtSeq
					\gtCommSingle{\roleQ}{\roleR}{\gtLab[2]}{}{
						\gtCommSquigSingle{\roleP}{\roleR}{\gtLab}{\gtLab}{}{}
					}
				\end{array}
			}
		}
\end{equation}
	We can derive that $\gtG[\text{\tiny non-det}]\,\gtMove[\ltsSel{\roleP}{\roleR}{\gtLab}]\,\gtGi$
	and that $\gtG[\text{\tiny non-det}]\,\gtMove[\ltsSel{\roleP}{\roleR}{\gtLab}]\,\gtGii$.
	$\gtGi\neq\gtGii$, so the transition relation is non-deterministic.
	
	This is necessary for $\gtG[\text{\tiny non-det}]$ to capture all behaviours
	up to asynchronous subtyping.
	Indeed, suppose we take some deterministic subrelation of the transition relation
We will now be able to associate $\gtG$
	with a typing context,
	which it does not operationally correspond to.

	Consider its projections:
\begin{equation}
		\stT[\roleP] = \stRec{\stRecVar}{
			\stIntSumRaw{\roleQ}{
				\begin{array}{l}
					\stChoice{\stLab[1]}{}\stSeq\stRecVar\\
					\stChoice{\stLab[2]}{}\stSeq \roleR\stFmt{\oplus}\stChoice{\stLab}{}
				\end{array}
			}
		}
		\qquad
		\stT[\roleQ] = \stRec{\stRecVar}{
			\stExtSumRaw{\roleP}{
				\begin{array}{l}
					\stChoice{\stLab[1]}{}\stSeq \roleR\stFmt{\oplus}\stChoice{\stLab[1]}{}\stSeq\stRecVar\\
					\stChoice{\stLab[2]}{}\stSeq \roleR\stFmt{\oplus}\stChoice{\stLab[2]}{}
				\end{array}
			}
		}
		\qquad
		\stT[\roleR] = \stRec{\stRecVar}{
			\stExtSumRaw{\roleQ}{
				\begin{array}{l}
					\stChoice{\stLab[1]}{}\stSeq\stRecVar\\
					\stChoice{\stLab[2]}{}\stSeq \roleP\stFmt{\&}\stChoice{\stLab}{}
				\end{array}
			}
		}
\end{equation}	 
	We have that
\begin{equation}
	\roleR\stFmt{\oplus}\stChoice{\stLab}{}
	\underbrace{
	\stSeq \roleQ\stFmt{\oplus}\stChoice{\stLab[1]}{}
	\dots
	\stSeq \roleQ\stFmt{\oplus}\stChoice{\stLab[1]}{}
	}_{n+1}
	\stSeq \roleQ\stFmt{\oplus}\stChoice{\stLab[2]}{} \asubt\stT[\roleP]
\end{equation}
	If $\gtG[\text{\tiny non-det}]\,\gtMove[\ltsSel{\roleP}{\roleR}{\gtLab}]\,\gtGiii$, then
	$\gtGiii$ can perform $\ltsSel{\roleP}{\roleQ}{\gtLab[1]}$
	at most $n$ times, for some $n$.
	This is because the transition has a finite derivation,
	beyond which we must truncate the possibility for
	$\ltsSel{\roleP}{\roleQ}{\gtLab[1]}$ to occur.
	If no such transition exists, then operational correspondence
	fails for this transition relation.
	As we are assuming that the transition is deterministic,
	we must take this transition in the simulation of the associated context.
	
	$\gtGiii$ cannnot perform the action $\ltsSel{\roleP}{\roleQ}{\gtLab[1]}$
	$n+1$ times and so
	$\gtG[\text{\tiny non-det}]$ does not correspond to the projected context,
	up to precise asynchronous subtyping.
	
	Therefore, non-determinism is necessary
	for global types to correspond to projected contexts up to
	the precise asynchronous subtyping.
	
	There are optimisations of our transition that can limit
	the non-determinism;
	we could
	enforce that maximum $J\subset I$ are taken
	when they can be shown to exist,
	but this is not the case in general.
	For example,
	$\gtG[\text{\tiny non-det}]$ has no maximum transition
	for $\gtG[\text{\tiny non-det}]\,\gtMove[\ltsSel{\roleP}{\roleR}{\stLab[1]}]$,
	as the redex can be taken to have arbitrary depth.
\end{example}

\begin{example}[Semantics of Global Type for Ring Protocol]
  Consider the global type for the ring-choice protocol (\Cref{sec:ring}). 
  The asynchronous semantics enable us to apply both $\RULE{GR-\ensuremath{\oplus}}$ transitions, 
  corresponding to sends from $\roleP$ to $\roleQ$ and from $\roleQ$ to $\roleR$, 
  before any receive transitions (using $\RULE{GR-\ensuremath{\&}}$) are applied. 
  As we will see later, this particular choice of global transition path corresponds to behaviour which 
  can only be captured by the optimised local type. 
  \\
  We begin by reducing $\gtG[\text{ring}]$ via a send action from $\roleP$ to $\roleQ$
using \RULE{\iruleGtMoveSel}:
\begin{align}
  \gtG[\text{ring}] & \gtMove[\ltsSel{\roleP}{\roleQ}{\gtMsgFmt{add}}] \ \gtGone[\text{ring}] = \
    \gtCommSquigSingle{\roleP}{\roleQ}{\gtMsgFmt{add}}{\mathsf{add}}{\tyInt}
      {
        \gtCommRaw{\roleQ}{\roleR}{
          \begin{array}{@{}l@{}}
          \mathsf{add}(\tyInt)\gtSeq
          \gtCommRaw{\roleR}{\roleP}{
            \mathsf{add}(\tyInt)\gtSeq  \gtG_\text{ring}
          }
          \\
          \mathsf{sub}(\tyInt)\gtSeq
          \gtCommRaw{\roleR}{\roleP}{
            \mathsf{sub}(\tyInt)\gtSeq  \gtG_\text{ring}
          }
          \end{array}
        }
    }
\end{align}
  At this point, a message from $\roleP$ to $\roleQ$ is in transit. 
  We then perform another $\RULE{GR-\ensuremath{\oplus}}$ transition, using 
  $\RULE{\iruleGtMoveCtxII}$ to apply the sending from $\roleQ$ to $\roleR$ 
  under the existing en-route type:
\begin{align}
  \gtGone[\text{ring}] & \gtMove[\ltsSel{\roleQ}{\roleR}{\gtMsgFmt{sub}}] \ \gtGtwo[\text{ring}] = 
    \gtCommSquigSingle{\roleP}{\roleQ}{\gtMsgFmt{add}}{\mathsf{add}}{\tyInt}
      {
        \gtCommSquigSingle{\roleQ}{\roleR}{\gtMsgFmt{sub}}{
          \mathsf{sub}}{\tyInt}{
          \gtCommSingle{\roleR}{\roleP}{
            \mathsf{sub}}{\tyInt}{\gtG_\text{ring}}
          }
        }
\end{align}
  The state $\gtGtwo[\text{ring}]$ reflects the two en-route messages: one from $\roleP$ to $\roleQ$ and one from $\roleQ$ to $\roleR$.
  We can then proceed with the corresponding receive actions using the $\RULE{GR-\&}$ rule. First, $\roleQ$ receives the message from $\roleP$:
\begin{align}
  \gtGtwo[\text{ring}] & \gtMove[\ltsBra{\roleQ}{\roleP}{\gtMsgFmt{add}}] \ \gtGthree[\text{ring}] =  \
      \gtCommSquigSingle{\roleQ}{\roleR}{\gtMsgFmt{sub}}{
        \mathsf{sub}}{\tyInt}{
        \gtCommSingle{\roleR}{\roleP}{
          \mathsf{sub}}{\tyInt}\gtG_\text{ring}
        }
\end{align}
Then, $\roleR$ receives the message from $\roleQ$ by \RULE{GR-\&}:
\begin{align}
  \gtGthree[\text{ring}] & \gtMove[\ltsBra{\roleR}{\roleQ}{\gtMsgFmt{sub}}] \ \gtGfour[\text{ring}] =  \
          \gtCommRaw{\roleR}{\roleP}{
            \mathsf{sub}(\tyInt)\gtSeq  \gtG_\text{ring}
          } 
\end{align}
Next, $\roleR$ sends to $\roleP$ by \RULE{GR-$\oplus$}:
\begin{align}
  \gtGfour[\text{ring}] & \gtMove[\ltsSel{\roleR}{\roleP}{\gtMsgFmt{sub}}] \ \gtGfive[\text{ring}] =  \
          \gtCommRawSquig{\roleR}{\roleP}{\gtMsgFmt{sub}}{
            \mathsf{sub}(\tyInt)\gtSeq  \gtG_\text{ring}
          } 
\end{align}
Finally, $\roleP$ receives this last message by \RULE{GR-\&}, returning us to the original state of the protocol:
\begin{align}
  \gtGfive[\text{ring}] & \gtMove[\ltsBra{\roleR}{\roleP}{\gtMsgFmt{sub}}] \ \gtG[\text{ring}]
\end{align}
In \Cref{sec:gtype:lts-context}, we will show that this transition sequence corresponds to a behaviour of the optimised
  local implementation for $\roleQ$.  
\end{example}

\subsection{Semantics of Typing Contexts}
\label{sec:gtype:lts-context}
After introducing the semantics of global types, 
we now present an LTS semantics for \emph{typing contexts}, which are 
 collections of local types. 
The formal definition of a typing context is provided in \Cref{def:mpst-env}, 
followed by its transition rules in \Cref{def:mpst-env-reduction}.
We then define safety, freedom, and liveness for contexts in~\cref{def:live}.

\begin{definition}[Typing Contexts]\label{def:mpst-env}\label{def:mpst-env-closed}\label{def:mpst-env-comp}\label{def:mpst-env-subtype}$\stEnv$ denotes a partial mapping
  from participants to queues and types. Their syntax is 
  defined as:
\begin{equation}
  \stEnv
  \,\coloncolonequals\,
  \stEnvEmpty
  \bnfsep
  \stEnv \stEnvComp \stEnvMap{\roleP}{(\quH, \stT)}
\end{equation}
  The \emph{context composition} $\stEnv[1] \stEnvComp \stEnv[2]$ 
  is defined iff $\dom{\stEnv[1]} \cap \dom{\stEnv[2]} = \emptyset$. \end{definition}

\begin{definition}[Typing Context Transition]\label{def:mpst-env-reduction}The \emph{typing context transition\; $\stEnvMoveGenAnnot$} \;is inductively defined by the following
  rules:

\smallskip
\centerline{\(\begin{array}{c}
    \inference[\iruleTCtxOut]{k \in I}{\stEnvMap{\roleP
      }{(\quH, \stIntSum{\roleQ}{i \in I}{\stChoice{\stLab[i]}{\tyGround[i]} \stSeq \stT[i]})}\,\stEnvMoveAnnot{\ltsSel{\roleP}{\roleQ}{\stChoice{\stLab[k]}{\tyGround[k]}}}\,\stEnvMap{\roleP }{
        (\quCons{\quH}{\quMsg{\roleQ}{\stLab[k]}{\tyGround[k]}}, \stT[k])}}\\[2mm]
    \inference[\iruleTCtxIn]{k \in I & \tyGroundi \tyGroundSub \tyGround[k]
    }{\stEnvMap{\roleP }{(\quH, \stExtSum{\roleQ}{i \in I}{\stChoice{\stLab[i]}{\tyGround[i]} \stSeq \stT[i]})}\!\stEnvComp
      \stEnvMap{\roleQ }{(\quCons{\quMsg{\roleP}{\stLab[k]}{\tyGroundi}}{\quHi}, \stT)}\,\stEnvMoveAnnot{\ltsBra{\roleP}{\roleQ}{\stChoice{\stLab[k]}{\tyGround[k]}}}\,\stEnvMap{\roleP }{(\quH, \stT[k])}  
      \!\stEnvComp
      \stEnvMap{\roleQ }{(\quHi, \stT)}  
    }\\[2mm]
    \inference[\iruleTCtxRec]{\stEnvMap{\roleP }{(\quH,\stT\subst{\stRecVar}{\stRec{\stRecVar}{\stT}})}\stEnvMoveGenAnnot \stEnvi }{\stEnvMap{\roleP }{(\quH,\stRec{\stRecVar}{\stT})}\stEnvMoveGenAnnot \stEnvi }\qquad
    \inference[\iruleTCtxCong]{\stEnv \stEnvMoveGenAnnot \stEnvi }{\stEnv \!\stEnvComp \stEnvMap{\mpC}{(\quH,\stT)}
      \stEnvMoveGenAnnot \stEnvi \!\stEnvComp \stEnvMap{\mpC}{(\quH,\stT)}
    }\end{array}
  \)}

\smallskip 

We write $\stEnv \!\stEnvMoveGenAnnot$ if there exists $\stEnvi$ 
such that $\stEnv \!\stEnvMoveGenAnnot\! \stEnvi$. We write $\stEnv \!\stEnvMove\! \stEnvi$ iff $\stEnv \!\stEnvMoveGenAnnot \stEnvi$ for some $\stEnvAnnotGenericSym$
    and $\stEnvNotMoveP{\stEnv}$ for its negation 
    (\ie  there is no $\stEnvi$ such that $\stEnv \!\stEnvMove\!  \stEnvi$),   and we denote $\stEnvMoveStar$ as the reflexive and transitive closure of $\stEnvMove$. 
\end{definition}

Rule \inferrule{\iruleTCtxOut} says that participant $\roleP$
with internal choices $I$ towards $\roleQ$
can make the choice $k\in I$
and enqueue the message $\stChoice{\stLab[k]}{\tyGround[k]}$
for $\roleQ$ into its queue, and continue with $\stT[k]$.
Dually, rule \inferrule{\iruleTCtxIn} says that participant $\roleP$
with external choices $I$ from $\roleQ$
can dequeue message $\stChoice{\stLab[k]}{\tyGround[k]}$
from $\roleP$'s queue if $k\in I$, and continue with $\stT[k]$.
Rule \inferrule{\iruleTCtxRec} unfolds recursive types
and rule \inferrule{\iruleTCtxCong}
handles enlarging typing contexts.

\begin{example}[Operational Semantics of Optimised Ring Context]
\label{ex:op-sem-opt}
As an example, consider the operational semantics of the optimised ring protocol. 
Each transition captures either a message send or receive, which either enqueues or dequeues a 
message in the queue of the sending participant. 
\\[2.5ex]
\begin{adjustbox}{width=\columnwidth,center}\(
\begin{array}{rrclc}
&
\stEnv[0] &=& \stEnvMap{\roleP }{
(\quEmpty[], \stT[\roleP])}, 
\stEnvMap{\roleQ }{
(\quEmpty[], \stTopt[\roleQ])}, 
\stEnvMap{\roleR }{
(\quEmpty[], \stT[\roleR])}
\\
\stEnvMoveAnnot{\ltsSel{\roleP}{\roleQ}{\stChoice{\stLabFmt{add}}{\tyInt}}} &
\stEnv[1] &=& 
\stEnvMap{\roleP }{
(\langle  \quMsg{\roleQ}{\stLabFmt{add}}{\tyInt} \rangle, 
    \stExtSumRaw{\roleR}{
      \begin{array}{l}
        \stChoice{add}{\tyInt} \stSeq \stT[\roleP] \\
        \stChoice{sub}{\tyInt} \stSeq \stT[\roleP]
      \end{array}
    }
)}, 
\stEnvMap{\roleQ }{
(\quEmpty[], \stTopt[\roleQ])}, 
\stEnvMap{\roleR }{
(\quEmpty[], \stT[\roleR])}&
\inferrule{\iruleTCtxOut,\iruleTCtxRec,\iruleTCtxCong} 
\\
\stEnvMoveAnnot{\ltsSel{\roleQ}{\roleR}{\stChoice{\stLabFmt{sub}}{\tyInt}}}  &
\stEnv[2]  &=&
\stEnvMap{\roleP }{
(\langle  \quMsg{\roleQ}{\stLabFmt{add}}{\tyInt} \rangle, 
    \stExtSumRaw{\roleR}{
      \begin{array}{l}
        \stChoice{add}{\tyInt} \stSeq \stT[\roleP] \\
        \stChoice{sub}{\tyInt} \stSeq \stT[\roleP]
      \end{array}
    }
)}, 
\stEnvMap{\roleQ }{
  ( \langle \quMsg{\roleR}{ \stLabFmt{sub}}{\tyInt} \rangle, 
  \stExtSum{\roleP}{}{\stChoice{add}{\tyInt} \stSeq \stTopt[\roleQ]}
  )}, 
\stEnvMap{\roleR }{
(\quEmpty, \stT[\roleR])}
&
\inferrule{\iruleTCtxOut,\iruleTCtxRec,\iruleTCtxCong}
\\
\stEnvMoveAnnot{\ltsBra{\roleQ}{\roleP}{\stChoice{\stLabFmt{add}}{\tyInt}}} &\stEnv[3]& =&
\stEnvMap{\roleP }{
(\quEmpty[], 
    \stExtSumRaw{\roleR}{
      \begin{array}{l}
        \stChoice{add}{\tyInt} \stSeq \stT[\roleP] \\
        \stChoice{sub}{\tyInt} \stSeq \stT[\roleP]
      \end{array}
    }
)}, 
\stEnvMap{\roleQ }{
  ( \langle \quMsg{\roleR}{ \stLabFmt{sub}}{\tyInt} \rangle
  , 
  \stTopt[\roleQ]
  )}, 
\stEnvMap{\roleR }{
(\quEmpty[], \stT[\roleR])}
&
\inferrule{\iruleTCtxIn,\iruleTCtxCong}
\\
\stEnvMoveAnnot{\ltsBra{\roleR}{\roleQ}{\stChoice{\stLabFmt{sub}}{\tyInt}}}  &\stEnv[4] &=&
\stEnvMap{\roleP }{
(\quEmpty[], 
    \stExtSumRaw{\roleR}{
      \begin{array}{l}
        \stChoice{add}{\tyInt} \stSeq \stT[\roleP] \\
        \stChoice{sub}{\tyInt} \stSeq \stT[\roleP]
      \end{array}
    }
)}, 
\stEnvMap{\roleQ }{
  ( \quEmpty[], 
  \stTopt[\roleQ]
  )}, 
\stEnvMap{\roleR }{
(\quEmpty[], 
\stIntSum{\roleP}{}{
         \stChoice{sub}{\tyInt} \stSeq \stT[\roleR]
       })}
&
\inferrule{\iruleTCtxIn,\iruleTCtxRec,\iruleTCtxCong}
\\
\stEnvMoveAnnot{\ltsSel{\roleR}{\roleP}{\stChoice{\stLabFmt{sub}}{\tyInt}}}&  \stEnv[5] &=& 
\stEnvMap{\roleP }{
(\quEmpty[], 
  \stExtSumRaw{\roleR}{
    \begin{array}{l}
      \stChoice{add}{\tyInt} \stSeq \stT[\roleP] \\
      \stChoice{sub}{\tyInt} \stSeq \stT[\roleP]
    \end{array}
  }
)}, 
\stEnvMap{\roleQ }{
  ( \quEmpty[], 
  \stTopt[\roleQ]
  )}, 
\stEnvMap{\roleR }{
(\langle \quMsg{\roleP}{\stLabFmt{sub}}{\tyInt} \rangle, \stT[\roleR])
}
&
\inferrule{\iruleTCtxOut,\iruleTCtxCong}
\\
\stEnvMoveAnnot{\ltsBra{\roleP}{\roleR}{\stChoice{\stLabFmt{sub}}{\tyInt}}}  &
\stEnv[0] &&
&
\inferrule{\iruleTCtxIn,\iruleTCtxCong}
\end{array}
\)
\end{adjustbox}
\end{example}
Not all contexts describe `correct' protocols.
The most obvious errors
are label mismatches;
a message on the queue may
possess a label for $\roleP$
that is excluded in the
possible branches of
$\roleP$'s type.
We call contexts,
where this can never occur,
\emph{safe}~(\cref{def:safe}).
A context could fail to be correct
by terminating
with non-empty queues
or non-$\stEnd$ types;
these are \emph{deadlocked} contexts~(\cref{def:df}).
In~\cref{sec:properties},
we shall see similar behaviours
for processes and
we can define corresponding
session correctness properties,
which will be implied by
the typing context properties.
Namely, session deadlock-freedom
and session safety~(\cref{def:safety-deadlock-free})
are implied by context deadlock-freedom
and context safety, respectively.
A more subtle failure is indefinitely delaying
participant behaviour, even under \emph{fair scheduling}.
Assuming that we reduce a context so that
whenever $\roleP$ can send a message or receive a message
on the queue it will do so,
we want to know that
every message on the queue will eventually
be dequeued and that
every local receiving type will eventually
dequeue a message.
This is liveness~(\cref{def:live})
and corresponds
with session liveness~(\cref{def:session-liveness}).
In this asynchronous setting,
context safety and deadlock-freedom
are implied by liveness,
so we will only need to prove liveness.
We establish in
\cref{sec:live}
that every typing context obtained via projection from a balanced global type is live (\cref{thm:assoc-live}), and hence safe and deadlock-free.
\begin{definition}[Context Safety]
\label{def:safe}
We say that a context $\stEnv$ is safe iff
for all $\stEnv\stEnvMoveStar\stEnvi$ if
$\stEnvApp{\stEnvi}{\roleP}=(\quCons{\quMsg{\roleQ}{\stLab}{\tyGround}}{\quH},\stT)$
and
$\stEnvApp{\stEnvi}{\roleQ}=(\quHi,\stExtSum{\roleP}{i \in I}{\stChoice{\stLab[i]}{\tyGround[i]} \stSeq \stT[i]})$
then
there is $j\in I$ such that
$\stLab=\stLab[j]$ and
$\tyGround\tyGroundSub\tyGround[j]$.
\end{definition}
\begin{definition}[Context Deadlock-freedom]
\label{def:df}
We say that a context $\stEnv$ is deadlock-free iff
for all $\stEnv\stEnvMoveStar\stEnvi$ if
$\stEnvNotMoveP{\stEnvi}$
then for all $\roleP\in\dom{\stEnvi}$
$\unfoldOne{\stEnvApp{\stEnvi}{\roleP}}=\stEnd$.
\end{definition}
\begin{definition}[Context Liveness~\cite{GPPSY2023}]
\label{def:live}
Let $I=\{0,1,2,\dots,n\}$ for some $n$ or $I=\{0,1,2,\dots\}$.
Let $\{\stEnv[i]\}_{i\in I}$ be a sequence of typing contexts
such that $\stEnv[i]\,\stEnvMove\,\stEnv[i+1]$ for $i,i+1\in I$.
We say that the path $\{\stEnv[i]\}_{i\in I}$ is fair iff
for all $i\in I$:
\begin{itemize}
	\item[F1] if $\stEnv[i]\stEnvMoveAnnot{\ltsSel{\roleP}{\roleQ}{\stLab(\tyGround)}}$,
	then there are $k$, $\stLabi$, and $\tyGroundi$ such that
	$I\ni k+1> i$ and $\stEnv[k]\stEnvMoveAnnot{\ltsSel{\roleP}{\roleQ}{\stLabi(\tyGroundi)}}\stEnv[k+1]$
	\item[F2] if $\stEnv[i]\stEnvMoveAnnot{\ltsBra{\roleP}{\roleQ}{\stLab(\tyGround)}}$,
	then there is $k$ such that
	$I\ni k+1> i$ and $\stEnv[k]\stEnvMoveAnnot{\ltsBra{\roleP}{\roleQ}{\stLab(\tyGround)}}\stEnv[k+1]$
\end{itemize}
We say that the path $\{\stEnv[i]\}_{i\in I}$ is live iff,
for all $i\in I$:
\begin{itemize}
	\item[L1] If $\stEnvApp{\stEnv[i]}{\roleP}
	=(\quCons{\quMsg{\roleQ}{\stLab}{\tyGround}}{\quH},\stT)$,
	then there exist $k$ and $\tyGroundi$ such that
	$I\ni k+1> i$ and
	$\stEnv[k]\stEnvMoveAnnot{\ltsBra{\roleQ}{\roleP}{\stLab(\tyGroundi)}}\stEnv[k+1]$
	\item[L2] If $\stEnvApp{\stEnv[i]}{\roleP}
	=(\quH,\stExtSum{\roleQ}{j \in J}{\stChoice{\stLab[j]}{\tyGround[j]} \stSeq \stT[j]})$,
	then there exist $k$, $\stLabi$, and $\tyGroundi$ such that
	$I\ni k+1 > i$ and
	$\stEnv[k]\stEnvMoveAnnot{\ltsBra{\roleP}{\roleQ}{\stLabi(\tyGroundi)}}\stEnv[k+1]$
\end{itemize}
We say that a context $\stEnv$ is live iff
all fair paths starting at $\stEnv$
are live.
\end{definition}
The context $\stEnv[0]$
from~\cref{ex:op-sem-opt} above is live,
which we will be able to show
by Theorem~\ref{thm:assoc-live}
later.
\\[2ex]
Finding non-live contexts
merely requires us to witness
failures of L1 or L2.
We now provide three non-live contexts:
one that is unsafe;
one that is safe but deadlocked;
and one that is safe and deadlock-free.
\begin{example}[Failures of Liveness]
\ \\
	\begin{enumerate}[leftmargin=0.8in,labelindent=-\leftmargin]
	\item[Unsafe]
	Consider the context
	$\stEnvMap{\roleP}{(\quMsg{\roleQ}{\stLab}{},\stEnd)},
	\stEnvMap{\roleQ}{(\quEmpty,\stExtSum{\roleP}{}{\stLabi})}$.
	This context is not live because the message
	$\quMsg{\roleQ}{\stLab}{}$
	does not match with
	$\stExtSum{\roleP}{}{\stLabi}$
	and so no communication can occur,
	violating L1 and L2.
	This is also an example of an unsafe context.
\item[Deadlocked]
	Consider the context
	$\stEnvMap{\roleP}{(\quEmpty, \stExtSum{\roleQ}{}{\stLab})}$.
	This context is not live because no
	communication can occur and hence
	violating L2.
	This is also an example of
	a safe but deadlocked context.
\item[Livelocked]
	Consider the context
	$\stEnvMap{\roleP}{(\quEmpty, \stRec{\stRecVar}{\stIntSum{\roleQ}{}{\stLab\stSeq\stRecVar}})},
	\stEnvMap{\roleQ}{(\quEmpty, \stRec{\stRecVar}{\stExtSum{\roleP}{}{\stLab\stSeq\stRecVar}})},
	\stEnvMap{\roleR}{(\quMsg{\roleP}{\stLabi}{}, \stEnd)}$.
	This context is safe and deadlock-free
	as it is non-terminating,
	but it is livelocked as
	the message on $\roleR$'s
	queue will never be dequeued,
	violating L1.
    \end{enumerate}
\end{example}

\section{Properties of Global Types and Local Types}
\label{sec:properties-of-types}
This section introduces the constraint
that well-formed global types must be
$\text{balanced}^+$~(\cref{def:balanced}).
This is an extension of the standard
condition, that well-formed global types
must be balanced,
to our syntax with en-route transitions.
We will show that this condition is
preserved by transitions~(\cref{thm:bal-closed}),
so the restriction to $\text{balanced}^+$ types
is respected by our semantics.
We additionally prove that
this allows us to define
an assortment of recursive
depth functions~(\cref{def:basic-depth-func,def:depth-func,def:actv-role})
and understand
their interactions with our semantics~(\cref{lem:depth-decr,lem:en-route-bounded}).
Following this,
we analyse the algebraic properties of the
full coinductive merge
and the operational correspondence
between sets of local types
and their merge~(\cref{sec:merge-prop}).

\subsection{Balancedness}
\label{sec:bal}
Balanced global types are well-known
to correspond to live protocols in
the synchronous setting,
but in the asynchronous setting we need additional assumptions.
Intuitively, a global type is balanced iff
every participant occurs at a bounded depth in every reachable global type.
To formally define balancedness, we need to define the depth function.
\begin{definition}[Depth Function]
\label{def:basic-depth-func}
For a participant $\roleR$, we define $\gtDepth{\gtG}{\roleR}$ to be a partial function on global types by the following recursive definition:
if $\roleR\notin\gtRoles{\gtG}$ and $\gtFv{\gtG}=\emptyset$ then
$\gtDepth{\gtG}{\roleR}=1$, otherwise
\\[2ex]
\centerline{\(
	\begin{array}{rcl}
		\gtDepth{\gtRec{\gtRecVar}{\gtG}}{\roleR} & = & \gtDepth{\gtG}{\roleR}\\[2ex]
		\gtDepth{\gtComm{\roleP}{\roleQ}{i \in I}{\gtLab[i]}{\tyGround[i]}{\gtG[i]}}{\roleR} & = & 
			\left\{\begin{array}{cc}
                1 & \roleR\in\{\roleP,\roleQ\}\\
				1+max\left\{\gtDepth{\gtG[i]}{\roleR}\suchthat i\in I\right\} & otherwise
			\end{array}\right.\\[2ex]
		\gtDepth{\gtCommSquig{\roleP}{\roleQ}{\gtLab}{}{\gtLab}{\tyGround}{\gtG}}{\roleR} & = & 
            \left\{\begin{array}{cc}
                1 & \roleR=\roleQ\\
				1+\gtDepth{\gtG}{\roleR}& otherwise
			\end{array}\right.
	\end{array}
\)}
\end{definition}

\begin{remark}
\label{rem:depth}
The depth of $\roleR$ in $\gtG$ is the bound on
how deep you can go down the type tree of $\gtG$
before either $\roleR$ occurs
or $\roleR$ is no longer a participant.
\end{remark}

\begin{definition}[Balanced Global Types {\cite[Def 3.3]{Ghilezan2019}\cite[Def.~4.17]{thien-nobuko-popl-25}}]
\label{def:balanced-original}
	A global type $\gtG$ is balanced
	iff for all $\gtG\,\gtMoveStar\,\gtGi$
	and $\roleP\in\gtRoles{\gtGi}$,
	$\gtDepth{\gtG}{\roleP}$ exists.
\end{definition}

\begin{example}[Balanced]
	\begin{equation}
		\gtG = \gtCommRaw{\roleP}{\roleQ}{
			\begin{array}{l}
				\gtCommChoice{\gtLab[0]}{}{
					\gtCommSingle{\roleQ}{\roleP}{\gtLabi}{}{}
				}\\
				\gtCommChoice{\gtLab[1]}{}{
					\gtCommSingle{\roleQ}{\roleP}{\gtLabi}{}{}
				}
			\end{array}
		}
		\qquad
		\gtGi = \gtCommRaw{\roleP}{\roleQ}{
			\begin{array}{l}
				\gtCommChoice{\gtLab[0]}{}{
					\gtCommSingle{\roleR}{\roleS}{\gtLabi}{}{}
				}\\
				\gtCommChoice{\gtLab[1]}{}{
					\gtCommSingle{\roleR}{\roleS}{\gtLabi}{}{}
				}
			\end{array}
		}
		\qquad
		\gtGii = \gtCommRaw{\roleP}{\roleQ}{
			\begin{array}{l}
				\gtCommChoice{\gtLab[0]}{}{
					\gtCommSingle{\roleR}{\roleS}{\gtLabi}{}{}
				}\\
				\gtCommChoice{\gtLab[1]}{}{
					\gtCommSingle{\roleR}{\roleS}{\gtLabii}{}{}
				}
			\end{array}
		}
	\end{equation}
	All three of the above types are balanced,
	their participants all have depth at most $2$
	for each reachable global type.
	Note that $\gtGii$ cannot be projected onto $\roleR$,
	since 
	$\stMerge{}{
	\{\stIntSum{\roleS}{}{\stLabi},\stIntSum{\roleS}{}{\stLabii}\}
	=
	\emptyset
	}$;
	balancedness does not imply projectability.
	
	\noindent
	The ring protocol, $\gtG_\text{\tiny ring}$,
	is balanced as
	$\roleP$, $\roleQ$, and $\roleR$
	are the receivers of messages
	no more than two communications
	from the head of $\gtGi$
	for all $\gtG_\text{\tiny ring}\,\gtMoveStar\,\gtGi$.
\end{example}

\begin{example}[Unbalanced]
	\begin{equation}
\gtG[1] = \gtRec{\gtRecVar}{
			\gtCommRaw{\roleP}{\roleQ}{
				\begin{array}{l}
					\gtCommChoice{\gtLab[0]}{}{
						\gtRecVar
					}
					\\
					\gtCommChoice{\gtLab[1]}{}{
						\gtCommSingle{\roleP}{\roleR}{\gtLab}{}{}
					}
				\end{array}
			}
		}
		\qquad
		\gtG[2] = \gtRec{\gtRecVar}{
			\gtCommRaw{\roleP}{\roleQ}{
				\begin{array}{l}
					\gtCommChoice{\gtLab[0]}{}{
						\gtRecVar
					}
					\\
					\gtCommChoice{\gtLab[1]}{}{
						\gtCommSingle{\roleS}{\roleR}{\gtLab}{}{}
					}
				\end{array}
			}
		}
		\end{equation}
		\begin{equation}
		\gtGi[i] =
			\gtCommSingle{\roleP}{\roleQ}{\gtLab}{}{
				\gtCommSingle{\roleS}{\roleR}{\gtLab}{}{\gtG[i]}
			}
		\quad
		\text{for }i\in\{0,1,2\}
	\end{equation}
	All of the above types are not balanced.
$\gtDepth{\gtG[1]}{\roleR}$ does not exist,
	because there is a path down the unfolding of $\gtG[1]$,
	avoiding $\roleR$, so $\gtG[1]$ is unbalanced;
	this corresponds to the specified protocol not
	being live.
	$\gtDepth{\gtG[2]}{\roleR}$ and $\gtDepth{\gtG[2]}{\roleS}$
	do not exist,
	because there is a path down the unfolding of $\gtG[1]$,
	avoiding $\roleR$ and $\roleS$, so $\gtG[2]$ is unbalanced;
	this corresponds to the specified protocol
	not allowing commutativity of independent actions
	\ie
	\begin{equation}
		\gtG[2]\,\gtMove[\ltsSel{\roleP}{\roleQ}{\gtLab[1]}]\,
		\cdot
		\,\gtMove[\ltsSel{\roleS}{\roleR}{\gtLab}]\,
		\cdot
		\text{ but }
		\gtG[2]\quad\not\!\!\!\!\!\!\!\gtMove[\ltsSel{\roleS}{\roleR}{\gtLab}]
	\end{equation}
	$\gtGi[i]$ is not balanced for $i\in\{
1,2\}$,
	since, although all depths exists at $\gtGi[i]$,
	$\gtGi[i]\,\gtMoveStar\,\gtG[i]$,
	which is not balanced.
	
	\noindent
$\gtG[1]$ is projectable and operationally corresponds to
	its projected context, but does not guarantee liveness.
	$\gtG[2]$ is projectable but does not correspond to its projected
	context, since contexts always have commutativity
	for independent actions.
	\noindent
	Projectability does not imply balancedness and
	balancedness is necessary to ensure correctness of the projection.
\end{example}

The positions of en-route transmissions in a global type
can cause unexpected behaviour.
For example,
consider the global type
\begin{equation}
	\gtG = \gtCommSingle{\roleP}{\roleQ}{\gtLab}{}{
		\gtCommSquigSingle{\roleP}{\roleQ}{\gtLabi}{\gtLabi}{}{}
	}
\end{equation}
The projection of $\gtG$ 
has $\quMsg{\roleQ}{\gtLabi}{}$ on $\roleP$'s
queue, and
$\roleQ$'s local type is
$\stExtSum{\roleP}{}{\stLab\stSeq\stExtSum{\roleP}{}{\stLabi}}$,
so that the projected context is \emph{unsafe}.
\\[2ex]
We will introduce an extension
of balancedness for asynchronous
global types,
which forbids this behaviour.
Suppose that $\gtG$
has no en-route messages
and $\gtG\,\gtMoveStar\,\gtGi$.
An important observation is that for all $\roleP$ and $\roleQ$, every path down $\gtGi$ encounters the same
number of en-route transmissions from $\roleP$ to $\roleQ$.
We introduce a recursively defined partial function, $\gtMCount{\gtG}{\roleP}{\roleQ}$, to formalise this.

\begin{definition}[Active Roles]
\label{def:actv-role}
We define $\gtMRoles{\gtG}$ on global types
to be the set of pairs of participants that have an en-route transmission
from the first to the second,
by the recursive definition:
$\gtMRoles{\gtEnd}= \emptyset$; 
$\gtMRoles{\gtRecVar}  =  \emptyset$; 
$\gtMRoles{\gtRec{\gtRecVar}{\gtG}}  =  \gtMRoles{\gtG}$; 
$\gtMRoles{\gtComm{\roleP}{\roleQ}{i \in
I}{\gtLab[i]}{\tyGround[i]}{\gtG[i]}} = \bigcup_{i\in
I}\gtMRoles{\gtG[i]}$; and
$\gtMRoles{\gtCommSquig{\roleP}{\roleQ}{\gtLab}{}{\gtLab}{\tyGround}{\gtG}} = \{(\roleP,\roleQ)\}\cup\gtMRoles{\gtG}$
\end{definition}

\begin{definition}[Counting En-Route Messages]
\label{def:count}
We define $\gtMCount{\gtG}{\roleP}{\roleQ}$ by the following recursive definition:
\newline
\smallskip
\centerline
{\(
	\begin{array}{rcl}
		\gtMCount{\gtEnd}{\roleP}{\roleQ} & = & 0\\[4ex]
		\gtMCount{\gtRecVar}{\roleP}{\roleQ} & = & 0\\[4ex]
		\gtMCount{\gtRec{\gtRecVar}{\gtG}}{\roleP}{\roleQ} & = & \left\{
			\begin{array}{cc}
			\gtMCount{\gtG}{\roleP}{\roleQ} & \gtRecVar\not\in\gtFv{\gtG}\text{ or }\gtMCount{\gtG}{\roleP}{\roleQ}=0\\
			\text{undefined} & \text{otherwise}
			\end{array}
		\right.\\[4ex]
		\gtMCount{\gtComm{\rolePi}{\roleQi}{i \in I}{\gtLab[i]}{\tyGround[i]}{\gtG[i]}}{\roleP}{\roleQ} & = & \left\{
			\begin{array}{cc}
			\gtMCount{\gtG[i]}{\roleP}{\roleQ} & \begin{array}{c}
				\text{for all }i,j\in I\\
                \gtMCount{\gtG[i]}{\roleP}{\roleQ}=\gtMCount{\gtG[j]}{\roleP}{\roleQ}\\
                (\roleP,\roleQ)\ne(\rolePi,\roleQi)
			\end{array}\\[4ex]
            0&(\roleP,\roleQ)=(\rolePi,\roleQi)\not\in\bigcup_{i\in I}\gtMRoles{\gtG[i]}\\[4ex]
			\text{undefined} & \text{otherwise}
			\end{array}
		\right.\\[4ex]
		\gtMCount{\gtCommSquig{\rolePi}{\roleQi}{\gtLab}{}{\gtLab}{\tyGround}{\gtG}}{\roleP}{\roleQ} & = & \left\{
			\begin{array}{cc}
			\gtMCount{\gtG}{\roleP}{\roleQ}+1 & \roleP=\rolePi\text{ and }\roleQ=\roleQi\\
			\gtMCount{\gtG}{\roleP}{\roleQ} & otherwise
			\end{array}
		\right.
	\end{array}
\)}
\end{definition}

En-route messages are projected to queue elements.
The existence of $\gtMCount{\gtG}{\roleP}{\roleQ}$
is necessary for this interpretation to be correct:
\begin{enumerate*}
	\item queues have a finite length,
	which is assured by the conditions on recursion;
	\item queues have a known length,
	which is assured by the conditions on independent communications; and
	\item elements cannot be enqueued prematurely,
	which is ensured by the condition that the count
	of en-route transmissions from $\roleP$ to $\roleQ$
	following a transmission from $\roleP$ to $\roleQ$
	must be $0$. 
\end{enumerate*}
The existence of this function
is the necessary
condition for global types to be
asynchronously
meaningful.

\begin{definition}[$\text{Balanced}^{+}$ Global Types]
\label{def:balanced}
A global type $\gtG$ is $\text{balanced}^{+}$ iff, $\gtG$ is balanced (\cref{def:balanced-original})
and
for all participants $\roleP$ and $\roleQ$,
$\gtMCount{\gtG}{\roleP}{\roleQ}$ exists.
Hereafter we assume well-formed
global types satisfy this condition. 
\end{definition}
Given en-route types are only runtime behaviour, we can impose this
additional restriction upon balanced types by
Theorem~\ref{thm:bal-closed}.

\begin{remark}
Checking balancedness is decidable \cite{thien-nobuko-popl-25},
by which the existence of $\gtMCount{\gtG}{\roleP}{\roleQ}$
is decidable.
Therefore, by checking all $\roleP,\roleQ\in\gtRoles{\gtG}$,
$\text{balanced}^+$ is a decidable property.
\end{remark}

We now return to our running example~(\cref{sec:ring})
to briefly check that it satisfies this condition.

\begin{example}[Ring Protocol is $\text{Balanced}^+$]
	$\gtG_\text{\tiny ring}$ has no en-route transmissions
	and is balanced,
	so $\gtG_\text{\tiny ring}$ is $\text{balanced}^+$.
	As $\gtG_\text{\tiny ring}\,\gtMoveStar\,
	\gtCommSquigSingle{\roleP}{\roleQ}{\gtMsgFmt{add}}{\mathsf{add}}{\tyInt}
      {
        \gtCommRaw{\roleQ}{\roleR}{
          \begin{array}{@{}l@{}}
          \mathsf{add}(\tyInt)\gtSeq
          \gtCommRaw{\roleR}{\roleP}{
            \mathsf{add}(\tyInt)\gtSeq  \gtG_\text{ring}
          }
          \\
          \mathsf{sub}(\tyInt)\gtSeq
          \gtCommRaw{\roleR}{\roleP}{
            \mathsf{sub}(\tyInt)\gtSeq  \gtG_\text{ring}
          }
          \end{array}
        }
    }$, this type is also $\text{balanced}^+$.
\end{example}
Of course, not all global types are $\text{balanced}^+$.
We now provide examples for each
way that $\text{balanced}^+$ can
fail to hold.
\begin{example}[Non-$\text{Balanced}^+$ Types]
	Any non-balanced type is not $\text{balanced}^+$.
	For example,
	\begin{equation}
		\gtG = \gtRec{\gtRecVar}{
			\gtCommRaw{\roleP}{\roleQ}{
				\begin{array}{@{}l@{}}
					\gtLab[1]\gtSeq\gtRecVar\\
					\gtLab[2]\gtSeq\gtCommSingle{\roleP}{\roleR}{\gtLab}{}{}
				\end{array}
			}
		}
	\end{equation}
	$\gtG$ is not balanced as $\roleR$ has unbounded depth,
	so $\gtG$ is not $\text{balanced}^+$.
	\\[2ex]
	The more interesting violations of $\text{balanced}^+$
	can be observed by occurrences of en-route transmissions.
	None can occur beneath non-trivial recursive binders
	and they must appear in types uniformly.
	Consider the following global types:
	\begin{equation}
		\begin{array}{c}
		\gtG[1] = \gtRec{\gtRecVar}{\gtCommSquigSingle{\roleP}{\roleQ}{\gtLab}{\gtLab}{}{\gtRecVar}}
		\qquad
		\gtG[2] = \gtRec{\gtRecVar}{\gtCommSquigSingle{\roleP}{\roleQ}{\gtLab}{\gtLab}{}{}}
		\qquad
		\gtG[3] = \gtCommSingle{\roleP}{\roleQ}{\gtLabi}{}{
			\gtCommSquigSingle{\roleP}{\roleQ}{\gtLab}{\gtLab}{}{}
		}
		\qquad
		\gtG[4] = \gtCommRaw{\rolePi}{\roleQi}{
			\begin{array}{@{}l@{}}
				\gtLab[1]\gtSeq\gtCommSingle{\roleP}{\roleQ}{\gtLab}{}{}\\
				\gtLab[2]\gtSeq\gtCommSquigSingle{\roleP}{\roleQ}{\gtLab}{\gtLab}{}{}
			\end{array}
		}
		\end{array}
	\end{equation}
	Only $\gtG[2]$ is $\text{balanced}^+$,
	and none of them can
	occur as a transition
	from a global type with no en-route transmissions.
	$\gtG[1]$ cannot occur as it
	has an en-route transmission under
	a recursive binder and followed
	by a recursive variable,
	so this global type would suggest
	that infinitely many messages
	are queued from $\roleP$ to $\roleQ$.
	$\gtG[2]$ cannot occur for the same reason,
	however this type is still meaningful
	as it is equivalent to
	$\gtCommSquigSingle{\roleP}{\roleQ}{\gtLab}{\gtLab}{}{}$.
	$\gtG[3]$ cannot occur
	as global type
	transitions require the top-most
	message from
	$\roleP$ to $\roleQ$
	to be sent
	before any later messages can be sent.
	$\gtG[4]$ cannot occur
	as the en-route transmission
	only occurs down one branch,
	whereas global transitions
result in en-route transmissions occurring
	down every path in the result.
\end{example}

\begin{theorem}[Transitions preserve En-Route Counting]
\label{thm:bal-closed}
If $\gtMCount{\gtG}{\roleP}{\roleQ}$ exists
and $\gtG\,\gtMove\,\gtGi$,
then $\gtMCount{\gtGi}{\roleP}{\roleQ}$ exists.
Therefore, if $\gtG\,\gtMoveStar\,\gtGi$
and $\gtG$ has no en-route transmissions,
then $\gtMCount{\gtGi}{\roleP}{\roleQ}$
exists for all
$\roleP$ and $\roleQ$.
\end{theorem}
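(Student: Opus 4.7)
The plan is to proceed by induction on the derivation of $\gtG\,\gtMove[\stEnvAnnotGenericSym]\,\gtGi$, performing a case analysis on the last transition rule applied. For each case I will show that $\gtMCount{\gtGi}{\roleP}{\roleQ}$ is defined (and compute its value in terms of $\gtMCount{\gtG}{\roleP}{\roleQ}$), which directly yields the theorem; the second part of the statement then follows because a global type with no en-route transmissions satisfies $\gtMCount{\gtG}{\roleP}{\roleQ}=0$ for every pair, by a straightforward structural induction.

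First I would handle the two base transition rules. For \RULE{\iruleGtMoveSel}, the transition turns a $\gtComm{\rolePi}{\roleQi}{i\in I}{\gtLab[i]}{\tyGround[i]}{\gtG[i]}$ into a $\gtCommSquig{\rolePi}{\roleQi}{\gtLab[j]}{}{\gtLab[j]}{\tyGround[j]}{\gtG[j]}$ for some $j\in I$; by the defining clause for $\gtMCount{\gtComm{\rolePi}{\roleQi}{\cdot}{}{}{}}{\roleP}{\roleQ}$, the assumption that $\gtMCount{\gtG}{\roleP}{\roleQ}$ exists gives that all branches $\gtG[i]$ share the same $\gtMCount{\gtG[i]}{\roleP}{\roleQ}$ (or that $(\roleP,\roleQ)=(\rolePi,\roleQi)$ and the pair is fresh in all $\gtG[i]$); in either subcase the resulting en-route type has a defined count, equal to $\gtMCount{\gtG[j]}{\roleP}{\roleQ}+1$ when $(\roleP,\roleQ)=(\rolePi,\roleQi)$ and $\gtMCount{\gtG[j]}{\roleP}{\roleQ}$ otherwise. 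For \RULE{\iruleGtMoveBra}, we strip the outer en-route prefix and simply inherit the count from the continuation, which is defined by assumption.

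For the three contextual rules, I would appeal to the induction hypothesis on the premises. The clean cases are \RULE{\iruleGtMoveCtx} and \RULE{\iruleGtMoveCtxII}: each branch/continuation transitions, and since the count of the parent is defined all children have equal (or inherited) counts, so by the IH the new children retain defined and equal counts, hence the parent counts remain defined. The subtle case is \RULE{\iruleGtMoveCtx'}, where the transition is a send $\ltsSel{\roleP}{\roleR}{\gtLabi(\tyGroundi)}$ with $\roleR\neq\roleQ$ firing under a communication $\roleP\!\to\!\roleQ$, \emph{and} the index set may shrink from $I$ to a nonempty $J\subseteq I$. Here the key lemma I need is that since $\gtMCount{\gtG[i]}{\roleP}{\roleQ}$ agrees across all $i\in I$, it in particular agrees across $i\in J$, so restricting to $J$ preserves the defining equation; combined with the IH applied branch-wise on $J$, the resulting branches $\gtGii[i]$ still share a common count.

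The main obstacle I expect is the recursion rule \RULE{\iruleGtMoveRec}, because the definition of $\gtMCount{\gtRec{\gtRecVar}{\gtG}}{\roleP}{\roleQ}$ is guarded (it requires $\gtRecVar\notin\gtFv{\gtG}$ or $\gtMCount{\gtG}{\roleP}{\roleQ}=0$), and the transition unfolds to $\gtG{}[\gtRec{\gtRecVar}{\gtG}/\gtRecVar]$. To apply the IH I must first show that the count is preserved under this substitution, i.e.\ that $\gtMCount{\gtG{}[\gtRec{\gtRecVar}{\gtG}/\gtRecVar]}{\roleP}{\roleQ}$ is defined whenever $\gtMCount{\gtRec{\gtRecVar}{\gtG}}{\roleP}{\roleQ}$ is. This requires a separate substitution lemma, proved by induction on the structure of $\gtG$: in the case where $\gtRecVar\notin\gtFv{\gtG}$ the substitution is trivial; in the case where the unfolded count is $0$, one shows structurally that substituting a type with zero count at every free occurrence of $\gtRecVar$ leaves all counts unchanged, because the substituted subterm contributes $0$ wherever it appears and the en-route-free content of $\gtG$ is unaffected. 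Once this substitution lemma is in hand, the IH on the premise of \RULE{\iruleGtMoveRec} closes the case, and the second sentence of the theorem follows by observing that an en-route-free $\gtG$ has $\gtMCount{\gtG}{\roleP}{\roleQ}=0$ everywhere and invoking the first part along the trace $\gtG\,\gtMoveStar\,\gtGi$.
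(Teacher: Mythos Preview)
Your approach is essentially the paper's: induction on the transition derivation with a case analysis on the last rule. Two points of comparison are worth highlighting.

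First, the paper makes the strengthened induction hypothesis fully explicit: it proves that the count changes by exactly $+1$ when $\stEnvAnnotGenericSym=\ltsSel{\roleP}{\roleQ}{\gtLab(\tyGround)}$, by $-1$ when $\stEnvAnnotGenericSym=\ltsBra{\roleQ}{\roleP}{\gtLab(\tyGround)}$, and by $0$ otherwise. You gesture at this (``compute its value in terms of $\gtMCount{\gtG}{\roleP}{\roleQ}$''), but your \RULE{\iruleGtMoveCtx} and \RULE{\iruleGtMoveCtx'} cases really do need the stronger form to conclude that \emph{all} branches shift by the \emph{same} amount and hence remain equal---merely knowing each child's count is defined is not enough. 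Stating the exact deltas up front also handles the subcase where the counted pair $(\roleP,\roleQ)$ coincides with the outer communication: definedness then requires $(\roleP,\roleQ)\notin\gtMRoles{\gtGii[i]}$, and since the action is never $\ltsSel{\roleP}{\roleQ}{\cdot}$ under either context rule, the strengthened IH gives count~$0$ in each child, which (by an easy auxiliary fact that $\gtMCount{\cdot}{\roleP}{\roleQ}=0$ iff $(\roleP,\roleQ)\notin\gtMRoles{\cdot}$) recovers the side condition.

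Second, for \RULE{\iruleGtMoveRec} the paper's written argument asserts $\gtRecVar\notin\gtFv{\gtGii}$ outright, which is not generally true (e.g.\ $\gtG_{\text{ring}}$ itself). Your substitution lemma---that unfolding preserves the count whenever the other disjunct $\gtMCount{\gtGii}{\roleP}{\roleQ}=0$ holds---is the correct way to close this case, and is exactly what is needed to make the recursion step rigorous.
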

\begin{proof}
Suppose that $\gtMCount{\gtG}{\roleP}{\roleQ}$ exists.
Suppose that $\gtG\,\gtMove[\stEnvAnnotGenericSym]\,\gtGi$.
We now show that: if $\stEnvAnnotGenericSym=\ltsSel{\roleP}{\roleQ}{\gtLab(\tyGround)}$, then
$\gtMCount{\gtGi}{\roleP}{\roleQ}=\gtMCount{\gtG}{\roleP}{\roleQ}+1$;
if $\stEnvAnnotGenericSym=\ltsBra{\roleQ}{\roleP}{\gtLab(\tyGround)}$, then
$\gtMCount{\gtGi}{\roleP}{\roleQ}=\gtMCount{\gtG}{\roleP}{\roleQ}-1$;
and $\gtMCount{\gtGi}{\roleP}{\roleQ}=\gtMCount{\gtG}{\roleP}{\roleQ}$ otherwise.
We proceed by induction on the derivation of $\gtG\,\gtMove[\stEnvAnnotGenericSym]\,\gtGi$.
\begin{itemize}[leftmargin=0.8in,labelindent=-\leftmargin]
	\item[\inferrule{\iruleGtMoveRec}] $\gtG=\gtRec{\gtRecVar}{\gtGii}$ and $\gtGii{}[\gtRec{\gtRecVar}{\gtGii}/\gtRecVar] 
    \,\gtMove[\stEnvAnnotGenericSym]\, 
    \gtGi$.
	$\gtRecVar\not\in\gtFv{\gtGii}$ so $\gtGii=\gtGii{}[\gtRec{\gtRecVar}{\gtGii}/\gtRecVar]$ and
	$\gtMCount{\gtG}{\roleP}{\roleQ}=\gtMCount{\gtGii}{\roleP}{\roleQ}$,
	so apply the I.H.

	\item[\inferrule{\iruleGtMoveBra}] $\gtG=\gtCommSquigSmall{\rolePi}{\roleQi}{\gtLab}{}{\gtLab}{\tyGround}{\gtGi}$, and $\stEnvAnnotGenericSym=\ltsBra{\roleQi}{\rolePi}{\gtLab(\tyGround)}$.
	If $(\rolePi,\roleQi)=(\roleP,\roleQ)$, $\gtMCount{\gtGi}{\roleP}{\roleQ}=\gtMCount{\gtG}{\roleP}{\roleQ}-1$ by definition.
	Otherwise, $\gtMCount{\gtGi}{\roleP}{\roleQ}=\gtMCount{\gtG}{\roleP}{\roleQ}$ by definition.

	\item[\inferrule{\iruleGtMoveSel}] $\gtG=\gtCommSmall{\rolePi}{\roleQi}{i \in I}{\gtLab[i]}{\tyGround[i]}{\gtGi[i]}$,
	$\stEnvAnnotGenericSym = \ltsSel{\rolePi}{\roleQi}{\gtLab[j](\tyGround)}$, and
	$\gtGi=\gtCommSquigSmall{\rolePi}{\roleQi}{\gtLab[j]}{}{\gtLab[j]}{\tyGround[j]}{\gtGi[j]}$.
	If $(\rolePi,\roleQi)=(\roleP,\roleQ)$, $\gtMCount{\gtGi}{\roleP}{\roleQ}=\gtMCount{\gtG}{\roleP}{\roleQ}+1$ by definition.
	Otherwise, $\gtMCount{\gtGi}{\roleP}{\roleQ}=\gtMCount{\gtG}{\roleP}{\roleQ}$ by definition.

	\item[\inferrule{\iruleGtMoveCtx(')}] $\gtG=\gtCommSmall{\rolePi}{\roleQi}{i \in I}{\gtLab[i]}{\tyGround[i]}{\gtGi[i]}$,
	$\gtGi = \gtCommSmall{\rolePi}{\roleQi}{i \in J}{\gtLab[i]}{\tyGround[i]}{\gtGii[i]}$ and
	$\gtGi[i]\,\gtMove[\stEnvAnnotGenericSym]\,\gtGii[i]$ for $i\in J\subseteq I$.
	If $\stEnvAnnotGenericSym=\ltsSel{\roleP}{\roleQ}{\gtLab(\tyGround)}$, then
	$\gtMCount{\gtGi[i]}{\roleP}{\roleQ}+1=\gtMCount{\gtGii[i]}{\roleP}{\roleQ}$ for $i\in J$ by the I.H.
	so $\gtMCount{\gtG}{\roleP}{\roleQ}+1=\gtMCount{\gtGi}{\roleP}{\roleQ}$.
	If $\stEnvAnnotGenericSym=\ltsBra{\roleQ}{\roleP}{\gtLab(\tyGround)}$, then
	$\gtMCount{\gtGi[i]}{\roleP}{\roleQ}-1=\gtMCount{\gtGii[i]}{\roleP}{\roleQ}$ for $i\in J$ by the I.H.
	so $\gtMCount{\gtG}{\roleP}{\roleQ}-1=\gtMCount{\gtGi}{\roleP}{\roleQ}$.
	Otherwise,
	$\gtMCount{\gtGi[i]}{\roleP}{\roleQ}=\gtMCount{\gtGii[i]}{\roleP}{\roleQ}$ for $i\in J$ by the I.H.
	so $\gtMCount{\gtG}{\roleP}{\roleQ}=\gtMCount{\gtGi}{\roleP}{\roleQ}$.

	\item[\inferrule{\iruleGtMoveCtxII}] $\gtG=\gtCommSquigSmall{\rolePi}{\roleQi}{\gtLabi}{}{\gtLabi}{\tyGroundi}{\gtGii}$,
	$\gtGi = \gtCommSquigSmall{\rolePi}{\roleQi}{\gtLabi}{}{\gtLabi}{\tyGroundi}{\gtGiii}$ and
	$\gtGii\,\gtMove[\stEnvAnnotGenericSym]\,\gtGiii$.
	If $\stEnvAnnotGenericSym=\ltsSel{\roleP}{\roleQ}{\gtLab(\tyGround)}$, then
	$\gtMCount{\gtGii}{\roleP}{\roleQ}+1=\gtMCount{\gtGiii}{\roleP}{\roleQ}$ by the I.H.
	so $\gtMCount{\gtG}{\roleP}{\roleQ}+1=\gtMCount{\gtGi}{\roleP}{\roleQ}$.
	If $\stEnvAnnotGenericSym=\ltsBra{\roleQ}{\roleP}{\gtLab(\tyGround)}$, then
	$\gtMCount{\gtGii}{\roleP}{\roleQ}-1=\gtMCount{\gtGiii}{\roleP}{\roleQ}$ by the I.H.
	so $\gtMCount{\gtG}{\roleP}{\roleQ}-1=\gtMCount{\gtGi}{\roleP}{\roleQ}$.
	Otherwise,
	$\gtMCount{\gtGii}{\roleP}{\roleQ}=\gtMCount{\gtGiii}{\roleP}{\roleQ}$ by the I.H.
	so $\gtMCount{\gtG}{\roleP}{\roleQ}=\gtMCount{\gtGi}{\roleP}{\roleQ}$.
\end{itemize}
We now note that for $\gtG$ without en-route transmissions,
$\gtMCount{\gtG}{\roleP}{\roleQ}=0$ for all $\roleP$,$\roleQ$.
Therefore, if $\gtG\,\gtMoveStar\,\gtGi$, then $\gtMCount{\gtGi}{\roleP}{\roleQ}$ exists for all $\roleP$,$\roleQ$.
\end{proof}
\noindent
Protocols begin with empty queues,
thus \cref{thm:bal-closed} says that
$\text{balanced}^+$ is
no more restrictive than the
standard balanced requirement.

We can now use $\text{balanced}^+$
to recursively define a
helper function in~\cref{def:depth-func}. 
Later on, we make use of this function to structure 
inductive proofs using~\cref{lem:depth-decr}. 

\begin{definition}[Message Depth Function]
\label{def:depth-func}
For participants $\roleP$ and $\roleQ$, we define $\gtMDepth{\gtG}{\roleP}{\roleQ}$,
read as `en-route message depth of $(\roleP,\roleQ)$',
to be a partial function on global types by the following recursive definition:
\\[2ex]
\centerline{\(
	\begin{array}{rcl}
		\gtMDepth{\gtRec{\gtRecVar}{\gtG}}{\roleP}{\roleQ} & = & \gtMDepth{\gtG}{\roleP}{\roleQ}\\[2ex]
		\gtMDepth{\gtComm{\rolePi}{\roleQi}{i \in I}{\gtLab[i]}{\tyGround[i]}{\gtG[i]}}{\roleP}{\roleQ} & = & 
				1+max\left\{\gtMDepth{\gtG[i]}{\roleP}{\roleQ}\suchthat i\in I\right\}\\[2ex]
		\gtMDepth{\gtCommSquig{\rolePi}{\roleQi}{\gtLab}{}{\gtLab}{\tyGround}{\gtG}}{\roleP}{\roleQ} & = & 
            \left\{\begin{array}{cc}
                1 & \roleP=\rolePi\text{ and }\roleQ=\roleQi\\
				1+\gtMDepth{\gtG}{\roleP}{\roleQ} & otherwise
			\end{array}\right.
	\end{array}
\)}
\end{definition}

\begin{remark}
Similarly to \cref{rem:depth}, $\gtMDepth{\gtG}{\roleP}{\roleQ}\leq n$ iff every path of length $n$ down $\gtG$ contains an en-route transmission from $\roleP$ to $\roleQ$.
\end{remark}

It is useful to understand how depths interact with global type semantics.
Transitions excluding $\roleP$ either remove branches
or remove type constructors excluding $\roleP$,
thus decreasing its depth.

\begin{lemma}[Depth is decreasing]
\label{lem:depth-decr}
Suppose that $\gtG\,\gtMove[\stEnvAnnotGenericSym]\,\gtGi$, then:
\begin{itemize}
    \item If $\roleP\ne\ltsSubject{\stEnvAnnotGenericSym}$ and $\gtDepth{\gtG}{\roleP}$ exists, then
    $\gtDepth{\gtGi}{\roleP}\leq \gtDepth{\gtG}{\roleP}$.
    \item If $\stEnvAnnotGenericSym$ is not of the form $\ltsBra{\roleP}{\roleQ}{\gtLab(\tyGround)}$ and $\gtMDepth{\gtG}{\roleP}{\roleQ}$ exists, then
    $\gtMDepth{\gtGi}{\roleP}{\roleQ}\leq \gtMDepth{\gtG}{\roleP}{\roleQ}$.
\end{itemize}
\end{lemma}

\begin{proof}
Proceed by induction on the definition of $\gtG\,\gtMove[\stEnvAnnotGenericSym]\,\gtGi$.
\newline
Suppose that $\stEnvAnnotGenericSym\neq\ltsSel{\roleP}{\roleQ}{\gtLab}$
or $\ltsBra{\roleP}{\roleQ}{\gtLab}$ for any $\roleQ$,
and that $\gtDepth{\gtG}{\roleP}$ exists.
Suppose that $\gtG\,\gtMove[\stEnvAnnotGenericSym]\,\gtGi$ and consider the cases for the last transition rule:
\begin{itemize}[leftmargin=0.8in,labelindent=-\leftmargin]
	\item[\inferrule{\iruleGtMoveRec}] $\gtG=\gtRec{\gtRecVar}{\gtGii}$ and $\gtGii{}[\gtRec{\gtRecVar}{\gtGii}/\gtRecVar]
    \,\gtMove[\stEnvAnnotGenericSym]\,
    \gtGi$.
	$\gtDepth{\gtG}{\roleP}=\gtDepth{\gtGii}{\roleP}=\gtDepth{\gtGii{}[\gtRec{\gtRecVar}{\gtGii}/\gtRecVar]}{\roleP}$
	so by the I.H, $\gtDepth{\gtGi}{\roleP}\leq \gtDepth{\gtGii{}[\gtRec{\gtRecVar}{\gtGii}/\gtRecVar]}{\roleP}=
	\gtDepth{\gtG}{\roleP}$.
	\item[\inferrule{\iruleGtMoveBra}] $\gtG=\gtCommSquigSmall{\rolePi}{\roleQ}{\gtLab}{}{\gtLab}{\tyGround}{\gtGi}$, and $\stEnvAnnotGenericSym=\ltsBra{\roleQ}{\rolePi}{\gtLab(\tyGround)}$.
	Now, $\gtDepth{\gtG}{\roleP}=\gtDepth{\gtGi}{\roleP}+1>\gtDepth{\gtGi}{\roleP}$.
	\item[\inferrule{\iruleGtMoveSel}] $\gtG=\gtCommSmall{\rolePi}{\roleQ}{i \in I}{\gtLab[i]}{\tyGround[i]}{\gtGi[i]}$,
	$\stEnvAnnotGenericSym = \ltsSel{\rolePi}{\roleQ}{\gtLab[j](\tyGround[j])}$, and
	$\gtGi=\gtCommSquigSmall{\rolePi}{\roleQ}{\gtLab[j]}{}{\gtLab[j]}{\tyGround[j]}{\gtGi[j]}$.
	Now, $\gtDepth{\gtG}{\roleP}=max_{i\in I}\gtDepth{\gtGi[i]}{\roleP}+1\geq\gtDepth{\gtGi}{\roleP}$ if $\roleP\neq\roleQ$
	and $\gtDepth{\gtG}{\roleP}=\gtDepth{\gtGi}{\roleP}=1$ if $\roleP=\roleQ$.
	\item[\inferrule{\iruleGtMoveCtx}] $\gtG=\gtCommSmall{\rolePi}{\roleQi}{i \in I}{\gtLab[i]}{\tyGround[i]}{\gtGi[i]}$,
	$\gtGi = \gtCommSmall{\rolePi}{\roleQi}{i \in I}{\gtLab[i]}{\tyGround[i]}{\gtGii[i]}$ and
	$\gtGi[i]\,\gtMove[\stEnvAnnotGenericSym]\,\gtGii[i]$ for $i\in I$.
	If $\roleP\in\{\rolePi,\roleQi\}$, $\gtDepth{\gtG}{\roleP}=\gtDepth{\gtGi}{\roleP}=1$.
	Otherwise, $\gtDepth{\gtGi[i]}{\roleP}<\gtDepth{\gtG}{\roleP}$ for $i\in I$ so by the I.H.,
	$\gtDepth{\gtGii[i]}{\roleP}\leq \gtDepth{\gtGi[i]}{\roleP}< \gtDepth{\gtG}{\roleP}$ for $i\in I$, so
	$\gtDepth{\gtGi}{\roleP}\leq \gtDepth{\gtG}{\roleP}$.
	\item[\inferrule{\iruleGtMoveCtx'}] $\gtG=\gtCommSmall{\rolePi}{\roleQi}{i \in I}{\gtLab[i]}{\tyGround[i]}{\gtGi[i]}$,
	$\gtGi = \gtCommSmall{\rolePi}{\roleQi}{i \in J}{\gtLab[i]}{\tyGround[i]}{\gtGii[i]}$ and
	$\gtGi[i]\,\gtMove[\stEnvAnnotGenericSym]\,\gtGii[i]$ for $i\in J\subseteq I$.
	If $\roleP\in\{\rolePi,\roleQi\}$, $\gtDepth{\gtG}{\roleP}=\gtDepth{\gtGi}{\roleP}=1$.
	Otherwise, $\gtDepth{\gtGi[i]}{\roleP}<\gtDepth{\gtG}{\roleP}$ for $i\in J$ so by the I.H.,
	$\gtDepth{\gtGii[i]}{\roleP}\leq \gtDepth{\gtGi[i]}{\roleP}< \gtDepth{\gtG}{\roleP}$ for $i\in J$, so
	$\gtDepth{\gtGi}{\roleP}\leq \gtDepth{\gtG}{\roleP}$.
    \item[\inferrule{\iruleGtMoveCtxII}] $\gtG=\gtCommSquigSmall{\rolePi}{\roleQi}{\gtLab}{}{\gtLab}{\tyGround}{\gtGii}$,
	$\gtGi = \gtCommSquigSmall{\rolePi}{\roleQi}{\gtLabi}{}{\gtLabi}{\tyGroundi}{\gtGiii}$ and
	$\gtGii\,\gtMove[\stEnvAnnotGenericSym]\,\gtGiii$.
	If $\roleP=\roleQi$, $\gtDepth{\gtG}{\roleP}=\gtDepth{\gtGi}{\roleP}=1$.
	Otherwise, $\gtDepth{\gtGii}{\roleP}<\gtDepth{\gtG}{\roleP}$ so by the I.H.,
	$\gtDepth{\gtGiii}{\roleP}\leq \gtDepth{\gtGii}{\roleP}< \gtDepth{\gtG}{\roleP}$, so
	$\gtDepth{\gtGi}{\roleP}\leq \gtDepth{\gtG}{\roleP}$.
\end{itemize}
Suppose that $\stEnvAnnotGenericSym$ is not of the form $\ltsBra{\roleP}{\roleQ}{\gtLab(\tyGround)}$
and $\gtMDepth{\gtG}{\roleP}{\roleQ}$ exists.
Suppose that $\gtG\,\gtMove[\stEnvAnnotGenericSym]\,\gtGi$ and consider the last case of the transition rule:
\begin{itemize}[leftmargin=0.8in,labelindent=-\leftmargin]
	\item[\inferrule{\iruleGtMoveRec}] $\gtG=\gtRec{\gtRecVar}{\gtGii}$ and $\gtGii{}[\gtRec{\gtRecVar}{\gtGii}/\gtRecVar]
    \,\gtMove[\stEnvAnnotGenericSym]\,
    \gtGi$.
	$\gtMDepth{\gtG}{\roleP}{\roleQ}=\gtMDepth{\gtGii}{\roleP}{\roleQ}=\gtMDepth{\gtGii{}[\gtRec{\gtRecVar}{\gtGii}/\gtRecVar]}{\roleP}{\roleQ}$
	so by the I.H, $\gtMDepth{\gtGi}{\roleP}{\roleQ}\leq \gtMDepth{\gtGii{}[\gtRec{\gtRecVar}{\gtGii}/\gtRecVar]}{\roleP}{\roleQ}=
	\gtMDepth{\gtG}{\roleP}{\roleQ}$.
	\item[\inferrule{\iruleGtMoveBra}] $\gtG=\gtCommSquigSmall{\rolePi}{\roleQi}{\gtLab}{}{\gtLab}{\tyGround}{\gtGi}$,  and $\stEnvAnnotGenericSym=\ltsBra{\roleQi}{\rolePi}{\gtLab(\tyGround)}$.
	So, $\roleP\neq \roleQi$ or $\roleQ\neq\rolePi$.
	Now, $\gtMDepth{\gtG}{\roleP}{\roleQ}=\gtMDepth{\gtGi}{\roleP}{\roleQ}+1>\gtMDepth{\gtGi}{\roleP}{\roleQ}$.
	\item[\inferrule{\iruleGtMoveSel}] $\gtG=\gtCommSmall{\rolePi}{\roleQi}{i \in I}{\gtLab[i]}{\tyGround[i]}{\gtGi[i]}$,
	$\stEnvAnnotGenericSym = \ltsSel{\rolePi}{\roleQi}{\gtLab[j](\tyGround[j])}$, and
	$\gtGi=\gtCommSquigSmall{\rolePi}{\roleQi}{\gtLab[j]}{}{\gtLab[j]}{\tyGround[j]}{\gtGi[j]}$.
	Now, $\gtMDepth{\gtG}{\roleP}{\roleQ}=\gtMDepth{\gtGi}{\roleP}{\roleQ}$
	if $(\rolePi,\roleQi)\neq(\roleP,\roleQ)$ and
	$\gtMDepth{\gtGi}{\roleP}{\roleQ}=1\leq\gtMDepth{\gtG}{\roleP}{\roleQ}$ otherwise.
	\item[\inferrule{\iruleGtMoveCtx}] $\gtG=\gtCommSmall{\rolePi}{\roleQi}{i \in I}{\gtLab[i]}{\tyGround[i]}{\gtGi[i]}$,
	$\gtGi = \gtCommSmall{\rolePi}{\roleQi}{i \in I}{\gtLab[i]}{\tyGround[i]}{\gtGii[i]}$ and
	$\gtGi[i]\,\gtMove[\stEnvAnnotGenericSym]\,\gtGii[i]$ for $i\in I$.
	$\gtMDepth{\gtGi[i]}{\roleP}{\roleQ}<\gtMDepth{\gtG}{\roleP}{\roleQ}$ for $i\in I$ so by the I.H.,
	$\gtMDepth{\gtGii[i]}{\roleP}{\roleQ}\leq \gtMDepth{\gtGi[i]}{\roleP}{\roleQ}< \gtMDepth{\gtG}{\roleP}{\roleQ}$ for $i\in I$, so
	$\gtMDepth{\gtGi}{\roleP}{\roleQ}\leq \gtMDepth{\gtG}{\roleP}{\roleQ}$.
	\item[\inferrule{\iruleGtMoveCtx'}] $\gtG=\gtCommSmall{\rolePi}{\roleQi}{i \in I}{\gtLab[i]}{\tyGround[i]}{\gtGi[i]}$,
	$\gtGi = \gtCommSmall{\rolePi}{\roleQi}{i \in J}{\gtLab[i]}{\tyGround[i]}{\gtGii[i]}$ and
	$\gtGi[i]\,\gtMove[\stEnvAnnotGenericSym]\,\gtGii[i]$ for $i\in J\subseteq I$.
	$\gtMDepth{\gtGi[i]}{\roleP}{\roleQ}<\gtMDepth{\gtG}{\roleP}{\roleQ}$ for $i\in J$ so by the I.H.,
	$\gtMDepth{\gtGii[i]}{\roleP}{\roleQ}\leq \gtMDepth{\gtGi[i]}{\roleP}{\roleQ}< \gtMDepth{\gtG}{\roleP}{\roleQ}$ for $i\in J$, so
	$\gtMDepth{\gtGi}{\roleP}{\roleQ}\leq \gtMDepth{\gtG}{\roleP}{\roleQ}$.
	\item[\inferrule{\iruleGtMoveCtxII}] $\gtG=\gtCommSquigSmall{\rolePi}{\roleQi}{\gtLabi}{}{\gtLabi}{\tyGroundi}{\gtGii}$,
	$\gtGi = \gtCommSquigSmall{\rolePi}{\roleQi}{\gtLabi}{}{\gtLabi}{\tyGroundi}{\gtGiii}$ and
	$\gtGii\,\gtMove[\stEnvAnnotGenericSym]\,\gtGiii$.
	If $\roleP=\rolePi$ and $\roleQ=\roleQi$, $\gtMDepth{\gtG}{\roleP}{\roleQ}=\gtMDepth{\gtGi}{\roleP}{\roleQ}=1$.
	Otherwise, $\gtMDepth{\gtGii}{\roleP}{\roleQ}<\gtMDepth{\gtG}{\roleP}{\roleQ}$ so by the I.H.,
	$\gtMDepth{\gtGiii}{\roleP}{\roleQ}\leq \gtMDepth{\gtGii}{\roleP}{\roleQ}< \gtMDepth{\gtG}{\roleP}{\roleQ}$, so
	$\gtMDepth{\gtGi}{\roleP}{\roleQ}\leq \gtMDepth{\gtG}{\roleP}{\roleQ}$.
\end{itemize}
\end{proof}

We need to proceed by induction on $\gtMDepth{\gtG}{\roleP}{\roleQ}$
when trying to invert a projection onto $\roleQ$
with a receive from $\roleP$ at the head
and a message from $\roleP$ to $\roleQ$ on the queue.
We know that the $\gtDepth{\gtG}{\roleP}$ exists for all $\roleP\in\gtRoles{\gtG}$
by balancedness;
we would like a similar statement for $\gtMDepth{}{}{}$
and $\text{balanced}^+$.
$\text{Balanced}^+$ ensures the existence of
$\gtMDepth{\gtG}{\roleP}{\roleQ}$ for all $(\roleP,\roleQ)\in\gtMRoles{\gtG}$.

\begin{lemma}[En Route Transmissions are Bounded]
\label{lem:en-route-bounded}
If $\gtG$ is $\text{balanced}^+$, then
for $(\roleP,\roleQ)\in\gtMRoles{\gtG}$, $\gtMDepth{\gtG}{\roleP}{\roleQ}$ exists.
\end{lemma}
\begin{proof}
We show that if $\gtMCount{\gtG}{\roleP}{\roleQ}>0$, then $\gtMDepth{\gtG}{\roleP}{\roleQ}$ exists by
induction on the structure of $\gtG$.
\begin{itemize}
\item If $\gtG=\gtEnd$, then $\gtMCount{\gtG}{\roleP}{\roleQ}=0$.
\item If $\gtG=\gtRecVar$, then $\gtMCount{\gtG}{\roleP}{\roleQ}=0$.
If $\gtG=\gtRec{\gtRecVar}{\gtGi}$ and $\gtMCount{\gtG}{\roleP}{\roleQ}>0$, then
$\gtMCount{\gtGi}{\roleP}{\roleQ}>0$ so $\gtMDepth{\gtGi}{\roleP}{\roleQ}$ exists by I.H.,
so $\gtMDepth{\gtG}{\roleP}{\roleQ}=\gtMDepth{\gtGi}{\roleP}{\roleQ}$.
\item If $\gtG=\gtComm{\rolePi}{\roleQi}{i \in I}{\gtLab[i]}{\tyGround[i]}{\gtG[i]}$, then
$\gtMCount{\gtG[i]}{\roleP}{\roleQ}>0$ for $i\in I$,
so $\gtMDepth{\gtG[i]}{\roleP}{\roleQ}$ exists for $i\in I$ by the I.H.,
so $\gtMDepth{\gtG}{\roleP}{\roleQ}=1+max_{i\in I}\gtMDepth{\gtG[i]}{\roleP}{\roleQ}$.
\item If $\gtG=\gtCommSquig{\rolePi}{\roleQi}{\gtLab}{}{\gtLab}{\tyGround}{\gtGi}$
and $(\roleP,\roleQ)\neq(\rolePi,\roleQi)$, then
$\gtMCount{\gtGi}{\roleP}{\roleQ}>0$,
so $\gtMDepth{\gtGi}{\roleP}{\roleQ}$ exists by the I.H.,
so $\gtMDepth{\gtG}{\roleP}{\roleQ}=1+\gtMDepth{\gtGi}{\roleP}{\roleQ}$.
\item If $\gtG=\gtCommSquig{\roleP}{\roleQ}{\gtLab}{}{\gtLab}{\tyGround}{\gtGi}$,
then $\gtMDepth{\gtG}{\roleP}{\roleQ}=1$.
\end{itemize}
Next we show that if $\gtMCount{\gtG}{\roleP}{\roleQ}$ exists and
$(\roleP,\roleQ)\in\gtMRoles{\gtG}$, then
$\gtMCount{\gtG}{\roleP}{\roleQ}>0$.
We proceed by induction on the structure of $\gtG$.
\begin{itemize}
\item If $\gtG=\gtEnd$, then $(\roleP,\roleQ)\not\in\gtMRoles{\gtG}$.
\item If $\gtG=\gtRecVar$, then $(\roleP,\roleQ)\not\in\gtMRoles{\gtG}$.
\item If $\gtG=\gtRec{\gtRecVar}{\gtGi}$ then $(\roleP,\roleQ)\in\gtMRoles{\gtGi}$
and $\gtMCount{\gtGi}{\roleP}{\roleQ}$ exists, so by I.H.
$\gtMCount{\gtGi}{\roleP}{\roleQ}>0$,
so $\gtMCount{\gtG}{\roleP}{\roleQ}=\gtMCount{\gtGi}{\roleP}{\roleQ}>0$.
\item If $\gtG=\gtComm{\rolePi}{\roleQi}{i \in I}{\gtLab[i]}{\tyGround[i]}{\gtG[i]}$, then
$(\roleP,\roleQ)\in\gtMRoles{\gtG[i]}$
and $\gtMCount{\gtG[i]}{\roleP}{\roleQ}$ exists for $i\in I$, so by I.H.
$\gtMCount{\gtG[i]}{\roleP}{\roleQ}>0$ for $i\in I$, so
$\gtMCount{\gtG}{\roleP}{\roleQ}>0$.
\item If $\gtG=\gtCommSquig{\rolePi}{\roleQi}{\gtLab}{}{\gtLab}{\tyGround}{\gtGi}$
and $(\roleP,\roleQ)\neq(\rolePi,\roleQi)$, then
$(\roleP,\roleQ)\in\gtMRoles{\gtGi}$
and $\gtMCount{\gtGi}{\roleP}{\roleQ}$ exists, so by I.H.
$\gtMCount{\gtGi}{\roleP}{\roleQ}>0$, so
$\gtMCount{\gtG}{\roleP}{\roleQ}>0$.
\item If $\gtG=\gtCommSquig{\roleP}{\roleQ}{\gtLab}{}{\gtLab}{\tyGround}{\gtGi}$,
then $\gtMCount{\gtG}{\roleP}{\roleQ}\geq 1$.
\end{itemize}
Now, if $\gtG$ is $\text{balanced}^+$ then,
for $(\roleP,\roleQ)\in\gtMRoles{\gtG}$,
$\gtMCount{\gtG}{\roleP}{\roleQ}$ exists, so
$\gtMCount{\gtG}{\roleP}{\roleQ}>1$, so
$\gtMDepth{\gtG}{\roleP}{\roleQ}$ exists.
\end{proof}

\subsection{Properties of Full Coinductive Merging}
\label{sec:merge-prop}
In order to use the full coinductive projection,
we first study basic properties of the full coinductive merge.
Following from previous work on projections,
we prove an operational
correspondence between a merged type and its components
(Lemma~\ref{lem:good-merge-corr})
and we prove that mergeability respects
set inclusions (Lemma~\ref{lem:mer:setinclusion}).
Additionally,
previous work on projections makes use
of the standard subtyping~(\cref{def:syncsubtyping})
in their operational correspondence results.
The standard subtyping provides an
operational correspondence,
which the precise asynchronous subtyping lacks;
this makes the standard subtyping more useful
in determining typing context properties.
Once we define the standard subtyping,
we shall show that merging produces a subtype (Lemma~\ref{lem:mer:subtype})
and preserves subtyping (Lemma~\ref{lem:mer:prevsubtype}).
\\[2ex]
We will encounter merges
being done iteratively.
Under the view that merges produce witnesses
to compatibility,
surely we can
collapse these repeated applications
to a single one.
This property is
associativity.

\begin{lemma}[Associativity of Merge]
	\label{lem:good-merge-assoc}
	Full coinductive merge is associative \ie
	if $\stMerge{i\in I_j}{\stT[i]}\ni \stTi[j]$ for all $j\in J$, then
	$\stMerge{j\in J, i\in I_j}{\stT[i]}=\stMerge{j\in J}{\stTi[j]}$.
\end{lemma}
\begin{proof}
Consider
\begin{equation}
	\Re \ =
		\stMerge{}{}
		\cup
		\left\{
			\begin{array}{ccc}
				(\mathcal{T},\stT)
				&\suchthat&
				\begin{array}{c}
					\bigcup_{i\in I} \mathcal{T}_i=\mathcal{T}\\
					\forall i\in I, \stMergeRel{\mathcal{T}_i}{\stTi[i]}\\
					\stMergeRel{\{\stTi[i]\suchthat i\in I\}}{\stT}
				\end{array}
			\end{array}
		\right\}
	\cup
		\left\{
			\begin{array}{ccc}
				(\{\stTi[i]\suchthat i\in I\},\stT)
				&\suchthat&
				\begin{array}{c}
					\bigcup_{i\in I}\mathcal{T}_i=\mathcal{T}\\
					\forall i\in I, \stMergeRel{\mathcal{T}_i}{\stTi[i]}\\
					\stMergeRel{\mathcal{T}}{\stT}
				\end{array}
			\end{array}
		\right\}
\end{equation}
This satisfies the coinductive rules defining full merge, and so $\Re\,
=\,\stMergeRel{}{}$.
Suppose that $\{\stT[i]\}_{i\in I} \Re\, \stT$.
If $\stMergeRel{\{\stT[i]\}_{i\in I}}{ \stT}$, then:
\begin{itemize}[leftmargin=0.5in,labelindent=-\leftmargin]
    \item[\inferrule{$\mergeRuleEnd$}]
        If $\unfoldOne{\stT}=\stEnd$, then
        $\unfoldOne{\stT[i]}=\stEnd$ for all $i\in I$
    \item[\inferrule{$\mergeRuleOut$}]
        If $\unfoldOne{\stT}=\stIntSum{\roleQ}{j \in J}{\stChoice{\stLab[j]}{\tyGround[j]} \stSeq \stTi[j]}$, then
        $\unfoldOne{\stT[i]}=\stIntSum{\roleQ}{j \in J}{\stChoice{\stLab[j]}{\tyGround[j]} \stSeq \stT[i,j]}$ for all $i\in I$
        where for all $j\in J$ $\stMergeRel{\left\{\stT[i,j]\suchthat i\in I\right\}}{\stTi[j]}$ so
	$\left\{\stT[i,j]\suchthat i\in I\right\}\Re\,\stTi[j]$
    \item[\inferrule{$\mergeRuleIn$}]
        If $\unfoldOne{\stT} = \stExtSum{\roleP}{j \in J}{\stChoice{\stLab[j]}{\tyGround[j]}\stSeq \stTi[j]}$, then
        $\unfoldOne{\stT[i]}=\stExtSum{\roleP}{j \in J_{i}}{\stChoice{\stLab[j]}{\tyGround[j]} \stSeq \stT[i,j]}$ for all $i\in I$
        where $J=\bigcup_{i\in I}J_{i}$ and for all $j\in J$
        $\stMergeRel{\left\{\stT[i,j]\suchthat j\in J_{i} \right\}}{ \stTi[j]}$ so
	$\left\{\stT[i,j]\suchthat j\in J_{i} \right\}\Re\,\stTi[j]$
\end{itemize}
If $\{I_j\}_{j\in J}$ is a cover of $I$ of non-empty sets and for $j\in J$
there is $\stTi[j]$ with $\stMergeRel{\{\stT[i]\}_{i\in I_j}}{\stTi[j]}$
and $\stMergeRel{\{\stTi[j]\}_{j\in J}}{\stT}$, then:
\begin{itemize}[leftmargin=0.5in,labelindent=-\leftmargin]
    \item[\inferrule{$\mergeRuleEnd$}]
        If $\unfoldOne{\stT}=\stEnd$, then
        $\unfoldOne{\stTi[j]}=\stEnd$ for all $j\in J$, then
	$\unfoldOne{\stT[i]}=\stEnd$ for all $i\in I$.
    \item[\inferrule{$\mergeRuleOut$}]
        If $\unfoldOne{\stT}=\stIntSum{\roleQ}{k \in K}{\stChoice{\stLab[k]}{\tyGround[k]} \stSeq \stTii[k]}$, then
        $\unfoldOne{\stTi[j]}=\stIntSum{\roleQ}{k \in K}{\stChoice{\stLab[k]}{\tyGround[k]} \stSeq \stTii[{j,k}]}$ for all $j\in J$
        where for all $k\in K$ $\stMergeRel{\left\{\stTii[{j,k}]\suchthat j\in J\right\}}{\stTii[k]}$.
	So for all $j\in J$,
	$\unfoldOne{\stT[i]}=\stIntSum{\roleQ}{k \in K}{\stChoice{\stLab[k]}{\tyGround[k]} \stSeq \stTii[{i,k}]}$ for all $i\in I_j$
	where for all $k\in K$ $\stMergeRel{\left\{\stTii[i,k]\suchthat i\in I_j\right\}}{\stTii[j,k]}$.
	Therefore, $\left\{\stTii[i,k]\right\}_{i\in I} \Re\, \stTii[k]$.
    \item[\inferrule{$\mergeRuleIn$}]
        If $\unfoldOne{\stT} = \stExtSum{\roleP}{k \in K}{\stChoice{\stLab[k]}{\tyGround[k]}\stSeq \stTii[k]}$, then
        $\unfoldOne{\stTi[j]}=\stExtSum{\roleP}{k \in K_{j}}{\stChoice{\stLab[k]}{\tyGround[k]} \stSeq \stTii[j,k]}$ for all $j\in J$
        where $K=\bigcup_{j\in J}K_{j}$ and for all $k\in K$
        $\stMergeRel{\left\{\stTii[j,k]\suchthat k\in K_{j} \right\}}{ \stTii[k]}$.
	So for all $j\in J$,
	$\unfoldOne{\stT[i]}=\stExtSum{\roleP}{k \in K_{i}}{\stChoice{\stLab[k]}{\tyGround[k]} \stSeq \stTii[i,k]}$ for all $i\in I_j$
	where $K_j=\bigcup_{i\in I_j}K_i$ and for all $k\in K_j$
	$\stMergeRel{\left\{\stTii[i,k]\suchthat k\in K_{i}, i\in I_j \right\}}{ \stTii[j,k]}$.
	Therefore, $\left\{\stTii[i,k]\suchthat k\in K_i \right\} \Re\, \stTii[k]$.
\end{itemize}
If $\stMergeRel{\{\stTi[j]\}_{j\in J_i}}{ \stT[i]}$ for $i\in I$
and $\stMergeRel{\{\stTi[j]\}_{j\in J_i, i\in I}}{\stT}$, then:
\begin{itemize}[leftmargin=0.5in,labelindent=-\leftmargin]
    \item[\inferrule{$\mergeRuleEnd$}]
        If $\unfoldOne{\stT}=\stEnd$, then
        $\unfoldOne{\stTi[j]}=\stEnd$ for all $j\in J_i$ and $i\in I$.
	So, $\unfoldOne{\stT[i]}=\stEnd$ for all $i\in I$.
    \item[\inferrule{$\mergeRuleOut$}]
        If $\unfoldOne{\stT}=\stIntSum{\roleQ}{k \in K}{\stChoice{\stLab[k]}{\tyGround[k]} \stSeq \stTii[k]}$, then
        $\unfoldOne{\stTi[j]}=\stIntSum{\roleQ}{k \in K}{\stChoice{\stLab[k]}{\tyGround[k]} \stSeq \stTii[j,k]}$ for all $j\in J_i$ and $i\in I$
        where for all $k\in K$ $\stMergeRel{\left\{\stTii[j,k]\suchthat j\in J_i, i\in I\right\}}{\stTii[k]}$.
	So for $i\in I$, $\unfoldOne{\stT[i]}\stIntSum{\roleQ}{k \in K}{\stChoice{\stLab[k]}{\tyGround[k]} \stSeq \stTii[i,k]}$
	where for all $k\in K$ $\stMergeRel{\left\{\stTii[j,k]\suchthat j\in J_i\right\}}{\stTii[i,k]}$.
	Therefore, $\left\{\stTii[j,k]\suchthat j\in J_i, i\in I\right\}\Re\,\stTii[k]$.
    \item[\inferrule{$\mergeRuleIn$}]
        If $\unfoldOne{\stT} = \stExtSum{\roleP}{k \in K}{\stChoice{\stLab[k]}{\tyGround[k]}\stSeq \stTii[k]}$, then
        $\unfoldOne{\stTi[j]}=\stExtSum{\roleP}{k \in K_{j}}{\stChoice{\stLab[k]}{\tyGround[k]} \stSeq \stTii[j,k]}$ for all $j\in J_i$ and $i\in I$
        where $K=\bigcup_{j\in J}K_{j}$ and for all $k\in K$
        $\stMergeRel{\left\{\stTii[j,k]\suchthat k\in K_{j} \right\}}{ \stTii[k]}$.
	So for $i\in I$, $\unfoldOne{\stT[i]} = \stExtSum{\roleP}{k \in K_i}{\stChoice{\stLab[k]}{\tyGround[k]}\stSeq \stTii[i,k]}$
	where $K_i = \bigcup_{j\in J_i}K_{j}$ and for all $k\in K_j$
	$\stMergeRel{\left\{\stTii[j,k]\suchthat k\in K_{j}, j\in J_i \right\}}{ \stTii[i,k]}$.
	Therefore, $\left\{\stTii[i,k]\suchthat k\in K_i\right\}\Re\,\stTii[k]$.
\end{itemize}
\end{proof}
In the literature, the typing context transition relation is sometimes built up from
a local type transition relation, using additional rules for queue manipulation and
inter-participant communication.
We instead define context transitions directly~(\cref{def:mpst-env-reduction}),
which streamlines our proofs by avoiding this layered construction.
A separate transition relation on local types alone remains useful when
reasoning about syntactic properties of local types such as standard subtyping of local types alone without queues.
We therefore introduce local type LTSs
in order to state an operational correspondence for merge~(\cref{lem:good-merge-corr}),
which we use in turn to relate merging to standard
subtyping~(\cref{lem:mer:subtype,lem:mer:prevsubtype}).

\begin{definition}[Local type LTSs]
\label{def:ltype:red-rules}
We define the local type LTSs, denoted by
 $\stT \gtMove[\stEnvAnnotGenericSym] \stTi$ by the following rules:  

\smallskip 

\centerline{\(
\begin{array}{c}
\inference[\iruleLtRec]
{\stT{}[\stRec{\stRecVar}{\stT}/\stRecVar]\,\gtMove[\stEnvAnnotGenericSym]\,\stTi}
{\stRec{\stRecVar}{\stT}\,\gtMove[\stEnvAnnotGenericSym]\,\stTi}
\qquad
\inference[\iruleLtOut]
{j\in I}
{\stIntSum{\roleP}{i \in I}{\stChoice{\stLab[i]}{\tyGround[i]} \stSeq \stT[i]}\,\gtMove[{\roleP\stEnvAnnotOutSym\stChoice{\stLab[j]}{\tyGround[j]}}]\,\stT[j]}
\qquad
\inference[\iruleLtIn]
{j\in I}
{\stExtSum{\roleP}{i \in I}{\stChoice{\stLab[i]}{\tyGround[i]} \stSeq \stT[i]}\,\gtMove[{\roleP\stEnvAnnotInSym\stChoice{\stLab[j]}{\tyGround[j]}}]\,\stT[j]}
\end{array}
\)}
\end{definition}
\begin{lemma}[Merge has an Operational Correspondence]
	\label{lem:good-merge-corr}
	\begin{itemize}
		\item If $\stT[i]\,\gtMove[\stEnvAnnotGenericSym]\,\stTi[i]$
        for all $i\in I$ and $\stMerge{}{\{\stT[i]\suchthat i\in I\}}\ni \stT$, then
        there exists $\stTi$ such that
        $\stT\,\gtMove[\stEnvAnnotGenericSym]\,\stTi$ and
        $\stMerge{}{\{\stTi[i]\suchthat i\in I\}}\ni \stTi$.
		\item If $\stMerge{}{\{\stT[i]\suchthat i\in I\}}\ni \stT$ and
        $\stT\,\gtMove[\stEnvAnnotGenericSym]\,\stTi$, then
        there is some non-empty $J\subseteq I$ such that
        $\stT[i]\,\gtMove[\stEnvAnnotGenericSym]\,\stTi[i]$ for $i\in J$ and
        $\stMerge{}{\{\stTi[i]\suchthat i\in J\}}\ni\stTi$.
	\end{itemize}
\end{lemma}

\begin{proof}
	Suppose that $\stMergeRel{\{\stT[i]\suchthat i\in I\}}{\stT}$.
	\begin{itemize}
\item Suppose that $\stT[i]\,\gtMove[\roleP\stEnvAnnotOutSym\stChoice{\stLab}{\tyGround}]\,\stTi[i]$ for all $i\in I$.
So, for $i\in I$, $\unfoldOne{\stT[i]}=\stIntSum{\roleP}{j \in J_i}{\stChoice{\stLab[i,j]}{\tyGround[i,j]} \stSeq \stTi[i,j]}$ and there is $j\in J$ with $\stLab[i,j]=\stLab$, $\tyGround[i,j]=\tyGround$, and $\stTi[i,j]=\stTi[i]$.
$\stMergeRel{\{\stT[i]\suchthat i\in I\}}{ \stT}$, so we may assume that $J_i=J$, $\stLab[i,j]=\stLab[j]$, and $\tyGround[i,j]=\tyGround[j]$ for all $j\in J$.
Say $k\in J$ has $\stLab[k]=\stLab$, $\tyGround[k]=\tyGround$, and $\stTi[i,k]=\stTi[i]$
We also have that $\unfoldOne{\stT}=\stIntSum{\roleP}{j \in J}{\stChoice{\stLab[j]}{\tyGround[j]} \stSeq \stTii[j]}$, where $\stMergeRel{\{\stTi[i,j]\suchthat i\in I\}}{\stTii[j]}$ for $j \in J$.
So, $\stT\,\gtMove[\roleP\stEnvAnnotOutSym\stChoice{\stLab}{\tyGround}]\,\stTii[k]$ and $\stMergeRel{\{\stTi[i]\suchthat i\in I\}}{ \stTii[k]}$.
The proof in the case that $\stT[i]\,\gtMove[\roleP\stEnvAnnotInSym\stChoice{\stLab}{\tyGround}]\,\stTi[i]$ for all $i\in I$ is similar.
		\item Suppose that $\stT\,\gtMove[\roleP\stEnvAnnotOutSym\stChoice{\stLab}{\tyGround}]\,\stTi$.
So, $\unfoldOne{\stT}=\stIntSum{\roleP}{j \in J}{\stChoice{\stLab[j]}{\tyGround[j]} \stSeq \stTi[j]}$ and there is $k\in J$ such that
$\stLab[k]=\stLab$, $\tyGround[k]=\tyGround$, and $\stTi[k]=\stTi$.
$\stMergeRel{\{\stT[i]\suchthat i\in I\}}{\stT}$ so for $i\in I$, $\unfoldOne{\stT[i]}=\stIntSum{\roleP}{j \in J}{\stChoice{\stLab[j]}{\tyGround[j]} \stSeq \stTi[i,j]}$,
where $\stMergeRel{\{\stTi[i,j]\suchthat i\in I\}}{\stTi[j]}$ for $j\in J$.
So, $\stT[i]\,\gtMove[\roleP\stEnvAnnotOutSym\stChoice{\stLab}{\tyGround}]\,\stTi[i,k]$ for $i\in I$,
$\stMergeRel{\{\stTi[i,k]\suchthat i\in I\}}{\stTi}$, and
$\emptyset\not = I\subseteq I$.
		\item Suppose that $\stT\,\gtMove[\roleP\stEnvAnnotInSym\stChoice{\stLab}{\tyGround}]\,\stTi$.
So, $\unfoldOne{\stT}=\stExtSum{\roleP}{j \in J}{\stChoice{\stLab[j]}{\tyGround[j]} \stSeq \stTi[j]}$ and there is $k\in J$ such that
$\stLab[k]=\stLab$, $\tyGround[k]=\tyGround$, and $\stTi[k]=\stTi$.
$\stMergeRel{\{\stT[i]\suchthat i\in I\}}{\stT}$ so there are $J_j\subseteq I$, non-empty, for $j\in J$ such that
for $i\in I$, $\unfoldOne{\stT[i]}=\stIntSum{\roleP}{j \suchthat i\in J_j}{\stChoice{\stLab[j]}{\tyGround[j]} \stSeq \stTi[i,j]}$,
where $\stMergeRel{\{\stTi[i,j]\suchthat i\in J_j\}}{ \stTi[j]}$ for $j\in J$.
So, $\stT[i]\,\gtMove[\roleP\stEnvAnnotOutSym\stChoice{\stLab}{\tyGround}]\,\stTi[i,k]$ for $i\in J_k$,
$\stMergeRel{\{\stTi[i,k]\suchthat i\in J_k\}}{ \stTi}$, and
$\emptyset\not = J_k\subseteq I$.
	\end{itemize}
\end{proof}

We shall now define the standard subtyping relation $\stSub$~\cite{DBLP:journals/toplas/CarboneHY12,DemangeonH11,DBLP:conf/tlca/MostrousY09,Chen2017},  on local types,
which is precise in the synchronous setting.
This standard subtyping is contained within the precise asynchronous subtyping.
We will use the standard subtyping during our proof of safety, deadlock-freedom, and liveness,
as there is an operational correspondence between
instances of the standard subtyping.
In fact, the operational correspondences from Lemma~\ref{lem:good-merge-corr}
will be sufficient to show that merge behaves well
with regard to the standard subtyping.

\begin{definition}[Subtyping]
\label{def:syncsubtyping}
The \emph{session subtyping relation $\stSub$} is coinductively defined:

\smallskip
\centerline{\(
\begin{array}{c}
\cinference[\iruleStSubIn]{
  \forall i \in I
  &
  \tyGround[i] \tyGroundSub \tyGroundi[i]
  &
  \stT[i] \stSub \tyTi[i]
}{
  \stExtSum{\roleP}{i \in I\cup{}J}{\stChoice{\stLab[i]}{\tyGroundi[i]} \stSeq \stT[i]}\stSub
  \stExtSum{\roleP}{i \in I}{\stChoice{\stLab[i]}{\tyGround[i]} \stSeq \stTi[i]}}
\qquad
\cinference[\iruleStSubRecL]{
  \stT{}[\stRec{\stRecVar}{\stT}/\stRecVar] \stSub \stTi
}{
  \stRec{\stRecVar}{\stT} \stSub \stTi
}
\qquad

\cinference[\iruleStSubRecR]{
  \stT \stSub \stTi{}[\stRec{\stRecVar}{\stTi}/\stRecVar]
}{
  \stT \stSub \stRec{\stRecVar}{\stTi}
}
\\[2ex]
\cinference[\iruleStSubEnd]{}{
  \stEnd \stSub \stEnd
}
\qquad
\cinference[\iruleStSubOut]{
  \forall i \in I
  &
  \tyGround[i] \tyGroundSub \tyGroundi[i]
  &
  \stT[i] \stSub \stTi[i]
}{
  \stIntSum{\roleP}{i \in I}{\stChoice{\stLab[i]}{\tyGround[i]} \stSeq \stT[i]}
  \stSub
  \stIntSum{\roleP}{i \in I \cup J}{\stChoice{\stLab[i]}{\tyGroundi[i]} \stSeq \stTi[i]}
}
\end{array}
\)}
\end{definition}

Note that the definition of $\stSub$ is far simpler than $\asubt$,
and this reflects in the fact that
$\stSub$ is computable
while $\asubt$ is not~\cite{BravettiCZ17,LY2017}.
The intuition behind $\stSub$
is that if you expect a type to
send a message from a set of messages,
then a type sending a message from
a smaller set will be permissible,
and similarly if you
expect a type to be able to
receive a message from a set of messages,
then a type receiving a message
from a larger set will be permissible.
In fact, $\stSub$ is sound and complete
with respect to local type transitions,
if we enforce that subtypes can always send fewer messages,
subtypes can always receive more messages,
and $\stEnd$ types (up-to unfolding)
are only related to other $\stEnd$ types.
It is now important to note that the
precise asynchronous subtyping
does not satisfy this operation correspondence.
\begin{example}[Precise Asynchronous Subtyping does not provide an Operational Correspondence]
\label{ex:sub-not-corr}
Recall the projection of the ring protocol onto $\roleQ$
and its optimisation.
\begin{equation}
\stT[\roleQ]=\stRec{\stRecVar}{
    \stExtSum{\roleP}{}{
      \stChoice{add}{\tyInt} \stSeq
      \stIntSum{\roleR}{}{
        \stChoice{add}{\tyInt} \stSeq \stRecVar,
        \stChoice{sub}{\tyInt} \stSeq \stRecVar
      }
    }
  }
\end{equation}
\begin{equation}
\stTopt[\roleQ] = \stRec{\stRecVar}{
    \stIntSum{\roleR}{}{
      \stChoice{add}{\tyInt} \stSeq \stExtSum{\roleP}{}{\stChoice{add}{\tyInt} \stSeq \stRecVar},
      \stChoice{sub}{\tyInt} \stSeq \stExtSum{\roleP}{}{\stChoice{add}{\tyInt} \stSeq \stRecVar}
    }
  }
\end{equation}
We have that $\stTopt[\roleQ]\asubt\stT[\roleQ]$,
$\stT[\roleQ]\,\gtMove[{\roleP\stEnvAnnotInSym\stChoice{add}{\tyInt}}]$, and
$\stTopt[\roleQ]\,\gtMove[{\roleR\stEnvAnnotOutSym\stChoice{add}{\tyInt}}]$,
but
$\stT[\roleQ]\qquad\not\!\!\!\!\!\!\!\!\!\!\!\!\gtMove[{\roleR\stEnvAnnotOutSym\stChoice{add}{\tyInt}}]$ and
$\stTopt[\roleQ]\qquad\not\!\!\!\!\!\!\!\!\!\!\!\!\gtMove[{\roleP\stEnvAnnotInSym\stChoice{add}{\tyInt}}]$.
Thus, although $\stTopt[\roleQ]\asubt\stT[\roleQ]$, neither type can match the other's immediate transition, so $\asubt$ does not provide the operational correspondence enjoyed by standard subtyping.
\end{example}

\begin{lemma}[Merge is a Subtype]
\label{lem:mer:subtype}  
	If $\stMerge{}{\{\stT[i]\suchthat i\in I\}}\ni \stT$,
    then for all $j\in I$ $\stT \stSub \stT[j]$.
\end{lemma}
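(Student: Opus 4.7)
The plan is to argue by coinduction on the definition of $\stSub$. I would introduce the candidate relation
\begin{equation*}
\mathcal{R} = \setcomp{(\stT, \stT[j])}{j \in I \text{ and } \stMerge{}{\{\stT[i] \suchthat i \in I\}} \ni \stT \text{ for some indexed family } \{\stT[i]\}_{i \in I}}
\end{equation*}
and show that it is closed under the rules defining $\stSub$; the coinductive principle for $\stSub$ then gives $\mathcal{R} \subseteq {\stSub}$, and the lemma follows by taking $j$ to be the given index.

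To verify closure, I pick $(\stT, \stT[j]) \in \mathcal{R}$. Using $\iruleStSubRecL$ and $\iruleStSubRecR$ to strip outer recursion binders, I may reason about $\unfoldOne{\stT}$ and $\unfoldOne{\stT[j]}$, and case split on the shape of $\unfoldOne{\stT}$ which, by the definition of the merge, is $\stEnd$, an internal choice, or an external choice, and matches precisely the applicable merge rule.

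If $\unfoldOne{\stT} = \stEnd$, rule $\mergeRuleEnd$ forces $\unfoldOne{\stT[j]} = \stEnd$, so $\iruleStSubEnd$ closes this case. If $\unfoldOne{\stT} = \stIntSum{\roleQ}{k \in K}{\stChoice{\stLab[k]}{\stS[k]} \stSeq \stTi[k]}$, rule $\mergeRuleOut$ forces every $\unfoldOne{\stT[i]}$ (including $\unfoldOne{\stT[j]}$) to have shape $\stIntSum{\roleQ}{k \in K}{\stChoice{\stLab[k]}{\stS[k]} \stSeq \stT[i,k]}$ with the same label set $K$ and payloads $\stS[k]$, and yields per-branch merges $\stMergeRel{\{\stT[i,k] \suchthat i \in I\}}{\stTi[k]}$. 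I apply $\iruleStSubOut$ with the extra-label set taken to be empty, using reflexivity $\stS[k] \tyGroundSub \stS[k]$, since each continuation pair $(\stTi[k], \stT[j,k])$ is again in $\mathcal{R}$ by construction. If $\unfoldOne{\stT} = \stExtSum{\roleP}{k \in K}{\stChoice{\stLab[k]}{\stS[k]} \stSeq \stTi[k]}$, rule $\mergeRuleIn$ gives $\unfoldOne{\stT[j]} = \stExtSum{\roleP}{k \in K_j}{\stChoice{\stLab[k]}{\stS[k]} \stSeq \stT[j,k]}$ with $K_j \subseteq K$ and per-branch merges $\stMergeRel{\{\stT[i,k] \suchthat i \in I, k \in K_i\}}{\stTi[k]}$. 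I apply $\iruleStSubIn$ to the partition $K = K_j \cup (K \setminus K_j)$, again using reflexivity of $\tyGroundSub$, with each continuation pair $(\stTi[k], \stT[j,k])$ for $k \in K_j$ lying in $\mathcal{R}$ because $j$ belongs to $\{i \suchthat k \in K_i\}$.

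I do not expect a real obstacle here: each of the three merge rules is designed to mirror exactly one structural subtyping rule, and the coinductive hypothesis supplies the continuation premises directly. The only minor delicacy is the handling of recursion binders, which I would either fold into $\mathcal{R}$ (by closing it under $\iruleStSubRecL$ and $\iruleStSubRecR$) or discharge by appealing to a standard up-to-unfolding coinduction principle for $\stSub$.
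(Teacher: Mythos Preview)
Your proposal is correct and takes essentially the same approach as the paper: both define the same candidate relation and show it is closed under the subtyping rules. The only difference is that the paper discharges closure in one line by appealing to the operational correspondence for merge (\Cref{lem:good-merge-corr}), whereas you verify closure directly by case analysis on the three merge rules; your argument is more explicit but otherwise identical in spirit.
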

\begin{proof}
	Consider $\Re\, =\{(\stT,\stTi)\suchthat \exists \mathcal{T}\ni \stTi \text{, a set of local types, with }\stMergeRel{\mathcal{T}}{\stT} \}$
	This clearly satisfies the laws for subtyping by~\Cref{lem:good-merge-corr}, so $\Re\subseteq\stSub$.
\end{proof}

\begin{lemma}[Merge preserves Subtypes]
\label{lem:mer:prevsubtype}
	If $\stMerge{}{\{\stT[i]\suchthat i\in I\}}\ni \stT$
    and $\stTi\stSub \stT[i]$ for all $i\in I$, then $\stTi \stSub \stT$.
\end{lemma}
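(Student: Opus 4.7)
The plan is to exhibit a coinductive candidate. Define
\[
\Re \,=\, \{(\stTi, \stT) \suchthat \exists \text{ non-empty finite } I \text{ and } \{\stT[i]\}_{i\in I}.\ \stMergeRel{\{\stT[i]\suchthat i\in I\}}{\stT} \text{ and } \stTi \stSub \stT[i] \text{ for all } i \in I\}.
\]
The hypotheses of the lemma immediately put $(\stTi, \stT) \in \Re$, so it suffices to show $\Re \subseteq \stSub$. By the coinductive definition of $\stSub$ this reduces to verifying that every pair in $\Re$ can be matched by one of the subtyping rules, with all premises again in $\Re$. Since \iruleStSubRecL{} and \iruleStSubRecR{} allow either side to be unfolded freely, it is enough to case-split on $\unfoldOne{\stT}$.

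For the $\stEnd$ case, the rule $\mergeRuleEnd$ forces $\unfoldOne{\stT[i]} = \stEnd$ for every $i \in I$; an inversion on $\stTi \stSub \stT[i]$ (where, after unfolding both sides, only \iruleStSubEnd{} can apply) then forces $\unfoldOne{\stTi} = \stEnd$, so \iruleStSubEnd{} closes the case. For the internal choice case, $\mergeRuleOut$ gives every $\unfoldOne{\stT[i]}$ the same label set $J$, payloads, and arity as $\unfoldOne{\stT}$, together with merges $\stMergeRel{\{\stT[i,j] \suchthat i \in I\}}{\stTi[j]}$ for each $j \in J$. Inverting \iruleStSubOut{} on each $\stTi \stSub \stT[i]$ yields $\unfoldOne{\stTi} = \stIntSum{\roleQ}{k \in K}{\stChoice{\stLab[k]}{\tyGroundi[k]} \stSeq \stTii[k]}$ with $K \subseteq J$; since $\stTi$ is uniform in $i$, both $K$ and the $\stTii[k]$ are independent of $i$. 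The inversion additionally delivers $\stTii[k] \stSub \stT[i,k]$ for every $i \in I$ and $k \in K$, which together with the merges places $(\stTii[k], \stTi[k]) \in \Re$ for each $k \in K$, closing the case via \iruleStSubOut{}.

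The external choice case is where I expect the real work. Here $\mergeRuleIn$ gives $\unfoldOne{\stT[i]} = \stExtSum{\roleP}{j \in J_i}{\stChoice{\stLab[j]}{\tyGround[j]} \stSeq \stT[i,j]}$ with $J = \bigcup_{i \in I} J_i$ and $\stMergeRel{\{\stT[i,j] \suchthat i \in I,\ j \in J_i\}}{\stTi[j]}$ for each $j \in J$. Inverting \iruleStSubIn{} on $\stTi \stSub \stT[i]$ gives $\unfoldOne{\stTi} = \stExtSum{\roleP}{k \in K}{\stChoice{\stLab[k]}{\tyGroundi[k]} \stSeq \stTii[k]}$ with $J_i \subseteq K$; uniformity of $\stTi$ again forces $K$ and the $\stTii[k]$ to be independent of $i$. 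Hence $J = \bigcup_i J_i \subseteq K$, and for each $j \in J$ the index set $I_j = \{i \in I \suchthat j \in J_i\}$ is non-empty, with $\stTii[j] \stSub \stT[i,j]$ for every $i \in I_j$. Combined with $\stMergeRel{\{\stT[i,j] \suchthat i \in I_j\}}{\stTi[j]}$ extracted from $\mergeRuleIn$, this puts $(\stTii[j], \stTi[j]) \in \Re$, and \iruleStSubIn{} closes the case. The subtlety, and the main obstacle, is that, unlike $\mergeRuleOut$, the label sets $J_i$ in $\mergeRuleIn$ vary with $i$, so one must crucially exploit that $\stTi$ is fixed to render the right-hand data uniform, and then carefully pick the non-empty $I_j$ witnessed by $J = \bigcup J_i$ when feeding the continuations at index $j$ back into the merge.
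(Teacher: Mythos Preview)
Your proof is correct and follows the same coinductive-candidate strategy as the paper: both define exactly the relation $\Re$ you wrote down and argue it is closed under the subtyping rules. The paper's proof is two lines, invoking \Cref{lem:good-merge-corr} (the operational correspondence for merge) to discharge the closure check, whereas you carry out the case analysis on $\unfoldOne{\stT}$ explicitly against the three merge rules. Your direct argument is arguably cleaner here, since the subtyping rules are structural rather than transition-based and the invocation of \Cref{lem:good-merge-corr} in the paper is somewhat compressed; your handling of the external-choice case---in particular isolating the non-empty $I_j$ from $J=\bigcup_i J_i$ and exploiting that $\stTi$ is a single fixed type so that $K$ and the $\stTii[k]$ do not vary with $i$---is exactly the content that the paper's one-line appeal elides.
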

\begin{proof}
	Consider $\Re\, =\, \{(\stT,\stTi)\suchthat \exists \mathcal{T}\text{, a set of local types, such that }\stMergeRel{\mathcal{T}}{ \stTi}\text{ and }\stT\stSub\stTii\text{ for all }\stTii\in\mathcal{T}\}$.
	By~\Cref{lem:good-merge-corr}, this clearly satisfies the laws for subtyping, so $\Re\subseteq\stSub$.
\end{proof}

A set of local types is mergeable if they are all `compatible'.
We should expect this `compatibility' property to be decreasing
(with respect to subsets).

\begin{lemma}[Mergeability respects Set Inclusion]
\label{lem:mer:setinclusion}
	If $\{\stT[i]\suchthat i\in I\}$ is mergeable, then $\{\stT[i]\suchthat i\in J\}$ is mergeable for all non-empty $J\subseteq I$.
\end{lemma}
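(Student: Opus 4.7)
The plan is to establish mergeability of the subset by a coinductive construction: starting from a merge witness $\stT$ for the larger set, I build a ``restricted'' witness $\stT'$ for the subset, and then verify that this candidate satisfies the coinductive rules defining $\stMergeRel{}{}$.

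First, I would define a candidate relation $\Re$ between non-empty finite sets of local types and local types by: $(\mathcal{T}', \stT') \in \Re$ iff there exist $\mathcal{T} \supseteq \mathcal{T}'$ (with $\mathcal{T}'$ non-empty) and $\stT$ with $\stMergeRel{\mathcal{T}}{\stT}$, such that $\stT'$ mirrors $\stT$ up to unfolding as follows. If $\unfoldOne{\stT}=\stEnd$, then $\unfoldOne{\stT'}=\stEnd$. If $\unfoldOne{\stT}=\stIntSum{\roleQ}{j\in J}{\stChoice{\stLab[j]}{\stS[j]}\stSeq \stTi[j]}$, then $\unfoldOne{\stT'}$ is the same internal choice with continuations $\stTi[j]'$ such that, for each $j\in J$, the pair (subset of $j$-continuations coming from $\mathcal{T}'$, $\stTi[j]'$) lies in $\Re$. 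If $\unfoldOne{\stT}=\stExtSum{\roleP}{j\in J}{\stChoice{\stLab[j]}{\stS[j]}\stSeq \stTi[j]}$ with $J=\bigcup_{i}J_i$, then $\unfoldOne{\stT'}=\stExtSum{\roleP}{j\in J'}{\stChoice{\stLab[j]}{\stS[j]}\stSeq \stTi[j]'}$ where $J'=\bigcup_{\stT[i]\in\mathcal{T}'}J_i$, and again each $\stTi[j]'$ is related under $\Re$ to the corresponding subset of continuations.

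Next, I would verify that $\Re$ is a post-fixed point of the monotone operator whose greatest fixed point is $\stMergeRel{}{}$, by checking the three rules. The case $\mergeRuleEnd$ is immediate since $\mathcal{T}'\subseteq\mathcal{T}$ and every type in $\mathcal{T}$ unfolds to $\stEnd$. The case $\mergeRuleOut$ follows because every type in $\mathcal{T}$ shares the same top-level internal-choice skeleton (sender, labels, payloads), which is inherited by $\mathcal{T}'$; the continuations land in $\Re$ by construction, witnessed by the original merges of continuations. For $\mergeRuleIn$, the labels of the subset merge are $J'=\bigcup_{i\in\mathcal{T}'}J_i\subseteq J$, and by construction the $j$-continuations for $j\in J'$ are related under $\Re$.

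The main obstacle is the external-choice case, where the label set can strictly shrink when passing from $\mathcal{T}$ to $\mathcal{T}'$. Two sanity checks are required: $J'$ must be non-empty (this holds because each $J_i$ is non-empty by the well-formedness convention on branching types, and $\mathcal{T}'$ is non-empty by hypothesis), and for each $j\in J'$ the set of continuations coming from types in $\mathcal{T}'$ offering $j$ must be non-empty (immediate from $J'=\bigcup_{i\in\mathcal{T}'}J_i$). Once these are handled, coinduction yields $\Re\subseteq\stMergeRel{}{}$, and so the constructed $\stT'$ is a merge of $\{\stT[i]\mid i\in J\}$, establishing the lemma.
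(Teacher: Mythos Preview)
Your approach is correct and is essentially the same argument as the paper's: both construct a restricted merge witness by shrinking external-choice label sets to those offered by the subset, then verify it satisfies the merge rules coinductively. The paper's presentation differs only in that it builds the witness via an explicit LTS/graph construction—walking the transitions of $\stT$ while tracking which indices of $I$ remain active at each node, pruning nodes whose active set misses $J$, and reading off the type $\stU$ from the resulting finite graph—whereas you package construction and verification into a single relation $\Re$.

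One point worth tightening: as written, your definition of $\Re$ is self-referential (it mentions $\Re$ when specifying what the continuations $\stTi[j]'$ of $\stT'$ must be), which conflates the corecursive \emph{construction} of $\stT'$ with the post-fixed-point \emph{check}. Strictly speaking you need to first produce $\stT'$ and only then exhibit a relation containing $(\mathcal{T}',\stT')$ that you show is a post-fixed point of the merge operator. The paper's graph construction does exactly this separation, and along the way makes explicit that the witness is finite-state (hence expressible as a syntactic $\mu$-type), a point your corecursive description leaves implicit. The content of your argument is right; only the packaging could be cleaner.
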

\begin{proof}
	Suppose that $\stMerge{}{\{\stT[i]\suchthat i\in I\}}\ni\stT$. 
	We construct an LTS as follows:
	\begin{enumerate}
		\item Start at node $(\stT, \stT[i] : i\in I)$.
		\item At node $(\stTi, \stTi[i] :i\in I')$, if $\stTi\,\gtMove[\stEnvAnnotGenericSym]\,\stTii$,
 then let $I''=\{i\in I\suchthat \exists \stTii[i],\stTi[i]\,\gtMove[\stEnvAnnotGenericSym]\,\stTii[i]\}$.
Add an edge from $(\stTi, \stTi[i] :i\in I')$ to $(\stTii, \stTii[i] :i\in I'')$ labelled by $\stEnvAnnotGenericSym$.
	\end{enumerate}
	Now, let $J\subseteq I$ be non-empty.
	Remove all nodes, $(\stTi, \stTi[i] :i\in I')$, with $I'\cap J=\emptyset$.
	This graph is a correct type graph, so let $\stU$ be the corresponding type.
	We relabel the nodes.
	\begin{enumerate}
		\item The root node is relabeled $(\stU, \stT[i] : i\in J)$.
		\item If there is an edge from $(\stUi, \stTi[i] : i\in J')$ to $(\stTii, \stTii[i] :i\in I'')$ with $\stEnvAnnotGenericSym$, 
then there is $\stUii$ such that $\stUi\,\gtMove[\stEnvAnnotGenericSym]\,\stUii$.
Relabel $(\stTii, \stTii[i] :i\in I'')$ to $(\stUii, \stTii[i] :i\in I''\cap J)$.
	\end{enumerate}
	The nodes of this graph satisfy the coinductive rules of $\stMergeRel{}{}$,
    so $\stMergeRel{\{\stT[i]\suchthat i\in J\}}{\stU}$.
\end{proof}
\section{Global and Local Type Asynchronous Association}
\label{sec:gtype:relating}
Following the introduction of  
LTS semantics for global types~(\Cref{def:gtype:lts-gt}) 
and typing contexts~(\Cref{def:mpst-env-reduction}), 
we establish a relationship between these two semantics 
using the projection relation $\gtProjRel{}{\roleP}{}$~(\Cref{def:global-proj}) 
and the subtyping relation $\asubt$~(\Cref{def:subtyping}).
By analysing the structural consequences~(\cref{sec:async-sub-sem})
of asynchronous subtyping
and projection,
we derive an operational correspondence
from this relation~(\cref{thm:gtype:proj-comp,thm:gtype:proj-sound}).
Additionally, we use this relation
and the structural consequences of projection
to guarantee safety, deadlock-freedom, and liveness~(\cref{thm:assoc-live}).
\begin{definition}[Association of Global Types and Typing Contexts]
\label{def:assoc}
        A typing context $\stEnv$ is associated with a global type $\gtG$ written $\stEnvAssoc{\stEnv}{\gtG}{a}$, 
        iff
        $\stEnv$ contains projections of $\gtG$:
        $\dom{\stEnv}
        \supseteq \gtRoles{\gtG}$; and
        $\forall \roleP \in \gtRoles{\gtG}
        \,
        \exists
        \stT[\roleP],\quH[\roleP]$
        such that
        $\stEnvApp{\stEnv}{\roleP} = (\quHi[\roleP], \stTi[\roleP])$ and
        $\stTi[\roleP] \asubt \stT[\roleP]$ and
        $\quHi[\roleP]\tySub\quH[\roleP]$ and
        $\gtProjRel{\gtG}{\roleP}{(\quH[\roleP], \stT[\roleP])}$.
\end{definition}

The association $\stEnvAssoc{\gtFmt{\cdot}}{\stFmt{\cdot}}{a}$ is a
binary relation over typing contexts $\stEnv$ and global types $\gtG$. 
There are two requirements for the association: 
\begin{enumerate*}[label={(\arabic*)}]
\item the typing context $\stEnv$ must include an entry for each role; and 
\item
for each role $\roleP$, its corresponding entry in the
typing context~($\stEnvApp{\stEnv}{\roleP}$) must be a subtype~(\Cref{def:subtyping}) \
of the projection of the global type
onto this role.
\end{enumerate*}
Looking again at the ring protocol example~(\cref{sec:ring}), 
we can observe how the transitions of the global type corresponds to updates in the local context. 
This forms an \emph{operational correspondence} between the global semantics and local process configurations. 
Each global step is matched by a change in the local context.
\begin{equation}
\begin{array}{ccccccccccc}
  \gtG[\text{ring}]
  & \gtMove[\ltsSel{\roleP}{\roleQ}{\gtMsgFmt{add}}] &
  \gtGone[\text{ring}]
  & \gtMove[\ltsSel{\roleQ}{\roleR}{\gtMsgFmt{sub}}] &
  \gtGtwo[\text{ring}]
  & \gtMove[\ltsBra{\roleQ}{\roleP}{\gtMsgFmt{add}}] &
  \gtGthree[\text{ring}]
  & \gtMove[\ltsBra{\roleR}{\roleQ}{\gtMsgFmt{sub}}] &
  \gtGfour[\text{ring}]
  & \gtMove[\ltsSel{\roleR}{\roleP}{\gtMsgFmt{sub}}] &
  \gtGfive[\text{ring}]
  \\
  \vertAssoc
  &  &
  \vertAssoc
  &  &
  \vertAssoc
  &  &
  \vertAssoc
  &  &
  \vertAssoc
  &  &
  \vertAssoc
  \\
  \stEnv[0]
  & \stEnvMoveAnnot{\ltsSel{\roleP}{\roleQ}{\gtMsgFmt{add}}} &
  \stEnv[1]
  & \stEnvMoveAnnot{\ltsSel{\roleQ}{\roleR}{\gtMsgFmt{sub}}} &
  \stEnv[2]
  & \stEnvMoveAnnot{\ltsBra{\roleQ}{\roleP}{\gtMsgFmt{add}}} &
  \stEnv[3]
  & \stEnvMoveAnnot{\ltsBra{\roleR}{\roleQ}{\gtMsgFmt{sub}}} &
  \stEnv[4]
  & \stEnvMoveAnnot{\ltsSel{\roleR}{\roleP}{\gtMsgFmt{sub}}} &
  \stEnv[5]
\end{array}
\end{equation}
This idea is illustrated through 
two main theorems: 
\Cref{thm:gtype:proj-sound} shows that the reducibility of a global type aligns with that of 
its associated typing context; while  
\Cref{thm:gtype:proj-comp} illustrates that each possible transition of a typing context is simulated by 
an action in the transitions of the associated global type.
The following two subsections~(\cref{sec:async-sub-sem,sec:comp_sound})
are dedicated to the proofs of these results.

\begin{restatable}[Completeness of Association]{theorem}{thmProjCompleteness}\label{thm:gtype:proj-comp}
  Given 
  associated global type $\gtG$ and typing context $\stEnv$
  such that $\stEnvAssoc{\stEnv}{\gtG}{a}$. 
  If $\stEnv \stEnvMoveGenAnnot \stEnvi$,
  then there exists $\gtGi$ and $\stEnvAnnotGenericSymi$ such that
$\stEnvAnnotGenericSym\preccurlyeq\stEnvAnnotGenericSymi$,
  $\stEnvAssoc{\stEnvi}{\gtGi}{a}$, and
  $\gtG  \,\gtMove[\stEnvAnnotGenericSymi]\, 
    \gtGi$.
\end{restatable}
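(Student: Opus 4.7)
The plan is to proceed by induction on the derivation of $\stEnv \stEnvMoveGenAnnot \stEnvi$. The structural rules \iruleTCtxRec and \iruleTCtxCong reduce directly to the induction hypothesis (for \iruleTCtxCong, because the extra mapping $\stEnvMap{\mpC}{(\quH,\stT)}$ plays no role in the transition; for \iruleTCtxRec, because unfolding a recursive local type preserves the association with $\gtG$, as the projection relation is insensitive to unfolding by \Cref{def:global-proj}). The real work lies in the base cases \iruleTCtxOut and \iruleTCtxIn. Throughout, I would rely on a structural inversion lemma for $\asubt$ (the one called \cref{lem:inv-subtyping} in the paper) that extracts, from $\stTi \asubt \stT$, a $\subttt$-derivation whose top is one of \RULE{Ref-$\mathcal{A}$}, \RULE{Ref-$\mathcal{B}$}, \RULE{Ref-In}, \RULE{Ref-Out}, or \RULE{Ref-End}. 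I would also lean on the balancedness machinery from \Cref{sec:bal}: \Cref{lem:en-route-bounded} supplies a depth measure $\gtMDepth{\gtG}{\roleQ}{\roleP}$ for en-route traffic, and \Cref{lem:depth-decr} guarantees that this measure is non-increasing under global transitions whose subject differs from $\roleP$, giving us a well-founded order for nested inductions.

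For \iruleTCtxOut, the hypothesis gives $\stEnvApp{\stEnv}{\roleP}=(\quH, \stIntSum{\roleQ}{i\in I}{\stChoice{\stLab[i]}{\tyGround[i]}\stSeq\stT[i]})$ and $\stEnvAnnotGenericSym=\ltsSel{\roleP}{\roleQ}{\stChoice{\stLab[k]}{\tyGround[k]}}$. From $\stEnvAssoc{\stEnv}{\gtG}{a}$ and Def.~\ref{def:assoc} I obtain a projection $\gtProjRel{\gtG}{\roleP}{(\quH[\roleP]^{\star},\stT[\roleP]^{\star})}$ with the internal choice subtyping $\stT[\roleP]^{\star}$. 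Using the SISO decomposition of \Cref{def:subtyping} I pick a single-output tree $\stU\in\singleOut{\ttree{\stIntSum{\roleQ}{i\in I}{\ldots}}}$ whose head is the send $\stSendOne{\roleQ}{\stLab[k]}{\tyGround[k]}$. The definition of $\asubt$ then produces, for every $\stVi\in\singleIn{\ttree{\stT[\roleP]^{\star}}}$, an SO tree $\stWi\in\singleOut{\stVi}$ and a SISO $\stW\in\singleIn{\stU}$ with $\stW\subttt\stWi$. Repeated application of \RULE{Ref-$\mathcal{B}$} and \RULE{Ref-Out} forces $\stWi$ to have shape $\Bp;\stSend{\roleQ}{\stLab[k]}{\tyGroundi[k]}{\stWii}$ with $\tyGround[k]\tyGroundSub\tyGroundi[k]$; crucially, $\Bp$ contains only receives by $\roleP$ and sends by $\roleP$ to participants other than $\roleQ$. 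I then invert the projection along this prefix (using the rules \RULE{P-$\oplus$}, \RULE{P-$\&$}, and their -II variants): each receive by $\roleP$ in $\Bp$ traces back either to a transmission $\gtRi\to\roleP$ in $\gtG$ (a position at which \iruleGtMoveCtx applies, since the subject is not $\roleP$) or to an en-route $\gtRi\to\roleP$ (\iruleGtMoveCtxII), and each send by $\roleP$ to some $\roleR\ne\roleQ$ in $\Bp$ traces to a transmission $\roleP\to\roleR$ whose reordering is exactly what \iruleGtMoveCtx' was designed for. At the end of the prefix we reach a subterm $\gtComm{\roleP}{\roleQ}{j\in J}{\gtLab[j]}{\tyGround[j]'}{\gtG[j]}$ (or an en-route thereof) with $k\in J$; applying \iruleGtMoveSel at the matched branch and wrapping with the appropriate \iruleGtMoveCtx/\iruleGtMoveCtx'/\iruleGtMoveCtxII contexts assembles the required $\gtG\gtMove[\stEnvAnnotGenericSymi]\gtGi$ with $\stEnvAnnotGenericSymi=\ltsSel{\roleP}{\roleQ}{\stChoice{\stLab[k]}{\tyGroundi[k]}}$ and $\stEnvAnnotGenericSym\preccurlyeq\stEnvAnnotGenericSymi$.

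For \iruleTCtxIn, the head of $\roleQ$'s queue is $\quMsg{\roleP}{\stLab[k]}{\tyGroundi}$. Queue subtyping ($\quHi[\roleQ]\tySub\quH[\roleQ]$) transfers a compatible message $\quMsg{\roleP}{\stLab[k]}{\tyGroundii}$ with $\tyGroundi\tyGroundSub\tyGroundii$ to the same position in $\quH[\roleQ]$. Inverting the projection $\gtProjRel{\gtG}{\roleQ}{(\quH[\roleQ],\stT[\roleQ]^{\star})}$ along this entry via \RULE{P-$\oplus$-II} (and using \RULE{P-$\stBinMerge$} / \RULE{P-$\&$} / \RULE{P-$\oplus$} to traverse intervening non-$\roleQ$-to-$\roleP$ prefixes) locates an en-route occurrence $\gtCommSquigSmall{\roleQ}{\roleP}{\stLab[k]}{}{\stLab[k]}{\tyGroundii}{\gtGii}$ inside $\gtG$. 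The existence of this occurrence at bounded depth is exactly what $\text{balanced}^+$ delivers through \Cref{lem:en-route-bounded}, and \Cref{lem:depth-decr} lets me argue that the intervening \iruleGtMoveCtx/\iruleGtMoveCtxII steps do not disturb it. Applying \iruleGtMoveBra at the en-route node yields $\gtG\gtMove[\stEnvAnnotGenericSymi]\gtGi$ with $\stEnvAnnotGenericSymi=\ltsBra{\roleP}{\roleQ}{\stChoice{\stLab[k]}{\tyGroundii}}$, again matching $\stEnvAnnotGenericSym$ up to $\preccurlyeq$.

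In both cases the final obligation is to re-establish $\stEnvAssoc{\stEnvi}{\gtGi}{a}$. For the participants not involved in the step this is immediate, since $\gtGi$'s projection onto them remains a valid residual projection of $\gtG$'s. For $\roleP$ (and $\roleQ$ in the receive case), I would use the structural continuations extracted during the inversion: the tail $\stT[k]$ of the internal/external choice is shown to be an asynchronous subtype of the projection of the residual global type, using the tails $\stWii$ of the SISO refinement derivations above together with \Cref{lem:good-merge-corr} to recombine projections that were split by merges. I expect the main obstacle to be precisely this bookkeeping: the global semantics' \iruleGtMoveCtx' may shrink an index set $I$ to $J\subseteq I$ when reordering a send past a choice, and one must show that the projections associated to indices in $I\setminus J$ remain coherent with the surviving local types after the step. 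This is where the forgetful contexts promised in \Cref{def:global-ctx} and \Cref{def:local-tree-ctx} do the heavy lifting: they provide a uniform way to match the non-deterministic ``forgetting'' on the global side with the selection of a single SISO branch on the local side, and I would expect the proof to be structured around a single induction on $\gtMDepth{\gtG}{\ltsSubject{\stEnvAnnotGenericSym}}{\cdot}$ (or the analogous measure for sends) which is preserved by precisely those context rules we need.
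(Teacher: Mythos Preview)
Your plan is essentially correct and tracks the paper's strategy: invert the context transition to expose the head of the subject's local type, invert subtyping via the SISO decomposition (\cref{lem:inv-subtyping}) to locate matching redexes in the projection, invert the projection to locate them in $\gtG$, fire the global transition via the context rules, and reproject to reestablish association. The paper organises this differently in two respects worth noting.

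First, the paper does not induct on the derivation of $\stEnv\stEnvMoveGenAnnot\stEnvi$. It performs a direct case split on the label $\stEnvAnnotGenericSym$ and uses \cref{lem:inv-ctx-lts} to absorb arbitrarily many applications of \inferrule{\iruleTCtxRec} and \inferrule{\iruleTCtxCong} in one step, extracting the concrete shape of $\stEnvApp{\stEnv}{\roleP}$ (and $\stEnvApp{\stEnv}{\roleQ}$ in the receive case). Your explicit handling of those structural rules is harmless but unnecessary.

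Second, the paper packages everything you describe as ``heavy lifting'' into a single lemma, \cref{lem:gt-move}, which is itself built from a chain \cref{lem:inv-subtyping} $\to$ \cref{lem:inv-proj} $\to$ \cref{lem:global-ctx-red}. The inner induction that actually walks down $\gtG$ to find the redexes is in \cref{lem:head-inv-proj}, and it is on $\gtDepth{\gtG}{\roleP}$, not on $\gtMDepth{\gtG}{\cdot}{\cdot}$ as you suggest; the en-route depth enters only in the receive subcase where $\roleQ$'s queue already carries a message for $\roleP$. After \cref{lem:gt-move} handles the subject, the non-subject participants are dispatched uniformly by \cref{lem:global-red}, which shows their projections are preserved (up to a merge-supertype) under the chosen global step. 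Your plan to argue this ``immediately'' for uninvolved roles glosses over the branch-shrinking in \inferrule{\iruleGtMoveCtx'}: the paper needs \cref{lem:mer:setinclusion} and \cref{lem:good-merge-assoc} here to show the smaller index set still yields a mergeable projection, which is exactly what \cref{lem:global-red} encapsulates.
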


\begin{restatable}[Soundness of Association]{theorem}{thmProjSoundness}\label{thm:gtype:proj-sound}
    Let $\stEnvAssoc{\stEnv}{\gtG}{a}$ and assume
    $\gtG \,\gtMove[\stEnvAnnotGenericSym]\, \gtGi$.
    Then there exist actions $\stEnvAnnotGenericSymi\preccurlyeq\stEnvAnnotGenericSymii$,
    a context $\stEnvi$, and a global type $\gtGii$ such that
         $\gtG \,\gtMove[\stEnvAnnotGenericSymii]\, \gtGii$,
         $\stEnv \stEnvMoveAnnot{\stEnvAnnotGenericSymi} \stEnvi$, and
         $\stEnvAssoc{\stEnvi}{\gtGii}{a}$.
\end{restatable}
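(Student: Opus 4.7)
The plan is to proceed by case analysis on the form of the label $\stEnvAnnotGenericSym$, distinguishing receive labels $\ltsBra{\roleQ}{\roleP}{\gtLab(\tyGround)}$ from send labels $\ltsSel{\roleP}{\roleQ}{\gtLab(\tyGround)}$. From the association hypothesis, I extract, for every $\roleR \in \gtRoles{\gtG}$, projection witnesses $\gtProjRel{\gtG}{\roleR}{(\quH[\roleR], \stT[\roleR])}$ and context entries $\stEnv(\roleR) = (\quHi[\roleR], \stTi[\roleR])$ with $\stTi[\roleR] \asubt \stT[\roleR]$ and $\quHi[\roleR] \tySub \quH[\roleR]$. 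Throughout, we rely on the assumption that $\gtG$ is $\text{balanced}^+$, so that $\gtDepth{\gtG}{\cdot}$ and $\gtMDepth{\gtG}{\cdot}{\cdot}$ are defined wherever the induction requires them (by \Cref{lem:en-route-bounded}).

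For the receive case, the derivation of $\gtG \,\gtMove[\ltsBra{\roleQ}{\roleP}{\gtLab(\tyGround)}]\, \gtGi$ terminates at an application of \iruleGtMoveBra, preceded by some combination of \iruleGtMoveRec, \iruleGtMoveCtx, and \iruleGtMoveCtxII. I induct on $\gtMDepth{\gtG}{\roleP}{\roleQ}$, which decreases along \iruleGtMoveCtx steps by \Cref{lem:depth-decr}, propagating the innermost en-route transmission $\gtCommSquigSmall{\roleP}{\roleQ}{\gtLab}{}{\gtLab}{\tyGround}{\gtGi}$ outward through \RULE{P-$\oplus$-II} on the sender's projection and \RULE{P-$\&$-II} on the receiver's projection (threading through non-involving layers using \RULE{P-$\stBinMerge$} and \RULE{P-$\stBinMerge$-II}). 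I conclude that $\quH[\roleP]$ begins with $\quMsg{\roleQ}{\gtLab}{\tyGround}$ (up to the commutativity of messages with distinct destinations from \Cref{def:queue-equiv}) and that $\unfoldOne{\stT[\roleQ]}$ offers an external choice from $\roleP$ whose $\gtLab$-branch carries payload $\tyGround$. Queue subtyping then places $\quMsg{\roleQ}{\gtLab}{\tyGroundi}$ at the head of $\quHi[\roleP]$ with $\tyGround \tyGroundSub \tyGroundi$, and inversion of $\stTi[\roleQ] \asubt \stT[\roleQ]$ on external choices (together with \Cref{lem:end-sub-end}) yields a matching branch in $\stTi[\roleQ]$. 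Applying \iruleTCtxIn, \iruleTCtxRec, and \iruleTCtxCong produces $\stEnv \stEnvMoveAnnot{\ltsBra{\roleQ}{\roleP}{\gtLab(\tyGroundi)}} \stEnvi$, so taking $\stEnvAnnotGenericSymi = \ltsBra{\roleQ}{\roleP}{\gtLab(\tyGroundi)}$, $\stEnvAnnotGenericSymii = \stEnvAnnotGenericSym$, and $\gtGii = \gtGi$ gives $\stEnvAnnotGenericSymi \preccurlyeq \stEnvAnnotGenericSymii$; the preserved association $\stEnvAssoc{\stEnvi}{\gtGii}{a}$ follows by re-projecting the continuation of $\gtGi$ and invoking transitivity of $\asubt$ and $\tySub$.

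For the send case, the difficulty is that $\stTi[\roleP]$'s first send need not be $\ltsSel{\roleP}{\roleQ}{\gtLab(\tyGround)}$, because asynchronous subtyping may have anticipated an entirely different send. I therefore abandon the given label and instead construct matching $\stEnvAnnotGenericSymi, \stEnvAnnotGenericSymii$ starting from the subtype's structure: pick any $\stU \in \singleOut{\ttree{\stTi[\roleP]}}$ whose head send is $\ltsSel{\roleP}{\roleQi}{\gtLabi(\tyGroundi)}$. \Cref{def:subtyping} guarantees SISO trees $\stW \in \singleIn{\stU}$ and $\stWi \in \singleOut{\stVi}$ (with $\stVi \in \singleIn{\ttree{\stT[\roleP]}}$) satisfying $\stW \subttt \stWi$; inverting this refinement through repeated applications of \inferrule{Ref-$\mathcal{B}$} followed by \inferrule{Ref-Out} locates the send inside $\stWi$ after some $\Bp$-prefix of reorderable receives and sends-to-other-participants. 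Tracing this send backwards through \RULE{P-$\oplus$}, \RULE{P-$\oplus$-II}, and \RULE{P-$\stBinMerge$} pinpoints a corresponding internal choice at a reachable position within $\gtG$; the context rules \iruleGtMoveCtx, \RULE{GR-Ctx-I'}, and \iruleGtMoveCtxII then lift that choice outward, where \iruleGtMoveSel fires to yield $\gtG \,\gtMove[\stEnvAnnotGenericSymii]\, \gtGii$ with $\stEnvAnnotGenericSymii = \ltsSel{\roleP}{\roleQi}{\gtLabi(\tyGroundii)}$ and $\tyGroundi \tyGroundSub \tyGroundii$. In parallel, \iruleTCtxOut (with \iruleTCtxRec and \iruleTCtxCong) produces $\stEnv \stEnvMoveAnnot{\stEnvAnnotGenericSymi} \stEnvi$ for $\stEnvAnnotGenericSymi = \ltsSel{\roleP}{\roleQi}{\gtLabi(\tyGroundi)} \preccurlyeq \stEnvAnnotGenericSymii$.

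The main obstacle is the send case, and specifically choosing the subset $J \subseteq I$ used in \RULE{GR-Ctx-I'} compatibly with the branches that survive the asynchronous subtyping; this is exactly where the paper's forgetful-context machinery and context-projection apparatus are required, since they let us relate a trimmed reachable state of $\gtG$ to a subtype-compatible reduct of $\stEnv$. Equally delicate is verifying $\stEnvAssoc{\stEnvi}{\gtGii}{a}$ after the transition: the freshly-inserted en-route transmission now sits nested deep inside $\gtGii$, and its projection onto $\roleP$ must still subtype the continuation of $\stTi[\roleP]$. This reduces to coinductively re-establishing the projection clauses (\Cref{def:global-proj}) on the updated global type, for which the operational correspondence of merge (\Cref{lem:good-merge-corr}) and the set-inclusion closure of mergeability (\Cref{lem:mer:setinclusion}) are the key ingredients.
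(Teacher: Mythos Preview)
Your approach has a genuine gap in the receive case, and the paper takes a much simpler route that avoids the difficulty entirely.

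In your receive case you write that ``inversion of $\stTi[\roleQ] \asubt \stT[\roleQ]$ on external choices \ldots\ yields a matching branch in $\stTi[\roleQ]$''. This is false: asynchronous subtyping permits the $\mathcal{A}^{(\roleP)}$ reordering, so when $\unfoldOne{\stT[\roleQ]}$ is an external choice from $\roleP$, the head of $\stTi[\roleQ]$ may be an external choice from an \emph{entirely different} role $\roleR$. Remark~\ref{rem:sound-suff} gives exactly this counterexample: with $\stTi[\roleQ] = \stExtSum{\roleR}{}{\stLabi\stSeq\stExtSum{\roleP}{}{\stLab}}$ associated to a global type whose head is $\gtCommSquigSmall{\roleP}{\roleQ}{\gtLab}{}{\gtLab}{}{\ldots}$, the context \emph{cannot} fire $\ltsBra{\roleQ}{\roleP}{\gtLab}$ at all; $\roleQ$ must first receive from $\roleR$. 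So you cannot take $\stEnvAnnotGenericSymii = \stEnvAnnotGenericSym$ and $\gtGii = \gtGi$, and your inversion step does not go through. The same problem afflicts your send case when $\stTi[\roleP]$ happens to have a receive at its head: your plan ``pick any $\stU \in \singleOut{\ttree{\stTi[\roleP]}}$ whose head send is \ldots'' presupposes a send is available at the head, which need not hold.

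The paper's proof exploits precisely the weakness of the statement you are overlooking: $\stEnvAnnotGenericSymi$ and $\stEnvAnnotGenericSymii$ are existentially quantified with no tie to $\stEnvAnnotGenericSym$. So it suffices to show merely that $\stEnv$ can move \emph{somewhere}. The paper does this in two lines: Lemma~\ref{lem:proj-preserves-sem} says the \emph{projected} context can move whenever $\gtG$ can (reading off the head of $\unfoldOne{\gtG}$), and since $\asubt$ relates non-deadlocked contexts to non-deadlocked contexts, the associated $\stEnv$ can also move with some $\stEnvAnnotGenericSymi$. Then Theorem~\ref{thm:gtype:proj-comp} (Completeness), already proved, supplies $\stEnvAnnotGenericSymii \succcurlyeq \stEnvAnnotGenericSymi$, $\gtGii$, and the new association. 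All the forgetful-context machinery you invoke lives inside the proof of completeness, not here.
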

\begin{remark}[\cref{thm:gtype:proj-comp,thm:gtype:proj-sound} from the Perspective of Processes]
In~\cref{the:srtop,the:sftop},
we shall see that~\cref{thm:gtype:proj-comp,thm:gtype:proj-sound}
are analogues for subject reduction
and session fidelity, respectively,
with typing contexts and global types
taking the place of
sessions and typing contexts.
In both cases,~\cref{thm:gtype:proj-comp,thm:gtype:proj-sound}
bridge the gap between typing
sessions with contexts
and typing them with global types.
\end{remark}
\begin{remark}[Sufficiency of Soundness]
\label{rem:sound-suff}
As in \cite{YH2024},
our soundness condition necessitates
that we change our transition label.
However, unlike in the synchronous case,
we can no longer guarantee that the participants remain the same.
In the synchronous setting,
subtyping will allow for fewer selections
in associated contexts than in global types,
meaning that the global transition label
may use a selection label that is absent
locally, forcing us to change to a label that is present.
In the asynchronous setting this can still occur,
but now the asynchronous subtyping will
also prevent us from transitioning on the
same participants locally.
For example, if the global type has
a branching label from $\roleP$ to $\roleQ$
(corresponding to a head transition
and so reflected in the projected context),
then the asynchronous subtyping may have
moved a branching from $\roleR$ to $\roleQ$
to the head of the context's type for $\roleQ$,
preventing $\roleQ$
from performing any action.
This behaviour can be witnessed below in~\cref{eq:soundsuff}.
\begin{equation}\label{eq:soundsuff}
\stEnvAssoc{
\stEnvMap{\roleP}{(\quMsg{\roleQ}{\stLab}{},\stEnd)},
\stEnvMap{\roleQ}{(\quEmpty,\stExtSum{\roleR}{}{\stLabi\stSeq\stExtSum{\roleP}{}{\stLab}})},
\stEnvMap{\roleR}{(\quEmpty,\stIntSum{\roleQ}{}{\stLabi})}
}
{
\gtCommRawSquig{\roleP}{\roleQ}{\gtLab}{\gtLab\gtSeq
	\gtCommRaw{\roleR}{\roleQ}{\gtLabi}
}
}{a}
\end{equation}
\end{remark}
Before we can use the association to
define the top-down typing system
as a bottom-up typing system,
we must ensure that associated contexts are
safe, deadlock-free, and live.

\begin{restatable}[Associated Context Properties]{theorem}{thmProjLive}
\label{thm:assoc-live}
If $\stEnvAssoc{\stEnv}{\gtG}{a}$, then $\stEnv$ is safe, deadlock-free, and live.
\end{restatable}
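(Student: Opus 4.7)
The plan is to first reduce the problem to liveness alone: as the authors note preceding Definition~\ref{def:live}, in the asynchronous setting, context liveness implies both safety and deadlock-freedom, so establishing L1 and L2 suffices. I would proceed in two stages: first prove liveness for \emph{bare} projected contexts (obtained directly from $\gtProjRel{\gtG}{\roleP}{}$ without further subtyping refinement, i.e.\ the case $\stTi[\roleP]=\stT[\roleP]$ and $\quHi[\roleP]=\quH[\roleP]$ in Definition~\ref{def:assoc}), and then extend to all associated contexts using the preservation of liveness under $\asubt$ established in~\cite[Theorem~4.12]{GPPSY2023}.

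For the first stage, suppose for contradiction that some fair path $\{\stEnv[i]\}_{i \in I}$ starting at a bare projected context violates, say, L1 at step $i$: $\roleP$'s queue has head $\quMsg{\roleQ}{\stLab}{\tyGround}$ which is never dequeued thereafter. I would iteratively apply Theorem~\ref{thm:gtype:proj-comp} (completeness) to lift the path to a corresponding global-type path $\{\gtG[i]\}_{i \in I}$ with $\stEnvAssoc{\stEnv[i]}{\gtG[i]}{a}$ preserved throughout. The presence of the queued message forces, by inverting rules \RULE{P-$\oplus$-II}, \RULE{P-$\&$-II}, and \RULE{P-$\stBinMerge$-II} in Definition~\ref{def:global-proj}, an en-route transmission from $\roleP$ to $\roleQ$ in $\gtG[i]$; Lemma~\ref{lem:en-route-bounded} then guarantees that $\gtMDepth{\gtG[i]}{\roleP}{\roleQ}$ is finite, and Lemma~\ref{lem:depth-decr} guarantees this quantity is non-increasing along any transition other than the one that receives that message. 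Combined with the fairness conditions F1 and F2, this forces the global path to eventually fire the corresponding $\gtMove[\ltsBra{\roleQ}{\roleP}{\stLab(\cdot)}]$ transition; Theorem~\ref{thm:gtype:proj-sound} (soundness) then exhibits a context transition dequeueing the message at $\roleQ$, contradicting the assumption. A symmetric argument handles L2, using the standard depth $\gtDepth{\gtG[i]}{\roleP}$ from balancedness (Definition~\ref{def:balanced-original}) together with Lemma~\ref{lem:depth-decr}.

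For the second stage, I would invoke~\cite[Theorem~4.12]{GPPSY2023}, which establishes that safety, deadlock-freedom, and liveness are preserved under the precise asynchronous subtyping and queue subtyping. Since Definition~\ref{def:assoc} constructs $\stEnv$ as exactly such a refinement of the bare projection of $\gtG$, the three properties transfer from the bare projection (shown live in Stage~1) to $\stEnv$, completing the proof.

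The main obstacle lies in the first stage. The chief difficulty is the non-determinism of the global-type semantics (Example~\ref{ex:non-det}) combined with the reorderings admitted by \RULE{GR-Ctx-I'}: a single context transition corresponds to many possible global liftings via Theorem~\ref{thm:gtype:proj-comp}, and I must argue that \emph{every} lifting keeps the depth invariant finite long enough for fairness to strike. A secondary subtlety is the merge operation in projection (\RULE{P-$\stBinMerge$}): for a non-participating role, the local type is a merge of branch-projections, so tracing a queued message or a pending external choice back to a particular en-route transmission or transmission in $\gtG[i]$ requires careful use of the merge operational correspondence (Lemma~\ref{lem:good-merge-corr}) to show that the pending action persists through every branch contributing to the merge, and hence through every fair lifting of the path.
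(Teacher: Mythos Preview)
Your two-stage decomposition matches the paper's, and Stage~2 is correct in spirit, though the result you want is \cite[Lemma~4.10]{GPPSY2023} (restated here as Lemma~\ref{lem:live-down-closed}); \cite[Theorem~4.12]{GPPSY2023} transfers liveness from contexts to sessions and says nothing about preservation under $\asubt$. The substantive gap is in Stage~1.

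Lifting the fair path via Theorem~\ref{thm:gtype:proj-comp} gives you only that each $\stEnv[i]$ is related to the projection of $\gtG[i]$ through full $\asubt$. Lemma~\ref{lem:depth-decr} then yields only that $\gtMDepth{}{}{}$ is \emph{non-increasing} along the lifted global path, never that it strictly decreases, so your claim that ``fairness forces the global path to eventually fire the receive'' is unsupported: nothing in your argument prevents the lifted path from forever reducing under the head via the \RULE{GR-Ctx} rules, leaving the depth constant. To obtain strict decrease you must argue that the \emph{head} action of each $\gtG[i]$ is enabled in $\stEnv[i]$, so that context fairness forces it to fire; under $\asubt$ this fails (Example~\ref{ex:sub-not-corr}, Remark~\ref{rem:sound-suff}). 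The paper therefore does not reuse Theorem~\ref{thm:gtype:proj-comp} here but introduces a dedicated lifting lemma, Lemma~\ref{lem:std-complete}, which maintains the \emph{standard} subtyping $\stSub$ as invariant, paired with Lemma~\ref{lem:std-sound} asserting that under projection the global head action is always enabled. Fairness then forces the head to fire, driving the induction on $\gtMDepth{}{}{}$ (respectively $\gtDepth{}{}$ for L2).

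Your appeal to Theorem~\ref{thm:gtype:proj-sound} at the end is also misplaced. The context path is the datum and the global path is derived from it via completeness, with labels matching up to $\preccurlyeq$; once you know the global step at position $k$ carries label $\ltsBra{\roleQ}{\roleP}{\stLab(\cdot)}$, the context step at $k$ is already that receive. Soundness is not needed and would not in any case pin down a specific label (again Remark~\ref{rem:sound-suff}). Finally, the non-determinism you flag as the main obstacle is harmless for this argument: any one lifting via Lemma~\ref{lem:std-complete} suffices for the depth induction, and the merge subtleties you worry about are absorbed by the $\stSub$ invariant rather than handled directly via Lemma~\ref{lem:good-merge-corr}.
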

Section~\ref{sec:live} is dedicated to the proof of this result.

\subsection{Structure of Asynchronous Subtyping}
\label{sec:async-sub-sem}
The issue that we introduce by using asynchronous subtyping,
is that local asynchronous transitions, with participant $\roleP$, need not correspond
to the shallowest occurrences of $\roleP$
in the corresponding global type.
We follow Ghilezan et al.~\cite{Ghilezan2019} to resolve this issue,
by introducing local and global type contexts,
where the holes correspond to
redexes that are reduced during
asynchronous communication.
However, our choice of global type transition has introduced non-determinism,
which we must resolve when identifying local transitions to global transitions.
Our solution is to introduce `forgetful holes', $\forgetHole{\cdot}{i}$,
both locally and globally;
these holes will denote where the asynchronous subtyping
forgets selections,
so that we can identify which branches to forget
when making global transitions.
\begin{remark}[Inductive Contexts over Coinductive Types]
\label{rem:inductive-contexts}
The contexts defined below (\cref{def:global-ctx,def:local-tree-ctx})
are \emph{inductively} defined and do not include a case
for recursive types.
This is because they operate on the \emph{tree representations}
$\ttree{\gtG}$ and $\ttree{\stT}$,
which are coinductive (potentially infinite) trees
obtained by fully unfolding all $\mu$-binders.
Since the tree representations contain no $\mu$-binders,
the context grammars need no corresponding $\mu$ case.
The contexts themselves are always finite,
capturing only the structure between the root of the tree
and the finitely many redex positions,
while the (potentially infinite) subtrees
are used to fill in the holes.
\end{remark}

\begin{definition}[Global Type Contexts]
    \label{def:global-ctx}
    Analogously to Ghilezan et al.~\cite[Def A.19]{Ghilezan2019}, we 
    define global type contexts, with indexed holes, inductively as follows,
where we assume $(\roleFmt{a}, \roleFmt{b}) \neq (\roleP, \roleQ)$, $\roleFmt{c} \ne \roleP$, and $I_R\ne\emptyset$:
    \begin{align*}
        \trCtxGSelpq \ &= \gtComm{\roleFmt{a}}{\roleFmt{b}}{i \in I}{\stLab[i]}{\tyGround[i]}{\trCtxGSelpq} 
        ~\mid~ \gtCommSquig{\roleR}{\roleS}{\stLab}{}{\stLab}{\tyGround}{\trCtxGSelpq}~\mid~ [\cdot]_i\\
	&\mid~\gtComm{\roleP}{\roleFmt{b}}{i \in I_L}{\stLab[i]}{\tyGround[i]}{\forgetHole{\cdot}{l_i}}+
	\gtComm{\roleP}{\roleFmt{b}}{i \in I_R}{\stLab[i]}{\tyGround[i]}{\trCtxGSelpq}\\
        \trCtxGBraqp \ &= \gtComm{\roleFmt{c}}{\roleR}{i \in I}{\stLab[i]}{\tyGround[i]}{\trCtxGBraqp} ~\mid~ \gtCommSquig{\roleFmt{a}}{\roleFmt{c}}{\stLab}{}{\stLab}{\tyGround}{\trCtxGBraqp} ~\mid~ [\cdot]_i
    \end{align*}
    We define separate contexts for situations in which it is safe to reduce transmissions ($\trCtxGSelpq$) and en-route transmissions ($\trCtxGBraqp$) respectively. 
    In either case, given $\mathbb{G}_{\star}^{(\roleP, \roleQ)}$
where $\star \in \{\oplus, \&\}$,
the context has some hole indexing set $I$ and some forgetful hole indexing set $I'$.
We require the indexing sets of continuations to be disjoint
and for $I$ and $I'$ to be disjoint.
We write $\mathbb{G}_{\star}^{(\roleP, \roleQ)}[\gtG_i]_{i\in I}\forgetHole{\gtGi_i}{i\in I'}$ for the tree given by populating
the holes labelled with $i$ by the global type tree $\gtG_i$ for $i\in I$
and the forgetful holes labelled with $i$ by the global type tree $\gtGi_i$ for $i\in I'$.
\end{definition}

Of course, we will need to specify
the shape of a global type after
it performs an asynchronous transition.
Via \RULE{GR-Ctx-I'}, we can do this
by explicitly removing the choices that result
in a forgetful hole.

\begin{definition}[Forgetting Forgetful Holes]
	Given a context $\mathbb{G}$, we define $\overline{\mathbb{G}}$ to be the context
	where we have removed all of the forgetful holes from $\mathbb{G}$ recursively:
	\begin{itemize}
		\item $\overline{\gtComm{\roleP}{\roleQ}{i \in I_L}{\stLab[i]}{\tyGround[i]}{\forgetHole{\cdot}{l_i}}+
		\gtComm{\roleP}{\roleQ}{i \in I_R}{\stLab[i]}{\tyGround[i]}{\mathbb{G}_i}}=
		\gtComm{\roleP}{\roleQ}{i \in I_R}{\stLab[i]}{\tyGround[i]}{\overline{\mathbb{G}_i}}$
		\item $\overline{[\cdot]_i}=[\cdot]_i$
		\item $\overline{\gtCommSquig{\roleP}{\roleQ}{\stLab}{}{\stLab}{\tyGround}{\mathbb{G}}}=
		\gtCommSquig{\roleP}{\roleQ}{\stLab}{}{\stLab}{\tyGround}{\overline{\mathbb{G}}}$
	\end{itemize}
\end{definition}

We now provide similar definitions for local type contexts,
so that we can observe asynchronous semantics
in projected types.

\begin{definition}[Local Type Tree Context]
    \label{def:local-tree-ctx}
    Define a local type tree context inductively as follows ($\roleQ \ne \roleP$):
    \begin{align*}
        \trCtxSelp \ &= \ [\cdot]_i
	\ ~\mid~ \stExtSum{\roleR}{i \in I}{\stChoice{\stLab[i]}{\tyGround[i]} \stSeq \trCtxSelp}
	 ~\mid~ \stIntSum{\roleQ}{i \in I_L}{\stChoice{\stLab[i]}{\tyGround[i]} \stSeq \forgetHole{\cdot}{l_i}} + 
	\stIntSum{\roleQ}{i \in I_R}{\stChoice{\stLab[i]}{\tyGround[i]} \stSeq \trCtxSelp}\\
        \trCtxBrap \ &= \ [\cdot]_i \ ~\mid~ \stExtSum{\roleQ}{i \in I}{\stChoice{\stLab[i]}{\tyGround[i]} \stSeq \trCtxBrap}\\
\end{align*}
We define separate contexts for situations in which it is safe to reduce transmissions ($\trCtxSelp$) and en-route transmissions ($\trCtxBrap$) respectively. 
    In either case, given $\mathbb{L}_{\star}^{\roleP}$
where $\star \in \{\oplus, \&\}$,
the context has some hole indexing set $I$ and some forgetful hole indexing set $I'$.
We require the indexing sets of continuations to be disjoint
and for $I$ and $I'$ to be disjoint.
We write $\mathbb{L}_{\star}^{\roleP, \roleQ}[\stT_i]_{i\in I}\forgetHole{\stTi_i}{i\in I'}$ for the tree given by populating
the holes labelled with $i$ by the local type tree $\stT_i$ for $i\in I$
and the forgetful holes labelled with $i$ by the local type tree $\stTi_i$ for $i\in I'$.
\end{definition}

\begin{figure}[]
\centering
\begin{tikzpicture}[
    scale=0.88,
intchoice/.style={draw, rounded corners=3pt, fill=blue!10, 
        minimum width=1.6cm, minimum height=0.55cm, font=\footnotesize},
    extchoice/.style={draw, rounded corners=3pt, fill=orange!12, 
        minimum width=1.6cm, minimum height=0.55cm, font=\footnotesize},
    normalhole/.style={draw, circle, minimum size=0.75cm, line width=1pt, 
        fill=green!18, draw=green!55!black, font=\footnotesize\bfseries},
    forgethole/.style={draw, circle, minimum size=0.75cm, line width=1pt, 
        fill=red!10, draw=red!45!black, densely dashed, font=\footnotesize\bfseries},
    arr/.style={-stealth, thick, shorten >=1pt, shorten <=1pt},
    lbl/.style={font=\scriptsize, fill=white, inner sep=1pt},
]

\begin{scope}[shift={(-3.5,0)}]
\node[font=\small\bfseries] at (0,3.2) {Before: $\mathbb{L}_\oplus^{\roleQ}$};
    
\node[intchoice] (L-root) at (0,2.2) {$\roleR\oplus\{\cdots\}$};
    
\node[extchoice] (L-ext1) at (-1.5,0.8) {$\roleP\mathbin{\&}\{\cdots\}$};
    \node[normalhole] (L-h1) at (-1.5,-0.6) {$[\cdot]_1$};
    
\node[forgethole] (L-h2) at (1.5,0.8) {$\langle\cdot\rangle_2$};
    
\draw[arr] (L-root) -- node[lbl, above left=-1pt] {\textsf{add}} (L-ext1);
    \draw[arr] (L-root) -- node[lbl, above right=-1pt] {\textsf{sub}} (L-h2);
    \draw[arr] (L-ext1) -- node[lbl, left] {\textsf{add}} (L-h1);
    
\node[font=\tiny, text=green!50!black] at (-1.5,-1.25) {(tracked)};
    \node[font=\tiny, text=red!40!black] at (1.5,0.05) {(forgotten)};
\end{scope}

\draw[-stealth, line width=1.5pt, gray!60] (0.3,1.2) -- (1.5,1.2)
    node[midway, above, font=\scriptsize] {$\overline{\mathbb{L}}$};

\begin{scope}[shift={(3.8,0)}]
\node[font=\small\bfseries] at (0,3.2) {After: $\overline{\mathbb{L}}$};
    
\node[intchoice] (R-root) at (0,2.2) {$\roleR\oplus\{\cdots\}$};
    
\node[extchoice] (R-ext1) at (0,0.8) {$\roleP\mathbin{\&}\{\cdots\}$};
    
\node[normalhole] (R-h1) at (0,-0.6) {$[\cdot]_1$};
    
\draw[arr] (R-root) -- node[lbl, right] {\textsf{add}} (R-ext1);
    \draw[arr] (R-ext1) -- node[lbl, right] {\textsf{add}} (R-h1);
    
\node[font=\tiny\itshape, text=red!50!black, align=center] at (1.3,1.5) 
        {\textsf{sub} branch\\removed};
\end{scope}

\end{tikzpicture}
\caption{Local type context with normal and forgetful holes for $\stTopt[\roleQ]$ from the ring protocol.
\textbf{Left:} The context where $\roleQ$ selects a branch to send to $\roleR$.
The \textsf{add} branch leads to a normal hole $[\cdot]_1$ (tracked continuation),
while \textsf{sub} leads to a forgetful hole $\langle\cdot\rangle_2$ (to be forgotten).
\textbf{Right:} After applying $\overline{\mathbb{L}}$, only the \textsf{add} branch remains.}
\label{fig:holed-context}
\end{figure}

\begin{definition}[Forgetting Forgetful Holes Locally]
	Given a context $\mathbb{L}$, we define $\overline{\mathbb{L}}$ to be the context
	where we have removed all of the forgetful holes from $\mathbb{L}$ recursively:
	\begin{itemize}
		\item $\overline{\stIntSum{\roleQ}{i \in I_L}{\stChoice{\stLab[i]}{\tyGround[i]} \stSeq \forgetHole{\cdot}{l_i}} + 
	\stIntSum{\roleQ}{i \in I_R}{\stChoice{\stLab[i]}{\tyGround[i]} \stSeq \mathbb{L}_i}}=\stIntSum{\roleQ}{i \in I_R}{\stChoice{\stLab[i]}{\tyGround[i]} \stSeq \overline{\mathbb{L}_i}}$
	\item $\overline{\stExtSum{\roleQ}{i \in I}{\stChoice{\stLab[i]}{\tyGround[i]} \stSeq \mathbb{L}_i}}
	=
	\stExtSum{\roleQ}{i \in I}{\stChoice{\stLab[i]}{\tyGround[i]} \stSeq \overline{\mathbb{L}_i}}$
		\item $\overline{[\cdot]_i}=[\cdot]_i$
	\end{itemize}
\end{definition}

Our aim is to use local contexts to
determine redexes for the asynchronous subtyping,
to identify them in global contexts.
Therefore, it will be necessary to
link global contexts with local contexts in some way.
Taking inspiration from our types,
we provide a partial projection function.
Note that as en-route transmissions
induced by a global transition may occur
both
in front of some global redexes and
following some global redexes,
we must keep track of the queue up to
each hole, instead
of requiring uniformity of the queues.

\begin{definition}[Context Projection]
\label{def:context-proj}
     We define a partial function from global type contexts
     with indexed holes to
     queues indexed by the global holes and local type contexts,
     written $\gtProj{\mathbb{G}}{\roleR}=(\{\quH[i]\}_{i\in I},\mathbb{L})$
     with indexes being sets of indexes from the global context, 
     inductively as follows:
     \begin{itemize}
        \item $\gtProj{[\cdot]_i}{\roleR}=(\{i\mapsto \quEmpty\},[\cdot]_{\{i\}})$
	\item $\gtProj{\forgetHole{\cdot}{i}}{\roleR}=(\{i\mapsto \quEmpty\},\forgetHole{\cdot}{\{i\}})$
        \item $\gtProj
        {\gtComm{\roleP}{\roleR}{i \in I}{\stLab[i]}{\tyGround[i]}{\mathbb{G}_i}}
        {\roleR}=(\bigcup_{i\in I} (\gtProj{\mathbb{G}_i}{\roleR})_1,
        \stExtSum{\roleP}
        {i \in I}
        {\stChoice{\stLab[i]}
        {\tyGround[i]} \stSeq 
        (\gtProj{\mathbb{G}_i}{\roleR})_2
        })$
        \item $\gtProj
        {\gtCommSquig{\roleP}{\roleR}{\stLab}{}{\stLab}{\tyGround}{\mathbb{G}'}}
        {\roleR}=((\gtProj{\mathbb{G}'}{\roleR})_1,
        \stExtSum{\roleP}
        {}
        {\stChoice{\stLab}
        {\tyGround} \stSeq 
        (\gtProj{\mathbb{G}'}{\roleR})_2
        })$
        \item $\gtProj
        {\gtComm{\roleR}{\roleQ}{i \in I}{\stLab[i]}{\tyGround[i]}{\mathbb{G}_i}}
        {\roleR}=(\bigcup_{i\in I} (\gtProj{\mathbb{G}_i}{\roleR})_1,
        \stIntSum{\roleQ}
        {i \in I}{\stChoice{\stLab[i]}{\tyGround[i]} \stSeq
        (\gtProj{\mathbb{G}_i}{\roleR})_2
        })$
        \item $\gtProj
        {\gtCommSquig{\roleR}{\roleQ}{\stLab}{}{\stLab}{\tyGround}{\mathbb{G}'}}
        {\roleR}=(\{\quCons{
        \quMsg{\roleQ}{\stLab}{\tyGround}
        }{
        (\gtProj{\mathbb{G}'}{\roleR})_1(j)
        }\}_j,
        \stIntSum{\roleQ}
        {}{\stChoice{\stLab}{\tyGround} \stSeq
        (\gtProj{\mathbb{G}'}{\roleR})_2
        })$
         \item $\gtProj
        {\gtComm{\roleP}{\roleQ}{i \in I}{\stLab[i]}{\tyGround[i]}{\mathbb{G}_i}}
        {\roleR}=( \bigcup_{i\in I} (\gtProj{\mathbb{G}_i}{\roleR})_1,
        \stMerge{i\in I}{(\gtProj{\mathbb{G}_i}{\roleR})_2})$
        \item $\gtProj
        {\gtCommSquig{\roleP}{\roleQ}{\stLab}{}{\stLab}{\tyGround}{\mathbb{G}'}}
        {\roleR}=( (\gtProj{\mathbb{G}'}{\roleR})_1,
        (\gtProj{\mathbb{G}'}{\roleR})_2 )$
     \end{itemize}
     where we define the merge of contexts, where the hole indexes are sets of indexes, by
     \begin{itemize}
        \item $\stMerge{I\in \mathcal{I}}{[\cdot]_I}= [\cdot]_{\{i\suchthat i\in I\in\mathcal{I}\}}$
        \item $\stMerge{I\in \mathcal{I}}{\forgetHole{\cdot}{I}}= \forgetHole{\cdot}{\{i\suchthat i\in I\in\mathcal{I}\}}$
        \item $\stMerge{I\in \mathcal{I}}{
        \stExtSum{\roleP}{j \in J_I}{\stChoice{\stLab[j]}{\tyGround[j]} \stSeq \mathbb{L}_{I,j}}
        }
        =
        \stExtSum{\roleP}{j \in J}{\stChoice{\stLab[j]}{\tyGround[j]} \stSeq
        \stMerge{}{\{\mathbb{L}_{I,j}\suchthat j\in J_I\}}}$
        where $J=\bigcup_{I\in \mathcal{I}} J_I$ and $\{\stLab[j]\}_{j\in J}$ are distinct
        \item $\stMerge{I\in \mathcal{I}}{
        \stIntSum{\roleP}{j \in J}{\stChoice{\stLab[j]}{\tyGround[j]} \stSeq \mathbb{L}_{I,j}}
        }
        =
        \stIntSum{\roleP}{j \in J}{\stChoice{\stLab[j]}{\tyGround[j]} \stSeq 
        \stMerge{I\in \mathcal{I}}{\mathbb{L}_{I,j}}
        }$
     \end{itemize}
The definition of context projections and context merges,
follow the structure of type projections and type merges.
The exception being that, in the context case,
we will not determine the entire queue before reaching a hole
so we must remember the queue upto each hole.
\end{definition}

Now to make use of the context projection,
we observe that it indeed
corresponds to type projections.

\begin{lemma}[Context Merges correspond to Merges]
    \label{lem:contextmerge}
    Suppose that for $i\in I$, $\mathbb{L}_i$
    are local contexts where the hole indexes are
    disjoint sets.
    Say $\mathbb{L}_i$ has hole index set $J_i$ and
    forgetful hole index set $J'_i$.
    Suppose that $\mathbb{L}=\stMerge{i\in I}{\mathbb{L}_i}$,
    with hole indexes $J$ and forgetful hole indexes $J'$.
    Suppose that $\ttree{\stT}=\mathbb{L}
    [\ttree{\stT[j]}]_{j\in J}
    \forgetHole{\ttree{\stT[j]}}{j\in J'}$,
    for all $i\in I$
    $\ttree{\stTi[i]}=\mathbb{L}_i
    [\ttree{\stTi[i,j]}]_{j\in J_i}
    \forgetHole{\ttree{\stTi[i,j]}}{j\in J'_i}$, and
    for all $j\in J$
    $\stMerge{}{\{\stTi[i,j']\suchthat i\in I,j'\in J_i\cup J'_i,j'\subseteq j\}}\ni \stT[j]$.
    Then $\stMerge{i\in I}{\stTi[i]}\ni\stT$.
\end{lemma}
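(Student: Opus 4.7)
The plan is to prove $\stMergeRel{\{\stTi[i]\suchthat i\in I\}}{\stT}$ by coinduction, defining a candidate relation $\mathcal{R}$ consisting of all tuples $(\{\stTi[i]\suchthat i\in I\},\stT)$ that arise in the manner described by the hypotheses of the lemma, and then verifying that $\mathcal{R}$ satisfies the clauses defining $\stMergeRel{}{}$. By maximality of the full coinductive merge, this yields $\mathcal{R}\subseteq\stMergeRel{}{}$, which gives the result. To verify the clauses of $\stMergeRel{}{}$ at the top level, we proceed by induction on the finite structure of the merged context $\mathbb{L}$, peeling off one constructor at a time, until we reach a hole or forgetful hole, at which point the coinductive hypothesis applies.

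First, I would handle the base cases. If $\mathbb{L}=[\cdot]_{J^{*}}$, where $J^{*}=\{j\suchthat j\in J_i\cup J'_i,\, i\in I\}$, then $\stT$ is obtained by plugging $\stT[j]$ for $j\in J^{*}$ directly, and by the hypothesis $\stMerge{}{\{\stTi[i,j']\suchthat i\in I, j'\in J_i\cup J'_i, j'\subseteq j\}}\ni\stT[j]$, and each $\stTi[i]$ is a plugged hole, so I conclude by associativity of merge (\Cref{lem:good-merge-assoc}). The forgetful hole case is analogous. For the inductive step, consider the case $\mathbb{L}=\stExtSum{\roleP}{j\in J}{\stChoice{\stLab[j]}{\tyGround[j]}\stSeq \mathbb{L}'_j}$, which, by the definition of context merge, arises only when each $\mathbb{L}_i$ has the form $\stExtSum{\roleP}{j\in J_i}{\stChoice{\stLab[j]}{\tyGround[j]}\stSeq \mathbb{L}'_{i,j}}$ with $J=\bigcup_{i\in I}J_i$ and $\mathbb{L}'_j=\stMerge{}{\{\mathbb{L}'_{i,j}\suchthat i\in I,\, j\in J_i\}}$. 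I would then verify the clause $\mergeRuleIn$ by recursion on the context structure: the sub-contexts $\mathbb{L}'_{i,j}$ satisfy the hypotheses of the lemma (with appropriately restricted index sets), so the inductive step yields each required sub-merge. The internal choice case is handled symmetrically using $\mergeRuleOut$, noting that the index set $J$ is common to all the $\mathbb{L}_i$'s.

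The main obstacle will be the bookkeeping of indices across the three layers: the outer merge indices $i\in I$, the hole indices $J_i, J'_i$ within each $\mathbb{L}_i$, and the grouping relation $j'\subseteq j$ between inner indices of the $\mathbb{L}_i$'s and outer indices of $\mathbb{L}$. In particular, when descending into the children of an external choice, the partitioning $\{\mathbb{L}'_{i,j}\suchthat i\in I,\, j\in J_i\}$ does not cover the whole original index set (only those $i$ offering branch $\stLab[j]$), so I will need to invoke \Cref{lem:mer:setinclusion} to extract the sub-mergeability from mergeability of the whole, and carefully relate the restricted set of filled-in sub-types to those appearing in the hypothesis. A secondary technical point is that the hypothesis mentions $\stMerge{}{\{\stTi[i,j']\suchthat i\in I, j'\in J_i\cup J'_i, j'\subseteq j\}}\ni \stT[j]$ only for $j\in J$ (the hole indices of $\mathbb{L}$), so I must verify that the induced sub-hypotheses for each descendant context align with the corresponding subset of $J$; this is where \Cref{lem:good-merge-assoc} is indispensable for regrouping nested merges.
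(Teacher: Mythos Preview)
Your proposal is correct and matches the paper's approach: the paper's proof is the single line ``Induction on local contexts,'' and your plan unpacks exactly that induction. The coinductive wrapper you set up is harmless but unnecessary, since the contexts are finite and the hole-level merges are already assumed to lie in $\stMergeRel{}{}$; because $\stMergeRel{}{}$ is a fixed point (not merely a post-fixed point), the merge clauses can be applied backward one layer at a time along the finite context, bottoming out in the hole hypothesis, with no need for a candidate relation.
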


\begin{proof}
By induction on local contexts.
\end{proof}

\begin{lemma}[Context Projections correspond to Projections]
    Suppose that $\gtProj{\mathbb{G}}{\roleR}=(\{\quH[i]\}_{i\in I\cup J},\mathbb{L})$
    with index sets $I\cup J$ and $I'\cup J'$, respectively.
    Suppose that $\ttree{\gtG}=\mathbb{G}[\gtGi[i]]_{i\in I}\forgetHole{\gtGi[i]}{i\in J}$,
    $\ttree{\stT}=\mathbb{L}[\stTi[i]]_{i\in I'}\forgetHole{\stTi[i]}{i\in J'}$,
    $\gtGi[i]=\ttree{\gtG[i]}$ for $i\in I\cup J$,
    $\stTi[i]=\ttree{\stT[i]}$ for $i\in I'\cup J'$, and
    for $i\in I\cup J$ there is $\stTii[i]$ such that
    $\gtProjRel{\gtG_i}{\roleR}{(\quHi[i],\stTii[i])}$, and
    $\stMerge{i\in J}{\stTii[i]}\ni \stT[J]$ for $J\in I'$, and
    $\quH=\quCons{\quH[i]}{\quHi[i]}$ for $i\in I$.
    Then,
    $\gtProjRel{\gtG}{\roleR}{(\quH, \stT)}$.
\end{lemma}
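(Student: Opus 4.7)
The plan is to proceed by induction on the inductive structure of the global type context $\mathbb{G}$ (see \Cref{rem:inductive-contexts}: although the underlying types $\gtG,\stT$ may be infinite trees, the context $\mathbb{G}$ itself is finite). At each constructor of $\mathbb{G}$, we shall mirror the corresponding clause of context projection (\Cref{def:context-proj}) with the matching projection rule of \Cref{def:global-proj}. Since $\gtProjRel{\cdot}{\roleP}{\cdot}$ is a \emph{coinductive} relation, the premises $\gtProjRel{\gtG_i}{\roleR}{(\quHi[i],\stTii[i])}$ supplied at each hole index $i\in I\cup J$ serve as the coinductive hypotheses that we invoke once the inductive descent through $\mathbb{G}$ terminates at a hole.

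The base cases $\mathbb{G}=[\cdot]_i$ and $\mathbb{G}=\forgetHole{\cdot}{i}$ are immediate from the hypothesis, using that context projection places an empty queue at the hole, so that the queue decomposition $\quH=\quCons{\quH[i]}{\quHi[i]}=\quCons{\quEmpty}{\quHi[i]}=\quHi[i]$ collapses and $\stT$ coincides with $\stTii[i]$ (or a singleton merge thereof). The four ``$\roleR$ is directly involved'' cases---$\gtComm{\roleP}{\roleR}{}{}{}{}$, $\gtCommSquig{\roleP}{\roleR}{}{}{}{}{}$, $\gtComm{\roleR}{\roleQ}{}{}{}{}$, and $\gtCommSquig{\roleR}{\roleQ}{}{}{}{}{}$---are handled by the corresponding rules \RULE{P-\&}, \RULE{P-\&-II}, \RULE{P-$\oplus$}, and \RULE{P-$\oplus$-II}, respectively. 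In each of these cases, the local context built by $\gtProj{\cdot}{\roleR}$ has the exact shape demanded by the projection rule, so the induction hypothesis on the sub-contexts $\mathbb{G}_i$ (resp.\ $\mathbb{G}'$) supplies the required premise; in the en-route sender case, the additional message $\quMsg{\roleQ}{\stLab}{\tyGround}$ that context projection prepends to every hole-queue is precisely the queue prefix required by \RULE{P-$\oplus$-II}.

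The technical heart is the ``$\roleR$ is not involved'' communication case $\mathbb{G}=\gtComm{\roleP}{\roleQ}{i\in K}{\stLab[i]}{\tyGround[i]}{\mathbb{G}_i}$, which is projected by merging the context projections of the branches: $\gtProj{\mathbb{G}}{\roleR}=(\bigcup_i(\gtProj{\mathbb{G}_i}{\roleR})_1,\stMerge{i\in K}{(\gtProj{\mathbb{G}_i}{\roleR})_2})$. Here we must apply rule \RULE{P-$\stBinMerge$}. The branch projections $\gtProjRel{\gtG_i}{\roleR}{(\quH,\stTiii[i])}$ (for suitable continuations $\stTiii[i]$) are obtained from the induction hypothesis applied to each $\mathbb{G}_i$, and \Cref{lem:contextmerge} is then invoked to show that the merge of the instantiated branch contexts equals the instantiation of the merged context. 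This identifies the $\stTiii[i]$ with types whose full merge is $\stT$, as required by \RULE{P-$\stBinMerge$}. The corresponding en-route case $\gtCommSquig{\roleP}{\roleQ}{\stLab}{}{\stLab}{\tyGround}{\mathbb{G}'}$ (with neither $\roleP$ nor $\roleQ$ equal to $\roleR$) is strictly easier, as no merging is triggered and \RULE{P-$\stBinMerge$-II} applies directly to the induction hypothesis on $\mathbb{G}'$.

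The anticipated main obstacle is purely book-keeping: aligning the two levels of indexing---global indices $i\in I\cup J$ marking holes of $\mathbb{G}$, versus the set-valued local indices $J\in I'\cup J'$ marking holes of $\mathbb{L}$ produced by context merging---and checking that in the en-route sender clause the queue prefix $\quMsg{\roleQ}{\stLab}{\tyGround}$ is propagated consistently to every hole reachable beneath, so that the global decomposition $\quH=\quCons{\quH[i]}{\quHi[i]}$ agrees with the local one. Once the indexing is set up so that \Cref{lem:contextmerge} can be invoked at each merging node, each inductive step reduces to the selection of the appropriate projection rule, and the coinductive conclusion $\gtProjRel{\gtG}{\roleR}{(\quH,\stT)}$ is built by composing these rule applications with the hypotheses at the leaves.
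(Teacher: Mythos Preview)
Your proposal is correct and follows essentially the same approach as the paper, which simply states ``We proceed by induction on global contexts, using~\Cref{lem:contextmerge}.'' You have faithfully expanded this one-line proof into the full case analysis, correctly identifying that the merge case is where \Cref{lem:contextmerge} is needed and that the remaining cases align each context constructor with its corresponding projection rule.
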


\begin{proof}
    We proceed by induction on global contexts, using~\cref{lem:contextmerge}
\end{proof}

With all of our tools in place,
we can now see how local contexts
and global contexts
can be used to capture
redexes up to
asynchronous subtyping.
Asynchronous subtyping allows redexes to move,
thus the top-level redex of an asynchronous subtype
appears somewhere in its supertype.

\begin{lemma}[Inversion of subtyping]
    \label{lem:inv-subtyping}
    \ \\
    \begin{itemize}
    \item[(1)] If $\stIntSum{\roleP}{i \in I}{\stChoice{\stLab[i]}{\tyGround[i]} \stSeq \stT[i]} \asubt \stT$,
    then  $\ttree{\stT} = \trCtxSelp[\stIntSum{\roleP}{i \in I_j}{\stChoice{\stLab[i]}{\tyGround[i,j]} \stSeq \trTi[i,j]}]_{j\in J}\forgetHole{\trTii[j]}{j\in J'}$
    with $I \subseteq I_j$ for all $j\in J$ and for all $i \in I$ and $j\in J$,
    we have $\tyGround[i] \tyGroundSub \tyGround[i,j]$,
    and if $\ttree{\stTi[i]}=\trCtxSelp[\trTi[i,j]]_{j\in J}\forgetHole{\trTii[j]}{j\in J'}$
    then $\stT[i] \asubt \stTi[i] $.
    \item[(2)] If $\stExtSum{\roleP}{i \in I}{\stChoice{\stLab[i]}{\tyGround[i]} \stSeq \stT[i]} \asubt \stT$,
    then   $\ttree{\stT} = \trCtxBrap[\stExtSum{\roleP}{i \in I_j}{\stChoice{\stLab[i]}{\tyGround[i,j]} \stSeq \trTi[i,j]}]_{j\in J}$
    with $I_j \subseteq I$ for all $j\in J$,
    $I=\bigcup_{j\in J}I$,
    for all $j\in J$ and $i\in I$ $\tyGround[i,j] \tyGroundSub \tyGround[i]$, and
    if $\ttree{\stTi[i]}=\trCtxBrap[\trTi[i,j]]_{j\in J}$ for some $i\in\bigcap_{j\in J}I_j$ then
    $\stT[i] \asubt \stTi[i]$.
    \end{itemize}
\end{lemma}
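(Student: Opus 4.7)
The plan is to prove both parts by exploiting the SISO decomposition characterisation of $\asubt$ (Definition~\ref{def:subtyping}) and carefully inverting the refinement rules for $\subttt$. Part (2) is dual to part (1): it uses \RULE{Ref-$\mathcal{A}$} in place of \RULE{Ref-$\mathcal{B}$} and swaps the roles of $\singleOut{\cdot}$ and $\singleIn{\cdot}$. Since \RULE{Ref-$\mathcal{A}$} only allows reordering past receives from participants other than $\roleP$ (and does not allow forgetting alternative sends), this cleanly explains both why part (2) uses the simpler context $\trCtxBrap$ with no forgetful holes and why the combined label constraint $I = \bigcup_{j\in J} I_j$ appears, rather than $I \subseteq I_j$.

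For part (1), set $\stTi = \stIntSum{\roleP}{i \in I}{\stChoice{\stLab[i]}{\tyGround[i]} \stSeq \stT[i]}$ and suppose $\stTi \asubt \stT$. First, I observe that $\singleOut{\ttree{\stTi}}$ consists exactly of the trees $\stSendOne{\roleP}{\stLab[i]}{\tyGround[i]}\stSeq\stU[i]$ for $i \in I$ and $\stU[i] \in \singleOut{\ttree{\stT[i]}}$. Fixing one such $\stU$, for every $\stVi \in \singleIn{\ttree{\stT}}$ the subtyping provides $\stW \in \singleIn{\stU}$ and $\stWi \in \singleOut{\stVi}$ with $\stW \subttt \stWi$; here $\stW$ begins with $\roleP!\stLab[i]$. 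Repeated inversion of $\subttt$ via \RULE{Ref-$\mathcal{B}$} then forces $\stWi$ to have the form $\Bp_1\stSeq\stSendOne{\roleP}{\stLab[i]}{\tyGroundi}\stSeq\stWii$, where $\Bp_1$ is a prefix of actions composed of sends to participants $\ne \roleP$ and receives from any participant, and where \RULE{Ref-Out} yields $\tyGround[i] \tyGroundSub \tyGroundi$.

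Second, I lift this path-wise information to a single context $\trCtxSelp$ by coinduction on $\ttree{\stT}$. The head of $\ttree{\stT}$ is classified: receives become $\stExtSum{\roleR}{}{\ldots}$ constructors of $\trCtxSelp$; internal choices to some $\roleQ \ne \roleP$ become the split construct of $\trCtxSelp$, where branches whose label is absent from some witnessing $\stWi$ (hence unmatchable by the subtype) are placed in forgetful holes, and the rest descend into further context; internal choices to $\roleP$ terminate a branch of the context with a hole, and the label set at that position must contain every $\stLab[i]$ (whence $I \subseteq I_j$), since otherwise some choice of $i$ would admit no matching $\stWi$. Every infinite path through $\ttree{\stT}$ via such constructs would preclude the required $\roleP!\stLab[i]$ from ever appearing; hence the coinductive construction terminates at each hole position.

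Finally, the continuation statement follows by restriction: given a fixed $i \in I$, each refinement witnessing $\stT[i]$'s SO trees against the SI trees of $\stTi[i]$ is obtained from the original witness after consuming its $\Bp$ prefix and the matched $\roleP!\stLab[i]$; reassembling via Definition~\ref{def:subtyping} yields $\stT[i] \asubt \stTi[i]$. The main obstacle is the coinductive passage from path-wise information (``each $\stVi$ supports such a refinement'') to a single tree-level context $\trCtxSelp$: the choice of which branches of inner internal choices become forgetful must be made consistently across \emph{all} witnessing SI trees of $\stT$. This uses the fact that $\Bp$-sequences are sensitive only to the sender identity, so the same split between tracked and forgotten branches works uniformly regardless of which $i \in I$ is tested.
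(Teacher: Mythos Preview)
Your proposal is correct and follows essentially the same approach as the paper: unfold the SISO decomposition of $\asubt$, observe that every $\stU\in\singleOut{\ttree{\stS}}$ is headed by some $\roleP!\stLab[i]$, invert the refinement rules \RULE{Ref-Out}/\RULE{Ref-$\mathcal{B}$} against each $\stWi\in\singleOut{\stVi}$, and then peel constructors off the head of $\ttree{\stT}$ to build $\trCtxSelp$, placing unmatched internal-choice branches into forgetful holes. The paper also treats item~(2) as dual.

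One terminological point: you call the context construction ``coinduction on $\ttree{\stT}$'' and then argue it terminates. Since $\trCtxSelp$ is \emph{inductively} defined (cf.\ \cref{rem:inductive-contexts}), this should be framed as an induction. The paper makes the well-founded measure explicit: it inducts on the length of the finite prefix $\Bp$ supplied by \RULE{Ref-$\mathcal{B}$}, stripping one action $a$ from $\Bp$ at each step and correspondingly one head constructor from $\ttree{\stT}$. Your ``no infinite path can avoid $\roleP!\stLab[i]$'' argument amounts to the same thing, but naming the decreasing measure (length of $\Bp$) makes the induction cleaner and removes any appearance of circularity.
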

\begin{proof}
  We only show item~(1) (as item~(2) is dual).
  Write $\stS=\stIntSum{\roleP}{i \in I}{\stChoice{\stLab[i]}{\tyGround[i]} \stSeq \stT[i]}$ and assume $\stS \asubt \stT$.

  By \cref{def:subtyping}, for every $\stU\in\singleOut{\ttree{\stS}}$ and every $\stVi\in\singleIn{\ttree{\stT}}$ there exist $\stW\in\singleIn{\stU}$ and $\stWi\in\singleOut{\stVi}$ with $\stW\subttt\stWi$.
  The clause for $\singleOut{\cdot}$ on $\stS$ yields, for each $i\in I$, an element
  \[
    \stU[i]=\stSend{\roleP}{\stLab[i]}{\tyGround[i]}{\stUi[i]}
    \quad\text{with}\quad
    \stUi[i]\in\singleOut{\ttree{\stT[i]}}.
  \]
  Fix $i\in I$ and an arbitrary $\stVi\in\singleIn{\ttree{\stT}}$. The subtyping premise gives SISO trees $\stW_i\in\singleIn{\stU_i}$ and $\stWi_i\in\singleOut{\stVi}$ with $\stW_i\subttt\stWi_i$.
  Unfolding $\singleIn{\cdot}$ on $\stU[i]$ forces $\stW[i]=\stSend{\roleP}{\stLab[i]}{\tyGround[i]}{\stWi[i]}$ for some $\stWi[i]\in\singleIn{\stUi[i]}$. 
  Invert the last rule of the refinement derivation of $\stW[i] \subttt \stWi[i]$.
  Because $\stW[i]$ is headed by an output, the final rule is either \RULE{Ref-Out} or \RULE{Ref-$\mathcal{B}$}.
  \begin{itemize}
    \item \RULE{Ref-Out}: $\stWi[i] = \stSend{\roleP}{\stLab[i]}{\tyGround[i,j]}{\stWi'_i}$ with $\tyGround[i]\tyGroundSub\tyGround[i,j]$ and premise $\stWi[i] \subttt \stWii[i]$. 
    Then we must have that $\stT = \stIntSum{\roleP}{i \in I'}{\stChoice{\stLab[i]}{\tyGround[i]} \stSeq \stT[i]}$ with $I \subseteq I'$.
    \item \RULE{Ref-$\mathcal{B}$}: $\stWi[i] = \Bp ; \stSend{\roleP}{\stLab[i]}{\tyGround[i,j]}{\stWi'_i}$ for some non-empty finite $\Bp$, with $\tyGround[i]\tyGroundSub\tyGround[i,j]$, $\act{\stWi[i]}=\act{\Bp ; \stWii[i]}$, and premise $\stWi[i] \subttt \Bp ; \stWii[i]$.
  \end{itemize}
  For \RULE{Ref-$\mathcal{B}$}, write $\Bp=a;\Bp_{\mathrm{tail}}$ with $a$ the first action. We reason by case analysis on $a$, using that $\stWi[i]\in\singleOut{\stVi}$ and the definitions of $\singleOut{\cdot}$ (where $\stSendOne{\roleR}{\stLab}{\tyGround}$ and $\stRecvOne{\roleR}{\stLab}{\tyGround}$ are the single-branch forms of internal and external choice, respectively):
  \begin{itemize}
    \item If $a=\stRecvOne{\roleR}{\stLab'}{\tyGround'}$, then $\singleOut{\stVi}$ can contain such a prefix only if $\stT=\stIntSum{\roleR}{k\in K}{\stChoice{\stLab[k]}{\tyGround[k]} \stSeq \stVi[k]}$ with some $k$ satisfying $\stLab[k]=\stLab'$ and $\tyGround'\tyGroundSub\tyGround[k]$; moreover $\stWi[i]\in\singleOut{\stVi[k]}$.
    \item If $a=\stSendOne{\roleQ}{\stLab'}{\tyGround'}$, with $\roleP \ne \roleQ$, and then $\singleOut{\stVi}$ can contain such a prefix only if $\stT=\stExtSum{\roleQ}{k\in K}{\stChoice{\stLab[k]}{\tyGround[k]} \stSeq \stVi[k]}$ with some $k$ satisfying $\stLab[k]=\stLab'$ and $\tyGround[k]\tyGroundSub\tyGround'$, and again $\stWi[i]\in\singleOut{\stVi[k]}$.
  \end{itemize}
  In either subcase we strip the head constructor of $\ttree{\stT}$, using forgetful holes to cover any branches not in $I$, 
  obtaining a continuation, yielding a shorter prefix $\Bp_{\mathrm{tail}}$. The induction hypothesis on $\Bp_{\mathrm{tail}}$ yields a decomposition in which every branch reached after $\Bp_{\mathrm{tail}}$ still contains an internal choice by $\roleP$ whose branch-set includes all labels in $I$ with payloads extending the corresponding $\tyGround[i]$. 
  Re-wrapping the stripped constructor $a$ preserves this property for the continuation, showing that along every branch of $\ttree{\stT}$ compatible with $\Bp$ there is an occurrence of $\stIntSum{\roleP}{i\in I_j}{\stChoice{\stLab[i]}{\tyGround[i,j]}\stSeq\trTi[i,j]}$ with $I\subseteq I_j$ and $\tyGround[i]\tyGroundSub\tyGround[i,j]$.
  So the overall shape of $\ttree{\stT}$ is $\trCtxSelp[\stIntSum{\roleP}{i \in I_j}{\stChoice{\stLab[i]}{\tyGround[i,j]} \stSeq \trTi[i,j]}]_{j\in J}\forgetHole{\trTii[j]}{j\in J'}$ as claimed.
\end{proof}

Recall the context rules
\inferrule{\iruleGtMoveCtx},
\inferrule{\iruleGtMoveCtx'} and
\inferrule{\iruleGtMoveCtxII} from
\cref{def:gtype:lts-gt};
these exactly say that
sends occur under $\trCtxGSelpq$ contexts
and receives occur under $\trCtxGBraqp$ contexts.

\begin{lemma}[Global Type Tree Context Transitions]
    \label{lem:global-ctx-red}
    \ \\
    \begin{enumerate}
        \item Let $\trCtxGSelpq$ be a context with index set $J$ and
        forgetful index set $J'$.
        Let $k\in \bigcap_{j\in J} I_j$.
        We have that
        $\trCtxGSelpq[\gtComm{\roleP}{\roleQ}{i \in I_j}{\stLab[i]}{\tyGround[i]}{\gtG[i,j]}]_{j\in J}\forgetHole{\gtGi[j]}{j\in J'}
        \gtMove[\ltsSel{\roleP}{\roleQ}{\gtLab[k](\tyGround[k])}]
        \overline{\trCtxGSelpq}[{\gtG[k,j]}]_{j\in J}$.
        \item Let $\trCtxGBraqp$ be a context with index set $J$.
        We have that
        $\trCtxGBraqp[\gtCommSquig{\roleQ}{\roleP}{\stLab}{}{\stLab}{\tyGround}{\gtG[j]}]_{j\in J}
        \gtMove[\ltsBra{\roleP}{\roleQ}{\gtLab(\tyGround)}]
        \trCtxGSelpq[{\gtG[j]}]_{j\in J}$
    \end{enumerate}
\end{lemma}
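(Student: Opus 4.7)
The plan is to prove both parts by structural induction on the global type context — well-founded by~\cref{rem:inductive-contexts}, since $\trCtxGSelpq$ and $\trCtxGBraqp$ are inductively generated even though the subtrees plugged into their holes may be infinite. The induction will shadow the clauses of~\cref{def:global-ctx} exactly, with one base case per hole form and one inductive step per constructor.

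For part~1, in the base case $\trCtxGSelpq=[\cdot]_j$, plugging in gives $\gtComm{\roleP}{\roleQ}{i \in I_j}{\stLab[i]}{\tyGround[i]}{\gtG[i,j]}$. Since $k\in\bigcap_{j\in J}I_j$ we have $k\in I_j$, so \iruleGtMoveSel fires on this redex with label $\ltsSel{\roleP}{\roleQ}{\gtLab[k](\tyGround[k])}$, delivering the $k$-th continuation $\gtG[k,j]$ as required. For the inductive step I would case-split on the outermost constructor: (i) if it is $\gtComm{\roleT}{\roleTi}{\ldots}{\trCtxGSelpq}$ with $(\roleT,\roleTi)\neq(\roleP,\roleQ)$, apply the IH at each continuation and close with \iruleGtMoveCtx, using $\ltsSubject{\ltsSel{\roleP}{\roleQ}{\cdots}}=\roleP\neq\roleT$ to discharge the side condition; (ii) if it is $\gtCommSquig{\roleR}{\roleS}{\ldots}{\trCtxGSelpq}$, apply the IH and \iruleGtMoveCtxII, whose side condition is automatic because our action is a select, never a $\ltsBra$; (iii) if it is the forgetful form $\gtComm{\roleP}{\roleTi}{i\in I_L}{\ldots}{\forgetHole{\cdot}{l_i}}+\gtComm{\roleP}{\roleTi}{i\in I_R}{\ldots}{\trCtxGSelpq}$ (parsed as the single transmission $\gtComm{\roleP}{\roleTi}{i\in I_L\cup I_R}{\ldots}$, with $\roleTi\neq\roleQ$ since $\roleP$ appears as sender here but the outer redex targets $\roleQ$), apply the IH on each $I_R$-branch and close with \iruleGtMoveCtx', selecting $J=I_R\subseteq I_L\cup I_R$. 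By construction this prunes precisely the $I_L$-branches that are forgetful holes, and therefore matches the pruning performed by $\overline{\trCtxGSelpq}$.

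For part~2, the argument is shorter because $\trCtxGBraqp$ has no forgetful holes. The base case $[\cdot]_j$ is handled by a single application of \iruleGtMoveBra on the en-route redex $\gtCommSquig{\roleQ}{\roleP}{\stLab}{}{\stLab}{\tyGround}{\gtG[j]}$ producing $\gtG[j]$. For the inductive step, when the outer constructor is $\gtComm{\roleTii}{\roleR}{\ldots}{\trCtxGBraqp}$, the constraint $\roleTii\neq\roleP$ built into~\cref{def:global-ctx} gives the subject side condition of \iruleGtMoveCtx; when it is $\gtCommSquig{\roleT}{\roleTii}{\ldots}{\trCtxGBraqp}$, the same constraint $\roleTii\neq\roleP$ ensures our label $\ltsBra{\roleP}{\roleQ}{\gtLab(\tyGround)}$ is not of the blocked form $\ltsBra{\roleTii}{\roleT}{\cdots}$, so \iruleGtMoveCtxII applies.

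The main obstacle is the bookkeeping in the forgetful case of part~1. One must align three indexings simultaneously: the surviving normal holes of $\trCtxGSelpq$ (set $J$), the forgetful holes (set $J'$), and the per-redex branch sets $I_j$ at each hole. The condition $k\in\bigcap_{j\in J}I_j$ is exactly what allows \iruleGtMoveSel to fire uniformly at every surviving hole with the same label, while \iruleGtMoveCtx' licenses the pruning of $I_L$-branches in each local forgetful node so that the resulting context is literally $\overline{\trCtxGSelpq}$. Verifying that the recursive definition of $\overline{\cdot}$ commutes with the inductive step — so that the IH applied to the $I_R$-subcontexts yields exactly $\overline{\trCtxGSelpq}$ at the outer level — is the delicate but routine piece of the argument.
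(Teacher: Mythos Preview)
Your proposal is correct and follows exactly the same approach as the paper: structural induction on the context, with the base cases discharged by \inferrule{\iruleGtMoveSel} and \inferrule{\iruleGtMoveBra} respectively, and the inductive steps by \inferrule{\iruleGtMoveCtx}, \inferrule{\iruleGtMoveCtx'}, and \inferrule{\iruleGtMoveCtxII} according to the outer constructor. You in fact give more detail than the paper on how each rule's side condition is met, which is helpful; the paper's proof merely lists the rule name for each case.
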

\begin{proof}
  The proof in each case is by induction on the structure of the context. 
  \begin{enumerate}
    \item By induction on the structure of $\trCtxGSelpq$. 
  \begin{itemize}
    \item \textbf{Case $\trCtxGSelpq = [\cdot]$:} Immediate from \RULE{GR-$\oplus$}. 
    \item \textbf{Case $\trCtxGSelpq = \gtComm{\roleFmt{a}}{\roleFmt{b}}{i \in I}{\stLabi[i]}{\tyGroundi[i]}{\trCtxGSelpq} $:} Applying \RULE{GR-Ctx-I} to the induction hypothesis. 
\item \textbf{Case $\trCtxGSelpq = \gtComm{\roleP}{\roleFmt{b}}{i \in I_L}{\stLab[i]}{\tyGround[i]}{\forgetHole{\cdot}{l_i}}+
	\gtComm{\roleP}{\roleFmt{b}}{i \in I_R}{\stLab[i]}{\tyGround[i]}{\trCtxGSelpq} $:}Applying \RULE{GR-Ctx-I'} to the induction hypothesis. 
    \item \textbf{Case $\trCtxGSelpq = \gtCommSquig{\roleR}{\roleS}{\gtLab[k]}{i \in I}{\stLabi[i]}{\tyGroundi[i]}{\trCtxGSelpq} $:} Applying \RULE{GR-Ctx-II} to the induction hypothesis. 
  \end{itemize}
  \item By induction on the structure of $\trCtxGBraqp$. 
  \begin{itemize}
    \item \textbf{Case $\trCtxGBraqp = [\cdot]$:} Immediate from \RULE{GR-$\&$}. 
    \item \textbf{Case $\trCtxGBraqp = \gtComm{\roleFmt{c}}{\roleR}{i \in I}{\stLabi[i]}{\tyGroundi[i]}{\trCtxGSelpq} $:} Applying \RULE{GR-Ctx-I} to the induction hypothesis. 
    \item \textbf{Case $\trCtxGBraqp = \gtCommSquig{\roleFmt{a}}{\roleFmt{b}}{\gtLabi[k]}{i \in I}{\stLabi[i]}{\tyGroundi[i]}{\trCtxGBraqp} $:} Applying \RULE{GR-Ctx-II} to the induction hypothesis. 
  \end{itemize}
\end{enumerate}
\end{proof}

\subsection{Completeness and Soundness}
\label{sec:comp_sound}
In~\cref{sec:async-sub-sem},
we related local asynchronous semantics
to global semantics,
in a manner that can
be reflected by our projections,
via type contexts,
and in~\cref{sec:bal},
we determined properties
that being $\text{balanced}^+$ ensures.
We can now combine these results;
we may now proceed by induction on
the depth functions provided by $\text{balanced}^+$,
to invert projections \ie
produce global contexts
from local contexts.
\\[2ex]
We begin by observing that
the semantics of a typing context
informs us of
the structure of its local types
and queues.

\begin{lemma}[Inversion of Context Semantics]
  \label{lem:inv-ctx-lts}
  Suppose that $\stEnv\,\stEnvMoveGenAnnot\,\stEnvi$.
  \begin{enumerate}
    \item If $\stEnvAnnotGenericSym = \ltsSel{\roleP}{\roleQ}{\stLab(\tyGround)}$
    then $\stEnvApp{\stEnv}{\roleP} = (\quH[\roleP], \stT[\roleP])$,
    with $\stT[\roleP] = \stIntSum{\roleQ}{i\in I}{\stChoice{\stLab[i]}{\tyGround[i]}\stSeq\stT[i]}$ 
    where $\stLab[j]=\stLab,\,\tyGround[j]=\tyGround$ for some $j \in I$. 
    And $\stEnvApp{\stEnvi}{\roleP} = (\quCons{\quH[\roleP]}{\quMsg{\roleQ}{\stLab}{\tyGround}}, \stT[j])$ with $\stEnvApp{\stEnvi}{\roleR} = \stEnvApp{\stEnv}{\roleR}$ 
    for all $\roleR \in \dom{\stEnv}$ with $\roleR \ne \roleP$.
    \item If $\stEnvAnnotGenericSym = \ltsBra{\roleP}{\roleQ}{\stLab(\tyGround)}$
    then $\stEnvApp{\stEnv}{\roleP} = (\quH[\roleP], \stT[\roleP])$,
    with $\stT[\roleP] = \stExtSum{\roleQ}{i\in I}{\stChoice{\stLab[i]}{\tyGround[i]}\stSeq\stT[i]}$ 
    where $\stLab[j]=\stLab,\,\tyGround[j]=\tyGround$ for some $j \in I$. 
    And $\stEnvApp{\stEnv}{\roleQ} = (\quCons{\quH[\roleQ]}{\quMsg{\roleP}{\stLab}{\tyGroundi}}, \stT[\roleQ])$
    with $\tyGroundi \tyGroundSub \tyGround$. 
    Finally, $\stEnvApp{\stEnvi}{\roleP} = (\quH[\roleP], \stT[j])$ and 
    $\stEnvApp{\stEnvi}{\roleQ} = (\quH[\roleQ], \stT[\roleQ])$ 
    with $\stEnvApp{\stEnvi}{\roleR} = \stEnvApp{\stEnv}{\roleR}$ 
    for all $\roleR \in \dom{\stEnv}$ with $\roleR \notin \{\roleP, \roleQ\}$.
  \end{enumerate}
  \end{lemma}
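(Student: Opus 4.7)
The plan is to proceed by straightforward induction on the derivation of $\stEnv \stEnvMoveGenAnnot \stEnvi$ using the four context transition rules from \cref{fig:gtype:tc-red-rules}, namely \iruleTCtxOut, \iruleTCtxIn, \iruleTCtxRec, and \iruleTCtxCong. Cases (1) and (2) of the lemma are handled in parallel, since each transition rule produces either a $\ltsSel{\cdot}{\cdot}{\cdot}$ or a $\ltsBra{\cdot}{\cdot}{\cdot}$ label, and in both base cases the rule's premises directly witness the conclusion.

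First I would dispatch the two base cases. If the last rule is \iruleTCtxOut, then $\stEnv = \stEnvMap{\roleP}{(\quH,\stIntSum{\roleQ}{i\in I}{\stChoice{\stLab[i]}{\tyGround[i]}\stSeq \stT[i]})}$, and the rule itself gives $\stEnvAnnotGenericSym = \ltsSel{\roleP}{\roleQ}{\stChoice{\stLab[k]}{\tyGround[k]}}$ for some $k \in I$, together with the updated mapping $\stEnvApp{\stEnvi}{\roleP} = (\quCons{\quH}{\quMsg{\roleQ}{\stLab[k]}{\tyGround[k]}},\stT[k])$; taking $j = k$ gives (1). If the last rule is \iruleTCtxIn, both the external-choice shape at $\roleP$, the queue head $\quCons{\quMsg{\roleP}{\stLab[k]}{\tyGroundi}}{\quHi}$ at $\roleQ$, the side condition $\tyGroundi \tyGroundSub \tyGround[k]$, and the update of both entries are all recorded in the rule, yielding (2).

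Next I would handle the two inductive cases. For \iruleTCtxRec, the hypothesis is $\stEnvMap{\roleP}{(\quH,\stT\subst{\stRecVar}{\stRec{\stRecVar}{\stT}})} \stEnvMoveGenAnnot \stEnvi$; the IH gives the required structural description of $\stEnv\subst{\stRecVar}{\stRec{\stRecVar}{\stT}}$'s mapping at $\roleP$, and since we are working up-to unfolding on local types (and the statement only constrains $\stT[\roleP]$ without mentioning whether it is a $\mu$-binder), the same description transfers to the original $\stEnvMap{\roleP}{(\quH,\stRec{\stRecVar}{\stT})}$. For \iruleTCtxCong, disjointness of domains in the context composition $\stEnv \stEnvComp \stEnvMap{\mpC}{(\quH,\stT)}$ (\cref{def:mpst-env-comp}) ensures that the participants named by the label $\stEnvAnnotGenericSym$ lie entirely in the sub-context to which the IH applies, so the IH's conclusions about $\stEnvApp{\cdot}{\roleP}$ and, in case (2), $\stEnvApp{\cdot}{\roleQ}$, carry over verbatim, and the remaining entry $\mpC$ is left unchanged by the rule.

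There is no real obstacle here: the rules in \cref{fig:gtype:tc-red-rules} were designed so that each generated label records exactly the structural information claimed by the lemma, and the only subtle point is being careful to unfold through \iruleTCtxRec and to use the disjointness condition from \cref{def:mpst-env-comp} in the \iruleTCtxCong case so that the untouched component does not interfere with the participants mentioned in $\stEnvAnnotGenericSym$. The subtyping side condition $\tyGroundi \tyGroundSub \tyGround[k]$ in case (2) is simply inherited from the premise of \iruleTCtxIn and requires no further reasoning.
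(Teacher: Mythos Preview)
Your proposal is correct and takes exactly the same approach as the paper, which states only ``By rule-induction on $\stEnv \stEnvMoveGenAnnot \stEnvi$ (see \Cref{def:mpst-env-reduction}).'' You have simply spelled out the case analysis that the paper leaves implicit, and your treatment of the \inferrule{\iruleTCtxRec} and \inferrule{\iruleTCtxCong} cases is accurate.
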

  \begin{proof}
    By rule-induction on
    $\stEnv \stEnvMoveGenAnnot \stEnvi$
    (see \Cref{def:mpst-env-reduction}).
  \end{proof}

As alluded to above,
we can proceed by induction on our depth functions
to match local sends and receives in projected types
to communications in global types.

\begin{lemma}[Head Inversion of Projection]
\label{lem:head-inv-proj}
	Assume $\gtProjRel{\gtG}{\roleP}{(\quH, \stT)}$.
	\begin{enumerate}
		\item If $\stT=\stIntSum{\roleQ}{i \in I}{\stLab[i]:\stT[i]}$, then
		$\ttree{\gtG}=\mathbb{G}[
		\gtComm{\roleP}{\roleQ}
	        {i \in I}
	        {\stLab[i]}
	        {\tyGround[i]}
	        {\gtG[i,j]}
		]_{j\in J}$ where
		$\gtProjRel{\gtGi[i,j]}{\roleP}{(\quH[j],\stT[i,j])}$,
		$\stMerge{}{\{\stT[i,j]\suchthat j\in J\}}\ni \stT[i]$,
		$\ttree{\gtGi[i,j]}=\gtG[i,j]$,
		$\gtProj{\mathbb{G}}{\roleP}=(\{\quHi[j]\}_{j\in J}, [\cdot]_J)$,
		for $j\in J$, $\quH=\quCons{\quHi[j]}{\quH[j]}$, and
		$\mathbb{G}$ does not contain $\roleP$.
		\item If $\stT=\stExtSum{\roleQ}{i \in I}{\stLab[i]:\stT[i]}$ and
		$\gtProjRel{\gtG}{\roleQ}{(\quCons{\quMsg{\roleP}{\stLab}{\tyGround}}{\quHii}, \stTii)}$, then
		$\ttree{\gtG}=\mathbb{G}[
		\gtCommSquig{\roleQ}{\roleP}{\stLab}
	        {}
	        {\stLab}
	        {\tyGround}
	        {\gtG[j]}
		]_{j\in J}$ and
		$\stT=\stExtSum{\roleQ}{}{\stLab:\stTi}$
		where
		$\gtProjRel{\gtGi[j]}{\roleP}{(\quH[j],\stT[j])}$ for all $j\in J$,
		$\stMerge{}{\{\stT[j]\suchthat j\in J\}}\ni \stTi$,
		$\ttree{\gtGi[j]}=\gtG[j]$,
		$\gtProj{\mathbb{G}}{\roleP}=(\{\quHi[j]\}_{j\in J}, [\cdot]_J)$
		for $j\in J$, $\quH=\quCons{\quHi[j]}{\quH[j]}$, and
		$\mathbb{G}$ does not contain $\roleP$ except as the sender of an en-route transmission.
		\item If $\stT=\stExtSum{\roleQ}{i \in I}{\stLab[i]:\stT[i]}$ and
		$\gtProjRel{\gtG}{\roleQ}{(\quH[\roleQ], \stTii)}$ with no element with destination $\roleP$, then\\
		$\ttree{\gtG}=\mathbb{G}[
		\gtComm{\roleQ}{\roleP}
	        {i \in I_j}
	        {\stLab[i]}
	        {\tyGround[i]}
	        {\gtG[i,j]}
		]_{j\in J}$ where
		$I_j \subseteq I$,
		$I=\bigcup_{j\in J}I_j$,
		$\gtProjRel{\gtGi[i,j]}{\roleP}{(\quH[j],\stT[i,j])}$ for all $j\in J$ and $i\in I_j$,
		$\stMerge{}{\{\stT[i,j]\suchthat i\in I_j\}}\ni \stT[i]$ for all $i\in I$,
		$\ttree{\gtGi[i,j]}=\gtG[i,j]$,
		$\gtProj{\mathbb{G}}{\roleP}=(\{\quHi[j]\}_{j\in J}, [\cdot]_J)$
		for $j\in J$, $\quH=\quCons{\quHi[j]}{\quH[j]}$, and
		$\mathbb{G}$ does not contain $\roleP$.
	\end{enumerate}
\end{lemma}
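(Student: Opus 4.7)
The plan is to prove each of the three parts by induction on a well-founded measure supplied by the $\text{balanced}^+$ hypothesis, followed by inversion of the projection derivation (\cref{def:global-proj}) at the top level. The base case in each part corresponds to the projection rule that introduces the head shape of $\stT$ (and the head of $\roleQ$'s queue for part (2)): \RULE{P-$\oplus$} for (1), \RULE{P-$\&$-II} for (2), and \RULE{P-$\&$} for (3). In every case I first reduce to a single top rule of the coinductive projection derivation, since the rules yielding external choice or $\stEnd$ at the head are immediately ruled out by the hypothesis on $\stT$.

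For part (1), $\stT$ has an internal choice head at $\roleQ$, which can only be produced by one of \RULE{P-$\oplus$}, \RULE{P-$\oplus$-II}, \RULE{P-$\stBinMerge$}, or \RULE{P-$\stBinMerge$-II}; the first is the base case and the other three recurse into continuations. I would induct lexicographically on the pair $(\sum_{\roleQi \in \gtRoles{\gtG}}\gtMCount{\gtG}{\roleP}{\roleQi},\ \gtDepth{\gtG}{\roleP})$, whose components are finite by \cref{lem:en-route-bounded} and \cref{def:balanced}. The first component strictly decreases on applications of \RULE{P-$\oplus$-II}, while the second strictly decreases on \RULE{P-$\stBinMerge$} and \RULE{P-$\stBinMerge$-II} (reading these rules as shallow ``transitions'' to which \cref{lem:depth-decr} applies). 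In each inductive case, the IH yields per-continuation contexts which I would paste together using \cref{lem:good-merge-assoc} and \cref{lem:contextmerge} to form the enclosing $\mathbb{G}$ whose context-projection onto $\roleP$ realises the claimed queues at each hole. Part (3) is structurally identical, with \RULE{P-$\&$} as base case; the assumption that $\roleQ$'s queue contains no element destined for $\roleP$ rules out \RULE{P-$\&$-II} at the top, since that rule would record an en-route $\roleQ$-to-$\roleP$ contribution to the queue.

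For part (2), I would induct on $\gtMDepth{\gtG}{\roleQ}{\roleP}$, which exists because the head of $\roleQ$'s queue being $\quMsg{\roleP}{\stLab}{\tyGround}$ forces $(\roleQ,\roleP)\in\gtMRoles{\gtG}$ and \cref{lem:en-route-bounded} applies. The base case is \RULE{P-$\&$-II}, with the en-route $\roleQ\to\roleP$ node at the head of $\unfoldOne{\gtG}$; the resulting context is the trivial hole. The inductive cases are exactly the structural ones (\RULE{P-$\stBinMerge$}, \RULE{P-$\stBinMerge$-II}, plus regular or en-route transmissions whose head is not an $\roleQ\to\roleP$ en-route), each of which strictly shrinks $\gtMDepth{\gtG}{\roleQ}{\roleP}$ upon recursion into the continuations.

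The main technical obstacle will be the merge case in parts (1) and (3). Here the IH produces, for each branch $i\in I$ of the current head communication, a context $\mathbb{G}[i]$ together with per-hole queue prefixes, and I must glue these into a single outer context whose hole-wise projections onto $\roleP$ still realise the decomposition $\quH=\quCons{\quHi[j]}{\quH[j]}$ claimed in the conclusion. I would use \cref{lem:good-merge-assoc} to flatten iterated merges produced by nested \RULE{P-$\stBinMerge$} applications, \cref{lem:contextmerge} to transport the merge witnesses through the context-projection operator of \cref{def:context-proj}, and the forgetful-hole machinery of \cref{def:global-ctx,def:local-tree-ctx} to absorb asynchronous reorderings on a per-branch basis (appealing to \cref{lem:mer:setinclusion} when branches restrict to a subset of labels). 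A secondary subtlety threaded through all three parts is verifying that the enclosing $\mathbb{G}$ avoids $\roleP$ in the prohibited positions; this follows because each projection step that places $\roleP$ in such a position corresponds precisely to one of the base-case rules, which terminates the recursion and plugs a hole rather than extending $\mathbb{G}$.
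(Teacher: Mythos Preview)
Your overall strategy---induct on a well-founded measure, invert the top projection rule, glue per-branch contexts via merge associativity---matches the paper's. The paper is simpler in one respect: it uses a single induction on $\gtDepth{\gtG}{\roleP}$ for all three parts. Your lexicographic pair for parts~(1) and~(3) is correct but redundant, since \RULE{P-$\oplus$-II} also strictly decreases $\gtDepth{\gtG}{\roleP}$: by \cref{def:basic-depth-func} an en-route head contributes $1+\gtDepth{\gtGi}{\roleP}$ whenever $\roleP$ is not the \emph{receiver}, and under \RULE{P-$\oplus$-II} $\roleP$ is the sender. (Also, \cref{lem:depth-decr} concerns the transition relation $\gtMove$, not structural descent into continuations; the decrease you need is read off directly from the depth definition.) The forgetful-hole machinery you invoke is not needed at this level---it enters only in the follow-up \cref{lem:inv-proj}.

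Your ``secondary subtlety'' argument has a genuine gap in part~(2). When the head of $\unfoldOne{\gtG}$ is a \emph{regular} (non-en-route) $\roleQ\to\roleP$ transmission, the projection rule for $\roleP$ is \RULE{P-$\&$}, not your base case \RULE{P-$\&$-II}, and the external-choice-from-$\roleQ$ shape of $\stT$ does not exclude it; yet this head places $\roleP$ as receiver, which is a prohibited position in $\mathbb{G}$. So this is precisely a ``projection step that places $\roleP$ in a prohibited position'' that does \emph{not} correspond to one of your base-case rules. You need $\text{balanced}^+$ to rule it out: the $\roleP$-destined message at the head of $\roleQ$'s projected queue forces $(\roleQ,\roleP)\in\gtMRoles{\gtG}$, and a regular $\roleQ\to\roleP$ head with an en-route $\roleQ\to\roleP$ beneath it makes $\gtMCount{\gtG}{\roleQ}{\roleP}$ undefined. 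The paper handles this in its base case ($\gtDepth{\gtG}{\roleP}=1$, where the head must involve $\roleP$, and existence of $\gtMCount{\gtG}{\roleQ}{\roleP}$ then pins it down as en-route rather than regular); with your $\gtMDepth$ measure you must insert this argument explicitly as an impossible case before recursing.
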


\begin{proof}
    As $\unfoldOne{\stT}\ne\stEnd$,
    $\roleP\in\gtRoles{\gtG}$.
    As $\gtG$ is balanced, $\gtDepth{\gtG}{\roleP}$ exists.
    We shall proceed by induction on $\gtDepth{\gtG}{\roleP}$.
    Suppose that $\gtDepth{\gtG}{\roleP}=1$,
    so $\roleP$ appears at the head of $\gtG$.
	\begin{enumerate}
        \item
        If $\stT=\stIntSum{\roleQ}{i \in I}{\stLab[i]\stSeq\stT[i]}$, then
        by the definition of the projection,
        $\unfoldOne{\gtG}=\gtComm{\roleP}{\roleQ}
	        {i \in I}
	        {\stLab[i]}
	        {\tyGround[i]}
	        {\gtG[i]}$
        where $\gtProjRel{\gtG[i]}{\roleP}{(\quH,\stT[i])}$.
        $\gtProj{[\cdot]_l}{\roleP}=(\{l\mapsto \quEmpty\},[\cdot]_{\{l\}})$.
        \item 
        If $\stT=\stExtSum{\roleQ}{i \in I}{\stLab[i]\stSeq\stT[i]}$ and
		$\gtProjRel{\gtG}{\roleQ}{(\quH[\roleQ], \stTii)}$
        with $\quH[\roleQ]$ having no element with destination $\roleP$,
        then by definition of the projection,
        $\unfoldOne{\gtG}=\gtComm{\roleQ}{\roleP}
	        {i \in I}
	        {\stLab[i]}
	        {\tyGround[i]}
	        {\gtG[i]}$
        where $\gtProjRel{\gtG[i]}{\roleP}{(\quH,\stT[i])}$.
        \item
        If $\stT=\stExtSum{\roleQ}{i \in I}{\stLab[i]\stSeq\stT[i]}$ and
		$\gtProjRel{\gtG}{\roleQ}{(\quCons{\quMsg{\roleP}{\stLab}{\tyGround}}{\quHii}, \stTii)}$,
        then by definition of the projection
        and as $\gtMCount{\gtG}{\roleQ}{\roleP}$ exists,
        we have that
        $\unfoldOne{\gtG}=\gtCommSquig{\roleQ}{\roleP}{\stLab}
	        {}
	        {\stLab}
	        {\tyGround}
	        {\gtGi}$,
        and
        $\stT=\stExtSum{\roleQ}{}{\stLab:\stTi}$,
        where $\gtProjRel{\gtGi}{\roleP}{(\quH,\stTi)}$.
	\end{enumerate}
    Suppose that $\gtDepth{\gtG}{\roleP}>1$, so
    $\roleP$ is not at the head of $\gtG$.
    We consider the cases for the head of $\gtG$.
    \begin{enumerate}
        \item $\unfoldOne{\gtG}=\gtComm{\rolePi}{\roleQi}
	    {i \in I}
	    {\stLab[i]}
	    {\tyGround[i]}
	    {\gtG[i]}$ where $\roleP\not\in \{\rolePi,\roleQi\}$.
        For $i\in I$, we have $\stT[i]$ such that
        $\gtProjRel{\gtG[i]}{\roleP}{(\quH,\stT[i])}$
        and $\stMerge{i\in I}{\stT[i]}\ni\stT$, and
        if $\gtProjRel{\gtG}{\roleQ}{(\quHi,\stTi)}$ then
        $\gtProjRel{\gtG[i]}{\roleQ}{(\quHi,\star)}$.
        We then apply the I.H.,
        reindex the contexts, and
        use associativity of the full coinductive merge to conclude.
        \item $\unfoldOne{\gtG}=\gtCommSquig{\rolePi}{\roleQi}{\gtLab}
	    {}
	    {\stLab}
	    {\tyGround}
	    {\gtGi}$ where $\roleP\not\in \{\rolePi,\roleQi\}$.
        We have
        $\gtProjRel{\gtGi}{\roleP}{(\quH,\stT)}$, and
        if $\gtProjRel{\gtG}{\roleQ}{(\quHi,\stTi)}$ then
        $\gtProjRel{\gtGi}{\roleQ}{(\quHii,\star)}$
        where $\quHii$ and $\quHi$, restricted to
        $\roleP$, are the same.
        We then apply the I.H. to conclude.
        \item $\unfoldOne{\gtG}=\gtCommSquig{\roleP}{\roleQi}{\gtLab}
	    {}
	    {\stLab}
	    {\tyGround}
	    {\gtGi}$.
        We have $\quHiii$ such that
        $\gtProjRel{\gtGi}{\roleP}{(\quHiii, \stT)}$ and
        $\gtProjRel{\gtGi}{\roleQ}{(\quCons{\quMsg{\roleP}{\stLab}{\tyGround}}{\quHii},\stTii)}$.
        We then apply the I.H. to conclude.
    \end{enumerate}
\end{proof}

Since we are working with asynchronous subtyping,
the head of a local type is insufficient.
However by~\cref{lem:inv-subtyping},
it will suffice to consider local asynchronous redexes
in their respective local contexts.
Further, it suffices to apply~\cref{lem:head-inv-proj}
at each step to extend it to local contexts.

  \begin{lemma}[Inversion of Projection]
    \label{lem:inv-proj}
    Assume $\gtProjRel{\gtG}{\roleP}{(\quH, \stT)}$ 
    \begin{enumerate}
        \item If $\ttree{\stT} =
        \trCtxSelq[
        \stIntSum{\roleQ}{i \in I_j}{\stLab[i,j]\stSeq
        \stT[i,j]}]_{j\in J}\forgetHole{\stT[j]}{j\in K}$
        then
	\newline
        $\ttree{\gtG} =
        \trCtxGSelpq[
        \gtComm{\roleP}{\roleQ}
        {i \in I_{f(j)}}
        {\stLab[i,f(j)]}
        {\tyGround[i,f(j)]}
        {\gtG[i,j]}]_{j\in J'}\forgetHole{\gtGiii[j]}{j\in K'}$
        where $\sigma$ is a reindexing of $\trCtxSelq$,
        $\gtProj{\trCtxGSelpq}{\roleP}=(\{\quHi[j]\}_{j\in J'\cup K'},
        \sigma \trCtxSelq)$, and
        $f(j)=k\in J\cup K\suchthat j\in\sigma k$ for $j\in J\cup K$.
        We additionally have that
        $\gtProjRel{\gtGi[i,j]}{\roleR}{(\quHii[j],\stTi[i,j])}$ for $j\in J', i\in I_{f(j)}$,
	$\gtProjRel{\gtGi[j]}{\roleR}{(\quHii[j],\stTi[j])}$ for $j\in K'$,
        $\stTii[i,k]\in\stMerge{j\in \sigma k}{\stTi[i, j]}$ for $k\in J, i\in I_k$, and
	$\stTii[k]\in\stMerge{j\in \sigma k}{\stTi[j]}$ for $k\in K$,
        where $\ttree{\stTii[i,k]}=\stT[i,k]$ for $k\in J$ and $i\in I_k$,
	$\ttree{\stTii[k]}=\stT[k]$ for $k\in K$,
        $\ttree{\gtGi[i,j]}=\gtG[i,j]$ for $j\in J'$ and $i\in I_{f(j)}$,
	$\ttree{\gtGi[j]}=\gtG[j]$ for $j\in K'$, and
        $\quH=\quCons{\quHi[j]}{\quHii[j]}$ for $j\in J'\cup K'$.
        
        \item If $\ttree{\stT} =
        \trCtxBraq[
        \stExtSum{\roleQ}{i \in I_j}{\stLab[i,j]\stSeq
        \stT[i,j]}]_{j\in J}$
        and $\gtProjRel{\gtG}{\roleQ}{
        (\quCons{\quMsg{\roleP}{\stLab}{\tyGround}}{\quH[\roleQ]},\stTi[\roleQ])
        }$,
        then for all $j\in J$,
	$|I_j|=1$ so relabel $\stT[i,j]$ to $\stT[j]$, also $\stLab[i,j]=\stLab$ and $\tyGround[i,j]=\tyGround$ for the unique $i\in I$,
        $\ttree{\gtG} =
        \trCtxGBraqp[
        \gtCommSquig{\roleQ}{\roleP}
        {\gtLab}
        {}
        {\stLab}
        {\tyGround}
        {\gtG[j]}]_{j\in J'}$
        where $\sigma$ is a reindexing of $\trCtxBraq$,
        $\gtProj{\trCtxGBraqp}{\roleP}=(\{\quHi[j]\}_{j\in J},\sigma \trCtxBraq)$.
        We additionally have that
        $\gtProjRel{\gtGi[j]}{\roleR}{(\quHii[j], \stTi[j])}$ for $j\in J'$ and
        $\stTii[k]\in\stMerge{j\in \sigma k}{\stTi[j]}$ for $k\in J$,
        where $\ttree{\stTii[k]}=\stT[k]$,
        $\ttree{\gtGi[j]}=\gtG[j]$, and
        $\quH=\quCons{\quHi[j]}{\quHii[j]}$ for $j\in J'$.
        \end{enumerate}
  \end{lemma}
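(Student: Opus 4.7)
The plan is to proceed by structural induction on the local type context: $\trCtxSelq$ for item~(1) and $\trCtxBraq$ for item~(2). The head inversion lemma (Lemma~\ref{lem:head-inv-proj}) provides the base case, while the inductive cases combine head inversion at the outermost constructor with the induction hypothesis applied to the sub-contexts filling the holes. Because local contexts are finite (see Remark~\ref{rem:inductive-contexts}), this induction is well-founded even though the underlying types are coinductive.

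For item~(1), in the base case $\trCtxSelq=[\cdot]_l$ the local type $\ttree{\stT}$ is itself an internal choice by $\roleQ$, so I would invoke Lemma~\ref{lem:head-inv-proj}~(1) to obtain a global context $\mathbb{G}$ not mentioning $\roleP$ whose holes correspond to the single local hole via the reindexing $\sigma(l)=J'$; the queue prefixes $\quHi[j]$ are read off from $\gtProj{\mathbb{G}}{\roleP}$. In the inductive step, the shape of $\trCtxSelq$ dictates which clause of Lemma~\ref{lem:head-inv-proj} to invoke. When the outermost constructor is an external choice from some third participant $\roleR$, the head of $\stT$ is a receive from $\roleR$, so item~(3) of head inversion peels off the corresponding $\gtComm{\roleR}{\roleP}{\cdots}{}{}{\gtG[i,j]}$ pieces (sitting underneath a $\roleP$-free global prefix), and I would apply the induction hypothesis to each sub-context against the induced sub-projections. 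When $\trCtxSelq$ begins with an internal choice to $\roleQ$ with forgetful branches, item~(1) of head inversion supplies the matching sender choice in $\gtG$, and the induction hypothesis then handles the non-forgetful continuations while the forgetful ones simply feed into the global forgetful index set $K'$.

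Item~(2) follows the same pattern on $\trCtxBraq$. In its base case I would use Lemma~\ref{lem:head-inv-proj}~(2): the queue prefix $\quMsg{\roleP}{\stLab}{\tyGround}$ in the hypothesis $\gtProjRel{\gtG}{\roleQ}{\cdots}$ guarantees the presence of an en-route transmission from $\roleQ$ to $\roleP$ at the position targeted by the hole and forces $|I_j|=1$ with the required label and payload. The inductive case $\trCtxBraq=\stExtSum{\roleQ}{i\in I}{\stChoice{\stLab[i]}{\tyGround[i]}\stSeq\trCtxBrap[i]}$ appeals to item~(3) of head inversion on the outer external choice before recursing, propagating the en-route hypothesis into the continuation containing the hole.

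The main technical obstacle will be the bookkeeping around the reindexing $\sigma$ and the queue decompositions. Whenever the outermost global constructor under the current focus involves participants other than $\roleP$, the projection onto $\roleP$ uses full merging, which can identify distinct global branches into a single local branch; these identifications accumulate across inductive steps, and $\sigma$ must be defined to union the identifications produced by head inversion at the current step with those produced by the recursive call. Associativity of full merge (Lemma~\ref{lem:good-merge-assoc}) is essential here, letting the merges witnessed at each step be flattened with those arising in deeper recursive calls so that the final condition $\stMerge{j\in\sigma k}{\stTi[i,j]}\ni\stTii[i,k]$ can be stated globally rather than step-by-step. Queue prefixes are assembled by concatenation as we descend through $\mathbb{G}$, tracking contributions from en-route transmissions not yet consumed at the local hole position; the invariant $\quH=\quCons{\quHi[j]}{\quHii[j]}$ is maintained by concatenating the queue piece generated at the current level with the piece produced recursively. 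Finally, the invariant that the global context surrounding each hole does not mention $\roleP$ (except as sender of an en-route transmission, in the extended form of item~(1)) is established by the head-inversion clauses and carried through by the induction hypothesis.
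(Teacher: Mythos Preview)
Your proposal is correct and takes essentially the same approach as the paper: structural induction on the local type context, invoking Lemma~\ref{lem:head-inv-proj} at each step (the paper's proof is the one-line ``Induction on local contexts. The base case and inductive cases are instances of Lemma~\ref{lem:head-inv-proj}''). Your elaboration on the bookkeeping with $\sigma$, queue concatenation, and the use of Lemma~\ref{lem:good-merge-assoc} to flatten merges is the natural way to make that sketch precise; one small slip is that in the inductive case of item~(2) the outer external choice is from some $\roleR\ne\roleQ$, not from $\roleQ$ itself, and correspondingly the global prefix uncovered may be either a transmission or an en-route transmission with $\roleP$ as receiver.
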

\begin{proof}
	Induction on local contexts.
	The base case and inductive cases
	are instances of~\cref{lem:head-inv-proj}.
\end{proof}

We will now be able to perform global transitions
upon observing local asynchronous transitions,
such that the subject of the transition
can still be projected onto a supertype of
the resultant local type.
To ensure that association is preserved,
we must also consider the other projections.
It can be easily observed that
performing transitions with subject $\roleP$
does not drastically impact
the projection onto $\roleR(\ne\roleP)$.

\begin{lemma}[Global Type Transitions]
\label{lem:global-red}
Suppose that $\gtProjRel{\gtG}{\roleR}{(\quH,\stT)}$ and
$\gtG\,\gtMove[\stEnvAnnotGenericSym]\,\gtGi$.
\begin{enumerate}
	\item If $\stEnvAnnotGenericSym=\ltsBra{\roleP}{\roleQ}{\stChoice{\stLab}{\tyGround}}$
	and $\roleR\ne\roleQ$, then $\gtProjRel{\gtGi}{\roleR}{(\quH,\stT)}$.
	\item If $\stEnvAnnotGenericSym=\ltsBra{\roleP}{\roleR}{\stChoice{\stLab}{\tyGround}}$, then
	$\quH=\quCons{\quMsg{\roleP}{\stLab}{\tyGround}}{\quHi}$ and
	$\gtProjRel{\gtGi}{\roleR}{(\quHi,\stT)}$.
	\item \label{lem:global-red:3}Otherwise, there is $\stTi$ such that
	$\stMerge{}{\{\stT,\stTi\}}\ni\stT$ and
	$\gtProjRel{\gtGi}{\roleR}{(\quH,\stTi)}$.
\end{enumerate}
Note that $\stMerge{}{\{\stT,\stTi\}}\ni\stT$ implies
that $\stT\stSub\stTi$ by Lemma~\ref{lem:mer:subtype}.
\end{lemma}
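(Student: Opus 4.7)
The plan is to proceed by induction on the derivation of $\gtG\,\gtMove[\stEnvAnnotGenericSym]\,\gtGi$, doing case analysis on the last rule applied (from \cref{fig:gtype:red-rules}) and, within each case, a further split on whether $\roleR$ is the receiver of $\stEnvAnnotGenericSym$ or not. The three conclusions correspond to the three ways a projection onto $\roleR$ can interact with a global transition: if $\roleR$ is uninvolved the projection is unchanged (case~1); if $\roleR$ receives the dequeued message the projection loses its head queue entry (case~2); and otherwise the projection may lose some branches, giving a supertype witnessed by the merge (case~3).

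First I would handle the base cases. For \iruleGtMoveRec, unfolding preserves the projection relation, so invoking the induction hypothesis on the unfolded type suffices. For \iruleGtMoveBra, the global type is $\gtCommSquigSmall{\roleP}{\roleQ}{\gtLab}{}{\gtLab}{\tyGround}{\gtGi}$; if $\roleR=\roleQ$ I invert via \inferrule{P-$\&$-II} to extract $\quCons{\quMsg{\roleP}{\stLab}{\tyGround}}{\quHi}$ from the projected queue and obtain case~2, while if $\roleR\neq\roleQ$ I apply \inferrule{P-$\stBinMerge$-II} or \inferrule{P-$\oplus$-II} to get case~1 or~3 immediately. For \iruleGtMoveSel, inversion of \inferrule{P-$\oplus$}/\inferrule{P-$\&$}/\inferrule{P-$\stBinMerge$} at $\roleR$ shows that after selection, the projection is just one of the branches (for $\roleR\in\{\roleP,\roleQ\}$), or equals the original merged projection extended by one en-route queue entry (for $\roleR\notin\{\roleP,\roleQ\}$), which yields case~3 with $\stTi$ equal to the selected branch's projection.

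The inductive cases are \iruleGtMoveCtx, \iruleGtMoveCtx', and \iruleGtMoveCtxII. For each, I would invert the head of $\gtG$ against the corresponding projection rule to expose the continuations' projections, apply the induction hypothesis to each continuation, and then reassemble. In the \iruleGtMoveCtx and \iruleGtMoveCtxII cases the indexing set is preserved, so merging the continuation projections (using \cref{lem:good-merge-corr} to commute merge with the individual transitions) gives the desired projection of $\gtGi$ with the same merged type. The critical step is \iruleGtMoveCtx', where the indexing shrinks from $I$ to $J\subseteq I$: here the projection of $\gtGi$ is obtained by merging only the projections of the continuations indexed by $J$. I will use \cref{lem:mer:setinclusion} to ensure the smaller set is still mergeable, and \cref{lem:good-merge-assoc} together with \cref{lem:mer:subtype} to express the original projection's type as a merge of the $J$-projection with an extra type $\stTi$ built from the forgotten $I\setminus J$ branches, yielding case~3.

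The main obstacle will be the \iruleGtMoveCtx' case when $\roleR$ is the common (non-subject) receiver $\roleQ$: here \inferrule{P-$\oplus$} forces each branch to give the same projected queue and a distinct local continuation, so after discarding branches I must carefully re-merge only the surviving continuations while preserving queue equality; the existing context-projection machinery (\cref{def:context-proj,lem:contextmerge}) and the associativity lemma (\cref{lem:good-merge-assoc}) are exactly what reconciles the $J$-merge with the original $I$-merge, exhibiting $\stT$ as a subtype of the new $\stTi$ via \cref{lem:mer:subtype}. All other subcases reduce to routine bookkeeping and applications of the induction hypothesis.
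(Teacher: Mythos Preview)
Your plan matches the paper's proof: induction on the derivation of $\gtG\,\gtMove[\stEnvAnnotGenericSym]\,\gtGi$, with \cref{lem:good-merge-assoc} and \cref{lem:mer:setinclusion} carrying the invariant of case~3 through \iruleGtMoveCtx'. Two minor corrections to your bookkeeping: in the \iruleGtMoveBra base case you have the roles swapped—writing the en-route as $\gtCommSquigSmall{\roleP}{\roleQ}{\gtLab}{}{\gtLab}{\tyGround}{\gtGi}$ the transition label is $\ltsBra{\roleQ}{\roleP}{\dots}$, so case~2 of the lemma fires when $\roleR=\roleP$ (the \emph{sender}), and the queue prefix $\quCons{\quMsg{\roleQ}{\stLab}{\tyGround}}{\quHi}$ comes from inverting \inferrule{P-$\oplus$-II}, not \inferrule{P-$\&$-II}; and the inductive re-merging in \iruleGtMoveCtx/\iruleGtMoveCtxII does not need \cref{lem:good-merge-corr} (which concerns local-type transitions) or the context-projection machinery—associativity alone lets you collapse $\stT\in\stMerge{i}{\stT[i]}$ together with each $\stT[i]\in\stMerge{}{\{\stT[i],\stTi[i]\}}$ from the induction hypothesis into $\stT\in\stMerge{}{\{\stT,\stTi\}}$ for any $\stTi\in\stMerge{i}{\stTi[i]}$, whose existence \cref{lem:mer:setinclusion} guarantees.
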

\begin{proof}
	Induction on the relation $\gtG\,\gtMove[\stEnvAnnotGenericSym]\,\gtGi$,
	making use of~\cref{lem:good-merge-assoc} and~\cref{lem:mer:setinclusion}
	to maintain the invariant specified in~\cref{lem:global-red:3}.
\end{proof}
Given a transition from $\stEnv$ associated to a $\text{balanced}^+$ global type, $\gtG$,
to $\stEnvi$,
we can now:
identify the redexes inside the projection of $\gtG$;
identify corresponding redexes inside $\gtG$;
perform a corresponding transition of $\gtG$ to $\gtGi$;
and project $\gtGi$ to observe that $\stEnvi$ is associated to $\gtGi$.
We state this formally, focusing on a single participant, in~\cref{lem:gt-move},
before applying this to contexts, proving~\cref{thm:gtype:proj-comp}.

  \begin{lemma}[Global Type Mirrors Local Actions]
    \label{lem:gt-move}
    Assume 
    \[
         \gtProjRel{\gtG}{\roleP}
        {(\quH[\roleP],\stTi[\roleP])}
         \quad\text{with}\quad
         \stT[\roleP]\;\asubt\;\stTi[\roleP].
      \]
    \begin{enumerate}
      \item If $\unfoldOne{\stT[\roleP]} = \stIntSum{\roleQ}{i\in I}{\stChoice{\stLab[i]}{\tyGround[i]}\stSeq\stT[i]}$ then
      for any $k \in I$, $\exists \gtGi$ such that 
      \(
        \gtProjRel{\gtGi}{\roleP}
        {(\quCons{\quH[\roleP]}{\quMsg{\roleQ}{\stLab[k]}{\tyGroundi}},\stTi[k])}
      \) and $\tyGround[k]\tyGroundSub\tyGroundi$
      and $\stT[k] \asubt \stTi[k]$ and 
      $\gtG  \,\gtMove[\ltsSel{\roleP}{\roleQ}{\stLab[k](\tyGroundi)}]\, \gtGi$.
      \item If $\unfoldOne{\stT[\roleP]} = \stExtSum{\roleQ}{i\in I}{\stChoice{\stLab[i]}{\tyGround[i]}\stSeq\stT[i]}$ and 
      $\gtProjRel{\gtG}{\roleQ}
        {(\quCons{\quMsg{\roleP}{\stLab}{\tyGround}}{\quH[\roleQ]},\stTi[\roleQ])} 
         \quad\text{with}\quad
         \stT[\roleQ]\;\asubt\;\stTi[\roleQ]$ where
         $\stLab = \stLab[k]$ and $\tyGround \tyGroundSub \tyGround[k]$ for some $k \in I$, then 
      \(
        \gtProjRel{\gtGi}{\roleP}
        {(\quH[\roleP],\stTi[k])}
      \) and $\stT[k] \asubt \stTi[k]$ and $\gtProjRel{\gtGi}{\roleQ}{\stTi[\roleQ]}$ and 
      $\gtG  \,\gtMove[\ltsBra{\roleP}{\roleQ}{\stLab[k](\tyGround)}]\, \gtGi$.
      \item If $\unfoldOne{\stT[\roleP]} = \stEnd$ then $\quH[\roleP] = \quEmpty[]$ and $\unfoldOne{\stTi[\roleP]} = \stEnd$.

    \end{enumerate}
    \end{lemma}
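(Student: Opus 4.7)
The plan is to prove the three cases of the lemma by leveraging the inversion and context machinery developed earlier in the section, namely Lemmas~\ref{lem:inv-subtyping}, \ref{lem:inv-proj}, \ref{lem:global-ctx-red}, and \ref{lem:global-red}. Cases (1) and (2) are the interesting ones; Case (3) is a short structural argument.

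For Case (1), I would start from the hypothesis $\stT[\roleP]\asubt\stTi[\roleP]$ with $\unfoldOne{\stT[\roleP]}$ an internal choice headed by $\roleQ$. Applying Lemma~\ref{lem:inv-subtyping}(1), I extract a local context $\trCtxSelp$ and a decomposition
\(
\ttree{\stTi[\roleP]} \;=\; \trCtxSelp[\stIntSum{\roleQ}{i \in I_j}{\stChoice{\stLab[i]}{\tyGround[i,j]} \stSeq \trTi[i,j]}]_{j\in J}\forgetHole{\trTii[j]}{j\in J'}
\)
with $I\subseteq I_j$ and $\tyGround[i]\tyGroundSub\tyGround[i,j]$. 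I then feed this into Lemma~\ref{lem:inv-proj}(1) applied to $\gtProjRel{\gtG}{\roleP}{(\quH[\roleP],\stTi[\roleP])}$, producing a global context $\trCtxGSelpq$ whose holes match the local ones via a reindexing $\sigma$, together with subprojections $\gtProjRel{\gtGi[i,j]}{\roleP}{(\quHii[j],\stTi[i,j])}$ and $\gtProjRel{\gtGi[j]}{\roleP}{(\quHii[j],\stTi[j])}$ for the forgetful branches, with $\quH[\roleP]=\quCons{\quHi[j]}{\quHii[j]}$.

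Given any $k\in I$, I pick the corresponding index inside $I_{f(j)}$ for each tracked hole $j\in J'$ and apply Lemma~\ref{lem:global-ctx-red}(1) with payload $\tyGroundi=\tyGround[k,f(j)]$; this yields the transition
\(
\gtG \;\gtMove[\ltsSel{\roleP}{\roleQ}{\stLab[k](\tyGroundi)}]\; \gtGi
\)
with $\gtGi = \overline{\trCtxGSelpq}[\gtG[k,j]]_{j\in J'}$. Projecting $\gtGi$ at $\roleP$ then reconstructs $\stTi[k]$ from the merges of the subprojections using the context-projection lemma, and the prefixed queue entry $\quMsg{\roleQ}{\stLab[k]}{\tyGroundi}$ arises exactly from the definition of $\gtProj{\cdot}{\roleP}$ on the en-route-ified global transmission at the former hole positions. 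Finally, $\stT[k]\asubt\stTi[k]$ is the continuation clause delivered by Lemma~\ref{lem:inv-subtyping}(1).

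Case (2) follows the same pattern but with the dual inversions: Lemma~\ref{lem:inv-subtyping}(2) decomposes $\ttree{\stTi[\roleP]}$ via a $\trCtxBrap$ context, and Lemma~\ref{lem:inv-proj}(2) — whose precondition is exactly the hypothesis $\gtProjRel{\gtG}{\roleQ}{(\quCons{\quMsg{\roleP}{\stLab}{\tyGround}}{\quH[\roleQ]},\stTi[\roleQ])}$ — produces a global context $\trCtxGBraqp$ whose holes sit under en-route transmissions from $\roleQ$ to $\roleP$ carrying the label $\stLab$. Lemma~\ref{lem:global-ctx-red}(2) then yields $\gtG\,\gtMove[\ltsBra{\roleP}{\roleQ}{\stLab[k](\tyGround)}]\,\gtGi$; the projection onto $\roleP$ of $\gtGi$ loses the head en-route prefix, giving $\stTi[k]$ by merging and appealing to $\stT[k]\asubt\stTi[k]$ from Lemma~\ref{lem:inv-subtyping}(2). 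The projection onto $\roleQ$ is preserved by Lemma~\ref{lem:global-red}(2), which consumes the head queue entry of $\quCons{\quMsg{\roleP}{\stLab}{\tyGround}}{\quH[\roleQ]}$ to produce $\gtProjRel{\gtGi}{\roleQ}{(\quH[\roleQ],\stTi[\roleQ])}$ as required.

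For Case (3), $\unfoldOne{\stT[\roleP]}=\stEnd$ combined with Lemma~\ref{lem:end-sub-end} gives $\unfoldOne{\stTi[\roleP]}=\stEnd$ immediately. For the queue, observe that if $\roleP\in\gtARoles{\gtG}$, then one of the rules \RULE{P-$\oplus$}, \RULE{P-$\oplus$-II}, \RULE{P-$\&$}, \RULE{P-$\&$-II} (possibly after descending through \RULE{P-$\stBinMerge$} and \RULE{P-$\stBinMerge$-II} finitely many times, which is justified by balancedness of $\gtG$ giving finite $\gtDepth{\gtG}{\roleP}$) would force $\unfoldOne{\stTi[\roleP]}$ to be an internal or external choice, contradicting $\stEnd$; hence $\roleP\notin\gtARoles{\gtG}$. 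A parallel induction on $\gtMDepth{\gtG}{\roleP}{\roleR}$ using $\text{balanced}^+$ (Lemma~\ref{lem:en-route-bounded}) shows $\roleP\notin\gtSRoles{\gtG}$, so rule \RULE{P-$\quEmpty$} forces $\quH[\roleP]=\quEmpty$.

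The main obstacle is the bookkeeping in Cases (1) and (2): matching the reindexing $\sigma$ between local and global contexts, correctly threading the queue decompositions $\quH[\roleP]=\quCons{\quHi[j]}{\quHii[j]}$ through the context projection, and discharging the preservation of $\asubt$ on continuations when the global transition collapses the forgetful holes via $\overline{\trCtxGSelpq}$. Once the inversion lemmas are properly aligned with the context-reduction lemma, the rest is mechanical verification that the reassembled projection data satisfies the coinductive rules.
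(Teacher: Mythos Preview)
Your proposal is correct and follows essentially the same approach as the paper: apply Lemma~\ref{lem:inv-subtyping} to decompose the projected local type, then Lemma~\ref{lem:inv-proj} to obtain the matching global context, then Lemma~\ref{lem:global-ctx-red} to fire the global transition, and finally reassemble the projection of the result. The paper's proof is terser (Case~2 is literally ``Similar to previous case'' and Case~3 just says ``From the projection rules and applying \cref{lem:end-sub-end}''), so your explicit invocation of Lemma~\ref{lem:global-red}(2) for the $\roleQ$-side projection in Case~2 and your elaboration of Case~3 via $\gtARoles{\gtG}$ and $\gtSRoles{\gtG}$ are welcome additions rather than departures.
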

    \begin{proof} \leavevmode
    \begin{enumerate} 
        \item $\unfoldOne{\stT[\roleP]} = \stIntSum{\roleQ}{i\in I}{\stChoice{\stLab[i]}{\tyGround[i]}\stSeq\stT[i]}$.
        Applying \cref{lem:inv-subtyping}, we have that
        $\ttree{\stT} =
        \trCtxSelp[
        \stIntSum{\roleP}
        {i \in I_j}{\stChoice{\stLab[i]}{\tyGround[i,j]}
        \stSeq \trTi[i,j]}]_{j\in J}
        \forgetHole{\trTii[j]}{j\in J'}$
        with $I \subseteq I_j$ for all $j\in J$ and for all $i \in I$ and $j\in J$,
        we have $\tyGround[i] \tyGroundSub \tyGround[i,j]$,
        and if $\ttree{\stTi[i]}=\trCtxSelp[\trTi[i,j]]_{j\in J}\forgetHole{\trTii[j]}{j\in J'}$
        then $\stT[i] \asubt \stTi[i] $.
        Now we apply \cref{lem:inv-proj} to get that
        $\ttree{\gtG} =
        \trCtxGSelpq[
        \gtComm{\roleP}{\roleQ}
        {i \in I_{f(j)}}
        {\stLab[i,f(j)]}
        {\tyGround[i,f(j)]}
        {\gtG[i,j]}]_{j\in K}\forgetHole{\gtGiii[j]}{j\in K'}$.
        We apply \cref{lem:gt-move} to get that
        $\ttree{\gtG} =
        \trCtxGSelpq[
        \gtComm{\roleP}{\roleQ}
        {i \in I_{f(j)}}
        {\stLab[i,f(j)]}
        {\tyGround[i,f(j)]}
        {\gtG[i,j]}]_{j\in K}\forgetHole{\gtGiii[j]}{j\in K'}
        \gtMove \trCtxGSelpq{}'[\gtG[k,j]]_{j\in K}$ for $k\in I$. 
        Then we use the additional conclusions
        to reconstruct the subtyping derivation from before,
        with updated premises taken from the results of applying the lemma, 
        to deduce that $\gtProjRel{\gtGi}{\roleP}{\stTi[k]}$ where
        $\ttree{\stTi[k]} = \trCtxSel{}'[\trT[k,j]]_{j\in J}$ and
        $\ttree{\gtGi}=\trCtxGSelpq{}'[\gtG[k,j]]_{j\in K}$.
        \item (Similar to previous case).
        \item From the projection rules and applying \cref{lem:end-sub-end}.
    \end{enumerate}
    \end{proof}

    \thmProjCompleteness*
    \begin{proof}
        By case analysis on $\stEnvAnnotGenericSym$. 
        \bigskip
        \noindent
        \textbf{Case $\stEnvAnnotGenericSym = \ltsSel{\roleP}{\roleQ}{\stLab(\tyGround)}$
        (a send by a participant).}
      
        \smallskip
        Applying \Cref{lem:inv-ctx-lts}, for some label~$\stLab$,
        payload $\tyGround$, sender~$\roleP$
        and receiver~$\roleQ$, we have
      
        \[
          \begin{aligned}
            &\stEnvApp{\stEnv}{\roleP}
                =(\quH[\roleP], \stT[\roleP]) \\ 
                & \text{with } \stT[\roleP] = \stIntSum{\roleQ}{i\in I}{\stChoice{\stLab[i]}{\tyGround[i]}\stSeq\stT[i]}
                \text{ where }\,\stLab[j]=\stLab,\,\tyGround[j]=\tyGround.
          \end{aligned}
        \]
        We also know the endpoint at $\roleP$ is updated in the new context,
        \(
            \stEnvApp{\stEnvi}{\roleP}
               =(\quCons{\quH[\roleP]}{\quMsg{\roleQ}{\stLab}{\tyGround}},\stT[j]),
        \)
        while all other endpoints stay unchanged.
        As we know $\stT[\roleP] \ne \stEnd$, then $\roleP \in \dom{\stEnv}$, 
        and so by association there exists $\stTi[\roleP]$ such that
        \begin{equation}
           \gtProjRel{\gtG}{\roleP}
             {(\quHi[\roleP],\stTi[\roleP])}
           \quad\text{with}\quad
           \stT[\roleP]\;\asubt\;\stTi[\roleP]
	\text{ and }\quH[\roleP]\tySub\quHi[\roleP]
        \end{equation}
        Hence we can apply \Cref{lem:gt-move} to obtain the desired result.
        \begin{equation}
          \gtProjRel{\gtGi}{\roleP}
            {(\quCons{\quHi[\roleP]}{\quMsg{\roleQ}{\stLab}{\tyGroundi}},\stTi[j])}
          \quad\text{with}\quad
          \stT[j]\;\asubt\;\stTi[j]
	\quad\text{and}\quad
	\tyGround\tyGroundSub\tyGroundi
       \end{equation}
        And so, by also applying~\cref{lem:global-red},we have 
        $\stEnvAssoc{\stEnvi}{\gtGi}{a}$
        and
        $\gtG  \,\gtMove[\ltsSel{\roleP}{\roleQ}{\stLab(\tyGroundi)}]\, 
          \gtGi$ in this case.
      
        \bigskip
        \noindent
        \textbf{Case $\stEnvAnnotGenericSym = \ltsBra{\roleQ}{\roleP}{\stLab(\tyGround)}$
        (receive by a participant).}

        \smallskip
        Dually to the above case, we can again apply \Cref{lem:inv-ctx-lts}, for some label~$\stLab$,
        payload $\tyGround$, sender~$\roleQ$
        and receiver~$\roleP$. Then we have
          \begin{equation}
          \begin{split}
            &\stEnvApp{\stEnv}{\roleP}
                =(\quH[\roleP], \stT[\roleP]) \quad \text{and} \quad \stEnvApp{\stEnv}{\roleQ}
                =(\quCons{\quMsg{\roleP}{\stLab}{\tyGroundi}}{\quH[\roleQ]}, \stT[\roleQ])\\ 
                & \text{with } \stT[\roleP] = \stExtSum{\roleQ}{i\in I}{\stChoice{\stLab[i]}{\tyGround[i]}\stSeq\stT[i]}
                \text{ where }\,\stLab[j]=\stLab,\,\tyGroundi\tyGroundSub\tyGround[j]=\tyGround.
		\end{split}
          \end{equation}
        Now the endpoints at $\roleP$ and $\roleQ$ are updated, so 
        \(
            \stEnvApp{\stEnvi}{\roleP}
               =(\quH[\roleP],\stT[j]),
        \)
        and 
        \(
            \stEnvApp{\stEnvi}{\roleQ}
               =(\quH[\roleQ],\stT[\roleQ]),
        \)
        while other endpoints in $\stEnvi$ stay the same.
        Again, given $\stT[\roleP] \ne \stEnd$, there exists $\stTi[\roleP]$ such that
        \begin{equation}
           \gtProjRel{\gtG}{\roleP}
             {(\quHi[\roleP],\stTi[\roleP])}
           \quad\text{with}\quad
           \stT[\roleP]\;\asubt\;\stTi[\roleP].
	\text{ and }\quH[\roleP]\tySub\quHi[\roleP].
        \end{equation}
        and 
        \begin{equation}
          \gtProjRel{\gtG}{\roleQ}
            {(\quCons{\quMsg{\roleP}{\stLab}{\tyGroundii}}{\quHi[\roleQ]},\stTi[\roleQ])}
          \quad\text{with}\quad
          \stT[\roleQ]\;\asubt\;\stTi[\roleQ]\quad\text{and}\quad
	\tyGroundi\tyGroundSub\tyGroundii
	\text{ and }\quH[\roleQ]\tySub\quHi[\roleQ].
       \end{equation}
        Hence we can apply \Cref{lem:gt-move} to obtain the desired result.
        \begin{equation}
          \gtProjRel{\gtGi}{\roleP}
            {(\quHi[\roleP],\stTi[j])}
          \quad\text{with}\quad
          \stT[j]\;\asubt\;\stTi[j].
       \end{equation} and 
       \begin{equation}
        \gtProjRel{\gtGi}{\roleQ}
          {(\quHi[\roleQ],\stTi[\roleQ])}.
     \end{equation}
        And so we have 
        $\stEnvAssoc{\stEnvi}{\gtGi}{a}$
        and
        $\gtG  \,\gtMove[ \ltsBra{\roleQ}{\roleP}{\stLab(\tyGroundii)}]\, 
          \gtGi$.
      \end{proof}

Now that we have~\cref{thm:gtype:proj-comp},
\cref{thm:gtype:proj-sound} is truly quite weak.
Indeed, we only need to know that
contexts, associated to a non-$\gtEnd$ global type,
can move.

\begin{lemma}[Projection preserves operational semantics]
  \label{lem:proj-preserves-sem}
  If $\gtG\,\gtMove$, $\gtProjRel{\gtG}{\roleP}{\stEnvApp{\stEnv}{\roleP}}$ for all $\roleP \in \dom{\stEnv}\supseteq\gtRoles{\gtG}$,
  then $\stEnv \, \stEnvMoveAnnot{}$.
\end{lemma}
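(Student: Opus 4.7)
The plan is to derive this lemma as an immediate corollary of~\cref{thm:gtype:proj-sound}, by observing that under the stated hypothesis the projection itself witnesses an association $\stEnvAssoc{\stEnv}{\gtG}{a}$. First I would unpack $\gtG\,\gtMove$ to obtain a label $\stEnvAnnotGenericSym$ and continuation $\gtGi$ with $\gtG\,\gtMove[\stEnvAnnotGenericSym]\,\gtGi$.

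Next I would verify that $\stEnvAssoc{\stEnv}{\gtG}{a}$ holds. The domain condition $\dom{\stEnv}\supseteq\gtRoles{\gtG}$ is given directly. For each $\roleP\in\gtRoles{\gtG}$, write $\stEnvApp{\stEnv}{\roleP}=(\quH[\roleP],\stT[\roleP])$; by hypothesis $\gtProjRel{\gtG}{\roleP}{(\quH[\roleP],\stT[\roleP])}$. Taking $\stTi[\roleP]=\stT[\roleP]$ and $\quHi[\roleP]=\quH[\roleP]$, the remaining obligations $\stT[\roleP]\asubt\stT[\roleP]$ and $\quH[\roleP]\tySub\quH[\roleP]$ hold by reflexivity of $\asubt$ (Lemma~[Reflexivity and Transitivity of Subtyping]) and by a routine induction on the inductive definition of queue subtyping using reflexivity of $\tyGroundSub$ on basic types.

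Finally, I would apply~\cref{thm:gtype:proj-sound} to this association and to the transition $\gtG\,\gtMove[\stEnvAnnotGenericSym]\,\gtGi$. This yields a context $\stEnvi$, labels $\stEnvAnnotGenericSymi\preccurlyeq\stEnvAnnotGenericSymii$, and a global type $\gtGii$ with $\stEnv\,\stEnvMoveAnnot{\stEnvAnnotGenericSymi}\,\stEnvi$ (among further conditions not needed here). In particular $\stEnv\,\stEnvMoveAnnot{}$, as required. Any extra roles in $\dom{\stEnv}\setminus\gtRoles{\gtG}$ play no role in this argument: they contribute nothing to the association conditions and are inert under the transition via \RULE{\iruleTCtxCong}.

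There is no substantive obstacle: the content of the lemma is entirely absorbed by~\cref{thm:gtype:proj-sound}, with the only minor bookkeeping being the reflexivity checks needed to recast the pure projection hypothesis as an association witness.
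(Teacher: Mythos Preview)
Your approach is circular. In the paper, \cref{thm:gtype:proj-sound} is itself proved \emph{using} \cref{lem:proj-preserves-sem}: the soundness proof first extracts the projected context $\stEnvii$ from the association, then invokes \cref{lem:proj-preserves-sem} to conclude $\stEnvii\,\stEnvMoveAnnot{}$, and only afterwards applies \cref{thm:gtype:proj-comp}. So you cannot appeal to \cref{thm:gtype:proj-sound} here without begging the question.

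The paper's proof is instead a direct case analysis on the head of $\unfoldOne{\gtG}$. Since $\gtG\,\gtMove$, the unfolding is not $\gtEnd$, so it is either a transmission $\gtComm{\roleP}{\roleQ}{i\in I}{\gtLab[i]}{\tyGround[i]}{\gtG[i]}$ or an en-route transmission $\gtCommSquig{\roleP}{\roleQ}{\gtLab}{}{\gtLab}{\tyGround}{\gtGi}$. In the first case, the projection rule \RULE{P-$\oplus$} forces $\stEnvApp{\stEnv}{\roleP}$ to have a selection type headed by $\roleQ$, so $\stEnv$ can fire \RULE{\iruleTCtxOut}. In the second case, \RULE{P-$\oplus$-II} and \RULE{P-$\&$-II} force $\roleP$'s queue to begin with $\quMsg{\roleQ}{\gtLab}{\tyGround}$ and $\roleQ$'s local type to be a branching from $\roleP$ offering $\gtLab$, so $\stEnv$ can fire \RULE{\iruleTCtxIn}. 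This is elementary and does not rely on any of the heavier machinery.
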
  
\begin{proof}
If $\gtG\,\gtMove$, then $\unfoldOne{\gtG}\neq\gtEnd$ so
write $\unfoldOne{\gtG} =
\gtComm{\roleP}{\roleQ}{i \in I}{\gtLab[i]}{\tyGround[i]}{\gtG[i]}$ or
$\gtCommSquig{\roleP}{\roleQ}{\gtLab[j]}{i \in I}{\gtLab[i]}{\tyGround[i]}{\gtG[i]}$.
If $\unfoldOne{\gtG} = \gtComm{\roleP}{\roleQ}{i \in I}{\gtLab[i]}{\tyGround[i]}{\gtG[i]}$, then
let $\stEnvApp{\stEnv[\gtG]}{\roleP}=(\quH,\stT)$ so
$\unfoldOne{\stT}=\stIntSum{\roleQ}{i \in I}{\stChoice{\stLab[i]}{\tyGround[i]} \stSeq \stT[i]}$ so
$\stEnv \, \stEnvMoveAnnot{\ltsSel{\roleP}{\roleQ}{\gtLab[j]}}$ for $j\in I$.
If $\unfoldOne{\gtG} = \gtCommSquig{\roleP}{\roleQ}{\gtLab[j]}{i \in I}{\gtLab[i]}{\tyGround[i]}{\gtG[i]}$, then
let $\stEnvApp{\stEnv[\gtG]}{\roleQ}=(\quH,\stT)$ and $\stEnvApp{\stEnv[\gtG]}{\roleP}=(\quHi,\stTi)$ so
$\quHi = \quCons{\quMsg{\roleQ}{\gtLab[j]}{\tyGround[j]}}{\quHii}$ and
$\unfoldOne{\stT}=\stExtSum{\roleP}{i \in I}{\stChoice{\stLab[i]}{\tyGround[i]} \stSeq \stT[i]}$ so
$\stEnv \, \stEnvMoveAnnot{\ltsBra{\roleQ}{\roleP}{\gtLab[j]}}$.
\end{proof}

\thmProjSoundness*
\begin{proof}
By association, we know there exists a context $\stEnvii$ such that $\gtProjRel{\gtG}{\roleP}{\stEnvApp{\stEnvii}{\roleP}}$ for all $\roleP \in \dom{\stEnvii}$ with $\stEnv \asubt \stEnvii$. 
Then we can apply \cref{lem:proj-preserves-sem} to get that $\stEnvii \, \stEnvMoveAnnot{\stEnvAnnotGenericSym}$.
So, $\stEnv \, \stEnvMoveAnnot{\stEnvAnnotGenericSymi}$ as
the precise asynchronous subtyping relates non-deadlocked contexts to non-deadlocked contexts.
Then the desired result is obtained by applying \cref{thm:gtype:proj-comp}.
\end{proof}

\subsection{Properties of Associated Contexts}
\label{sec:live}
Now only~\cref{thm:assoc-live} remains to be shown.
Similarly to the synchronous setting,
liveness implies deadlock-freedom.
However unlike in the synchronous setting,
liveness also implies safety.
Therefore, it suffices to prove that
associated contexts are live.
With the observation that liveness is downward closed under
the precise asynchronous subtyping,
we need only focus on projected contexts.

\begin{lemma}[Liveness is Downwards closed under subtyping~{\cite[Lemma 4.10]{GPPSY2023}}]
  \label{lem:live-down-closed}
  If $\stEnv$ is live and $\stEnvi \asubt \stEnv$ then $\stEnvi$ is live.
\end{lemma}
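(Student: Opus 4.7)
The plan is to prove the contrapositive: assuming $\stEnvi$ is not live, exhibit a non-live fair path from $\stEnv$, contradicting liveness of $\stEnv$. Let $\{\stEnvi[i]\}_{i\in I}$ be a fair path starting at $\stEnvi[0]=\stEnvi$ that witnesses a failure of L1 (some queued message $\quMsg{\roleQ}{\stLab}{\tyGround}$ on some $\stEnvi[i_0]$ is never dequeued along the path) or L2 (some external choice at $\stEnvi[i_0]$ is never satisfied).

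First, I would exploit the decomposition of $\asubt$ into SISO refinements via $\subttt$. Componentwise, $\stEnvApp{\stEnvi}{\roleP} \asubt \stEnvApp{\stEnv}{\roleP}$ means that every single-output tree of the subtype matches, through the refinement rules \RULE{Ref-$\mathcal{A}$}, \RULE{Ref-$\mathcal{B}$}, \RULE{Ref-In}, \RULE{Ref-Out}, some single-input tree of the supertype. Although $\asubt$ does not yield a step-by-step operational correspondence (cf.~\Cref{ex:sub-not-corr}), the reorderings witnessed by \RULE{Ref-$\mathcal{A}$} and \RULE{Ref-$\mathcal{B}$} preserve the multiset of actions; they only permute receives past independent sends/receives, or sends past sends to other destinations. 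So each action along the $\stEnvi$-path has a uniquely determined counterpart at the same participant in $\stEnv$, possibly scheduled at a different position.

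Second, I would use this action-level correspondence to transport the fair $\stEnvi$-path into a path $\{\stEnv[j]\}_{j\in J}$ from $\stEnv[0]=\stEnv$: each send $\ltsSel{\roleP}{\roleQ}{\stLab(\tyGround)}$ or receive $\ltsBra{\roleQ}{\roleP}{\stLab(\tyGround)}$ performed by $\stEnvi$ is realised by the corresponding transition of $\stEnv$, scheduled in an order compatible with $\stEnv$'s type structure (the refinement witness tells us exactly where to insert it). Fairness conditions F1 and F2 for $\{\stEnv[j]\}$ follow from fairness of $\{\stEnvi[i]\}$, because an action becomes enabled in $\stEnv$ iff its $\asubt$-counterpart becomes enabled (after matching permutations) in $\stEnvi$. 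Finally, the liveness violation transports: an unreceived message $\quMsg{\roleQ}{\stLab}{\tyGround}$ on $\stEnvi[i_0]$ has a matching unreceived message on $\stEnv[j_0]$ (queues are preserved by $\asubt$ up to payload subtyping), and an indefinitely-waiting external choice on $\stEnvi$ maps to one on $\stEnv$. This contradicts liveness of $\stEnv$.

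The main obstacle is formalising the action transport and proving it preserves fairness. Because $\asubt$ admits unbounded anticipation of outputs, a single send in $\stEnvi[i]$ may correspond to a send arbitrarily far along the $\stEnv$-path, and symmetrically a late receive in $\stEnvi$ may correspond to an early one in $\stEnv$. Maintaining a coherent scheduling that witnesses the same failure on both sides requires careful bookkeeping through the SISO-path correspondence, and in particular a proof that the transport maps fair paths to fair paths and liveness violations to liveness violations without introducing spurious starvation.
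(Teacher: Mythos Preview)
The paper does not prove this lemma: it is imported verbatim from \cite[Lemma~4.10]{GPPSY2023} and used as a black box in the proof of \Cref{thm:assoc-live}. There is therefore no ``paper's own proof'' to compare against.

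Your sketch captures the right high-level shape (transport a non-live fair path along $\asubt$), and you are candid about where the real work lies. But as a proof it is incomplete in exactly the place you flag: the claim that the action transport can be made to preserve \emph{fairness} is doing all the heavy lifting and is not justified. In particular, your assertion that ``an action becomes enabled in $\stEnv$ iff its $\asubt$-counterpart becomes enabled (after matching permutations) in $\stEnvi$'' is not obviously true and is close to the operational correspondence that \Cref{ex:sub-not-corr} shows $\asubt$ does \emph{not} provide. Unbounded anticipation means an output enabled in $\stEnvi$ may correspond to an output that is only enabled in $\stEnv$ after arbitrarily many intervening receives, so a naive reordering can manufacture starvation where there was none, or lose fairness. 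The actual argument in \cite{GPPSY2023} goes through the SISO decomposition and the refinement relation $\subttt$ in detail to control exactly this; reproducing that machinery is what a self-contained proof would require.
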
  

Now, the precise asynchronous subtyping cannot cause any further issues.
The proof that a projected context is live is very similar
to the proofs in the inductive case and in the synchronous setting.
We have already observed sufficient operational correspondence
between global types and projected contexts to derive liveness.

\begin{lemma}[Standard Association is Complete]
\label{lem:std-complete}
	If $\gtProjRel{\gtG}{\roleP}{\stEnvApp{\stEnvi}{\roleP}}$
    for all $\roleP \in \dom{\stEnv}$,
    $\gtRoles{\gtG}\subseteq\dom{\stEnv}$,
    $\stEnv\stSub\stEnvi$, and
    $\stEnv\,\stEnvMoveGenAnnot\,\stEnvii$, then
    there exist $\gtGi$, $\stEnvAnnotGenericSymi\succcurlyeq\stEnvAnnotGenericSym$, and $\stEnviii$ such that
    $\gtProjRel{\gtGi}{\roleP}{\stEnvApp{\stEnviii}{\roleP}}$
    for all $\roleP \in \dom{\stEnv}$,
    $\stEnvi\stSub \stEnviii$, and
    $\gtG\,\gtMove[\stEnvAnnotGenericSymi]\,\gtGi$.
\end{lemma}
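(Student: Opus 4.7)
The plan is to case-split on $\stEnvAnnotGenericSym$ and, in each case, exploit the fact that the standard subtyping $\stSub$ (contained in $\asubt$) enjoys an action-by-action operational correspondence that $\asubt$ lacks. This lets me transport $\stEnv$'s transition to a compatible transition of $\stEnvi$ with a possibly widened payload, and then invoke Lemma~\ref{lem:gt-move} on the projections to produce the desired global transition.

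First I would invert $\stEnv \stEnvMoveGenAnnot \stEnvii$ via Lemma~\ref{lem:inv-ctx-lts} to extract the internal or external choice structure of $\stEnv$ at the subject role (and, in the receive case, the head message on the sender's queue). The hypothesis $\stEnv \stSub \stEnvi$, together with rules \iruleStSubOut and \iruleStSubIn and the covariance of queue subtyping, transfers this structure to $\stEnvi$: the matching branch $\stLab$ of $\stEnvi(\roleP)_2$ exists with payload $\tyGroundi$ satisfying $\tyGround \tyGroundSub \tyGroundi$, the head of $\stEnvi(\roleQ)_1$ (in the receive case) carries a compatible payload, and the branchwise continuations satisfy $\stT[j] \stSub \stTi[j]$. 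This simultaneously exhibits the widened label $\stEnvAnnotGenericSymi \succcurlyeq \stEnvAnnotGenericSym$ that $\stEnvi$ is ready to perform.

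Next, using $\gtProjRel{\gtG}{\roleP}{\stEnvi(\roleP)}$ (and $\gtProjRel{\gtG}{\roleQ}{\stEnvi(\roleQ)}$ in the receive case) together with the inclusion $\stSub \subseteq \asubt$, I would apply Lemma~\ref{lem:gt-move}---item~(1) in the send case, item~(2) in the receive case---to obtain $\gtG \gtMove[\stEnvAnnotGenericSymi] \gtGi$ with $\gtProjRel{\gtGi}{\roleP}{\stEnviii(\roleP)}$ (and $\gtProjRel{\gtGi}{\roleQ}{\stEnviii(\roleQ)}$). For every non-subject role $\roleR$, Lemma~\ref{lem:global-red}(3) yields a supertype $\stTii[\roleR]$ of $\stEnvi(\roleR)_2$ with $\gtProjRel{\gtGi}{\roleR}{(\stEnvi(\roleR)_1, \stTii[\roleR])}$---the subtyping $\stEnvi(\roleR)_2 \stSub \stTii[\roleR]$ coming from Lemma~\ref{lem:mer:subtype}---and I set $\stEnviii(\roleR) = (\stEnvi(\roleR)_1, \stTii[\roleR])$. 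The required subtyping between the pre-transition associated context and $\stEnviii$ then follows role-by-role: at the subject role from the branchwise $\stSub$ relation together with queue covariance, at every other role by transitivity through $\stEnvi(\roleR) \stSub \stEnviii(\roleR)$ combined with the original hypothesis $\stEnv(\roleR) \stSub \stEnvi(\roleR)$.

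The main obstacle, and the reason this statement is worth proving separately from Theorem~\ref{thm:gtype:proj-comp}, is ensuring that the tracked subtyping remains $\stSub$ rather than degenerating into $\asubt$ at any intermediate step. Concretely, the branchwise conclusion of Lemma~\ref{lem:gt-move} is phrased with $\asubt$, but in our setting it is inherited from the premise of \iruleStSubOut/\iruleStSubIn and therefore gives $\stSub$ directly; and the supertype extracted from Lemma~\ref{lem:global-red}(3) is the post-transition projection, which sits in $\stSub$ with the pre-transition projection by Lemma~\ref{lem:mer:subtype} (and not merely in $\asubt$). Keeping these two observations aligned across both the subject and non-subject roles is what delivers the strengthened conclusion.
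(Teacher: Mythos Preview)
Your proposal is correct and follows the same approach the paper indicates: the paper's proof simply reads ``Follows the same structure as the proof of~\cref{thm:gtype:proj-comp}'', and you have spelled out exactly that structure, correctly identifying that the key difference is maintaining $\stSub$ rather than $\asubt$ throughout. Note that the stated conclusion $\stEnvi\stSub\stEnviii$ appears to be a typo for $\stEnvii\stSub\stEnviii$ (as its use in \cref{lem:proj-live} confirms), which is what your argument actually establishes; your reliance on Lemma~\ref{lem:gt-move} as a black box is slightly indirect since its conclusion is phrased in $\asubt$, but as you observe, when the premise is $\stSub$ the local context in Lemma~\ref{lem:inv-subtyping} degenerates to the trivial hole and the branchwise relation is $\stSub$ on the nose.
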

\begin{proof}
	Follows the same structure as the proof of~\cref{thm:gtype:proj-comp}.
\end{proof}

\begin{lemma}[Standard Association is Head-Sound]
\label{lem:std-sound}
    If $\gtProjRel{\gtG}{\roleP}{\stEnvApp{\stEnv}{\roleP}}$
    for all $\roleP \in \dom{\stEnv}$,
    $\gtRoles{\gtG}\subseteq\dom{\stEnv}$, then:
    \begin{itemize}
        \item If $\unfoldOne{\gtG}=\gtComm{\roleP}{\roleQ}{i \in I}{\gtLab[i]}{\tyGround[i]}{\gtG[i]}$, then
        $\stEnv\,\stEnvMoveAnnot{\ltsSel{\roleP}{\roleQ}{\stChoice{\stLab[k]}{\tyGround[k]}}}$
        for some $k\in I$.
        \item If $\unfoldOne{\gtG}=\gtCommSquig{\roleP}{\roleQ}{\stLab}{}{\gtLab}{\tyGround}{\gtGi}$, then
        $\stEnv\,\stEnvMoveAnnot{\ltsBra{\roleP}{\roleQ}{\stChoice{\stLab}{\tyGround}}}$.
    \end{itemize}
\end{lemma}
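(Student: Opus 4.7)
The plan is to case-split on the shape of $\unfoldOne{\gtG}$, use the head to identify the participants at which to instantiate the projection hypothesis, then read the conclusion directly off the context LTS rules in Figure~\ref{fig:gtype:tc-red-rules}. In each case the argument reduces to inverting the coinductive projection clauses of Definition~\ref{def:global-proj} against the head shape and applying one of \iruleTCtxOut or \iruleTCtxIn, lifted by \iruleTCtxCong and unfolded by \iruleTCtxRec as needed.

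For the transmission case $\unfoldOne{\gtG} = \gtComm{\roleP}{\roleQ}{i \in I}{\gtLab[i]}{\tyGround[i]}{\gtG[i]}$, I would note that $\roleP \in \gtRoles{\gtG} \subseteq \dom{\stEnv}$, set $\stEnvApp{\stEnv}{\roleP} = (\quH[\roleP], \stT[\roleP])$, and invert the projection against the head: only clause P-$\oplus$ supplies structural information about $\stT[\roleP]$ (all other clauses either fail to match a $\gtComm$-head or fail to match $\roleP$ in the sender position), giving $\unfoldOne{\stT[\roleP]} = \stIntSum{\roleQ}{i \in I}{\stChoice{\stLab[i]}{\tyGround[i]} \stSeq \stT[i]}$ for some continuations $\stT[i]$. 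Fixing any $k \in I$ (non-empty by well-formedness), \iruleTCtxOut together with \iruleTCtxRec yields a $\ltsSel{\roleP}{\roleQ}{\stChoice{\stLab[k]}{\tyGround[k]}}$-step at $\roleP$'s entry, and \iruleTCtxCong lifts it to $\stEnv$. For the en-route case $\unfoldOne{\gtG} = \gtCommSquig{\roleP}{\roleQ}{\gtLab}{}{\gtLab}{\tyGround}{\gtGi}$, both $\roleP$ and $\roleQ$ lie in $\dom{\stEnv}$, so I would invoke the projection twice: at $\roleP$, clause P-$\oplus$-II exposes the message for $\roleQ$ at the head of $\roleP$'s queue as $\stEnvApp{\stEnv}{\roleP} = (\quCons{\quMsg{\roleQ}{\gtLab}{\tyGround}}{\quHii[\roleP]}, \stT[\roleP])$ (up to the equivalence of Definition~\ref{def:queue-equiv} if needed); at $\roleQ$, clause P-$\&$-II forces $\unfoldOne{\stT[\roleQ]} = \stExtSum{\roleP}{}{\stChoice{\gtLab}{\tyGround} \stSeq \stTi}$, where $\stEnvApp{\stEnv}{\roleQ} = (\quH[\roleQ], \stT[\roleQ])$. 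These two facts are precisely the premises of \iruleTCtxIn on the sub-context $\stEnvMap{\roleQ}{\stEnvApp{\stEnv}{\roleQ}} \stEnvComp \stEnvMap{\roleP}{\stEnvApp{\stEnv}{\roleP}}$, and after \iruleTCtxRec and \iruleTCtxCong they produce the required receive transition.

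The only non-routine step in either case is the inversion of the coinductive projection, but this is straightforward by inspection: the premises of the eight clauses in Definition~\ref{def:global-proj} are mutually exclusive once we fix the head shape of $\unfoldOne{\gtG}$ and the sender/receiver/bystander status of the projection target, so at most one clause supplies the non-trivial structural information used above. No further difficulty arises; the remaining work is a direct application of the typing context LTS rules, mirroring the shape of the argument used in Case 1 of the proof of \Cref{thm:gtype:proj-comp} but avoiding any reasoning about asynchronous subtyping because here $\stEnv$ is literally a projection of $\gtG$.
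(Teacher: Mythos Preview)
Your proposal is correct and follows essentially the same approach as the paper, which records the proof simply as ``Clear from the definition of projection''; your elaboration matches the expanded version the paper gives for the analogous \Cref{lem:proj-preserves-sem}. The only point worth noting is that you should double-check the order of roles in the receive label against the \iruleTCtxIn rule and the lemma statement (the statement's $\ltsBra{\roleP}{\roleQ}{\cdots}$ appears to have the roles transposed relative to \iruleGtMoveBra), but your reasoning about which participant holds the queue and which holds the external choice is correct.
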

\begin{proof}
Clear from the definition of projection.
\end{proof}

With this operational correspondence,
we can proceed normally.
We take fair sequences from a projected context
and use~\cref{lem:std-complete} to
determine a corresponding sequence of global types.
By~\cref{lem:std-sound} and fairness,
every global type head must be reduced.
Finally by $\text{balanced}^+$, we deduce that
every redex will either be reduced or,
eventually, become the head of the global type,
and then be reduced.

\begin{lemma}[Projection ensures Liveness]
  \label{lem:proj-live}
  If $\gtProjRel{\gtG}{\roleP}{\stEnvApp{\stEnv}{\roleP}}$ for all $\roleP \in \dom{\stEnv}$ and $\gtRoles{\gtG}\subseteq\dom{\stEnv}$,
then $\stEnv$ is live.
\end{lemma}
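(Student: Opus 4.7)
My plan is to lift a fair path in $\stEnv$ to a trace in $\gtG$ via \Cref{lem:std-complete}, and then use balancedness of $\gtG$ together with the head-soundness of \Cref{lem:std-sound} to show that every pending obligation at the context level must eventually be discharged. Concretely, given a fair path $\stEnv = \stEnv_0 \,\stEnvMoveAnnot{\stEnvAnnotGenericSym[0]}\, \stEnv_1 \,\stEnvMoveAnnot{\stEnvAnnotGenericSym[1]}\, \cdots$, I would iteratively apply \Cref{lem:std-complete} (starting from $\stEnv \stSub \stEnv$ by reflexivity) to produce a matching sequence of global types $\gtG = \gtG_0 \,\gtMove[\stEnvAnnotGenericSymi[0]]\, \gtG_1 \,\gtMove[\stEnvAnnotGenericSymi[1]]\, \cdots$ together with contexts $\stEnvi[i]$ such that each $\stEnvi[i]$ is projected from $\gtG[i]$ with $\stEnv[i] \stSub \stEnvi[i]$, and with $\stEnvAnnotGenericSymi[i] \succcurlyeq \stEnvAnnotGenericSym[i]$.

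For condition L1, suppose $\stEnvApp{\stEnv[i]}{\roleP} = (\quCons{\quMsg{\roleQ}{\stLab}{\tyGround}}{\quH}, \stT)$. By the projection rules and the fact that $\stEnv[i] \stSub \stEnvi[i]$ is projected from $\gtG[i]$, the pair $(\roleP,\roleQ)$ must lie in $\gtMRoles{\gtG[i]}$, so by \Cref{lem:en-route-bounded} the quantity $\gtMDepth{\gtG[i]}{\roleP}{\roleQ}$ exists (as $\gtG$ is $\text{balanced}^+$ and this property is preserved along the trace by \Cref{thm:bal-closed}). By \Cref{lem:depth-decr}, this depth is non-increasing along non-$\ltsBra{\roleQ}{\roleP}{\_}$ transitions, and fairness condition~F2 forces every enabled receive to eventually fire; so I would argue by well-founded induction on $\gtMDepth{\gtG[i]}{\roleP}{\roleQ}$ that eventually the en-route transmission surfaces at the head of some $\gtG[k]$, at which point \Cref{lem:std-sound} yields the required $\ltsBra{\roleQ}{\roleP}{\stLab(\_)}$ transition in $\stEnvi[k]$, which by the shape of standard subtyping can be reflected back to $\stEnv[k]$. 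For L2, the argument is analogous but uses $\gtDepth{\gtG[i]}{\roleP}$ (which exists by balancedness) to show the external choice must eventually be at the head, combined with fairness~F1 and \Cref{lem:std-sound}.

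The main obstacle will be bridging the gap between transitions in the global trace (which may be taken by \emph{any} participant) and the specific pending action we care about. The key observation is that transitions whose subject is not $\roleP$ (respectively, not touching the en-route pair $(\roleP,\roleQ)$) cannot increase the relevant depth, while fairness forces progress in a way that is compatible with this depth-decreasing invariant. A subtle point is that when we use \Cref{lem:std-complete}, the global action label $\stEnvAnnotGenericSymi[i]$ may be strictly coarser than $\stEnvAnnotGenericSym[i]$ (only related by $\succcurlyeq$), so I would need to verify that the depth-decrease argument survives this relaxation --- which it does, since $\succcurlyeq$ only relaxes payload types and preserves subjects and message labels. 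Once both L1 and L2 are established for an arbitrary fair path, we conclude $\stEnv$ is live by \Cref{def:live}.
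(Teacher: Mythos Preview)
Your proposal is correct and follows essentially the same approach as the paper: lift the fair path to a global-type trace via \Cref{lem:std-complete}, then for L1 argue by induction on $\gtMDepth{\gtG[i]}{\roleP}{\roleQ}$ (using \Cref{lem:std-sound} with fairness to force head progress and \Cref{lem:depth-decr} for non-increase), and for L2 do the same with $\gtDepth{\gtG[i]}{\roleP}$. Your additional remarks about $\succcurlyeq$ preserving subjects and labels, and about reflecting transitions back through $\stSub$, are sound observations that the paper leaves implicit.
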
  
\begin{proof}
Let $\left(\stEnv[n]\right)_{n\in N}$ with $\left(\stEnvAnnotGenericSym_{n}\right)_{n,n+1\in N}$ be a fair sequence starting from $\stEnv$.
By~\cref{lem:std-complete}, find
$\left(\gtG[n]\right)_{n\in N}$ (\gtG[0]=\gtG) and 
$\left(\stEnvi[n]\right)_{n\in N}$ (\stEnvi[0]=\stEnv) such that
$\gtProjRel{\gtG[n]}{\roleP}{\stEnvApp{\stEnvi[n]}{\roleP}}$ for all $\roleP \in \dom{\stEnv}$,
$\stEnv[n]\stSub\stEnvi[n]$, and
$\gtG[n]\,\gtMove[{\stEnvAnnotGenericSym_n}]\,\gtG[n+1]$.
If $\stEnvApp{\stEnv[i]}{\roleP}$ has a message for $\roleQ$ in its queue, then
$(\roleP,\roleQ)\in\gtMRoles{\gtG[i]}$ so
$\gtMDepth{\gtG[i]}{\roleP}{\roleQ}$ exists.
Proceed by induction on $\gtMDepth{\gtG[i]}{\roleP}{\roleQ}$.
Suppose that this message is never dequeued.
Now apply~\cref{lem:std-sound} via fairness to get to $j>i$ and~\cref{lem:depth-decr} by our assumption.
The message is still on the queue by our assumption, and $\gtMDepth{\gtG[j]}{\roleP}{\roleQ}<\gtMDepth{\gtG[i]}{\roleP}{\roleQ}$.
By the I.H, the message will be dequeued.
The case for $\stEnvApp{\stEnv[i]}{\roleP}$ having a receive at the head is similar,
but we proceed by induction on $\gtDepth{\gtG[i]}{\roleP}$ instead of $\gtMDepth{\gtG[i]}{\roleP}{\roleQ}$.
\end{proof}

\thmProjLive*
\begin{proof}
Let $\gtProjRel{\gtG}{\roleP}{\stEnvApp{\stEnvi}{\roleP}}$ for all $\roleP \in \dom{\stEnv}$
and $\stEnv\asubt\stEnvi$, by association.
By~\cref{lem:proj-live} and~\cref{lem:live-down-closed}, $\stEnvi$ is live so $\stEnv$ is live.
\end{proof}
\section{Deriving the Main Theorems from Associations}
\label{sec:properties}
This section demonstrates how to derive the main theorems 
using soundness and completeness of the associations, together with 
the corresponding results from Ghilezan et al.~\cite[Theorems~4.11, 4.12 and
  4.13]{GPPSY2023}.

\subsection{Asynchronous Multiparty Session Processes}
We summarise the calculus from Ghilezan et al.~\cite{GPPSY2023} and
main properties.

\myparagraph{Process Syntax.}
We define \emph{processes} ($\mpP, \mpQ, \mpP_i, \ldots$) by the following grammar:
\[
\mpP \ ::= \ \ 
\roleQ\mathord{!}\ell\langle e \rangle \mpSeq \mpP 
\bnfsep \textstyle\sum_{i\in I} \roleQ\mathord{?}\ell_i(x_i)\mpSeq \mpP_i 
\bnfsep \mpIf{e}{\mpP}{\mpQ} 
\bnfsep \mpNil 
\bnfsep \mpRec{\mpX}{\mpP} 
\bnfsep \mpX
\]
where $\roleQ\mathord{!}\ell\langle e \rangle \mpSeq \mpP$ is a \emph{send} (selection) to role $\roleQ$ with label $\ell$ and payload expression $e$, 
followed by continuation $\mpP$; 
$\sum_{i\in I} \roleQ\mathord{?}\ell_i(x_i)\mpSeq \mpP_i$ is a \emph{receive} (branching) from role $\roleQ$, 
where a message with label $\ell_k$ binds variable $x_k$ and continues as $\mpP_k$;
$\mpIf{e}{\mpP}{\mpQ}$ is a conditional;
$\mpNil$ is the inactive process;
$\mpRec{\mpX}{\mpP}$ and $\mpX$ are for recursion. 
\myparagraph{Queue Syntax.}
We define \emph{message queues} ($h,h_i,\ldots$) by the following grammar:
\[
h\ ::= \ \
\epsilon
\bnfsep (\roleQ, \ell(\mpV))
\bnfsep h_1 \cdot h_2
\]
where $\epsilon$ is the empty queue;
$(\roleQ, \ell(\mpV))$ is the queue denoting that the message with
label $\ell$ and value $\mpV$
has been sent to $\roleQ$; and
$h_1 \cdot h_2$ is the concatenation
of the queues $h_1$ and $h_2$.
We consider message queues up to
an equivalence relation $\equiv$,
similar to~\cref{def:queue-equiv}
but with $\epsilon$ for $\quEmpty$,
values for basic types,
and message queues for syntactic queue types.

\begin{remark}[Session Interleaving and Delegation]
This calculus excludes session interleaving
and delegation elements.
This choice was made by Ghilezan et al.~\cite{GPPSY2023},
to simplify the calculus by Coppo et
       al.~\cite{CDPY2015}
	and focus on subtyping.
	For more detail and examples of how to tackle this,
	see the work by Coppo et al.~\cite{CDYP13} and 
        van den Heuvel and P\'erez~\cite{Van_den_Heuvel2022-hr}.
\end{remark}

\myparagraph{Typing Judgements.}
We write $\stJudge{\mpEnv}{P}{\stT}$ to denote that process $P$ has local type $\stT$ 
under process variable environment $\mpEnv$, 
and $\stJudge{}{h}{\quH}$ to denote that queue $h$ has queue type $\quH$. 
The typing rules for processes are standard and are shared by both the top-down and
bottom-up systems. We write $\mpEnv$ for the process-variable and expression-variable context,
mapping process variables to their local types and expression variables to their sorts.
\RULE{T-$\mpNil$} types the inactive process; \RULE{T-Var} and
\RULE{T-Rec} allow recursion at the process level; \RULE{T-$\oplus$} and
\RULE{T-$\&$} allow typing of selections and branchings; \RULE{T-Cond}
types conditionals; and the subsumption rule \RULE{T-Sub} allows for subtyping.
Queue rules type the empty queue, single messages, and concatenation.
We assume a separate sorting system for expressions, at least providing
judgements of the form $\stJudge{\mpEnv}{e}{\tyBool}$ and $\stJudge{\mpEnv}{e}{\tyInt}$ for
boolean and integer expressions.
\begin{definition}[Typing Rules for Processes~\cite{GPPSY2023}]
\label{def:process-typing}
\leavevmode
\[
\begin{array}{c}
\inference[T-$\mpNil$]{}{\stJudge{\mpEnv}{\mpNil}{\stEnd}}
\qquad
\inference[T-Var]{\mpX : \stRecVar \in \mpEnv}{\stJudge{\mpEnv}{\mpX}{\stRecVar}}
\qquad
\inference[T-Sub]{\stJudge{\mpEnv}{\mpP}{\stT} \quad \stT \asubt \stTi}{\stJudge{\mpEnv}{\mpP}{\stTi}}
\\[3ex]
\inference[T-$\oplus$]{
  \stJudge{\mpEnv}{e_j}{\tyGround[j]} \quad
  \stJudge{\mpEnv}{\mpP_j}{\stT[j]} \quad j \in I
}{
  \stJudge{\mpEnv}{\roleQ\mathord{!}\ell_j\langle e_j \rangle \mpSeq \mpP_j}{
    \stIntSum{\roleQ}{i \in I}{\stChoice{\ell_i}{\tyGround[i]} \stSeq \stT[i]}
  }
}
\qquad
\inference[T-$\&$]{
  \forall i \in I \quad \stJudge{\mpEnv, x_i : \tyGround[i]}{\mpP_i}{\stT[i]}
}{
  \stJudge{\mpEnv}{\textstyle\sum_{i\in I} \roleQ\mathord{?}\ell_i(x_i)\mpSeq \mpP_i}{
    \stExtSum{\roleQ}{i \in I}{\stChoice{\ell_i}{\tyGround[i]} \stSeq \stT[i]}
  }
}
\\[3ex]
\inference[T-Cond]{
  \stJudge{\mpEnv}{e}{\tyBool} \quad
  \stJudge{\mpEnv}{\mpP_1}{\stT} \quad
  \stJudge{\mpEnv}{\mpP_2}{\stT}
}{
  \stJudge{\mpEnv}{\mpIf{e}{\mpP_1}{\mpP_2}}{\stT}
}
\qquad
\inference[T-Rec]{
  \stJudge{\mpEnv, \mpX : \stRecVar}{\mpP}{\stT}
}{
  \stJudge{\mpEnv}{\mpRec{\mpX}{\mpP}}{\stRec{\stRecVar}{\stT}}
}
\end{array}
\]

\noindent
The typing rules for queues are:
\[
\inference{}{\stJudge{}{\epsilon}{\quEmpty}}
\qquad
\inference{
  \stJudge{}{\mpV}{\tyGround}
}{
  \stJudge{}{(\roleQ, \ell(\mpV))}{\quMsg{\roleQ}{\ell}{\tyGround}}
}
\qquad
\inference{
  \stJudge{}{h_1}{\quH[1]} \quad
  \stJudge{}{h_2}{\quH[2]}
}{
  \stJudge{}{h_1 \cdot h_2}{\quCons{\quH[1]}{\quH[2]}}
}
\]
\end{definition}
\noindent
The subsumption rule \RULE{T-Sub} can be explained by Liskov and Wing's substitution principle \cite{Liskov1994-ra}:
if $\stT$ is a subtype of $\stTi$,
then an object of type $\stT$
can always replace an object of type $\stTi$.

\begin{example}[Ring Protocol Processes]
\label{ex:ring-processes}
Returning to the ring-choice protocol $\gtG[\text{\tiny ring}]$
from \cref{sec:ring},
a set of processes implementing the protocol can be given as follows:
\begin{align}
  \mpP[\roleP] &= \mpRec{\mpX}{
    \roleQ\mathord{!}\mathsf{add}\langle n \rangle \mpSeq
    \left(
      \roleR\mathord{?}\mathsf{add}(x) \mpSeq \mpX
      +
      \roleR\mathord{?}\mathsf{sub}(x) \mpSeq \mpX
    \right)
  } \\
  \mpP[\roleQ] &= \mpRec{\mpX}{
    \mpIf{e}{
      \roleR\mathord{!}\mathsf{add}\langle m \rangle \mpSeq
      \roleP\mathord{?}\mathsf{add}(y) \mpSeq \mpX
    }{
      \roleR\mathord{!}\mathsf{sub}\langle m \rangle \mpSeq
      \roleP\mathord{?}\mathsf{add}(y) \mpSeq \mpX
    }
  } \\
  \mpP[\roleR] &= \mpRec{\mpX}{
    \roleQ\mathord{?}\mathsf{add}(z) \mpSeq
      \roleP\mathord{!}\mathsf{add}\langle z{+}k \rangle \mpSeq \mpX
    +
    \roleQ\mathord{?}\mathsf{sub}(z) \mpSeq
      \roleP\mathord{!}\mathsf{sub}\langle z{-}k \rangle \mpSeq \mpX
  }
\end{align}
Processes $\mpP[\roleP]$ and $\mpP[\roleR]$ are typed by the
corresponding projected local types from \cref{sec:ring}:
$\stJudge{}{\mpP[\roleP]}{\stT[\roleP]}$ and
$\stJudge{}{\mpP[\roleR]}{\stT[\roleR]}$.

Observe that,
as far as $\stT[\roleQ]$ is concerned,
$\roleQ$'s outgoing message to $\roleR$
need not depend on the value
received from $\roleP$.
A programmer can therefore write $\mpP[\roleQ]$ to send to $\roleR$
\emph{before} receiving from $\roleP$,
allowing the $\roleQ\to\roleR$ and $\roleP\to\roleQ$ communications
to proceed concurrently,
improving throughput
as illustrated in \cref{fig:optimisation}(b).
The projected type $\stT[\roleQ]$ requires the receive first,
but $\mpP[\roleQ]$ is typed by $\stTopt[\roleQ]$,
and since $\stTopt[\roleQ] \asubt \stT[\roleQ]$,
the subsumption rule \RULE{T-Sub} accepts $\mpP[\roleQ]$
in place of any process typed by $\stT[\roleQ]$.
In this way, the global protocol $\gtG[\text{\tiny ring}]$
guarantees safety, deadlock-freedom, and liveness for the system,
while $\asubt$ gives the programmer the freedom
to choose a more efficient implementation for $\roleQ$.
\end{example}

Figure~\ref{fig:overview-of-bottomup-properties} highlights the bottom-up workflow that we use in this section:
local types are jointly model-checked against the property $\phi$ and guide the typing of the programs.
\begin{figure}[]
\centering
{\footnotesize
  \begin{tikzpicture}[node distance=12mm and 26mm, every node/.style={align=center}]
    \node (phi) {Property $\varphi$};
    \node[above=2mm of phi] (mc) {\small {\bf{model-check}}\,($\models$)};
    \node[below=12mm of phi] (LB) {\footnotesize Local Type for $\roleQ$\\ \small \boxed{\stT_{\roleQ}}};
    \node[left=26mm of LB] (LA) {\footnotesize Local Type for $\roleP$\\ \small \boxed{\stT_{\roleP}}};
    \node[right=26mm of LB] (LC) {\footnotesize Local Type for $\roleR$\\ \small \boxed{\stT_{\roleR}}};
    \node[below=12mm of LA] (PA) {\, \, \, \footnotesize Program for $\roleP$\\ \small \boxed{P_{\roleP}}};
    \node[below=12mm of LB] (PB) {\, \, \, \footnotesize Program for $\roleQ$\\ \small \boxed{P_{\roleQ}}};
    \node[below=12mm of LC] (PC) {\, \, \, \footnotesize Program for $\roleR$\\ \small \boxed{P_{\roleR}}};
    \draw[->] (LA) -- node[left=3mm]{\small {\bf{typing}}\,($\vdash$)} (PA);
    \draw[->] (LB) -- (PB);
    \draw[->] (LC) -- (PC);
    \draw[->] (LA) -- (phi);
    \draw[->] (LB) -- (phi);
    \draw[->] (LC) -- (phi);
\end{tikzpicture}
}
\caption{Bottom-up methodology for multiparty session types. 
Processes $P_{\roleP}$, $P_{\roleQ}$ and $P_{\roleR}$ are typed by
  local types $\stT_{\roleP}$, $\stT_{\roleQ}$ and $\stT_{\roleR}$, which are
  jointly model-checked against the property $\varphi$.}
\label{fig:overview-of-bottomup-properties}
\end{figure}

\myparagraph{Asynchronous Multiparty Sessions.}
We define \emph{asynchronous multiparty session} ($\N,\N_i,...$) as:
\[\N \ ::= \ \ 
\roleP \triangleleft P_{\roleP} \ | \ \roleP \triangleleft h_{\roleP} \bnfsep \N \ | \ \N'\] 
where $\roleP \triangleleft P_{\roleP}$ denotes a process $P_{\roleP}$
plays as a role $\roleP$.

The structural precongruence relation $\prestruct$ used in
\cref{def:mpst-reduction-relation} below allows us to disregard inactive processes,
unfolds recursion, and is closed under reflexivity and transitivity:
\[
\begin{array}{c}
\inference[SC-Idle]{}{\roleP \triangleleft \mpNil \;\mpPar\; \roleP \triangleleft \epsilon \;\mpPar\; \mathcal{M} \prestruct \mathcal{M}}
\qquad
\inference[SC-Refl]{}{\mathcal{M} \prestruct \mathcal{M}}
\\[2.2ex]
\inference[SC-Unfold]{}{\mpRec{\mpX}{\mpP} \prestruct \mpP\{\mpRec{\mpX}{\mpP}/\mpX\}}
\qquad
\inference[SC-Trans]{\mathcal{M} \prestruct \mathcal{M}' \quad \mathcal{M}' \prestruct \mathcal{M}''}{\mathcal{M} \prestruct \mathcal{M}''}
\end{array}
\]

 \begin{definition}[Reduction relation on sessions (asynchronous)]
 \label{def:mpst-reduction-relation}
\leavevmode\par\centering
\centerline{
\begingroup
\setlength{\arraycolsep}{2pt}\def\arraystretch{1.12}\begin{adjustbox}{width=\columnwidth,center}
\ensuremath{\begin{array}{l@{\;\;}lcl}
\RULE{R-Send} &
  \roleP \triangleleft \roleQ\mathord{!}\ell\langle e\rangle \mpSeq \mpP
  \;\mpPar\; \roleP \triangleleft h_{\roleP}
  \;\mpPar\; \mathcal{M}
  &\mpMove&
  \roleP \triangleleft \mpP
  \;\mpPar\; \roleP \triangleleft h_{\roleP}\mathbin{\cdot}(\roleQ,\ell(\mpV))
  \;\mpPar\; \mathcal{M}
  \quad (e \downarrow \mpV)
\\[0.35em]
\RULE{R-Rcv} &
  \begin{aligned}[t]
    \roleP \triangleleft \textstyle\sum_{i\in I} \roleQ\mathord{?}\ell_i(x_i)\mpSeq \mpP_i
    &\;\mpPar\; \roleP \triangleleft h_{\roleP} \\
    &\;\mpPar\; \roleQ \triangleleft \mpQ
    \;\mpPar\; \roleQ \triangleleft ( \roleP,\ell_k(\mpV))\mathbin{\cdot} h
    \;\mpPar\; \mathcal{M}
  \end{aligned}
  &\mpMove&
  \begin{aligned}[t]
    \roleP \triangleleft \mpP_k\{\mpV/x_k\}
    &\;\mpPar\; \roleP \triangleleft h_{\roleP} \\
    &\;\mpPar\; \roleQ \triangleleft \mpQ
    \;\mpPar\; \roleQ \triangleleft h
    \;\mpPar\; \mathcal{M}
  \end{aligned}
  \quad (k\in I)
\\[0.35em]
\RULE{R-Cond-T} &
  \roleP \triangleleft \mpIf{e}{\mpP}{\mpQ}
  \;\mpPar\; \roleP \triangleleft h
  \;\mpPar\; \mathcal{M}
  &\mpMove&
  \roleP \triangleleft \mpP
  \;\mpPar\; \roleP \triangleleft h
  \;\mpPar\; \mathcal{M}
  \quad (e \downarrow \mpTrue)
\\[0.35em]
\RULE{R-Cond-F} &
  \roleP \triangleleft \mpIf{e}{\mpP}{\mpQ}
  \;\mpPar\; \roleP \triangleleft h
  \;\mpPar\; \mathcal{M}
  &\mpMove&
  \roleP \triangleleft \mpQ
  \;\mpPar\; \roleP \triangleleft h
  \;\mpPar\; \mathcal{M}
  \quad (e \downarrow \mpFalse)
\\[0.35em]
\RULE{R-Struct} &
  \mathcal{M}_1 \prestruct \mathcal{M}'_1 \quad
  \mathcal{M}'_1 \mpMove \mathcal{M}'_2 \quad
  \mathcal{M}'_2 \prestruct \mathcal{M}_2
  &\Longrightarrow&
  \mathcal{M}_1 \mpMove \mathcal{M}_2
\\[0.35em]
\RULE{Err-Mism} &
  \roleP \triangleleft \textstyle\sum_{i\in I} \roleQ\mathord{?}\ell_i(x_i)\mpSeq \mpP_i
  \;\mpPar\; \roleP \triangleleft h_{\roleP}
  \;\mpPar\; \roleQ \triangleleft \mpQ
  \;\mpPar\; \roleQ \triangleleft (\roleP,\ell(\mpV))\mathbin{\cdot}h
  \;\mpPar\; \mathcal{M}
  &\mpMove& \mpErr
  \quad (\forall i\in I.\ \ell_i \neq \ell)
\\[0.35em]
\RULE{Err-Eval} &
  \roleP \triangleleft \mpIf{e}{\mpP}{\mpQ}
  \;\mpPar\; \roleP \triangleleft h
  \;\mpPar\; \mathcal{M}
  &\mpMove& \mpErr
  \quad (e \ndownarrow \mpTrue \text{ and } e \ndownarrow \mpFalse)
\end{array}
}\end{adjustbox}
\endgroup}
\end{definition}

\myparagraph{Safety and Deadlock-Freedom.}
We first define safety and deadlock-freedom for multiparty sessions.

\begin{definition}[Safety and Deadlock-Freedom]
\label{def:safety-deadlock-free}
A session $\N$ is:
\begin{enumerate}
\item \emph{communication safe} iff for all $\N'$ such that $\N \mpMove^\ast \N'$, we have $\N' \not\mpMove \mpErr$;
\item \emph{deadlock-free} iff for all $\N'$ such that $\N \mpMove^\ast \N'$ and $\N' \not\mpMove$, 
we have $\N' \equiv \mpBigPar{i \in I}{(\roleP_i \triangleleft \mpNil \;\mpPar\; \roleP_i \triangleleft \epsilon)}$.
\end{enumerate}
\end{definition}

\noindent
Intuitively, safety ensures that no reachable session can reduce to an error state.
Deadlock-freedom ensures that if a session cannot reduce further, 
then all processes have terminated and all queues are empty.

\myparagraph{Session Liveness.}
Liveness is a stronger property that subsumes both safety and deadlock-freedom (recall \cref{def:safe,def:df,def:live} for the corresponding context-level properties).
We define liveness for multiparty sessions, ensuring that
under fair scheduling, all pending actions are eventually performed.
A \emph{session path} is a (potentially infinite) sequence of sessions $(\N_i)_{i\in I}$, 
where $I = \{0,1,2,\ldots\}$ is a set of consecutive natural numbers, 
and for all $i \in I$, $\N_i \mpMove \N_{i+1}$.

\begin{definition}[Fair and Live Session Paths~\cite{GPPSY2023}]
\label{def:session-liveness}
A session path $(\N_i)_{i\in I}$ is \emph{fair} iff, for all $i \in I$:
\begin{itemize}
\item[\textsf{SF1}] if $\N_i \equiv \roleP \triangleleft \roleQ\mathord{!}\ell\langle e\rangle \mpSeq \mpP 
\;\mpPar\; \roleP \triangleleft h \;\mpPar\; \N'$,
then $\exists k$ such that $I \ni k \geq i$ and $\N_k \mpMove \N_{k+1}$ 
via \RULE{R-Send} at role $\roleP$

\item[\textsf{SF2}] if $\N_i \equiv \roleP \triangleleft \textstyle\sum_{j\in J} \roleQ\mathord{?}\ell_j(x_j)\mpSeq \mpP_j 
\;\mpPar\; \roleP \triangleleft h \;\mpPar\; \N'$
and $\roleQ \triangleleft (\roleP, \ell_k(\mpV))\mathbin{\cdot}h' \;\mpPar\; \N''$ for some $k \in J$,
then $\exists k'$ such that $I \ni k' \geq i$ and $\N_{k'} \mpMove \N_{k'+1}$ 
via \RULE{R-Rcv} at role $\roleP$

\item[\textsf{SF3}] if $\N_i \equiv \roleP \triangleleft \mpIf{e}{\mpP}{\mpQ} 
\;\mpPar\; \roleP \triangleleft h \;\mpPar\; \N'$,
then $\exists k$ such that $I \ni k \geq i$ and $\N_k \mpMove \N_{k+1}$ 
via \RULE{R-Cond-T} or \RULE{R-Cond-F} at role $\roleP$
\end{itemize}

\noindent
A session path $(\N_i)_{i\in I}$ is \emph{live} iff, for all $i \in I$:
\begin{itemize}
\item[\textsf{SL1}] if $\N_i \equiv \roleP \triangleleft \mpIf{e}{\mpP}{\mpQ} 
\;\mpPar\; \roleP \triangleleft h \;\mpPar\; \N'$,
then $\exists k$ such that $I \ni k \geq i$ and, for some $\N''$,
$\N_k \equiv \roleP \triangleleft \mpP \;\mpPar\; \roleP \triangleleft h \;\mpPar\; \N''$ or 
$\N_k \equiv \roleP \triangleleft \mpQ \;\mpPar\; \roleP \triangleleft h \;\mpPar\; \N''$

\item[\textsf{SL2}] if $\N_i \equiv \roleP \triangleleft \roleQ\mathord{!}\ell\langle e\rangle \mpSeq \mpP 
\;\mpPar\; \roleP \triangleleft h \;\mpPar\; \N'$,
then $\exists k$ such that $I \ni k \geq i$ and $\N_k \mpMove \N_{k+1}$ 
via \RULE{R-Send} at role $\roleP$

\item[\textsf{SL3}] if $\N_i \equiv \roleP \triangleleft \mpP \;\mpPar\; \roleP \triangleleft (\roleQ, \ell(\mpV))\mathbin{\cdot}h \;\mpPar\; \N'$,
then $\exists k$ such that $I \ni k \geq i$ and $\N_k \mpMove \N_{k+1}$ 
via \RULE{R-Rcv} at role $\roleQ$

\item[\textsf{SL4}] if $\N_i \equiv \roleP \triangleleft \textstyle\sum_{j\in J} \roleQ\mathord{?}\ell_j(x_j)\mpSeq \mpP_j 
\;\mpPar\; \roleP \triangleleft h \;\mpPar\; \N'$,
then $\exists k, \ell'$ such that $I \ni k \geq i$, $\ell' \in \{\ell_j\}_{j\in J}$, and 
$\N_k \mpMove \N_{k+1}$ via \RULE{R-Rcv} at role $\roleP$ receiving label $\ell'$
\end{itemize}

\noindent
A session $\N$ is \emph{live} iff all fair paths starting from $\N$ are live.
\end{definition}

Intuitively, fairness (\textsf{SF1}--\textsf{SF3}) requires that if a component (process or queue) 
is ready to fire an action, then that action will eventually be performed.
Liveness (\textsf{SL1}--\textsf{SL4}) then ensures that under fair execution:
conditionals are eventually resolved (\textsf{SL1}), 
pending outputs are eventually performed (\textsf{SL2}), 
queued messages are eventually received (\textsf{SL3}), 
and pending inputs eventually receive a message (\textsf{SL4}).

\subsection{Deriving the Main Theorems} 
We recall the bottom-up typing system for a multiparty session:
\[
\inference[SessBot]{\forall \roleP \in \dom{\Delta}
  \quad \stJudge{}{P_{\roleP}}{\stT[\roleP]} \quad 
\stJudge{}{h_{\roleP}}{\sigma_{\roleP}} \quad 
\stEnv(\roleP)=(\sigma_{\roleP},\stT[\roleP])\quad 
\varphi(\stEnv)}{\stJudgeBot{}{{\Pi_{\roleP\in \dom{\Delta}}{\ (\roleP \triangleleft 
P_{\roleP} \ | \ \roleP \triangleleft h_{\roleP})\ }}}{\stEnv}}
\]
where
$\stJudge{}{P_{\roleP}}{\stT[\roleP]}$ states that
process $P_{\roleP}$ has type $\stT[\roleP]$, 
$\stJudge{}{h_{\roleP}}{\sigma_{\roleP}}$ types message queue
$h_{\roleP}$ has a queue type $\sigma_{\roleP}$, and 
$\varphi$ is some desired property, which is usually a
\emph{safety} property--a selected label is always available at the
branching process 
\cite{POPL19LessIsMore,YH2024}.
In the work of Ghilezan et al.~\cite{GPPSY2023}, a \emph{liveness} property \cite[Definition 4.17]{GPPSY2023} is used instead for proving the preciseness of
$\asubt$. To derive the main theorems,
we do not require the details of typing rules, hence we omit. 
See Ghilezan et al.~\cite[\S~7.1]{GPPSY2023} for the full typing system and its explanations.   

\paragraph{\bf Deriving Subject Reduction Theorem.}
We prove the subject reduction theorem of the top-down system using
the completeness of the association with 
the following subject reduction theorem of the bottom-up system. 

\begin{theorem}[Subject~Reduction,~{\cite[Theorem~4.11]{GPPSY2023}}]
\label{the:srbot}
Assume 
${\stJudgeBot{}{\N}{\stEnv}}$ with $\stEnv$ live and 
$\N\hpMoveStar\N'$. 
Then there exist live $\stEnvi$, $\stEnvii$ such that 
${\stJudgeBot{}{\N'}{\stEnvi}}$ 
with $\stEnvii \asubt \stEnv$ and $\stEnvii\stEnvMoveStar\stEnvi$.
\end{theorem}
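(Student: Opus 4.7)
The plan is to reduce this multi-step statement to a single-step subject reduction lemma and iterate. For zero reduction steps the result is immediate by taking $\stEnvi = \stEnvii = \stEnv$ with the reflexive $\stEnv \stEnvMoveStar \stEnv$. For the inductive step I would compose: given $\N \hpMoveStar \N'' \hpMove \N'$ with $\stJudgeBot{}{\N''}{\stEnvi''}$ and live $\stEnvii'' \asubt \stEnv$ with $\stEnvii'' \stEnvMoveStar \stEnvi''$ from the I.H., apply the single-step lemma to the last reduction to get $\stEnvi$ and some $\stEnvii_0$ with $\stEnvii_0 \asubt \stEnvi''$ and $\stEnvii_0 \stEnvMoveStar \stEnvi$; transitivity of $\asubt$ (and the corresponding compatibility with $\stEnvMove$, via downward closure of liveness and confluence-style bookkeeping) then yields a suitable $\stEnvii$ refining $\stEnv$ with $\stEnvii \stEnvMoveStar \stEnvi$.

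The core work is the single-step lemma: if $\stJudgeBot{}{\N}{\stEnv}$, $\stEnv$ is live, and $\N \hpMove \N_1$, then live $\stEnvi_1, \stEnvii_1$ exist as stated. I would proceed by case analysis on the rule from Def.~\ref{def:mpst-reduction-relation}. For \RULE{R-Send}, inversion of the sender's typing through \RULE{T-Sub} and the send typing rule yields a local type $\stT_{\roleP}'$ with an internal choice at the head and $\stT_{\roleP}' \asubt \stEnv(\roleP)_2$; set $\stEnvii_1 = \stEnv[\roleP \mapsto (\quH_{\roleP}, \stT_{\roleP}')]$, apply \RULE{\iruleTCtxOut} and \RULE{\iruleTCtxCong} to obtain $\stEnvii_1 \stEnvMoveAnnot{\ltsSel{\roleP}{\roleQ}{\ldots}} \stEnvi_1$, and re-type the continuation against the matching branch. \RULE{R-Rcv} is dual: invert the receiver's branching (again through \RULE{T-Sub}), invert the queue typing to expose the head-message typing, apply a standard substitution lemma for $\mpP_k\{\mpV/x_k\}$, and match with \RULE{\iruleTCtxIn}. \RULE{R-Cond-T}/\RULE{R-Cond-F} keep the context unchanged and only require re-typing of the chosen branch. \RULE{R-Struct} follows from a separate lemma that pre-structural congruence preserves typing modulo queue equivalence. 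Liveness of $\stEnvii_1$ comes from Lem.~4.10 of \cite{GPPSY2023} (downward closure under $\asubt$); liveness of $\stEnvi_1$ follows because liveness is preserved along fair reduction sequences, and the single $\stEnvMove$ step can be extended to any fair continuation.

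The main obstacle is handling the interaction between \RULE{T-Sub} with $\asubt$ and the asynchronous reduction. In the synchronous setting, subsumption cannot change the shape of the head action, so a one-step reduction of $\N$ corresponds to a one-step transition of $\stEnv$; under asynchronous subtyping, $\asubt$ permits reordering of inputs past outputs, so the syntactic head of the process may occur arbitrarily deep in $\stEnv(\roleP)_2$. This is exactly why the theorem produces a refined $\stEnvii$ and a \emph{multi-step} $\stEnvii \stEnvMoveStar \stEnvi$ rather than a single $\stEnv \stEnvMove \stEnvi$: the refinement captures the permutation witness provided by $\asubt$, and the multi-step transition replays the intermediate actions that subsumption had absorbed. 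Making this precise requires a careful inversion of $\asubt$ at the top level, analogous to Lem.~\ref{lem:inv-subtyping}, to extract a context-like decomposition that pinpoints where the head action of $\N$ actually sits inside $\stEnv(\roleP)_2$, and then to show that the prefix of that decomposition can be executed by $\stEnvii$.
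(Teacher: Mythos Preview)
The paper does not prove this theorem at all: it is imported verbatim as \cite[Theorem~4.11]{GPPSY2023} and used as a black box to derive the top-down subject reduction (\Cref{the:srtop}). There is therefore nothing in this paper to compare your argument against; the actual proof lives in the cited work.

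That said, your sketch has a soft spot in the inductive step. You obtain $\stEnvii'' \asubt \stEnv$ with $\stEnvii'' \stEnvMoveStar \stEnvi''$ from the I.H., and then $\stEnvii_0 \asubt \stEnvi''$ with $\stEnvii_0 \stEnvMoveStar \stEnvi$ from the single-step lemma. To conclude you need some $\stEnvii \asubt \stEnv$ with $\stEnvii \stEnvMoveStar \stEnvi$, and you wave at ``transitivity of $\asubt$ and confluence-style bookkeeping''. But transitivity alone does not help: you need to pull $\stEnvii_0$ back along the reductions $\stEnvii'' \stEnvMoveStar \stEnvi''$, i.e.\ a backward-simulation lemma of the shape ``if $\stEnv_1 \stEnvMove \stEnv_2$ and $\stEnv_2' \asubt \stEnv_2$ then $\exists \stEnv_1' \asubt \stEnv_1$ with $\stEnv_1' \stEnvMove \stEnv_2'$''. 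That is a real lemma about $\asubt$ and context reduction, not bookkeeping, and it is where the SISO decomposition of $\asubt$ does work in \cite{GPPSY2023}. Also note (cf.\ \Cref{rem:sr-label}) that a single session step yields at most one context step (zero for conditionals), so your remark that the multi-step $\stEnvMoveStar$ ``replays intermediate actions that subsumption had absorbed'' is off: the $\stEnvMoveStar$ in the conclusion simply accumulates the at-most-one context step per session step across the whole $\hpMoveStar$.
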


\begin{theorem}[Subject Reduction of the Top-Down System]
\label{the:srtop}
Assume $\stJudgeTop{}{\N}{\stEnv}$, $\stEnvAssoc{\stEnv}{\gtG}{a}$, and  
$\N\hpMoveStar\N'$. 
Then there exists $\gtGi$ such that 
${\stJudgeTop{}{\N'}{\stEnvi}}$,  
$\gtG\,\,\,\,\gtMoveStar\,\gtGi$ and
$\stEnvAssoc{\stEnvi}{\gtGi}{a}$.
\end{theorem}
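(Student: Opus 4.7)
The plan is to reduce this result to the bottom-up subject reduction (Theorem~\ref{the:srbot}) by threading the reduction through the association: associated contexts are already known to be live (Theorem~\ref{thm:assoc-live}), so the bottom-up theorem fires, and completeness of the association (Theorem~\ref{thm:gtype:proj-comp}) lifts the resulting context-level reduction back to a global-type reduction that preserves association. The high-level shape is therefore: \emph{invert \RULE{SessTop}} $\Rightarrow$ \emph{invoke \Cref{the:srbot}} $\Rightarrow$ \emph{narrow the association through $\asubt$} $\Rightarrow$ \emph{lift inductively via completeness} $\Rightarrow$ \emph{re-apply \RULE{SessTop}}.

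First, I would invert $\stJudgeTop{}{\N}{\stEnv}$ using \RULE{SessTop} to extract, for each $\roleP\in\dom{\stEnv}$, the premises $\stJudge{}{P_\roleP}{\stT_\roleP}$, $\stJudge{}{h_\roleP}{\quH_\roleP}$, and the association $\stEnvAssoc{\stEnv}{\gtG}{a}$. Theorem~\ref{thm:assoc-live} gives that $\stEnv$ is live (and safe, deadlock-free), which is exactly the side condition $\varphi$ used by the bottom-up typing rule, so the same process and queue typings entail $\stJudgeBot{}{\N}{\stEnv}$. Applying Theorem~\ref{the:srbot} to $\N\hpMoveStar\N'$ yields live $\stEnvi,\stEnvii$ with $\stJudgeBot{}{\N'}{\stEnvi}$, $\stEnvii\asubt\stEnv$ and $\stEnvii\stEnvMoveStar\stEnvi$. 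Transitivity of $\asubt$ (together with the pointwise lifting of queue subtyping) then upgrades the association, since \Cref{def:assoc} only requires that each component $\asubt$-refine a projection of $\gtG$: composing $\stEnvii\asubt\stEnv$ with $\stEnvAssoc{\stEnv}{\gtG}{a}$ gives $\stEnvAssoc{\stEnvii}{\gtG}{a}$.

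With $\stEnvAssoc{\stEnvii}{\gtG}{a}$ in hand, I would proceed by induction on the length of the context-level chain $\stEnvii\stEnvMoveStar\stEnvi$: the base case is trivial (take $\gtGi=\gtG$), and each single step $\stEnvii'\stEnvMoveGenAnnot\stEnviii$ lifts by Theorem~\ref{thm:gtype:proj-comp} to a matching step $\gtG'\gtMove[\stEnvAnnotGenericSymi]\gtGi'$ with $\stEnvAssoc{\stEnviii}{\gtGi'}{a}$; concatenating the one-step liftings produces $\gtG\gtMoveStar\gtGi$ with $\stEnvAssoc{\stEnvi}{\gtGi}{a}$. Finally, re-applying \RULE{SessTop} to $\stJudgeBot{}{\N'}{\stEnvi}$ and $\stEnvAssoc{\stEnvi}{\gtGi}{a}$ yields $\stJudgeTop{}{\N'}{\stEnvi}$, closing the proof. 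The main obstacle is the indirection introduced by \Cref{the:srbot}: it returns a \emph{pair} $(\stEnvii,\stEnvi)$ witnessing that $\stEnv$ can only simulate $\N\hpMoveStar\N'$ after first being $\asubt$-narrowed, so completeness cannot be applied from $\stEnv$ directly. The fix is exactly the association-narrowing step above, which is why both transitivity of $\asubt$ and the queue subtyping monotonicity built into \Cref{def:assoc} are essential; a minor secondary subtlety, that \Cref{thm:gtype:proj-comp} only matches labels up to $\preccurlyeq$, is absorbed because the statement concludes only $\gtG\gtMoveStar\gtGi$ without constraining labels.
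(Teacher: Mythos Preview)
Your proposal is correct and follows essentially the same approach as the paper: invert \RULE{SessTop}, use \Cref{thm:assoc-live} to get liveness, apply the bottom-up subject reduction (\Cref{the:srbot}), narrow the association along $\stEnvii\asubt\stEnv$ via transitivity, and then lift $\stEnvii\stEnvMoveStar\stEnvi$ to $\gtG\gtMoveStar\gtGi$ step by step with completeness (\Cref{thm:gtype:proj-comp}) before re-applying \RULE{SessTop}. Your write-up is in fact slightly more explicit than the paper's, which leaves the induction on the length of the context-level chain implicit and phrases the narrowing step simply as ``By~\Cref{def:assoc}''.
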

\begin{proof}
We prove the following stronger statement, from which we can derive the
above theorem
\begin{quote}  
Assume $\stJudgeTop{}{\N}{\stEnv}$, $\stEnvAssoc{\stEnv}{\gtG}{a}$, and  
$\N\hpMoveStar\N'$. 
Then there exist $\stEnvi$, $\stEnvii$, and $\gtGi$ such that 
${\stJudgeTop{}{\N'}{\stEnvi}}$ 
with $\stEnvii\asubt\stEnv$,
$\stEnvii\stEnvMoveStar\stEnvi$,
$\gtG\,\,\,\,\gtMoveStar\,\gtGi$, and
$\stEnvAssoc{\stEnvi}{\gtGi}{a}$.
\end{quote}
Assume $\N\equiv 
\Pi_{\roleP\in \dom{\Delta}}
{(\roleP \triangleleft P_{\roleP} \ | \ \roleP \triangleleft h_{\roleP})}$ 
and $\stJudgeTop{}{\N}{\stEnv}$
is derived with 
\begin{equation}
\label{eq:typingrule}
\forall \roleP\in \dom{\stEnv}\quad  \stJudge{}{P_{\roleP}}{\stT[\roleP]} \quad \stJudge{}{h_{\roleP}}{\sigma_{\roleP}} \quad \stEnv(\roleP)=(\sigma_{\roleP},\stT[\roleP])\quad \stEnvAssoc{\stEnv}{\gtG}{a}
\end{equation}
by $\inferrule{SessST}$. 
Suppose $\N\hpMove \N'$. 
We need to prove that there exist $\gtGi$ and $\stEnvii$ such that 
$\Pi_{\roleP\in \roleset{\gtGi}}\ 
(\roleP \triangleleft P'_{\roleP} \ | \ \roleP \triangleleft h'_{\roleP})$
with $\stEnvAssoc{\stEnvi}{\gtGi}{a}$.

Note that $\stEnv$ is live by~\cref{thm:assoc-live}. 
Hence by~\cref{the:srbot},  
there exist live $\stEnvi$, $\stEnvii$ such that 
${\stJudgeBot{}{{\Pi_{\roleP\in \roleset{\gtG}}\ {P'_{\roleP}}}}{\stEnvii}}$ 
with $\stEnvii \asubt \stEnv$ and $\stEnvii\stEnvMoveStar\stEnvi$.
By~\cref{def:assoc}, 
$\stEnvAssoc{\stEnvi}{\gtG}{a}$.  
Then by~\cref{thm:gtype:proj-comp}, 
$\stEnvii\stEnvMoveStar\stEnvi$ implies 
$\gtG\,\,\,\,\gtMoveStar\,\gtGi$ and 
$\stEnvAssoc{\stEnvi}{\gtGi}{a}$. 
Hence ${\stJudgeTop{}{{\Pi_{\roleP\in \dom{\stEnvi}}\ {P'_{\roleP}}}}{\stEnvi}}$
as desired.
\end{proof}
\begin{remark}[Subject Reduction Labels]
\label{rem:sr-label}
By examining the proof of~\cref{the:srbot},
we can deduce the labels of the context
transition sequence
from the reduction rules
used in the session reduction sequence.
For simplicity, we consider a single session reduction
as this implies the full subject reduction.
Suppose that $\N\hpMove \N'\not\equiv\mpErr$ and consider the
non-\RULE{R-Struct} rule deriving it.
\begin{itemize}[leftmargin=0.7in,labelindent=-\leftmargin]
\item[\RULE{R-Send}] If the reduction occurred via \RULE{R-Send}
	at role $\roleP$ with label $\ell$ and destination $\roleQ$,
	then
	$\stEnvi\,\stEnvMoveAnnot{\ltsSel{\roleP}{\roleQ}{\stLab(\tyGround)}}\,\stEnvii$
	for some ground type $\tyGround$.
\item[\RULE{R-Rcv}] If the reduction occurred via \RULE{R-Rcv}
	at role $\roleP$ receiving label $\ell$ and with source $\roleQ$,
	then
	$\stEnvi\,\stEnvMoveAnnot{\ltsBra{\roleP}{\roleQ}{\stLab(\tyGround)}}\,\stEnvii$
	for some ground type $\tyGround$.
\item[\RULE{R-Cond-T}] $\stEnvi=\stEnvii$.
\item[\RULE{R-Cond-F}] $\stEnvi=\stEnvii$.
\end{itemize}
\end{remark}

\paragraph{\bf Deriving Session Fidelity.}
We derive session fidelity of the top-down system. 
We use the soundness and completeness of the association with 
session fidelity of the bottom-up system 

\begin{theorem}[Session Fidelity,~{\cite[Theorem~4.13]{GPPSY2023}}]
\label{the:sfbot}
Assume ${\stJudgeBot{}{\N}{\stEnv}}$ 
with $\stEnv$ live. Assume $\stEnv\stEnvMove$.
Then there exist $\N'$, $\stEnvi$, and $\stEnvii$ such that 
$\N\hpMove^+\N'$, $\stEnvii\stEnvMove\stEnvi$, $\stEnvii\asubt\stEnv$,
and 
${\stJudgeBot{}{\N'}{\stEnvi}}$.
\end{theorem}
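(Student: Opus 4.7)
The plan is to proceed by case analysis on the label $\stEnvAnnotGenericSym$ witnessing the assumption $\stEnv \stEnvMove$, and in each case to reverse-engineer a matching session reduction sequence from the bottom-up typing derivation. By inversion of \RULE{SessBot}, the judgement $\stJudgeBot{}{\N}{\stEnv}$ decomposes $\N$ as $\Pi_{\roleR \in \dom{\stEnv}}\, (\roleR \triangleleft P_\roleR \mid \roleR \triangleleft h_\roleR)$ with $\stJudge{}{P_\roleR}{\stT[\roleR]}$, $\stJudge{}{h_\roleR}{\quH[\roleR]}$, and $\stEnv(\roleR) = (\quH[\roleR], \stT[\roleR])$. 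The label $\stEnvAnnotGenericSym$ then constrains the shape of $P_{\ltsSubject{\stEnvAnnotGenericSym}}$ (and, in the receive case, of the sender's queue) up to the flexibility afforded by \RULE{T-Sub} and recursive unfolding.

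The concrete steps I would take are as follows. First, strip off leading recursion binders, subsumption applications, and conditional blocks in the relevant process; the conditionals must terminate in finitely many \RULE{R-Cond-T} or \RULE{R-Cond-F} steps by liveness of $\stEnv$, which is precisely why the conclusion demands $\N \hpMove^+ \N'$ rather than a single step. Second, once the communication prefix is exposed, fire it with \RULE{R-Send} or \RULE{R-Rcv} to obtain $\N \hpMove^+ \N'$. Third, invoke subject reduction (\Cref{the:srbot}) on this reduction sequence to obtain a typing $\stJudgeBot{}{\N'}{\stEnvi}$ together with the intermediate $\stEnvii$ satisfying $\stEnvii \asubt \stEnv$ and $\stEnvii \stEnvMove \stEnvi$, thereby supplying all four existentials of the conclusion. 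The label alignment between the hypothesised $\stEnv \stEnvMove$ and the context-level move emitted by subject reduction is available via \Cref{rem:sr-label}, which reads off the transition label from the reduction rule that was applied.

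The hard part is the asynchronous subtyping gap: $P_\roleR$ may be typed at a type strictly below $\stT[\roleR]$ under $\asubt$, so the head communication of $P_\roleR$ need not syntactically match $\stEnvAnnotGenericSym$. Outputs may have been anticipated, inputs postponed, and branches discarded by subsumption. To bridge this, I would use the structural inversion of $\asubt$ from \Cref{lem:inv-subtyping} to locate the relevant redex inside a local-type context $\trCtxSelp$ or $\trCtxBrap$, and appeal to the downward closure of liveness under $\asubt$ (\Cref{lem:live-down-closed}) to guarantee that the preparatory internal reductions -- \RULE{T-Sub} subsumptions, recursion unfoldings, and conditional evaluations -- actually converge in a finite prefix. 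The witness $\stEnvii$ records this accumulated reordering: it is the most-permuted image of $\stEnv$ whose head redex aligns with the action that $P_{\ltsSubject{\stEnvAnnotGenericSym}}$ can physically perform, so that $\stEnvii \asubt \stEnv$ captures the discrepancy while $\stEnvii \stEnvMove \stEnvi$ gives the clean one-step refinement of the new typing.
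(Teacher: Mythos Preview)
This theorem is not proved in the present paper. It is imported verbatim from \cite[Theorem~4.13]{GPPSY2023} and used as a black box: the paper's contribution is to combine it with the new association results (\Cref{thm:gtype:proj-sound,thm:gtype:proj-comp,thm:assoc-live}) to derive the top-down analogue, \Cref{the:sftop}. There is therefore no proof here to compare your proposal against.

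Your sketch is a plausible outline of how the cited result might be established, and the ingredients you name (inversion of \RULE{SessBot}, stripping conditionals and recursion, then invoking subject reduction) are the natural ones. One caution: your appeal to liveness of $\stEnv$ to guarantee that the conditional-stripping phase terminates is not quite right, since liveness is a property of fair \emph{infinite} paths of the typing context, not of the session's internal $\tau$-reductions; the actual argument in \cite{GPPSY2023} handles this syntactically via the guardedness of processes. But for the purposes of this paper the point is moot: the result is assumed, not proved.
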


\begin{theorem}[Session~Fidelity of the Top-Down System]
\label{the:sftop}
Assume $\stJudgeTop{}{\N}{\stEnv}$ is derived by 
$\stEnvAssoc{\stEnv}{\gtG}{a}$ and 
$\gtG\,\gtMove$. Then there exist 
$\N'$, $\stEnvi$, and $\gtGi$ such that 
$\N\hpMove^+\N'$,
$\gtG\,\gtMove\,\gtG'$, and   
${\stJudgeTop{}{\N'}{\stEnvi}}$ with 
$\stEnvAssoc{\stEnvi}{\gtGi}{a}$. 
\end{theorem}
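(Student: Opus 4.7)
The plan is to reduce top-down session fidelity to its bottom-up counterpart (\Cref{the:sfbot}), using soundness of association to trigger the hypothesis and completeness of association to reconstruct the conclusion.

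First I would invert the \inferrule{SessTop} derivation of $\stJudgeTop{}{\N}{\stEnv}$, extracting the per-role process typings $\stJudge{}{P_{\roleP}}{\stT[\roleP]}$, the queue typings $\stJudge{}{h_{\roleP}}{\sigma_{\roleP}}$, and the association $\stEnvAssoc{\stEnv}{\gtG}{a}$. By \Cref{thm:assoc-live}, $\stEnv$ is immediately live (and hence safe and deadlock-free), so taking $\varphi$ to be liveness I can re-bundle these components via \inferrule{SessBot} to obtain a bottom-up typing $\stJudgeBot{}{\N}{\stEnv}$. This is the setting required to apply the bottom-up session fidelity.

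Next I would activate the global-type hypothesis. Since $\gtG\,\gtMove$ and $\stEnvAssoc{\stEnv}{\gtG}{a}$, \Cref{thm:gtype:proj-sound} yields labels $\stEnvAnnotGenericSymi\preccurlyeq\stEnvAnnotGenericSymii$, an environment $\stEnvi$, and a global type $\gtGii$ with $\gtG\,\gtMove[\stEnvAnnotGenericSymii]\,\gtGii$ and $\stEnv\stEnvMoveAnnot{\stEnvAnnotGenericSymi}\stEnvi$; in particular $\stEnv\stEnvMove$. Combined with the liveness of $\stEnv$ established above, the hypotheses of \Cref{the:sfbot} are met, so bottom-up session fidelity produces $\N'$, $\stEnvii$, and an intermediate $\stEnviii$ such that $\N\hpMove^+\N'$, $\stEnviii\stEnvMove\stEnvii$, $\stEnviii\asubt\stEnv$, and $\stJudgeBot{}{\N'}{\stEnvii}$.

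Finally I would close the diagram back at the top-down level. Since $\asubt$ is a preorder and \Cref{def:assoc} only constrains $\stEnv$ up-to further $\asubt$-refinement of each entry, $\stEnviii\asubt\stEnv$ together with $\stEnvAssoc{\stEnv}{\gtG}{a}$ gives $\stEnvAssoc{\stEnviii}{\gtG}{a}$ by transitivity of $\asubt$ at each role. Applying \Cref{thm:gtype:proj-comp} (completeness of association) to the move $\stEnviii\stEnvMove\stEnvii$ then yields a $\gtGi$ with $\gtG\,\gtMove\,\gtGi$ and $\stEnvAssoc{\stEnvii}{\gtGi}{a}$. Reassembling with the process and queue typings via \inferrule{SessTop} produces the required $\stJudgeTop{}{\N'}{\stEnvii}$.

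The main obstacle I anticipate is bookkeeping of the labels and of the intermediate $\stEnviii$: the labels $\stEnvAnnotGenericSymi,\stEnvAnnotGenericSymii$ delivered by soundness need not coincide with the label produced by bottom-up fidelity, and we must only use these results existentially at the level of $\stEnvMove$ and $\gtMove$, as the theorem statement asks. The other delicate point is justifying that association is preserved under $\asubt$-refinement of the left argument; this is not explicitly isolated as a lemma but follows directly from \Cref{def:assoc} together with transitivity of $\asubt$ on types and on queues, and from \Cref{thm:assoc-live} applied to each refined entry to ensure liveness propagates down the refinement chain.
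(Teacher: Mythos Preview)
Your proposal is correct and follows essentially the same route as the paper: use \Cref{thm:assoc-live} to establish liveness, transfer to the bottom-up system, use soundness (\Cref{thm:gtype:proj-sound}) to obtain $\stEnv\stEnvMove$, apply \Cref{the:sfbot}, then close the diagram by downward-closure of association under $\asubt$ and completeness (\Cref{thm:gtype:proj-comp}). The paper packages the last step by citing \Cref{the:srtop} (whose proof contains exactly the transitivity-of-$\asubt$ argument you spell out), whereas you make that argument explicit; the content is the same. One small remark: your invocation of \Cref{thm:assoc-live} in the ``obstacles'' paragraph is unnecessary for the association-preservation step---that step is pure transitivity on \Cref{def:assoc} and does not involve liveness.
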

\begin{proof}
We prove the following stronger statement, by which the above theorem
is derived.
\begin{quote}
Assume $\stJudgeTop{}{\N}{\stEnv}$ is derived by 
$\stEnvAssoc{\stEnv}{\gtG}{a}$ and 
$\gtG\,\gtMove$. Then there exist 
$\N'$, $\stEnvi$, $\stEnvii$, and $\gtGi$ such that 
$\N\hpMove^+\N'$,
$\stEnvii\stEnvMove\stEnvi$,
$\stEnvii\asubt\stEnv$,
$\gtG\,\gtMove\,\gtG'$, and   
${\stJudgeTop{}{\N'}{\stEnvi}}$ with 
$\stEnvAssoc{\stEnvi}{\gtGi}{a}$. 
\end{quote}  
Assume 
$\stEnvAssoc{\stEnv}{\gtG}{a}$. By the soundness of the association, 
$\gtG\,\gtMove$ implies  
$\stEnv\stEnvMove$. 
Suppose $\N\equiv 
\Pi_{\roleP\in \dom{\Delta}}
{(\roleP \triangleleft P_{\roleP} \ | \ \roleP \triangleleft h_{\roleP})}$ 
and $\stJudgeTop{}{\N'}{\stEnv}$
is derived with 
(\ref{eq:typingrule}) above. 
By~\cref{the:sfbot}, 
there exist $\N'$, $\stEnvi$, and $\stEnvii$ such that 
$\N\hpMove^+\N'$, $\stEnvii\stEnvMove\stEnvi$, and $\stEnvii\asubt\stEnv$.
Hence by the completeness of the association, 
and~\cref{the:srtop}, 
$\gtG\,\gtMove\,\gtGi$ and 
$\stEnvAssoc{\stEnvi}{\gtGi}{a}$ with  
${\stJudgeTop{}{\N'}{\stEnvi}}$, as desired. 
\end{proof}
\begin{remark}[No Session Fidelity Labels]
Unlike~\cref{rem:sr-label},
we cannot identify the rules used in
the session reduction
from the initial global transition.
We cannot even identify
the participant undertaking
the reduction!
The reason is the same as in~\cref{rem:sound-suff},
the asynchronous subtyping
allows us to reorganise participants' actions.
Therefore, the global type transition
may reflect an action
that can be brought forward by
supertyping,
but that the session cannot do right now.
Regardless,
the concluding transitions
are induced by~\cref{the:srtop},
hence the labels can be deduced
from~\cref{rem:sr-label}.
\end{remark}
Next we show that typed multiparty sessions 
are live.

\begin{theorem}[Session Liveness,~{\cite[Theorem~4.12]{GPPSY2023}}]
\label{thm:livebot}
Assume ${\stJudgeBot{}{\N}{\stEnv}}$ 
with $\stEnv$ live.
Then $\N$ is live.
\end{theorem}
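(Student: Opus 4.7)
The plan is to reduce session liveness to context liveness by lifting the operational correspondence witnessed by Subject Reduction~(\cref{the:srbot}) and Session Fidelity~(\cref{the:sfbot}) to entire session paths. Let $(\N_i)_{i\in I}$ be an arbitrary fair session path starting at $\N_0=\N$; by \cref{def:session-liveness} we must verify \textsf{SL1}--\textsf{SL4} for every index. I first build, step-by-step along the path, a sequence $(\stEnv_i)_{i\in I}$ of typing contexts together with \emph{witnessing} contexts $(\stEnvi_i)_{i\in I}$ such that $\stJudgeBot{}{\N_i}{\stEnv_i}$, $\stEnvi_i\asubt\stEnv_i$, and either $\stEnvi_i\stEnvMove\stEnv_{i+1}$ (when the session step is a \RULE{R-Send} or \RULE{R-Rcv}) or $\stEnv_i=\stEnv_{i+1}$ (when it is \RULE{R-Cond-T}/\RULE{R-Cond-F}). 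Existence of this sequence follows from iterating \cref{the:srbot}, noting from \cref{rem:sr-label} that the witnessing context reduction carries a label which is determined by, and matches, the subject and message of the session step.

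Next I show that the sub-sequence of contexts reached by communication steps forms a \emph{fair} context path in the sense of \cref{def:live}. If some $\stEnv_j$ can perform $\stEnvMoveAnnot{\ltsSel{\roleP}{\roleQ}{\stLab(\tyGround)}}$, then inverting the typing derivation places an output-ready process $\roleP\triangleleft\roleQ\mathord{!}\stLab\langle e\rangle\mpSeq\mpP$ in $\N_j$, so session fairness \textsf{SF1} (combined with \textsf{SF3} to eventually pass through any enclosing conditionals, using \textsf{SL1} which is discharged directly from \textsf{SF3}) guarantees a later \RULE{R-Send} step at $\roleP$, which by the construction above corresponds to a context transition with the same subject; the argument for \textsf{F2} is dual, using \textsf{SF2}. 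Since $\stEnv$ is live and liveness is downwards closed under $\asubt$ (\cref{lem:live-down-closed}), the constructed fair context path is live.

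Finally, I transfer context liveness back to the session. Conditions \textsf{L1} and \textsf{L2} on the context path correspond exactly to the pending obligations expressed by \textsf{SL2}, \textsf{SL3}, and \textsf{SL4}: an outgoing $\roleQ\mathord{!}\stLab$ in $\N_i$ yields, by typing inversion, an internal choice at $\roleP$ in $\stEnv_i$, whose eventual selection transition in the context path is matched by a later \RULE{R-Send} at $\roleP$ in the session path; a queued message $(\roleQ,\stLab(\mpV))$ yields a non-empty queue in the context, whose eventual dequeue corresponds to an \RULE{R-Rcv} step; and a pending branching in $\N_i$ is matched by the corresponding selection in the context. Condition \textsf{SL1} is immediate from \textsf{SF3} applied to the session path itself, since a conditional is always ready to reduce.

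The main obstacle will be the bookkeeping at the interface between the two levels: conditionals and \RULE{R-Struct} steps produce session transitions that do not alter the context, so the indexing of the two paths diverges and one must be careful that (i) fairness premises at the session level translate to the correct fairness premises at the context level after collapsing these silent steps, and (ii) the witness $\stEnvi_i\asubt\stEnv_i$ threaded through the induction is always the right one for invoking \cref{lem:live-down-closed} on the \emph{same} infinite path, rather than on individual prefixes. With these subtleties handled, \textsf{SL1}--\textsf{SL4} follow routinely.
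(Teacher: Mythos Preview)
The paper does not prove this theorem: it is imported verbatim as \cite[Theorem~4.12]{GPPSY2023} and used as a black box in the proof of \cref{the:livetop}. There is therefore no paper proof to compare your sketch against. Your outline is a plausible reconstruction of how such a result is typically established (lift Subject Reduction along a fair session path to obtain a fair context path, invoke context liveness, and read back \textsf{SL1}--\textsf{SL4}), but whether it matches the actual argument in \cite{GPPSY2023} cannot be determined from this paper.

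One caution on your sketch itself: the iteration of \cref{the:srbot} does not directly give you a single context path, because each step produces a fresh $\stEnvii\asubt\stEnv$ rather than continuing the previous $\stEnvi$. You acknowledge this in your final paragraph, but the resolution is not merely bookkeeping: you need that the $\stEnvii$ and $\stEnvi$ from consecutive applications can be stitched into one coherent (possibly $\asubt$-decreasing) sequence on which fairness and liveness can be argued, and \cref{lem:live-down-closed} only tells you liveness transfers across a \emph{single} $\asubt$ step, not along an infinite descending chain. The original proof in \cite{GPPSY2023} presumably handles this by a more direct coinductive argument on the session path rather than by repeatedly invoking the packaged Subject Reduction theorem.
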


\begin{theorem}[Liveness of the Top-Down System]
\label{the:livetop}
Assume $\stJudgeTop{}{\N}{\stEnv}$. Then, $\N$ is safe, deadlock-free
and live.
\end{theorem}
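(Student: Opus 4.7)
The plan is to reduce this top-down statement to the already-established bottom-up liveness result (\cref{thm:livebot}), using the association-level properties proved in \cref{thm:assoc-live} as the bridge. First I would invert the typing rule \inferrule{SessTop} on $\stJudgeTop{}{\N}{\stEnv}$: this gives me, for each $\roleP \in \dom{\stEnv}$, both $\stJudge{}{P_{\roleP}}{\stT[\roleP]}$ and $\stJudge{}{h_{\roleP}}{\sigma_{\roleP}}$ with $\stEnv(\roleP) = (\sigma_{\roleP}, \stT[\roleP])$, together with an associated global type $\gtG$ such that $\stEnvAssoc{\stEnv}{\gtG}{a}$.

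Next, I would observe that these premises are exactly what is needed to apply the bottom-up typing rule \inferrule{SessBot} with property $\varphi$ chosen to be liveness, provided $\stEnv$ is live. Here \cref{thm:assoc-live} does the heavy lifting already: since $\stEnvAssoc{\stEnv}{\gtG}{a}$, the typing context $\stEnv$ is safe, deadlock-free, and live. In particular, $\stJudgeBot{}{\N}{\stEnv}$ holds with $\stEnv$ live, so by \cref{thm:livebot} the session $\N$ is live.

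To finish, I would derive safety and deadlock-freedom of $\N$ from its liveness. In the asynchronous setting, liveness of a session subsumes both: any reachable error reduction via \inferrule{Err-Mism} or \inferrule{Err-Eval} would contradict either \textsf{SL4} (a pending input is never served by a matching label) or \textsf{SL1} (a conditional is never resolved), and a terminal non-empty configuration would violate \textsf{SL2} or \textsf{SL3} by leaving a pending output or an undelivered queued message. I would spell this implication out as a short lemma: \emph{for any typable session, session liveness implies session safety and deadlock-freedom}, mirroring the context-level argument used around \cref{def:live}.

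The only step that requires care is the last one, since the session-level definitions (\cref{def:safety-deadlock-free,def:session-liveness}) are framed in terms of fair execution paths rather than arbitrary reductions. The mild technical point is that every reachable error configuration, or every stuck non-terminal configuration, can be extended to some fair path along which the obstructed action would have to eventually fire, contradicting liveness; this is the standard fair-completion argument and is where I expect the main (though routine) bookkeeping to live. All other ingredients are direct citations of \cref{thm:assoc-live,thm:livebot} and inversion of \inferrule{SessTop}.
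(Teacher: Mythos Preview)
Your proposal is correct and follows essentially the same approach as the paper: invert \inferrule{SessTop} to obtain the association, invoke \cref{thm:assoc-live} to get liveness of $\stEnv$, bridge to the bottom-up judgment, and apply \cref{thm:livebot}. The paper treats the implication ``session liveness implies session safety and deadlock-freedom'' as a one-line remark rather than spelling out the fair-completion argument you sketch, but the overall structure is identical.
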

\begin{proof}
We first note that if $\N$ is live, then $\N$ is safe and
deadlock-free. 
If $\stEnvAssoc{\stEnv}{\gtG}{a}$, then $\stEnv$ is live
by~\cref{thm:assoc-live}
and we have that $\stJudgeBot{}{\N}{\stEnv}$
as $\stJudgeTop{}{\N}{\stEnv}$.
Then by~\cref{thm:livebot}, $\N$ is live.
\end{proof}  

\begin{example}[Program Requiring Message Reordering]
	Consider a program that may exhibit
	different behaviours depending on non-deterministic control flow;
	this may require the precise asynchronous subtyping to type,
	since the different behaviours may need to be reconciled via
	message reordering.
	
	\noindent
	Recall $\gtG[\text{\tiny non-det}]$ from \cref{ex:non-det},
	its projections,
	and their subtypes:
	\begin{equation}
		\gtG[\text{\tiny non-det}] = \gtRec{\gtRecVar}{
			\gtCommRaw{\roleP}{\roleQ}{
				\begin{array}{@{}l@{}}
					\gtLab[1]\gtSeq
					\gtCommSingle{\roleQ}{\roleR}{\gtLab[1]}{}{\gtRecVar}
					\\
					\gtLab[2]\gtSeq
					\gtCommSingle{\roleQ}{\roleR}{\gtLab[2]}{}{
						\gtCommSingle{\roleP}{\roleR}{\gtLab}{}{}
					}
				\end{array}
			}
		}
	\end{equation}
	\begin{equation}
		\stT[\roleP] = \stRec{\stRecVar}{
			\stIntSumRaw{\roleQ}{
				\begin{array}{l}
					\stChoice{\stLab[1]}{}\stSeq\stRecVar\\
					\stChoice{\stLab[2]}{}\stSeq \roleR\stFmt{\oplus}\stChoice{\stLab}{}
				\end{array}
			}
		}
		\qquad
		\stT[\roleQ] = \stRec{\stRecVar}{
			\stExtSumRaw{\roleP}{
				\begin{array}{l}
					\stChoice{\stLab[1]}{}\stSeq \roleR\stFmt{\oplus}\stChoice{\stLab[1]}{}\stSeq\stRecVar\\
					\stChoice{\stLab[2]}{}\stSeq \roleR\stFmt{\oplus}\stChoice{\stLab[2]}{}
				\end{array}
			}
		}
		\qquad
		\stT[\roleR] = \stRec{\stRecVar}{
			\stExtSumRaw{\roleQ}{
				\begin{array}{l}
					\stChoice{\stLab[1]}{}\stSeq\stRecVar\\
					\stChoice{\stLab[2]}{}\stSeq \roleP\stFmt{\&}\stChoice{\stLab}{}
				\end{array}
			}
		}
\end{equation}	
\begin{equation}
	\stT[n]=
	\roleR\stFmt{\oplus}\stChoice{\stLab}{}\,
	\underbrace{
	\stSeq \,\roleQ\stFmt{\oplus}\stChoice{\stLab[1]}{}
	\dots
	\stSeq \roleQ\stFmt{\oplus}\stChoice{\stLab[1]}{}
	}_{n}
	\,\stSeq\, \roleQ\stFmt{\oplus}\stChoice{\stLab[2]}{}
	\qquad
	\stTi=\stRec{\stRecVar}{\roleQ\stFmt{\oplus}\stChoice{\stLab[1]}{}\stSeq\stRecVar}
\end{equation}
	We can use these to type a protocol that
	cannot be typed without the precise asynchronous subtyping.
	Suppose that $e_1$ and $e_2$
	are non-deterministic boolean typed expressions.

	\begin{equation}
	\begin{array}{c}
		\mpP[\roleP] =
			\begin{array}{l}
				\mpFmt{\mathbf{if}\,e_1\,\mathbf{then}\,}
				\roleR\mathord{!}\stLab\mpSeq\roleQ\mathord{!}\stLab[2]
				\\
				\mpFmt{\mathbf{else\,if}\,e_2\,\mathbf{then}\,}
				\roleR\mathord{!}\stLab\mpSeq\roleQ\mathord{!}\stLab[1]\mpSeq\roleQ\mathord{!}\stLab[2]
				\\
				\mpFmt{\mathbf{else}\,}
				\mpRec{\mpX}{
					\roleQ\mathord{!}\stLab[1]\mpSeq\mpX
				}
			\end{array}
			\qquad
		\mpP[\roleQ] =
			\mpRec{\mpX}{
				\sum
				\left\{
				\begin{array}{l}
				\roleP\mathord{?}\stLab[1]\mpSeq\roleR\mathord{!}\stLab[1]\mpSeq\mpX
				\\
				\roleP\mathord{?}\stLab[2]\mpSeq\roleR\mathord{!}\stLab[2]
				\end{array}
				\right\}
			}
			\qquad
			\mpP[\roleR] =
			\mpRec{\mpX}{
				\sum\left\{
					\begin{array}{l}
						\roleQ\mathord{?}\stLab[1]\mpSeq\mpX
						\\
						\roleQ\mathord{?}\stLab[2]\mpSeq
							\roleP\mathord{?}\stLab
					\end{array}
				\right\}
			}
	\end{array}
	\end{equation}
	We can derive the following type judgements without using subtyping:
	\begin{equation}
		\stJudge{}{\mpP[\roleQ]}{\stT[\roleQ]}
		\qquad
		\stJudge{}{\mpP[\roleR]}{\stT[\roleR]}
		\qquad
		\stJudge{}{\roleR\mathord{!}\stLab\mpSeq\roleQ\mathord{!}\stLab[2]}{\stT[0]}
		\qquad
		\stJudge{}{\roleR\mathord{!}\stLab\mpSeq\roleQ\mathord{!}\stLab[1]\mpSeq\roleQ\mathord{!}\stLab[2]}{\stT[1]}
		\qquad
		\stJudge{}{\mpRec{\mpX}{
						\roleQ\mathord{!}\stLab[1]\mpSeq\mpX
				}}{\stTi}
	\end{equation}
	We have that, for all $n\geq 0$, $\stT[n]\asubt\stT[\roleP]$
	and $\stTi\asubt\stT[\roleP]$,
	so, using \RULE{T-Sub}, we get the following type judgements:
	\begin{equation}
		\stJudge{}{\roleR\mathord{!}\stLab\mpSeq\roleQ\mathord{!}\stLab[2]}{\stT[\roleP]}
		\qquad
		\stJudge{}{\roleR\mathord{!}\stLab\mpSeq\roleQ\mathord{!}\stLab[1]\mpSeq\roleQ\mathord{!}\stLab[2]}{\stT[\roleP]}
		\qquad
		\stJudge{}{\mpRec{\mpX}{
						\roleQ\mathord{!}\stLab[1]\mpSeq\mpX
				}}{\stT[\roleP]}
	\end{equation}
	Hence, we can type $\stJudge{}{\mpP[\roleP]}{\stT[\roleP]}$,
	allowing us to type the session using $\gtG[\text{\tiny non-det}]$:
	\begin{equation}
	\stJudge{}{\left(
		\roleP\triangleleft\mpP[\roleP]\mpPar\roleP\triangleleft\epsilon
		\mpPar
		\roleQ\triangleleft\mpP[\roleQ]\mpPar\roleQ\triangleleft\epsilon
		\mpPar
		\roleR\triangleleft\mpP[\roleR]\mpPar\roleR\triangleleft\epsilon
	\right)}
	{\stEnvAssoc{\left(
		\stEnvMap{\roleP}{\stT[\roleP]}
		\stEnvComp
		\stEnvMap{\roleQ}{\stT[\roleQ]}
		\stEnvComp
		\stEnvMap{\roleR}{\stT[\roleR]}
		\right)}{\gtGi[\text{\tiny non-det}]}{a}
	}
	\end{equation}
	Therefore, the session is safe, deadlock-free and live by \cref{the:livetop}.
	If our subtyping did not include asynchronous message reordering,
	then we would be unable to reconcile the control flows of $\mpP[\roleP]$,
	making it untypable.
	
	\noindent
	We use this session to witness the utility of the
	non-deterministic global transition relation,
	as in \cref{ex:non-det},
	by considering the three possible control flows:
	\\[2ex]
	\centerline{
	\begin{tikzpicture}
		\node[draw] (P1) at (0,0) {
		$
		\begin{array}{clclclcl}
			&\roleP&\triangleleft&\mpP[\roleP]&\mpPar&\roleP&\triangleleft&\epsilon\\
			\mpPar&
			\roleQ&\triangleleft&\mpP[\roleQ]&\mpPar&\roleQ&\triangleleft&\epsilon\\
			\mpPar&
			\roleR&\triangleleft&\mpP[\roleR]&\mpPar&\roleR&\triangleleft&\epsilon
			\end{array}
		$
		};
		\node[draw] (P2) at (-5,-2) {
			$
		\begin{array}{clclclcl}
			&\roleP&\triangleleft&\roleR\mathord{!}\stLab\mpSeq\roleQ\mathord{!}\stLab[2]&\mpPar&\roleP&\triangleleft&\epsilon\\
			\mpPar&
			\roleQ&\triangleleft&\mpP[\roleQ]
			&\mpPar&\roleQ&\triangleleft&\epsilon\\
			\mpPar&
			\roleR&\triangleleft&\mpP[\roleR]&\mpPar&\roleR&\triangleleft&\epsilon
			\end{array}
		$
		}; 
		\node[draw] (P3) at (5,-2) {
			$
		\begin{array}{clclclcl}
			&\roleP&\triangleleft&\roleR\mathord{!}\stLab\mpSeq\roleQ\mathord{!}\stLab[1]\mpSeq\roleQ\mathord{!}\stLab[2]&\mpPar&\roleP&\triangleleft&\epsilon\\
			\mpPar&
			\roleQ&\triangleleft&\mpP[\roleQ]
			&\mpPar&\roleQ&\triangleleft&\epsilon\\
			\mpPar&
			\roleR&\triangleleft&\mpP[\roleR]&\mpPar&\roleR&\triangleleft&\epsilon
			\end{array}
		$
		}; 
		
		\node[draw] (P4) at (0,-4) {
			$
		\begin{array}{clclclcl}
			&\roleP&\triangleleft&\mpRec{\mpX}{
					\roleQ\mathord{!}\stLab[1]\mpSeq\mpX
				}&\mpPar&\roleP&\triangleleft&\epsilon\\
			\mpPar&
			\roleQ&\triangleleft&\mpP[\roleQ]
			&\mpPar&\roleQ&\triangleleft&\epsilon\\
			\mpPar&
			\roleR&\triangleleft&\mpP[\roleR]&\mpPar&\roleR&\triangleleft&\epsilon
			\end{array}
		$
		}; 
		\draw[->,semithick] (P1) to node[very near end, yshift = 2mm, xshift=-1mm] {$*$} (P2);
		\draw[->,semithick] (P1) to node[very near end, yshift = 2mm, xshift=1mm] {$*$} (P3);
		\draw[->,semithick] (P1) to node[very near end, yshift = 0mm, xshift=1mm] {$*$} (P4);
	\end{tikzpicture}}
	All three sessions are typed by $\gtG[\text{\tiny non-det}]$,
	but, after $\roleP$ enqueues the next message to $\roleR$
	in the left and right sessions,
	the session on the left can be typed by
	$\gtGi$
	while the session on the right cannot be. Where:
	\[
		\gtGi=\gtCommSingle{\roleP}{\roleQ}{\gtLab[2]}{}{
					\gtCommSingle{\roleQ}{\roleR}{\gtLab[2]}{}{
						\gtCommSquigSingle{\roleP}{\roleR}{\gtLab}{\gtLab}{}{}
					}
			}
	\]
	This explains why the transitions of $\gtG[\text{\tiny non-det}]$
	allow for non-determinism:
	it needs to handle all processes of the above form
	(with any number of communications from $\roleP$ to $\roleQ$),
	which no single transition can handle simultaneously.
\end{example}

\section{Related Work}
\label{sec:related}
\myparagraph{Preciseness of Subtyping. } 
For synchronous programs, two subtyping approaches have been
proposed: one allowing the safe substitution of channels (called \emph{channel  subtyping})\cite{Gay2005} and
the other allowing the safe substitution of processes (called \emph{process subtyping}) \cite{DemangeonH11}. In this 
work, we use process subtyping, 
which represents refinement of process behaviours. 

\emph{Preciseness of subtyping} was first proposed for the call-by-value
$\lambda$-calculus with iso-recursive types by Ligatti et al.~\cite{Jay2017}. 
Chen et al.~\cite{Chen2017} introduced 
and proved \emph{preciseness} of  
synchronous \cite{Gay2005,DemangeonH11} and asynchronous subtyping for 
the binary (2-party) session types. 
Later, the asynchronous subtyping 
was found to be \emph{undecidable}, independently by Bravetti et
al.~\cite{BravettiCZ17}, 
and by Lange and Yoshida~\cite{LY2017}. 
This provoked active studies on identifying 
a set of binary session types where asynchronous subtyping 
is decidable; and proposing \emph{sound} algorithms 
\cite{lmcs:7238,BocchiKM24}  
extending the formalism to \emph{binary} communicating automata 
\cite{Brand1983}. 
A variant of asynchronous binary session types
based on service contract theory, called       
the \emph{fair} asynchronous subtyping relation, was also proposed and
proved sound and complete in  
\cite{bravetti_et_al:LIPIcs.CONCUR.2025.11}.  

In the multiparty setting,
Ghilezan et al.~\cite{Ghilezan2019,GPPSY2023} 
proposed precise synchronous and asynchronous 
session subtyping, and our theory is 
centred on precise multiparty asynchronous subtyping. 
Recently, Bocchi et al.~\cite{bocchi_et_al:CONCUR.2025.10} 
have shown a sound algorithm for asynchronous multiparty session types 
that can decide subtyping
for non-trivial multiparty communication patterns, extending  
an abstract interpretation framework from Bocchi et al~\cite{BocchiKM24}. 
Li et al.~\cite{Li2024-mu} proposed
a sound and complete algorithm
for asynchronous behavioural contract refinement
subtyping on communicating state machines (CSMs) \cite{Brand1983},
which is distinct from the precise asynchronous subtyping:
as it is defined to be complete with respect to subprotocol fidelity.
As noted in \cite[\S~7]{BocchiKM24}, classically, the subtyping 
question is answered pairwise: by comparing a type against a candidate
subtype, without reference to a global type. Our classical formulation
is favourable when types are mined from source code, allowing
\emph{local} optimisations of specifications and code per participant.
See the paragraph on \textbf{Implementations} below. 

\myparagraph{Top-Down Asynchronous Systems.}  
The first multiparty session types system 
\cite{HYC08,HYC2016} (MPST) 
introduced \emph{asynchronous multiparty session processes} which interact through 
FIFO queues. The typing system uses 
the most restricted end-point projection (called an 
inductive projection with plain merging)
and does not consider subtyping. Later this projection was extended to 
an inductive projection with full merging with synchronous subtyping 
relations, which is used in several asynchronous MPST systems 
(e.g., \cite{DBLP:journals/corr/abs-1208-6483}). 
Scalas and Yoshida have discovered that proofs of  
an inductive projection with full merging in the literature 
are flawed \cite{POPL19LessIsMore}.  
Recently this proof was fixed by Hou and Yoshida \cite{YH2024} for 
a \emph{synchronous calculus}, using 
an association relation between a global type and typing 
context up to the synchronous subtyping relation. 
We apply their proof method to an asynchronous 
calculus up to the precise asynchronous subtyping, 
with a coinductive projection with full merging; the coinductive full merge subsumes the inductive full merge of earlier MPST systems, since any inductively-derivable merge also satisfies our coinductive rules. Hence
our type system offers strictly larger typable processes than those typed
by the association relation in \cite{YH2024}.
Barbanera and Dezani-Ciancaglini \cite{DBLP:conf/isola/BarbaneraD24}
extend global types with syntax for delegation
and separate the sending and receiving syntax,
allowing some of the expressivity given by the precise asynchronous
subtyping.
They give a decidable type system directly
between sessions and global types,
without using a projection.
Notably, their global types do not include en-route transmissions
and are not given semantics;
instead they define \emph{type configurations},
which are pairs of global types and queues,
and give these semantics.
Unlike \cref{the:srtop,the:sftop}, their subject reduction
and session fidelity theorems
refer to type configurations \emph{not} global types.
This subsumes points (1) and (2)
from \cref{def:count}, and
avoids point (3) since
messages can be globally dequeued
at only point,
independent of the structure of the global type.

Asynchronous local types are often viewed as 
Communicating Finite State Machines (CFSMs)~\cite{Brand1983}.
The work \cite{DY12} which first studied 
a connection between multiparty local types and CFSMs 
proposed a graphical asynchronous 
calculus with mixed choice which is typed by a graphical global type 
with mixed choice and parallel composition. 
After this work, researchers started studying properties 
of global types or global protocols using the theory of 
CFSMs and applied their theories to implementations of MPSTs, 
see \cite{YZF2021}. Extending methodologies of message sequence charts (MSC), 
recent work focus on \emph{implementability} or \emph{realisability}--whether there is a 
CFSM implementation of 
the specified global protocol (or MSC)
\cite{DBLP:conf/concur/MajumdarMSZ21,Li2023-cr,stutz:LIPIcs.ECOOP.2023.32,Giusto2025}.
Their approach aims to produce decidable,
sound and complete (with respect to traces)
projections from sender-driven global types 
(or global state machines) to CFSMs.

While CFSMs related session type works only focus on 
CFSMs (and global protocols), 
recent work in \cite{10.1007/978-3-031-91121-7_13}
has proposed Protocol State Machines (PSMs) and
Automata-based Multiparty Protocols (AMP)
as an alternative framework for 
asynchronous message passing processes. 
Their PSMs project onto CFSMs,
which are then used to type processes,
similar to how global types project onto
local types. 
While a typing context projected from 
a global type (modulo the synchronous subtyping) can only produce 
1-synchronous
and half-duplex behaviours (as proven \emph{multiparty
compatibility} in \cite{DY13}), 
their work \cite{DY12,10.1007/978-3-031-91121-7_13} 
and ours present more expressive global type behaviours. 
By extending en-route transmissions notation
and context transmission rules for global types, 
our global and local type semantics do not impose half-duplex
limitations.

The most important distinction between all of the above CFSM systems  
and ours is the property of \emph{liveness};
using well-formed global types enforces liveness in the presence of
single selection and conditional processes (features of the original session
calculi \cite{Takeuchi1994,honda.vasconcelos.kubo:language-primitives}), 
while the calculi from~\cite{DY12,10.1007/978-3-031-91121-7_13}
exclude these
constructs. 
In our system,
liveness is downward closed with respect to $\asubt$
as proven in \cite[Lemma 4.10]{GPPSY2023}. 
See \cite[Example 5.14]{POPL19LessIsMore}
for a counterexample which demonstrates liveness without
the fair path assumption (weaker liveness)
is \emph{not} downward closed with respect to
subtyping relations, thus the typing system is unsound under their weaker
liveness (which corresponds to orphan message freedom in 
\cite{DY12,DY13,10.1007/978-3-031-91121-7_13}).  
In contrast, we have proven that the
coinductive projection with full merging 
is sound and complete with respect to asynchronous trace reordering, 
demonstrating the correctness of the top-down 
MPST with asynchronous precise multiparty session subtyping relation.

\myparagraph{Implementations of Asynchronous Multiparty Session
  Subtyping.\ }
Asynchronous session subtyping was first introduced for 
messaging optimisation in session-based high-performance 
computing platforms,
i.e., multicore C programming \cite{HVY2009,YoshidaVPH08} and MPI-C \cite{NYH2012,NCY2015}. 
Castro-Perez and Yoshida~\cite{CastroPerezY20} proposed CAMP, which is a static 
performance analysis framework for message-passing concurrent and
distributed systems based on MPST. CAMP augments MPST with
annotations of communication latency and local computation cost,
defined as estimated execution times, which is used to extract cost
equations from protocol descriptions and to statically predict the
communication cost. CAMP is also extended to analyse asynchronous communication optimised programs. The tool, based on cost
theory, is applicable to different existing benchmarks and usecases
in the literature with a wide range of communication protocols. 

The Rust programming framework, Rumpsteak~\cite{CYV2022}, 
incorporates multiparty asynchronous subtyping to optimise 
asynchronous message-passing in the Rust programming language. 
It proposes an algorithm for asynchronous subtyping
based on the session decomposition technique that is
bounded by a number of iterations and proved to be sound and
decidable. They evaluate the performance and expressiveness of
Rumpsteak against previous Rust implementations and 
algorithms,  and show that
Rumpsteak is more efficient and can safely implement case studies 
by offering arbitrary ordering of messages. 

\myparagraph{Mechanisations of Asynchronous Subtyping Relations. }
Hinrichsen et al.~\cite{lmcs:6869} introduce Actris, a Iris Rocq that
integrates separation logics and asynchronous binary session types
with the asynchronous subtyping by Chen et al.~\cite{Chen2017}. 
The first mechanisation of multiparty asynchronous subtyping
by the Rocq proof assistant is proposed by Ekici and Yoshida~\cite{EY2024,EY2026}.
Their Rocq formalisation follows 
a session tree decomposition approach, 
using the greatest fixed point of the least fixed point technique, 
facilitated by the paco library, to define coinductive predicates.
Li et al. \cite{li_et_al:LIPIcs.ITP.2025.15} have formalised
correctness of their implementation problem 
of refinement global specifications \cite{Li2025c} in Rocq.  
It is an interesting future work to combine it with 
the Rocq mechanisation of the top-down approach by Ekici et al.~\cite{EKY2025}, 
to formalise the subject reduction, deadlock-freedom and 
liveness theorems based on this work.

\section{Conclusion}
\label{sec:conclusion}
We have proposed an asynchronous association relation 
and proved its sound and complete operational correspondence.
This work is the first to prove these results based on 
(1) asynchronous precise subtyping and (2) projection with
coinductive full merging. We introduced a new operational semantics
for global types, which captures more behaviours allowed by permuting actions 
than the previous asynchronous global type
semantics provided by Barwell et al.~\cite{BHYZ2025} and Honda et al.~\cite{HYC2016}.
We developed a new projection relation which associates global types with
a pair of a local type and a queue type for each participant.
Using this correspondence, we derived 
the subject reduction theorem and 
the session fidelity theorem of the top-down system 
from the corresponding theorems of the bottom-up system 
\cite[Theorem~4.11 and 4.13]{GPPSY2023}. 
By proving that the projection $\stEnv$ 
of $\gtG$ is safe, deadlock-free and live, we can derive that asynchronous 
multiparty session processes typed by 
the top-down typing system (\inferrule{SessTop}) 
are also safe, deadlock-free and live (Theorem~\ref{the:livetop}).  

While Ghilezan et al.~\cite{GPPSY2023} have proved the subject reduction theorem 
and session fidelity theorem under the subsumption rule 
of $\asubt$, it does not use the top-down typing system.    
On the other hand, Ghilezan et al.~\cite{Ghilezan2019} have shown that multiparty synchronous 
subtyping is precise in the synchronous multiparty session calculus 
using the top-down system. None of the previous works 
has studied:
(1) properties of coinductive full merge projection;
(2) liveness of asynchronously associated typing contexts; nor 
(3) association with respect to asynchronous subtyping.
An interesting open question is whether the
association theorems hold for the sound decidable asynchronous subtyping
relations \cite{CYV2022,bocchi_et_al:CONCUR.2025.10} 
so that we can derive the subject reduction theorems under those 
relations.  

Pischke and Yoshida \cite{PY2026} have shown that in
synchronous semantics, the top-down system has the same typability 
as the bottom-up system, which ensures liveness,
if the top-down system meets three conditions:
(1) global types are balanced;
(2) the projection is coinductive with full
merging; and
(3) the subsumption rule uses $\stSub$.
We have proven the correctness of the bottom-up 
system with $\asubt$ implies
the correctness of the top-down system where 
(1) a global 
type is $\text{balanced}^+$; (2) a projection is coinductive with full
merging (soundness of typability).
It is an open question whether our top-down system can 
\emph{completely} type sessions which are typable by the
bottom-up system, which ensures asynchronous liveness
(completeness of typability).

We have demonstrated 
the usefulness of association in deriving the main theorems of the
top-down system, by \emph{reusing} the theorems in \cite{GPPSY2023}.
We have not yet reached a stage to claim that 
MPST is a theoretical framework for building asynchronous 
software systems. 
There still needs to be more effort applied to developing practical applications of MPST with asynchronous subtyping relations, 
for testing and maintaining compositionality and reusability of 
protocols. 
The most challenging topic is to type individual
asynchronous \emph{components}, 
each being written in a different programming 
language or running on a different platform, 
while ensuring their type-safety and deadlock-freedom, 
assuming they conform to a shared global protocol.  
Implementing such a distributed architecture requires 
significant engineering effort such as defining system 
requirements, identifying components,
splitting the system into components, 
integrating these components, and designing the interfaces for
components. We will continue our avenue to promote 
asynchronous optimisations with MPST for specifying and implementing
real-world distributed components.

\paragraph{Acknowledgements. } We thank the reviewers for their detailed
comments and helpful suggestions.

\bibliographystyle{plain}
\bibliography{main}

\begin{thebibliography}{10}

\bibitem{ITU-T-Z120-1996}
Message sequence chart (msc).
\newblock Recommendation Z.120, International Telecommunication Union (ITU-T),
  Geneva, October 1996.
\newblock (10/96). Revised and approved by WTSC (Geneva, 9--18 Oct 1996).

\bibitem{DBLP:conf/isola/BarbaneraD24}
Franco Barbanera and Mariangiola Dezani{-}Ciancaglini.
\newblock Asynchronous multiparty sessions with internal delegation - dedicated
  to rocco de nicola on the occasion of his 70th birthday.
\newblock In Tiziana Margaria and Bernhard Steffen, editors, {\em Leveraging
  Applications of Formal Methods, Verification and Validation. REoCAS
  Colloquium in Honor of Rocco De Nicola - 12th International Symposium, ISoLA
  2024, Crete, Greece, October 27-31, 2024, Proceedings, Part {I}}, volume
  15219 of {\em Lecture Notes in Computer Science}, pages 322--339. Springer,
  2024.

\bibitem{BHYZ2025}
Adam~D. Barwell, Ping Hou, Nobuko Yoshida, and Fangyi Zhou.
\newblock {Crash-Stop Failures in Asynchronous Multiparty Session Types}.
\newblock {\em {Logical Methods in Computer Science}}, 2025.

\bibitem{BocchiKM24}
Laura Bocchi, Andy King, and Maurizio Murgia.
\newblock Asynchronous subtyping by trace relaxation.
\newblock In Bernd Finkbeiner and Laura Kov{\'{a}}cs, editors, {\em Tools and
  Algorithms for the Construction and Analysis of Systems - 30th International
  Conference, {TACAS} 2024, Held as Part of the European Joint Conferences on
  Theory and Practice of Software, {ETAPS} 2024, Luxembourg City, Luxembourg,
  April 6-11, 2024, Proceedings, Part {I}}, volume 14570 of {\em Lecture Notes
  in Computer Science}, pages 207--226. Springer, 2024.

\bibitem{bocchi_et_al:CONCUR.2025.10}
Laura Bocchi, Andy King, Maurizio Murgia, and Simon Thompson.
\newblock {Abstract Subtyping for Asynchronous Multiparty Sessions}.
\newblock In Patricia Bouyer and Jaco van~de Pol, editors, {\em 36th
  International Conference on Concurrency Theory (CONCUR 2025)}, volume 348 of
  {\em Leibniz International Proceedings in Informatics (LIPIcs)}, pages
  10:1--10:19, Dagstuhl, Germany, 2025. Schloss Dagstuhl -- Leibniz-Zentrum
  f{\"u}r Informatik.

\bibitem{Brand1983}
Daniel Brand and Pitro Zafiropulo.
\newblock On {{Communicating Finite-State Machines}}.
\newblock {\em Journal of the ACM}, 30(2):323--342, April 1983.

\bibitem{lmcs:7238}
Mario Bravetti, Marco Carbone, Julien Lange, Nobuko Yoshida, and Gianluigi
  Zavattaro.
\newblock {A Sound Algorithm for Asynchronous Session Subtyping and its
  Implementation}.
\newblock {\em {Logical Methods in Computer Science}}, {Volume 17, Issue 1},
  March 2021.

\bibitem{BravettiCZ17}
Mario Bravetti, Marco Carbone, and Gianluigi Zavattaro.
\newblock Undecidability of asynchronous session subtyping.
\newblock {\em Inf. Comput.}, 256:300--320, 2017.

\bibitem{bravetti_et_al:LIPIcs.CONCUR.2025.11}
Mario Bravetti, Luca Padovani, and Gianluigi Zavattaro.
\newblock {A Sound and Complete Characterization of Fair Asynchronous Session
  Subtyping}.
\newblock In Patricia Bouyer and Jaco van~de Pol, editors, {\em 36th
  International Conference on Concurrency Theory (CONCUR 2025)}, volume 348 of
  {\em Leibniz International Proceedings in Informatics (LIPIcs)}, pages
  11:1--11:17, Dagstuhl, Germany, 2025. Schloss Dagstuhl -- Leibniz-Zentrum
  f{\"u}r Informatik.

\bibitem{DBLP:journals/toplas/CarboneHY12}
Marco Carbone, Kohei Honda, and Nobuko Yoshida.
\newblock Structured communication-centered programming for web services.
\newblock {\em {ACM} Trans. Program. Lang. Syst.}, 34(2):8:1--8:78, 2012.

\bibitem{DBLP:conf/pldi/Castro-Perez0GY21}
David Castro{-}Perez, Francisco Ferreira, Lorenzo Gheri, and Nobuko Yoshida.
\newblock Zooid: a {DSL} for certified multiparty computation: from mechanised
  metatheory to certified multiparty processes.
\newblock In Stephen~N. Freund and Eran Yahav, editors, {\em {PLDI} '21: 42nd
  {ACM} {SIGPLAN} International Conference on Programming Language Design and
  Implementation, Virtual Event, Canada, June 20-25, 2021}, pages 237--251.
  {ACM}, 2021.

\bibitem{CastroPerezY20}
David Castro{-}Perez and Nobuko Yoshida.
\newblock {CAMP:} cost-aware multiparty session protocols.
\newblock {\em Proc. {ACM} Program. Lang.}, 4({OOPSLA}):155:1--155:30, 2020.

\bibitem{Chen2017}
Tzu~Chun Chen, Mariangiola {Dezani-Ciancaglini}, Alceste Scalas, and Nobuko
  Yoshida.
\newblock On the preciseness of subtyping in session types.
\newblock {\em Logical Methods in Computer Science}, 13(2), June 2017.

\bibitem{CDY2024}
Tzu-Chun Chen, Mariangiola Dezani-Ciancaglini, and Nobuko Yoshida.
\newblock On the preciseness of subtyping in session types: 10 years later.
\newblock In {\em Proceedings of the 26th International Symposium on Principles
  and Practice of Declarative Programming}, PPDP '24, New York, NY, USA, 2024.
  Association for Computing Machinery.

\bibitem{CDPY2015}
Mario Coppo, Mariangiola Dezani-Ciancaglini, Luca Padovani, and Nobuko Yoshida.
\newblock {A Gentle Introduction to Multiparty Asynchronous Session Types}.
\newblock In {\em 15th International School on Formal Methods for the Design of
  Computer, Communication and Software Systems: Multicore Programming}, volume
  9104 of {\em LNCS}, pages 146--178. Springer, 2015.

\bibitem{CDYP13}
Mario Coppo, Mariangiola Dezani-Ciancaglini, Nobuko Yoshida, and Luca Padovani.
\newblock {Global Progress for Dynamically Interleaved Multiparty Sessions}.
\newblock {\em MSCS}, 26:238--302, 2015.

\bibitem{CutnerYoshida21}
Zak Cutner and Nobuko Yoshida.
\newblock Safe session-based asynchronous coordination in rust.
\newblock In {\em Coordination Models and Languages: 23rd IFIP WG 6.1
  International Conference, COORDINATION 2021, Held as Part of the 16th
  International Federated Conference on Distributed Computing Techniques,
  DisCoTec 2021, Valletta, Malta, June 14–18, 2021, Proceedings}, page
  80–89, Berlin, Heidelberg, 2021. Springer-Verlag.

\bibitem{CYV2022}
Zak Cutner, Nobuko Yoshida, and Martin Vassor.
\newblock {Deadlock-Free Asynchronous Message Reordering in Rust with
  Multiparty Session Types}.
\newblock In {\em 27th ACM SIGPLAN Symposium on Principles and Practice of
  Parallel Programming}, volume abs/2112.12693 of {\em PPoPP '22}, pages
  246--261. ACM, 2022.

\bibitem{DemangeonH11}
Romain Demangeon and Kohei Honda.
\newblock Full abstraction in a subtyped pi-calculus with linear types.
\newblock In {\em Proceedings of CONCUR 2011}, volume 6901 of {\em LNCS}, pages
  280--296. Springer, 2011.

\bibitem{DY12}
Pierre-Malo Deni{\'e}lou and Nobuko Yoshida.
\newblock Multiparty session types meet communicating automata.
\newblock In {\em ESOP}, volume 7211 of {\em LNCS}, pages 194--213. Springer,
  2012.

\bibitem{DY13}
Pierre{-}Malo Deni{\'{e}}lou and Nobuko Yoshida.
\newblock Multiparty compatibility in communicating automata: Characterisation
  and synthesis of global session types.
\newblock In {\em {ICALP} 2013}, pages 174--186, 2013.

\bibitem{DBLP:journals/corr/abs-1208-6483}
Pierre{-}Malo Deni{\'{e}}lou, Nobuko Yoshida, Andi Bejleri, and Raymond Hu.
\newblock Parameterised multiparty session types.
\newblock {\em Logical Methods in Computer Science}, 8(4), 2012.

\bibitem{Giusto2025}
Cinzia Di~Giusto, Etienne Lozes, and Pascal Urso.
\newblock Realisability and complementability of multiparty session types.
\newblock In {\em Proceedings of the 27th International Symposium on Principles
  and Practice of Declarative Programming}, PPDP '25, New York, NY, USA, 2025.
  Association for Computing Machinery.

\bibitem{EKY2025}
Burak Ekici, Tadayoshi Kamegai, and Nobuko Yoshida.
\newblock {Formalising Subject Reduction and Progress for Multiparty Session
  Processes}.
\newblock In {\em 16th International Conference on Interactive Theorem Proving,
  2025, Reykjavik, Iceland}, volume 352 of {\em LIPIcs}, pages 19:1--19:23.
  Schloss Dagstuhl - Leibniz-Zentrum f{"{u}}r Informatik, 2025.

\bibitem{EY2024}
Burak Ekici and Nobuko Yoshida.
\newblock {Completeness of Asynchronous Session Tree Subtyping in Coq}.
\newblock In {\em 15th International Conference on Interactive Theorem Proving,
  2024, Tbilisi, Georgia}, volume 309, pages 6:1--6:20. Schloss Dagstuhl -
  Leibniz-Zentrum f{"{u}}r Informatik, 2024.

\bibitem{EY2026}
Burak Ekici and Nobuko Yoshida.
\newblock {Formalising Asynchronous Session Subtyping}.
\newblock {\em ACM Transactions on Computational Logic}, 27:1--45, 2026.

\bibitem{Gay2005}
Simon Gay and Malcolm Hole.
\newblock Subtyping for session types in the pi calculus.
\newblock {\em Acta Informatica}, 42(2-3):191--225, November 2005.

\bibitem{Ghilezan2019}
Silvia Ghilezan, Svetlana Jak{\v s}i{\'c}, Jovanka Pantovi{\'c}, Alceste
  Scalas, and Nobuko Yoshida.
\newblock Precise subtyping for synchronous multiparty sessions.
\newblock {\em Journal of Logical and Algebraic Methods in Programming},
  104:127--173, April 2019.

\bibitem{GPPSY2023}
Silvia Ghilezan, Jovanka Pantovi\'{c}, Ivan Proki\'{c}, Alceste Scalas, and
  Nobuko Yoshida.
\newblock Precise subtyping for asynchronous multiparty sessions.
\newblock {\em ACM Trans. Comput. Logic}, 24(2), nov 2023.

\bibitem{GraubmannRudolphGrabowski1993MSCStandardization}
Peter Graubmann, Ekkart Rudolph, and Jens Grabowski.
\newblock The standardization of message sequence charts.
\newblock In {\em Proceedings of the {IEEE} Software Engineering Standards
  Symposium ({SESS} '93)}, pages 48--63, Brighton, United Kingdom, September
  1993. {IEEE} Computer Society.

\bibitem{lmcs:6869}
Jonas~Kastberg Hinrichsen, Jesper Bengtson, and Robbert Krebbers.
\newblock Actris 2.0: Asynchronous session-type based reasoning in separation
  logic.
\newblock {\em Logical Methods in Computer Science}, Volume 18, Issue 2, Jun
  2022.

\bibitem{honda.vasconcelos.kubo:language-primitives}
Kohei Honda, Vasco~T. Vasconcelos, and Makoto Kubo.
\newblock Language primitives and type disciplines for structured
  communication-based programming.
\newblock In {\em Proceedings of ESOP 1998}, volume 1381 of {\em LNCS}, pages
  22--138. Springer, 1998.

\bibitem{HVY2009}
Kohei Honda, Vasco~Thudichum Vasconcelos, and Nobuko Yoshida.
\newblock {Type-Directed Compilation for Multicore Programming}.
\newblock {\em ENTCS}, 241:101--111, 2009.

\bibitem{HYC08}
Kohei Honda, Nobuko Yoshida, and Marco Carbone.
\newblock {Multiparty Asynchronous Session Types}.
\newblock In {\em Proceedings of POPL 2008}, pages 273--284. ACM, 2008.

\bibitem{HYC2016}
Kohei Honda, Nobuko Yoshida, and Marco Carbone.
\newblock Multiparty asynchronous session types.
\newblock {\em Journal of the {ACM}}, 63(1):9:1--9:67, 2016.

\bibitem{YH2024}
Ping Hou, Nobuko Yoshida, and Iona Kuhn.
\newblock {Less is More Revisited: Association with Global Protocols and
  Multiparty Sessions}.
\newblock {\em Theoretical Computer Science}, 1076:--, 2026.

\bibitem{LY2017}
Julien Lange and Nobuko Yoshida.
\newblock {On the Undecidability of Asynchronous Session Subtyping}.
\newblock In {\em Proceedings of FOSSACS 2017}, volume 10203 of {\em LNCS},
  pages 441--457. Springer, 2017.

\bibitem{Li2024-mu}
Elaine Li, Felix Stutz, and Thomas Wies.
\newblock Deciding subtyping for asynchronous multiparty sessions.
\newblock In {\em Lecture Notes in Computer Science}, Lecture notes in computer
  science, pages 176--205. Springer Nature Switzerland, Cham, 2024.

\bibitem{Li2023-cr}
Elaine Li, Felix Stutz, Thomas Wies, and Damien Zufferey.
\newblock Complete multiparty session type projection with automata.
\newblock In {\em Lecture Notes in Computer Science}, Lecture notes in computer
  science, pages 350--373. Springer Nature Switzerland, Cham, 2023.

\bibitem{Li2025c}
Elaine Li, Felix Stutz, Thomas Wies, and Damien Zufferey.
\newblock Characterizing implementability of global protocols with infinite
  states and data.
\newblock {\em Proc. ACM Program. Lang.}, 9(OOPSLA1), April 2025.

\bibitem{li_et_al:LIPIcs.ITP.2025.15}
Elaine Li and Thomas Wies.
\newblock {Certified Implementability of Global Multiparty Protocols}.
\newblock In Yannick Forster and Chantal Keller, editors, {\em 16th
  International Conference on Interactive Theorem Proving (ITP 2025)}, volume
  352 of {\em Leibniz International Proceedings in Informatics (LIPIcs)}, pages
  15:1--15:20, Dagstuhl, Germany, 2025. Schloss Dagstuhl -- Leibniz-Zentrum
  f{\"u}r Informatik.

\bibitem{Jay2017}
Jay Ligatti, Jeremy Blackburn, and Michael Nachtigal.
\newblock On subtyping-relation completeness, with an application to
  iso-recursive types.
\newblock {\em ACM Trans. Program. Lang. Syst.}, 39(1), March 2017.

\bibitem{Liskov1994-ra}
Barbara~H Liskov and Jeannette~M Wing.
\newblock A behavioral notion of subtyping.
\newblock {\em ACM Trans. Program. Lang. Syst.}, 16(6):1811--1841, November
  1994.

\bibitem{DBLP:conf/concur/MajumdarMSZ21}
Rupak Majumdar, Madhavan Mukund, Felix Stutz, and Damien Zufferey.
\newblock Generalising projection in asynchronous multiparty session types.
\newblock In Serge Haddad and Daniele Varacca, editors, {\em 32nd International
  Conference on Concurrency Theory, {CONCUR} 2021, August 24-27, 2021, Virtual
  Conference}, volume 203 of {\em LIPIcs}, pages 35:1--35:24. Schloss Dagstuhl
  - Leibniz-Zentrum f{\"{u}}r Informatik, 2021.

\bibitem{DBLP:conf/tlca/MostrousY09}
Dimitris Mostrous and Nobuko Yoshida.
\newblock Session-based communication optimisation for higher-order mobile
  processes.
\newblock In Pierre{-}Louis Curien, editor, {\em Typed Lambda Calculi and
  Applications, 9th International Conference, {TLCA} 2009, Brasilia, Brazil,
  July 1-3, 2009. Proceedings}, volume 5608 of {\em Lecture Notes in Computer
  Science}, pages 203--218. Springer, 2009.

\bibitem{NCY2015}
Nicholas Ng, Jose~G.F. Coutinho, and Nobuko Yoshida.
\newblock {Protocols by Default: Safe MPI Code Generation based on Session
  Types}.
\newblock In {\em 24th International Conference on Compiler Construction},
  volume 9031 of {\em LNCS}, pages 212--232. Springer, 2015.

\bibitem{NYH2012}
Nicholas Ng, Nobuko Yoshida, and Kohei Honda.
\newblock {Multiparty Session C: Safe Parallel Programming with Message
  Optimisation}.
\newblock In {\em 50th International Conference on Objects, Models, Components,
  Patterns}, volume 7304 of {\em LNCS}, pages 202--218. Springer, 2012.

\bibitem{PierceTAPL}
Benjamin Pierce.
\newblock {\em Types and Programming Languages}.
\newblock MIT Press, 2002.

\bibitem{PY2025}
Kai Pischke and Nobuko Yoshida.
\newblock {Asynchronous Global Protocols, Precisely}.
\newblock In {\em Components Operationally: Reversibility and System
  Engineering}, volume 16065 of {\em LNCS}, pages 116--133. Springer Nature,
  2025.

\bibitem{PY2026}
Kai Pischke and Nobuko Yoshida.
\newblock {Top Down$=$Bottom Up: Sound and Complete Characterisations of
  Liveness by Multiparty Global Protocols}.
\newblock {\em Proc. ACM Program. Lang.}, 2026.

\bibitem{POPL19LessIsMore}
Alceste Scalas and Nobuko Yoshida.
\newblock Less is more: Multiparty session types revisited.
\newblock {\em Proceedings of the ACM on Programming Languages}, 3(POPL):1--29,
  January 2019.

\bibitem{stutz:LIPIcs.ECOOP.2023.32}
Felix Stutz.
\newblock {Asynchronous Multiparty Session Type Implementability is Decidable -
  Lessons Learned from Message Sequence Charts}.
\newblock In Karim Ali and Guido Salvaneschi, editors, {\em 37th European
  Conference on Object-Oriented Programming (ECOOP 2023)}, volume 263 of {\em
  Leibniz International Proceedings in Informatics (LIPIcs)}, pages
  32:1--32:31, Dagstuhl, Germany, 2023. Schloss Dagstuhl -- Leibniz-Zentrum
  f{\"u}r Informatik.

\bibitem{10.1007/978-3-031-91121-7_13}
Felix Stutz and Emanuele D’Osualdo.
\newblock An automata-theoretic basis for specification and type checking of
  multiparty protocols.
\newblock In {\em Programming Languages and Systems: 34th European Symposium on
  Programming, ESOP 2025, Held as Part of the International Joint Conferences
  on Theory and Practice of Software, ETAPS 2025, Hamilton, ON, Canada, May
  3–8, 2025, Proceedings, Part II}, page 314–346, Berlin, Heidelberg, 2025.
  Springer-Verlag.

\bibitem{Takeuchi1994}
Kaku Takeuchi, Kohei Honda, and Makoto Kubo.
\newblock An {{Interaction-based Language}} and its {{Typing System}}.
\newblock In Constantine Halatsis, Dimitris~G. Maritsas, George Philokyprou,
  and Sergios Theodoridis, editors, {\em {{PARLE}} '94: {{Parallel
  Architectures}} and {{Languages Europe}}, 6th {{International PARLE
  Conference}}, {{Athens}}, {{Greece}}, {{July}} 4-8, 1994, {{Proceedings}}},
  volume 817 of {\em Lecture {{Notes}} in {{Computer Science}}}, pages
  398--413. Springer, 1994.

\bibitem{thien-nobuko-popl-25}
Thien Udomsrirungruang and Nobuko Yoshida.
\newblock Top-down or bottom-up? complexity analyses of synchronous multiparty
  session types.
\newblock {\em Proc. {ACM} Program. Lang.}, 9({POPL}):1040--1071, 2025.

\bibitem{Van_den_Heuvel2022-hr}
Bas van~den Heuvel and Jorge~A P{\'e}rez.
\newblock A decentralized analysis of multiparty protocols.
\newblock {\em Sci. Comput. Program.}, 222(102840):102840, October 2022.

\bibitem{Yoshida24}
Nobuko Yoshida.
\newblock Programming language implementations with multiparty session types.
\newblock In Frank~S. de~Boer, Ferruccio Damiani, Reiner H{\"{a}}hnle,
  Einar~Broch Johnsen, and Eduard Kamburjan, editors, {\em Active Object
  Languages: Current Research Trends}, volume 14360 of {\em Lecture Notes in
  Computer Science}, pages 147--165. Springer, 2024.

\bibitem{YHNN2013}
Nobuko Yoshida, Raymond Hu, Rumyana Neykova, and Nicholas Ng.
\newblock {The Scribble Protocol Language}.
\newblock In {\em 8th International Symposium on Trustworthy Global Computing},
  volume 8358 of {\em LNCS}, pages 22--41. Springer, 2013.

\bibitem{YoshidaVPH08}
Nobuko Yoshida, Vasco~Thudichum Vasconcelos, Herv{\'{e}} Paulino, and Kohei
  Honda.
\newblock Session-based compilation framework for multicore programming.
\newblock In {\em {FMCO} 2008}, pages 226--246, 2008.

\bibitem{YZF2021}
Nobuko Yoshida, Fangyi Zhou, and Francisco Ferreira.
\newblock {Communicating Finite State Machines and an Extensible Toolchain for
  Multiparty Session Types}.
\newblock In {\em 23rd International Symposium on Fundamentals of Computation
  Theory}, volume 12867 of {\em LNCS}, pages 18--35. Springer, 2021.

\end{thebibliography}

\end{document}